
\documentclass[11pt, a4paper]{ociamthesis}
\pdfoutput=1

\usepackage[sumlimits]{amsmath}
\usepackage{amsfonts,amssymb, amsthm, mathrsfs, mathtools, amscd}
\usepackage{graphicx, longtable, caption, afterpage, float}
\usepackage{xspace}
\usepackage[latin1]{inputenc}
\usepackage[usenames,dvipsnames]{color}
\usepackage{rotating}
\usepackage{tikz}
\usetikzlibrary{arrows,shapes}
\usepackage{setspace}
\usepackage{epsfig,cancel,cite}
\usepackage[linktoc=page]{hyperref}

\usepackage{array}

\parskip 0.06in
\parindent 1cm
\flushbottom


\font\twentyonerm=cmr12 at 21pt


\definecolor{dg}{rgb}{.1,0.5,.5}

\newcommand{\varstr}[2]{\vrule height #1 depth #2 width0pt}

\newcommand{\m}{\mu}

\newcommand{\cH}{\mathcal{H}}

\newcommand{\cN}{\mathcal{N}}

\newcommand{\ba}{\begin{array}}
\newcommand{\ea}{\end{array}}

\newcommand{\beqn}{\begin{equation*}}
\newcommand{\eeqn}{\end{equation*}}

\theoremstyle{plain}

\theoremstyle{definition}

\theoremstyle{remark}

\newcommand{\beq}{\begin{equation}}
\newcommand{\eeq}{\end{equation}}
\newcommand{\beqnn}{\begin{equation*}}
\newcommand{\eeqnn}{\end{equation*}}


\def\place#1#2#3{\vbox to0pt{\kern-\parskip\kern-7pt
                             \kern-#2truein\hbox{\kern#1truein #3}
                             \vss}\nointerlineskip}

\newcommand{\Bigcheck}{\lower2pt\hbox{\smash{\hbox{{\twentyonerm \v{}}}}}}
\newcommand{\Bighat}{\lower3.8pt\hbox{\smash{\hbox{{\twentyonerm \^{}}}}}}

\newcommand{\Xcheck}{\kern2pt\hbox{\Bigcheck\kern-15pt{$\mathscr{X}$}}}
\newcommand{\Xhat}{\kern2pt\hbox{\Bighat\kern-12pt{$\mathscr{X}$}}}

\setcounter{MaxMatrixCols}{20}

\setlength{\textfloatsep}{10pt}
\numberwithin{equation}{section}

\widowpenalty=500
\clubpenalty=500
\hfuzz=1pt
\vfuzz=2pt

\setstretch{1.5}

\usepackage[all, cmtip]{xy}

\usepackage[width=16cm, left=2.5cm, height=24.8cm, top=2cm]{geometry}
\usepackage{mathrsfs}
\usepackage{eufrak}
\usepackage{soul}
\usepackage{wasysym}
\usepackage[nottoc]{tocbibind}
\usepackage{amsmath, amsthm, amssymb}
\usepackage{thmtools,thm-restate}
\usepackage[toc,page]{appendix}
\theoremstyle{plain}

\renewenvironment{proof}{{\noindent \bfseries Proof:}}{}

\newtheorem{theorem}{Theorem}[section]
\newtheorem{lemma}[theorem]{Lemma}

\newtheorem{proposition}[theorem]{Proposition}
\newtheorem{definition}[theorem]{Definition}
\newtheorem{remark}[theorem]{Remark}
\newtheorem{example}[theorem]{Example}
\newtheorem*{conjecture*}{Conjecture}
\newtheorem*{thma}{Theorem 4}

\renewcommand{\qed}{\hfill \ensuremath{\Box}}

\newcommand\ket[1]{\left| #1\right.\!\rangle}
\newcommand\bra[1]{\langle #1\left.\!\right|}

\newcommand\mc[1]{\mathcal{#1}}
\newcommand\BL{\mc{B}(L)}
\renewcommand\cH{\mc{H}}
\newcommand\PH{\mc{P}(\cH)}
\newcommand\BH{\mc{B}(\cH)}
\newcommand\VH{\mc{V}(\cH)}
\newcommand\PV{\mc{P}(V)}
\newcommand\bbC{\mathbb{C}}
\newcommand\join{\vee}

\newcommand\ra{\rightarrow}

\newcommand{\IMP}{\; \Rightarrow \;}
\newcommand{\plusminus}{\pm}

\newcommand{\Pn}{\mathcal{P}_n}
\newcommand{\PP}{\mathcal{P}}
\newcommand{\ie}{\textit{i.e.}~}
\newcommand{\HH}{\mathcal{H}}
\newcommand{\KK}{\mathcal{K}}
\newcommand{\AND}{\; \wedge \;}
\newcommand{\Un}{\mathbf{U}}
\newcommand{\Tr}{\mathsf{Tr}}


\title{Sheaf$\,$-Theoretic Methods \\ in Quantum Mechanics \\ and Quantum Information Theory}
\author{Carmen Maria Constantin}
\college{University College}
\degree{Doctor of Philosophy}
\degreedate{Trinity Term 2015}

\begin{document}

\maketitle

\pagenumbering{gobble}
\setstretch{1.5}

\newpage
$ $
\newpage

\begin{table}[c]
\begin{flushright}
{\emph{``All models are wrong, but some are useful."}} \\
$ $
\\
$ $
\\
George E. P. Box
\end{flushright}

\end{table}
\newpage
$ $
\newpage
\begin{abstract}

In this thesis we use the language of sheaf theory in an attempt to develop a deeper understanding of some of the fundamental differences - such as entanglement, contextuality and non-locality - which separate quantum from classical physics. 

We first present, based on the work of Abramsky and Brandenburger \cite{AbrBra11}, how sheaves, defined over certain posets of physically meaningful contexts, give a natural setting for capturing and analysing important quantum mechanical phenomena, such as quantum non-locality and contextuality. We also describe how this setting naturally leads to a three level hierarchy of quantum contextuality: weak contextuality, logical non-locality and strong contextuality. One of the original contributions of this thesis is to use these insights in order to classify a particular class of multipartite entangled states, which we have named balanced states with functional dependencies. Almost all of these states turn out to be at least logically non-local, and a number of them even turn out to be strongly contextual. We then further extend this result by showing that in fact all $n$-qubit entangled states, with the exception of tensor products of single-qubit and bipartite maximally-entangled states, are logically non-local. Moreover, our proof is constructive: given any $n$-qubit state, we present an algorithm which produces $n+2$ local observables witnessing its logical non-locality. 

In the second half of the thesis we use the same basic principle of sheaves defined over physically meaningful contexts, in order to present an elegant mathematical language, known under the name of the Topos Approach \cite{disham}, in which many quantum mechanical concepts, such as states, observables, and propositions about these, can be expressed. This presentation is followed by another original contribution in which we show that the language of the Topos Approach is as least as expressive, in logical terms, as traditional quantum logic. Finally, starting from a topos-theoretic perspective, we develop the construction of contextual entropy in order to give a unified treatment of classical and quantum notions of information theoretic entropy.

\end{abstract}
\newpage
$ $
\newpage
\begin{acknowledgements}

The research presented in this thesis has been conducted under the guidance of my supervisors, Prof. Samson Abramsky and Dr. Andreas D\"{o}ring. Their enthusiasm for research has been contagious and a real source of inspiration for me, throughout my DPhil studies. I am especially grateful for their constant encouragement and support, and for allowing me to enjoy a very large degree of flexibility for the entire duration of this project.

My contact with the other members of the university has both shaped and nurtured my growth as a researcher and as a human being. I would like to thank my colleagues and professors for their friendship and their open attitude towards my ideas and questions, especially Prof. Bob Coecke, Prof. Kobi Kremnitzer, Prof. Boris Zilber, Prof. Prakash Panangaden, Rui Soares Barbosa, Shane Mansfield, Nadish de Silva, Aleks Kissinger, Ray Lal, Jamie Vicary, Chris Heunen, Alex Merry, Brendan Fong, Miriam Backens, Hugo Nava and Ondrej Rypacek. 

I am indebted to my husband, Andrei Constantin, far more than I can express in these few lines. He has supported me in every way possible, and has even patiently guided me through some of the more puzzling conceptual aspects of quantum mechanics. He has proofread this thesis, and in fact all my work, and has always helped me organize and express my ideas. 

I am also grateful to my teachers and mentors from university and high school which have kindled and nurtured my love for mathematics in general and for sheaf theory in particular: Dr. Antony Maciocia, Enache P\v{a}tra\c{s}cu and Constantin T\v{a}n\v{a}sescu.

Special thanks are also due to my formidable grandmother Nona, and to my aunt, Rodica. Without their kind and timely offer to help with child care related matters, this thesis would have taken much longer to complete.

Finally, I would like to acknowledge the financial support I have received from the EPSRC in the form of a generous scholarship including maternity support.

\end{acknowledgements}
\newpage
$ $
\newpage
\begin{dedication}
To my late parents, Dana-Maria and Tudor-Alin
\end{dedication}
\newpage
$ $
\newpage
\begin{romanpages}
\tableofcontents
\end{romanpages}

\pagenumbering{arabic}
\begin{chapter}{Introduction}

The emergence of Quantum mechanics at the beginning of the last century has shaken many of the intuitions underlying classical physics. One of its most peculiar aspects comes from the fact that quantum mechanical states induce, in general, only statistical restrictions on the results of measurements, instead of definite outcomes. 
One could be tempted to draw, as Einstein did, the conclusion that these states are therefore incomplete descriptions of quantum systems.  For a while physicists have wondered whether quantum mechanics could be supplemented with a more complete description involving hidden variables. Thus quantum probabilities could be naturally interpreted as epistemic probabilities, of the sort that arise in ordinary statistical mechanics. 

The existence of hidden variable theories has been refuted by two powerful theorems. The more famous of these is Bell's theorem which uses correlations between entangled states to show that, assuming the premise of locality (i.e. the idea that spatially separated systems cannot instantaneously influence each other), any hidden variable model can be used to derive an inequality which is violated by the predictions of quantum mechanics and these predictions have been experimentally confirmed. 

The second important no-go theorem is that of Kochen-Specker, which shows that it is not possible to assign values to observables under the premise of non-contextuality - which is the assumption that if a quantum system possesses a property (value of an observable), then it does so independently of any measurement context, that is, independently of which other observables are measured alongside the one under consideration.

The more recent rise of Quantum Information Theory is now posing new challenges to our understanding of both computation and physics. Entanglement, non-locality and contextuality, which have proved so problematic to our understanding of quantum theory, have turned out, when seen from a different viewpoint, to offer exciting new possibilities, thus challenging the usual assumptions of classical computation. Quantum systems have been shown to be able to perform information-theoretic tasks beyond the capabilities of classical systems. Famous examples include secure key commitment \cite{BenBra, Ekert:1991zz}, quantum teleportation \cite{Bennett:1992tv} and factoring primes in polynomial time \cite{Shor:1994jg}. 

The work presented in this thesis has been guided and motivated by several recent attempts to use the powerful mathematical tools of sheaf and category theory in order to obtain deeper structural insights into the nature of physical reality. These insights would hopefully allow us to fully harness the resources offered to information processing by quantum physics.

\section{The Topos Approach}

The topos approach to the formulation of physical theories was initiated by Chris Isham and Jeremy Butterfield\cite{Ish97,b1,b2,b3,b4}, expanded by Andreas D\"{o}ring and Chris Isham \cite{d1,d2,d3,d4,ms,Doe09b,Doe11a,Doe11b,disham,DoeIsh12, Doe12,Doe12e,Doe12c}, and further developed by Heunen, Landsman, Spitters and Wolters \cite{HLS09a,HLS09b,Wol10,HLS11}, Flori and collaborators \cite{Flo11,BGeFlo12,BreFlo12} and others \cite{Nak11,Nui11,Woo11}. One of the initial ideas which motivated the programme was that by choosing a suitable universe of mathematical discourse, that is to say a suitable topos, one could express quantum mechanical concepts in a way which would render them structurally similar to their classical mechanical counterparts. It was expected that such a characterisation would bring to light new potentialities of the quantum world.

While recent results, including the approach to non-commutative $K$-theory developed by Nadish de Silva \cite{Nadish}, the new perspective on quantum probability developed by Dewitt and D\"{o}ring \cite{DoeDew1, DoeDew2} and the non-commutative generalisation of Gelfand duality obtained by D\"{o}ring \cite {Doe12e, Doe12c, Doe14} give full justice to the claim that the formalism underpinning the topos approach is in itself rich enough to be of mathematical interest, the choice of formulating real world concepts within generalized mathematical universes appears rather counter-intuitive. 

Mathematics is the language in which physical theories are formulated and perhaps one of the reasons why the universe of sets and functions and the Boolean logic associated with it have provided the foundation of most of the current mathematical discourse is the fact that mathematics has been itself shaped by our experience of reality which, until relatively recently, has only been directly perceived at the macroscopic level. 

It is possible that, if we were living in a quantum world, our primary intuitions would be different and our mathematics would not necessarily be based on set theory. It is at least an interesting coincidence that the change in our basic perception of the natural world brought about by the development of quantum mechanics has taken place in the same century in which category theory, through the notion of topos, has succeeded in axiomatising set-theory, thus bringing about an entirely new, categorical foundation of mathematics. 

With these considerations in mind, one finds it less surprising that the same concepts which have had such an impact on the foundation of mathematics have eventually found their way into the realm of foundations of physics. 

However, apart from these general arguments, there is of course no a priori reason why quantum mechanical concepts, for example, should find a formulation in topos theoretic language. The fact that an elegant formulation does exist is in itself rather remarkable and often comes as a result of powerful theorems. 

\subsection{A Quantum State Space}

Both classical mechanics and quantum mechanics have been traditionally formulated in the topos of sets. Within this topos, the state space of a classical system is a set, which additionally has the structure of a symplectic manifold. The points of this set are the pure states in which the system can be. Physical observables can be thought of as real-valued functions on the state space and, as such, they form a commutative algebra under pointwise multiplication. This commutative algebra and the state space of the system effectively determine each other. This is a consequence of a mathematical result known as Gelfand duality, which establishes a bijective correspondence between compact Hausdorff spaces and unital commutative $C^*$-algebras. 

This correspondence does not hold for quantum mechanical systems which, in the usual Hilbert space formalism, have non-commutative algebras of observables given by the self-adjoint operators of the Hilbert space corresponding to the system. Non-commuting operators correspond to incompatible physical observables, that is, to observables whose values cannot be measured at the same time such as, for example, position and momentum.

Several approaches \cite{Bor,GilKum,Ake,Mul,Doe11b} to extending Gelfand duality to non-commutative algebras were compared in \cite{Ist}. One of these has been further developed by D\"{o}ring in \cite{Doe12c, Doe12e, Doe14}. The underlying idea is to consider all commutative sub-algebras of the non-commutative algebra of bounded operators on the Hilbert space associated with the system. Each of these subalgebras can be interpreted as a classical `perspective' on the physical system, as it only contains pairwise commutative observables. We call it a \textit{classical context}. Since two self-adjoint operators with discrete spectra commute only if they have a joint set of eigenspaces, in finite dimensions a classical context in effect corresponds to a set of pairwise orthogonal projections which add up to identity. 

Indeed, such projections generate, via the von Neumann double commutant construction, a commutative von Neumann algebra. We can use Gelfand duality to associate a classical state space to this algebra. The collection of all such classical state spaces forms a presheaf over the base category given by the set of commutative subalgebras, partially ordered by inclusion. The spectral presheaf is hence an object in the topos of presheaves over this base category. 

The analogy between the spectral presheaf and a classical state space is further justified by the fact that the set of quantum states is equivalent to the set of measures on the spectral presheaf, just like in classical mechanics mixed states are given by probability measures on state space. Pure states are given by `minimal' measures, but it must be noted that unlike their classical counterparts, these are not concentrated at points. A point of a presheaf, in the category theoretical sense, is given by a global section, but the spectral presheaf has no global sections. The lack of global sections of the spectral presheaf is equivalent \cite{b1} to the Kochen-Specker theorem which asserts that it is not possible to simultaneously assign values to all observables on a quantum system. The important insight provided by this elegant reformulation  \cite{b1} was one of the sources of inspiration behind both the topos programme as well as the later sheaf-theoretic approach to characterising contextuality and non-locality developed in Abramsky and Brandenburger's paper \cite{AbrBra11}.

\subsection{From Boolean to Intuitionistic Logic and Back}

We have mentioned that topoi include, but are more general than sets. Moreover each topos comes equipped with its own logical calculus. Just like in the universe of sets and functions the principles of classical logic are represented by operations on a certain set, namely the two-element Boolean algebra, each topos carries an analogue of this algebra. This is called a Heyting algebra, and the logical principles which hold in a topos turn out to be precisely those of intuitionistic logic.  

The defining characteristic of intuitionistic logic is that the law of excluded middle does not necessarily hold. Hence the main difference between theorems proved using Heyting logic and those using Boolean logic is that proofs by contradiction cannot be used in the former. D\"{o}ring and Isham argue in \cite{disham} that this is not a major restriction and intuitionistic logic is a viable alternative to classical logic for the purpose of building physical theories. 

In particular, the internal, multivalued topos logic allows us to assign (generalised) truth values to all propositions about a quantum system. As mentioned above, Kochen and Specker proved \cite{KocSpe} that such assignments are impossible if one works with classical two-valued logic. However, one must carefully distinguish between the internal logic which is adequate to the system in itself, and the meta-logic in which we argue \textit{about} the system. Indeed we, as macroscopic entities, use a meta-logic, typically Boolean, to reason about the world, and to define mathematical structures. 

Although interesting work has been done by Heunen, Landsman, Spitters and Wolters \cite{HLS09a,HLS09b,HLS11,Wol10} and Fauser, Raynaud and Vickers \cite{FRV} using the internal topos logic, it is also possible to reason about a topos in an external fashion. As D{\" o}ring argues in \cite{Doe13}, in order to provide an objective report of some phenomena outside ourselves, we must separate ourselves from the system we are considering, hence in our description and mathematical arguments about a physical system we are free to use the metalogic. In technical terms, this is equivalent to working in an ambient topos which is the familiar topos of sets and functions. This external perspective has been adopted in the present work as well.

\section{Sheaves and Contextuality}

We have mentioned above that one of the basic ingredients of the topos approach is given by the poset of contexts. The contexts capture the physical idea of measurements which can be performed jointly. One can achieve an even higher level of generality by replacing the formalism of operator algebras with abstractly defined families of maximal sets of commuting observables. This is done by Samson Abramsky and Adam Brandenburger in \cite{AbrBra11} in order to study the key information theoretic resources of contextuality and non-locality. Their approach covers $n$-partite Bell-type scenarios as well as Kochen-Specker configurations, and many other examples relevant to quantum information theory. 

The basic ingredients of the mathematical formalism used by Abramsky and Brandenburger include a set $X$ of labels for observables, a cover $\mathcal{U}$ of $X$ consisting of subsets $U\subseteq X$ (called measurement contexts) which correspond to the different combinations of observables which can be measured together, and a set $O$ which labels the possible outcomes. A joint outcome for any compatible set of measurements $U$ is specified by a function $s:U \rightarrow O$, which is called a section over $U$. From these ingredients one can build a sheaf of events $\mathcal{E}$ which associates to each $U$ the set $O^U$ of sections over $U$, together with restriction maps given by function restriction. In a quantum mechanical setting, this sheaf is closely related to Isham and Butterfield's spectral presheaf.

This sheaf of events is then composed with a certain distribution functor to obtain a presheaf of $R$-valued distributions, $\mathcal D_R\mathcal E$, which assigns to each $U\subseteq X$  the set $\mathcal D_R(\mathcal E(U))$ of distributions on $\mathcal E(U)$. In this thesis the ring $R$ will be taken to be either the positive reals or the Boolean ring.

Measurement covers are defined as maximal sets of compatible measurements. An empirical model over a given measurement cover $\mathcal{M}\subset \mathcal{P}(X)$ is then defined as a compatible family $\{e_C\}_{C\in \mathcal{M}}$ of probability distributions $e_C$ on $\mathcal{E}(C)$.

Using these basic ingredients, a novel three-level characterisation of multipartite quantum states in terms of their degree of contextuality is obtained. This characterisation is linked to the lack of global sections of the preseheaf $\mathcal D_R\mathcal E$, compatible with a given empirical model.

\subsection{Three Levels of contextuality}

The first, and lowest, degree of contextuality described by Abramsky and Brandenburger - weak, or probabilistic contextuality - generalizes the original argument used in Bell's theorem. Bell's argument essentially relies on the probabilistic predictions of quantum mechanics, which are inconsistent with the predictions of any local realistic theory. Abramsky and Brandenburger show that the content of Bell's argument can be summarised by an empirical model over the positive reals, that is by a probability table containing the empirical predictions for all allowed combinations of measurements. The information encoded by this empirical model can be used to derive a proof of Bell's theorem, without recourse to inequalities, but based on the non-existence of a certain joint distribution, or global section.

The second, intermediate degree of contextuality - logical, or possibilistic contextuality - generalizes an argument used by Hardy \cite{Hardy92, Hardy 93}, who showed that an inequality-free proof of Bell's theorem could be given for almost all 2-qubit systems. Hardy's construction works for all bipartite entangled states, except for the maximally entangled states. The content of Hardy's argument can again be summarised by an empirical model, this time over the Boolean ring, as Hardy's argument essentially only relies on the possibility (probability greater than $0$) and respectively impossibility (probability~$0$) of certain measurement outcomes, that is, it only relies on the support of the probability distributions. Abramsky and Brandenburger show that the possibilistic contextuality of the Hardy model is a stronger property than the probabilistic contextuality used in the Bell argument (i.e. the Bell model is not logically contextual, but any probabilistic model whose support coincides with the Hardy model must be both weakly and logically contextual).  

The third degree of contextuality - strong contextuality - generalizes an argument used by Greenberger, Horne, Zeilinger and Shimony \cite{GHZ89, GHZ90}, who used the non-classical properties of certain quantum states to give a strengthened inequality-free proof of Bell's theorem. Their proof, which predates Hardy's argument, needs -- on the other hand -- systems of at least three qubits, such as the entangled three-qubit GHZ state. The GHZ argument, like the Hardy one, only relies on the possibility or impossibility of certain measurement outcomes. Yet it is shown by Abramsky and Brandenburger that the three qubit GHZ model satisfies a stronger property than the logical contextuality of the Hardy model, which is logically but not strongly contextual. It is also interesting to note that strong contextuality can be exhibited in the usual $n$-qubit multipartite Bell-type scenarios only if we have a scenario with three or more parties.

This line of work has been further developed by Abramsky and Hardy in \cite{AbrHar12} where they introduce a notion of logical Bell inequality, based on logical consistency conditions. Logical Bell inequalities can be used to obtain proofs of Bell's theorem without probabilities, but also to derive testable inequalities with provable violations for a wide variety of situations. It can also be shown that measurement models achieve maximal violations of logical Bell inequalities if and only if they are strongly contextual. Non-maximal violations are achieved by measurement models which are possibilistically contextual - that is, they occupy the middle level of the hierarchy.



\section{Outline}

According to the type of methodology used, the material presented in this thesis has been divided into two parts. Each part begins with a background chapter which lays out the necessary mathematical and physical terminology. 

Thus Part I starts with Chapter \ref{SSC} in which the reader is first given a brief overview of some of the main information theoretic concepts, such as qubits and quantum gates. We also present some of the intriguing properties which distinguish quantum mechanical systems from classical ones, such as entanglement, non-locality and contextuality. 

We then proceed to introduce some sheaf theoretic constructions which can be used to describe general experimental scenarios. These are modelled in great generality as compatible families of measurements. We also show how, for such scenarios, compatibility in a sheaf theoretic sense corresponds to a physical condition motivated by Special Relativity, which is known as no-signalling. In particular, the fact that quantum mechanics satisfies no-signalling is taken to indicate a basic consistency between quantum mechanics and relativity. This core mathematical structure is then used to analyse contextuality and non-locality in a unified way. In particular, we show that these phenomena can be characterised precisely in terms of obstructions to the existence of global sections of a certain distribution presheaf.  

The chapter ends by distinguishing the three strengths of degree of contextuality: standard probabilistic non-locality, exhibited by the original example introduced by Bell; possibilistic contextuality, exemplified in our presentation by the Hardy model and by permutationally symmetric states; and strong contextuality, exemplified by GHZ states. 

In chapter \ref{CC} we use this hierarchy to classify balanced quantum states with functional dependencies. These are non-permutationally symmetric entangled states which are described by Boolean functions. The classification yields a large number of strongly contextual states. Moreover, it turns out that all the states considered, except those which are equivalent to tensor products of pure states and Bell pairs, are at least logically contextual.

Chapter \ref{LNL} extends the results of the previous chapter's classification by showing that in fact all $n$-qubit quantum states, with the exception of tensor products of pure states and Bell pairs, are logically contextual for suitable choices of measurements. We moreover show that $n+2$ observables suffice to witness the logical contextuality of any $n$-qubit state: two observables each for two of the parties, and one each for the remaining $n-2$ parties. Our proof is constructive, so we are also able to present an algorithm which returns the witnessing local observables for any $n$-qubit state.
$$*\ *\ *\ $$

The second part of this thesis begins with Chapter \ref{TA}, in which the reader is first guided through the category theoretical notions which are necessary for understanding the definition of an elementary topos. 

This is followed by a short presentation of the algebraic-geometric duality (Gelfand's duality) which is used in defining the main construction of the topos approach, the spectral presheaf. Finally the reader is introduced to the way in which several basic physical concepts, such as the state space, real number objects, and the collection of pure and mixed states of a system are formulated within the topos approach. These will be the main building blocks used in the construction of contextual entropy in Chapter \ref{CE}. We also show how projections, which are used to represent propositions about a system in the Hilbert space formulation of quantum mechanics, can be represented in the topos formalism.

In Chapter \ref{LA} we discuss whether the topos-based form of logic for quantum systems is (at least) as rich as the traditional quantum logic derived from the Hilbert space formulation of quantum mechanics. Our discussion revolves around the question whether the orthomodular lattice of projections on a Hilbert space, which is traditionally used in quantum logic to represent propositions, can be reconstructed from the poset of contexts which underlies the constructions in the topos approach. This question has a positive answer, according to a result obtained by Harding and Navara in \cite{navara}. Our contribution in this chapter seeks to answer an open problem stated at the end of \cite{navara}, which asks for an explicit reconstruction of an orthomodular lattice from its associated poset of distributive sub-lattices. We show that for atomic orthomodular lattices a reasonably direct reconstruction is indeed possible.

In Chapter \ref{CE} we take a look at the concept of entropy, one of the cornerstones of information theory, from the perspective of the topos approach. Since our reformulation of entropy does not directly depend on the interpretation of states as density matrices within the usual Hilbert space formalism of quantum mechanics, this endeavour is particularly relevant for the topos programme.
  
The chapter starts with a brief outline of some of the properties of Shannon entropy, its quantum counterpart, von Neumann entropy and the more general family of Renyi entropies. 

Using D{\" o}ring's reformulation of quantum states as probability measures \cite{ms} on the spectral presheaf, we next show how a classical probability distribution, and hence also its corresponding Shannon entropy, can be naturally associated to each context in the base category over which the spectral presheaf is constructed. The numerical values of these Shannon entropies form a global section of a presheaf of real values over the same base category, the poset of contexts. This global section is the state's contextual entropy. A powerful result known as the Schur-Horn Lemma can be used to show that this construction unifies the Shannon and von Neumann entropies by encoding the von Neumann entropy of the state in a distinguished way.  

A comparison is then drawn between Shannon, von Neumann and contextual entropies. This allows us to observe, for example, that one of the differences between Shannon and von Neumann entropies (the property of being monotone) is precisely due to contextuality, an idea which is not immediately obvious from the definition of these two entropies. 

Perhaps the main result of this chapter is the informatic-theoretic characterisation of quantum states provided by contextual entropy which can be shown to be rich enough to allow one to reconstruct the quantum state from which it originated. This also provides us with a new insight into Gleason's theorem.

At the end of the chapter we show that it is possible to adapt other classical entropies within the formalism of the topos approach, given that they satisfy a certain weak recursivity property.  We explicitly describe how the contextual entropy construction can be generalised to include Renyi entropies, and we observe that contextual Renyi entropies also encode sufficient information to allow for a complete state reconstruction. 

This approach has been inspired by one of the key ideas of the topos programme, namely that one can hope to obtain a complete description of a quantum system by looking at it from all possible classical perspectives and keeping track of the resulting information. 


The final chapter, namely Chapter \ref{gls}, contains a summary of the main results presented in this thesis, as well as a number of concluding remarks and directions for further work.

The original material presented in this thesis is drawn from four research papers. The material presented in Chapter \ref{CC} is based on:

\begin{enumerate}
\item S. Abramsky, C. M. Constantin, \textit{A Classification of Multipartite States by Degree of Non-locality}, EPCS, Proceedings of QPL 2013, arXiv:1412.5213, \cite{AbrCon}

Chapter \ref{LNL} is based on:
\item S. Abramsky, C. M. Constantin, S. Ying, \textit{Hardy is (almost) everywhere: Non-locality without inequalities for almost all entangled multipartite states}, arXiv:1506.01365, \cite{AbrConYing}

Chapter \ref{LA} is based on:
\item C. M. Constantin, A. D\"{o}ring, \textit{Reconstructing an Atomic Orthomodular Lattice from the Poset of its Boolean Sublattices}, Houston Journal of Mathematics, arXiv:1306.1950, \cite{ConDoe}

Finally, Chapter \ref{CE} is based on:
\item C. M. Constantin, A. D\"{o}ring, \textit{Contextual Entropy and Reconstruction of Quantum States}, arXiv:1208.2046, \cite{ConDoe1}
\end{enumerate}

\end{chapter}


\part{First part}
\newpage
$ $
\newpage

\begin{chapter}{The Sheaf Theoretic Structure of Contextuality and non-Locality}\label{SSC}

We begin this chapter by reviewing a number of fundamental notions in quantum information theory, such as entanglement, non-locality, hidden variables, Bell inequalities and logic gates. We show that these notions can be re-expressed in an elegant way using sheaf-theoretic language. The presentation is based on the textbook \cite{NieChu} and the original work of Abramsky, Brandenburger and Hardy \cite{AbrBra11, AbrHar12}.

\section{Entanglement, Non-locality and Contextuality}
\subsection{Qubits and qubit states}
Classical information theory relies on the concept of {\itshape bit}, understood as the basic unit of information, with possible values $0$ and $1$. Its quantum analogue, the {\itshape quantum bit}, or {\itshape qubit} for short, is understood as a quantum system with two states $\ket0$ and $\ket1$. More formally, a qubit is an element of a $2$-dimensional Hilbert space\footnote{Other frequent notations for a $2$-dimensional orthonormal basis, which we will also use in this work, are $\{\ket-, \ket+\}$ and $\{\ket{\,\downarrow\,}, \ket{\,\uparrow\,}\}$.} (usually over $\mathbb C$). While a classical bit can only have one value at a time, $0$ or $1$, its quantum counterpart is, in general, a superposition of $\ket 0$ and $\ket 1$:
$$\ket\psi = \alpha \ket0+\beta\ket1 $$
The qubit $\ket\psi$ is said to be in a pure state if $|\alpha|^2 +|\beta|^2=1$ and in a mixed state if it is a statistical mixture of different pure qubit states. 

Pure qubit states can be represented as points on the so-called Bloch sphere parametrized by $0\leq \theta \leq\pi$ and $0\leq\phi<2\pi$, by setting $\alpha = e^{i\phi}\cos(\theta/2)$ and $\beta=\sin(\theta/2)$. On the Bloch sphere, the North and the South poles correspond to the points accessible to a classical bit, while the interior of the Bloch sphere corresponds to mixed states.

A system of $n$-qubits represents a finite-dimensional Hilbert space over the complex numbers of dimension $2^n$. A state $\ket\psi$ in this Hilbert space is represented as a linear combination 
$$ \ket\psi\ =\sum_{j_1,j_2,\ldots,j_n=0,1} c_{j_1,j_2,\ldots,j_n} \ket{j_1}\otimes \ket{j_2}\otimes \dots \otimes \ket{j_n}$$
where $\left| 0\right.\!\rangle$ and $\ket{1}$ are the two states of a qubit. The vectors $\ket{j_1}\otimes\dots\otimes\ket{j_n}$ form an orthonormal basis and are usually labelled by binary strings of length $n$, $\ket{j_1j_2\dots j_n}$. 

A quantum computation process on $n$ qubits represents the following sequence of steps. First, we assemble $n$ qubits and prepare them in a standard initial state. Then we apply a unitary transformation $U$, usually written as a product of {\itshape quantum gates}, i.e.~unitary transformations that act on a small number of qubits. Finally, we measure all the qubits, by projecting on the $\left\{\ket0,\ket1\right\}$ basis, which represents the (probabilistic) outcome of the computation. 


\subsection{Entanglement}
Multiple qubits can exhibit quantum entanglement. {\itshape Entanglement} is a striking feature of quantum mechanics, on which many of its information theoretic successes that go beyond classical physics, such as quantum teleportation~\cite{Bennett:1992tv}, superdense coding~\cite{Bennett:1992zzb, Shor:1994jg} and quantum cryptography~\cite{Ekert:1991zz}, rely. As a result of interactions, quantum systems (e.g.~qubits), can become {\itshape entangled}, giving rise to correlations between the properties of the constituent systems which persist even when these become spatially separated. The possibility of having entangled systems over large distances, leading to non-local correlations, made Albert Einstein, Boris Podolski and Nathan Rosen to conclude in 1935 that quantum mechanics must be an incomplete description of reality \cite{Einstein:1935rr}. However, nearly 30 years later, John Stewart Bell realised in his groundbreaking paper ``On the Einstein Podolsky Rosen Paradox" \cite{Bell:1964kc} that any physical theory which would satisfy the principle of locality (stating that an object is directly influenced only by its immediate surroundings) must necessarily satisfy certain inequalities. Bell then showed that the predictions of quantum mechanics violate these inequalities. We will examine this circle of ideas into more detail in Section \ref{NLB}. 


Since quantum entanglement appears as an indispensable resource in quantum information processing, we will give particular attention to understanding the ways in which it can be quantified. To begin with, let us recall the basic definition of entanglement for bipartite (2-qubit) systems. 
%

\begin{definition} 
Let $\mathcal H_1$ and $\mathcal H_2$ be two Hilbert spaces and $\ket\psi\in \mathcal H_1\otimes \mathcal H_2$. Then $\ket\psi$ is said to be {\itshape disentangled}, or {\itshape separable} or a {\itshape product state} if $\ket\psi = \ket{\psi_1}\otimes \ket{\psi_2}$, for some $\ket{\psi_1}\in \mathcal H_1$ and $\ket{\psi_2}\in \mathcal H_2$. Otherwise, $\ket\psi$ is said to be {\itshape entangled}.
\end{definition}

\begin{example}
The {\itshape EPR-state (Einstein-Podolsky-Rosen state)} is defined as 
\begin{equation}
\ket{\psi}_{\text{EPR}} = \frac{1}{\sqrt 2} \left(\ket{0}\otimes \ket1 - \ket1\otimes\ket0 \right)
\end{equation}
One can easily check that the EPR-state cannot be written as a product state. 
\end{example}

\begin{example}
In the 2-qubit Hilbert space $\mathbb C^2\otimes \mathbb C^2$, the {\itshape Bell states} are given by: 
\begin{equation}
\begin{aligned}
\ket{\Phi^+} &= \frac{1}{\sqrt 2} \left(\ket{0}\otimes \ket0 +\ket1\otimes\ket1 \right)\ \ \ \ \ \ \ \ket{\Phi^-} = \frac{1}{\sqrt 2} \left(\ket{0}\otimes \ket0 -\ket1\otimes\ket1 \right)\\
\ket{\Psi^+} &= \frac{1}{\sqrt 2} \left(\ket{0}\otimes \ket1 +\ket1\otimes\ket0 \right)\ \ \ \ \ \ \ \ket{\Psi^-} = \frac{1}{\sqrt 2} \left(\ket{0}\otimes \ket1 -\ket1\otimes\ket0 \right)
\end{aligned}
\end{equation}
where $\left\{\ket0,\ket1\right\}$ represents an arbitrary orthonormal basis of the 1-qubit Hilbert space $\mathbb C^2$. The Bell states are also entangled. 
\end{example}
\vspace{-4pt}

\subsection{Measures of entanglement for bipartite states}
The distinction between separable and entangled states is mathematically straightforward. However, in practice, it is generally difficult to discern. As such it is important to find operational criteria to test separability.
 
The most crude measure of entanglement is the {\itshape Schmidt number} (also called the {\itshape Schmidt rank}). For a state $\ket{\psi}\in \mathcal H_A\otimes \mathcal H_B$ the Schmidt number over $\mathcal H_A\otimes \mathcal H_B$ is the smallest number $\text{Sch}(\ket\psi, \mathcal H_A, \mathcal H_B)$ such that $\ket\psi$ can be written as
$$\ket\psi= \sum_{j=1}^{\text{Sch}(\ket\psi, \mathcal H_A, \mathcal H_B)}  \ket{u_j}\otimes \ket{v_j}$$ 
where $\ket{u_j}\in \mathcal H_A$ and $\ket{v_j} \in \mathcal H_B$. Thus a separable state has Schmidt number 1 and an entangled state has Schmidt number greater than 1. The Schmidt number can also be defined using the Schmidt decomposition of $\ket\psi$ over $\mathcal H_A\otimes \mathcal H_B$: 
$$\ket\psi = \sum_{j=1}^{\text{min}(\text{dim}\mathcal H_A, \text{dim}\mathcal H_B)} s_j \ket{u_j} \otimes \ket{v_j}$$
where $\ket{u_j}$ form an orthonormal basis of $\mathcal H_A$ and $\ket{v_j}$ are an orthonormal basis of $\mathcal H_B$ and the coefficients $s_i$ are non-negative. The strictly positive coefficients are called Schmidt coefficients and their number corresponds to the Schmidt number. Alternatively, form the rank 1 matrix $\rho = \ket\psi\bra\psi$ and take the partial trace with respect to either $A$ or $B$. This will be a diagonal matrix with non-zero elements $|s_i|^2$. Thus the Schmidt number can also be found by counting the non-zero eigenvalues of $\rho_A:=\text{tr}_B\rho$ or, equivalently, the non-zero eigenvalues of $\rho_B:=\text{tr}_A\rho$.

For bipartite pure states, the above comments lead to another measure of entanglement, the {\itshape entropy of entanglement}. Entangled states have Schmidt number greater than 1, therefore a bipartite pure state is entangled if and only if its reduced states are mixed states. Consequently, the von Neumann entropy of either reduced state gives a well defined measure of entanglement: 
$$ S = - \text{tr} (\rho_A \log \rho_A )= - \text{tr} (\rho_B \log \rho_B )$$
which can be written as the Shannon entropy 
$S = -\sum_i  p_i \log p_i$
where, e.g.~$\rho_A$ is written in terms of its eigenvectors $\rho_A = \sum_i p_i\ket{i}\bra{i}$. 

Many other entanglement measures for bipartite states exist, such as topological entanglement entropy, entropy of formation and dilution, squashed entanglement etc. For a review of these and other related measures, see~\cite{Plenio:2007zz}.

\subsection{Measures of entanglement for multipartite states}
For multiple qubit states, one has to distinguish between states that have all subsystems entangled and those in which only certain subsets of qubits are entangled. To this end, one defines the notion of {\itshape biseparability} as a property of $n$-qubit states for which there exists a partition of the qubits into two disjoint subsets $A$ and $B$, such that $|A|+|B|=n$ and the original state is a product state with respect to the partition $A|B$.

However, the notion of biseparability (or multi-separability) is only a crude measure of entanglement. To see that, consider the following two tripartite states, known as the Greenberger-Horne-Zeilinger (GHZ) state and the W-state: 
\begin{equation*}
\begin{aligned}
\ket{\text{GHZ}_\pm} &=\frac{1}{\sqrt 2}\left(\ket{000}\pm\ket{111} \right)\\
\ket{\text{W}} &= \frac{1}{\sqrt 3} \left( \ket{001}+\ket{010}+\ket{100}\right)
\end{aligned}
\end{equation*}
Clearly, both states have all three qubits entangled (non-biseparable). The essential difference between these states can be seen if a measurement is performed on one of the three qubits; after measurement, the state is separated in the case of the GHZ-state (it is either $\ket{000}$ or $\ket{111}$), while the W-state remains entangled. Put differently, if one of the three qubits is lost (traced out), the reduced 2-qubit state is separated for the GHZ-state, while for the W-state it is entangled. One says that the entanglement properties of the W-state are robust with respect to particle loss, and fragile for the GHZ state. 

It can be shown \cite{Dur:2000zz} that the W- and the GHZ-states cannot be transformed into one another by any protocol involving local quantum operations (LO), i.e.~transformations which factor out as tensor products of local operators on each qubit and classical communication systems (CC) between the three parties (LOCC). Moreover, D\"ur, Vidal and Cirac showed in \cite{Dur:2000zz} that any non-biseparable three-qubit state can be transformed into either the W- or the GHZ-state.

\subsection{Non-locality, Bell's theorem and hidden variables}\label{NLB}

One could, as Bell did, ask the following question: if qubits were replaced by local classical variables (i.e.~classical states with definite values for each qubit at every moment of time), would it then be possible to reproduce the outcome of any quantum computation process by replacing each quantum gate with classical operations which are randomly chosen from a set of possible transformations?

The idea of using classical variables instead of qubits is in accord with the principle of {\itshape local realism} (in the sense of Einstein, \cite{Einstein}), according to which any two objects $A$ and $B$ that are far apart in space must exhibit relative independence, such that any external influence on $A$ has no direct (i.e.~instantaneous or faster-than-light) influence on $B$ ({\itshape Principle of Local Action}) and moreover, material objects have properties independent of any observation, and the results of any possible measurement depend on these properties
({\itshape realism}). 

The proposal to describe the measurement process by randomly chosen classical operations, instead of quantum gates, is linked to the class of so-called {\itshape local hidden variable theories}. Such theories attempt to obtain the non-classical features of Quantum Mechanics as emergent phenomena of a Local Realist theory. This is based on the conjecture that quantum mechanics, as a theory which violates the principle of local realism, cannot be a complete theory. As such one needs to supplement quantum mechanics with additional variables that determine the results of individual measurements, in order to restore causality and locality. 
The answer to the above question is provided by Bell's theorem: 
\begin{theorem}{\bfseries{(Bell)}}
There is no {\itshape local probabilistic algorithm}, or equivalently, no {\itshape local hidden variable theory} that can reproduce the conclusions of quantum mechanics.
\end{theorem}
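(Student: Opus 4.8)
The plan is to prove the contrapositive by exhibiting a single bipartite scenario whose quantum-mechanical statistics cannot be reproduced by \emph{any} local hidden variable model; one such witness suffices for the non-existence claim. I would work with the EPR state $\ket{\psi}_{\text{EPR}}$ on $\mathbb C^2\otimes\mathbb C^2$ and grant each of the two spatially separated parties, Alice and Bob, a free choice between two dichotomic ($\pm1$-valued) spin observables.

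First I would pin down what a local hidden variable theory commits one to. Such a theory posits a space $\Lambda$ of hidden variables carrying a probability distribution $\rho$, together with response functions returning a definite outcome for each setting and each $\lambda\in\Lambda$. Writing $a,a'$ for Alice's two settings and $b,b'$ for Bob's, \emph{realism} is the existence of functions $A(a,\lambda),A(a',\lambda),B(b,\lambda),B(b',\lambda)\in\{-1,+1\}$, and \emph{locality} is the demand that Alice's outcome be independent of Bob's chosen setting and vice versa. The predicted correlation is then $E(x,y)=\int_\Lambda A(x,\lambda)B(y,\lambda)\,\rho(\lambda)\,d\lambda$.

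The central step is a pointwise algebraic bound. For fixed $\lambda$, exactly one of $B(b,\lambda)-B(b',\lambda)$ and $B(b,\lambda)+B(b',\lambda)$ vanishes while the other equals $\pm2$, so $A(a,\lambda)\bigl(B(b,\lambda)-B(b',\lambda)\bigr)+A(a',\lambda)\bigl(B(b,\lambda)+B(b',\lambda)\bigr)=\pm2$. Integrating against $\rho$ and applying the triangle inequality yields the CHSH inequality $\lvert E(a,b)-E(a,b')+E(a',b)+E(a',b')\rvert\le2$, which every local hidden variable model must satisfy. I would then compute the quantum correlations for the singlet, $E_{\text{QM}}(x,y)=-\cos\theta_{xy}$; choosing the four measurement directions coplanar and separated by successive angles of $45^\circ$ drives the CHSH combination to $2\sqrt2>2$. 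Since these predictions are experimentally confirmed yet violate a bound obeyed by every local hidden variable theory, no such theory can reproduce quantum mechanics, as claimed.

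The genuinely delicate part is not the arithmetic but the modelling: making precise, and defending as the correct rendering of local realism, both the factorized form of the response functions and the use of a single setting-independent distribution $\rho$. Once locality and realism are committed to this measure-theoretic shape the inequality and its violation are routine, so the whole conceptual weight of the theorem rests on that formalization --- which is exactly the step the sheaf-theoretic reformulation developed in Section~\ref{NLB} is designed to isolate and clarify.
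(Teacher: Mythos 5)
Your proposal is correct in substance and follows the same overall strategy as the paper's primary argument --- formalize local hidden variables as factorized responses integrated against a setting-independent distribution, derive an inequality all such models obey, then exhibit a quantum violation on a maximally entangled two-qubit state --- but it differs in every particular. You derive the CHSH inequality in correlation (expectation-value) form via the pointwise bound $A(a,\lambda)\bigl(B(b,\lambda)-B(b',\lambda)\bigr)+A(a',\lambda)\bigl(B(b,\lambda)+B(b',\lambda)\bigr)=\pm2$ and violate it with the singlet at $45^\circ$ separations, reaching the Tsirelson value $2\sqrt2>2$; the paper instead cites (rather than derives) the CH74 inequality of Clauser and Horne, in probability form, and violates it with the state $\ket{\Phi^+}$ and bases rotated by $\pi/6$, obtaining $9/8\nleq 1$. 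Your route is more self-contained, since the inequality is actually proved; the paper's choice of CH74 has the advantage of working directly with the probability tables that feed into its empirical-model formalism. More importantly, the paper also gives a second, inequality-free proof (in the subsection on the Bell state): it shows by a direct linear-algebraic argument that the empirical model of $\ket{\Phi^+}$ admits no global section, i.e.\ no joint distribution on all four observables marginalizing to the observed tables, which by the Fine/Abramsky--Brandenburger equivalence is exactly the non-existence of a local hidden-variable realization. That second proof is the one the thesis emphasizes, and your closing remark correctly anticipates that this is where the sheaf-theoretic machinery earns its keep.

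One small gap to patch: the paper's formalization of locality (its Equation for $P(a,\alpha;b,\beta)$) allows \emph{stochastic} local responses $P_1(s;a,\alpha)$, $P_2(s;b,\beta)$, whereas your pointwise $\pm2$ argument requires \emph{deterministic} $\pm1$-valued response functions. To cover the stated theorem (``no local probabilistic algorithm'') you must either observe that the bound survives when the $\pm1$-valued functions are replaced by conditional expectations $\bar A(x,\lambda),\bar B(y,\lambda)\in[-1,1]$ --- since $\lvert \bar A(a,\lambda)(\bar B(b,\lambda)-\bar B(b',\lambda))+\bar A(a',\lambda)(\bar B(b,\lambda)+\bar B(b',\lambda))\rvert\le 2$ still holds there --- or invoke the standard reduction of local stochastic models to mixtures of deterministic ones by enlarging the hidden variable. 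Either fix is one line, but as written your argument proves the theorem only for deterministic hidden variables.
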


Bell's theorem expresses the fact that if one considers correlations between ideal measurements on entangled states, the predictions of quantum mechanics violate certain inequalities which are necessary conditions for local realism. Thus Bell's theorem imposes an incompatible alternative between any local hidden variable theory and quantum entanglement as described in quantum mechanics. 


Bell's theorem relies on the notion of {\itshape non-local correlations}. The existence of non-local correlations makes it impossible to decipher a generic (entangled) multiple-qubit state by dividing the system into parts and studying each part separately. Let us consider the following set-up, in which Alice and Bob have (each of them) access to one qubit of an entangled 2-qubit state. Suppose Alice can perform measurements $a$ and $a'$ and Bob can perform $b$ and $b'$ and assume that the possible outcomes for any of these measurements are 0 and 1. Then, assuming the existence of local hidden variables $s$, the probability that Alice and Bob obtain the outcomes which we denote by $\alpha$ and $\beta$, when measuring the observables $a$ and $b$, respectively, is given by: 
\begin{equation} \label{eq:Bell1}
P(a,\alpha; b,\beta) = \int \rho(s)\, P_1(s; a,\alpha)\, P_2(s; b,\beta)\, ds~, \ \ \ \alpha, \beta \in \{0,1\}
\end{equation}
Here $s$ represents one or several parameters which contain all the relevant information about the past interaction between the two qubits, $\rho(s)\geq 0$ is a probability density normalised to unity and $0\leq P_1(s; a,\alpha), P_2(s; b,\beta)\leq1$. The value of $P_1(s; a,\alpha)$ is assumed to be independent of the measurement performed on the second qubit (locality). Following \cite{Clauser:1969ny, Clauser:1974tg}, one can show that: 
\begin{equation*}
-1 \leq P(a,\alpha; b,\beta) -P(a,\alpha; b',\beta') + P(a',\alpha'; b,\beta)  + P(a',\alpha'; b',\beta')  - P(a',\alpha') -P(b,\beta) \leq 0 
\end{equation*}
where $\alpha',\beta'\in\{0,1\}$ and $P(a,\alpha) = P(a,\alpha; b,0)+ P(a,\alpha; b,1) = P(a,\alpha; b',0) + P(a,\alpha; b',1)$. The above relation is known as the CH74 inequality, obtained by Clauser and Horne in 1974 and we will see in the next section that it is violated by the predictions of quantum mechanics.

\subsection{Measurement contexts}\label{sec:MeasurementContexts}
\vspace{-10pt}
Suppose a 2-qubit system (or rather, a large collection of identically prepared 2-qubit systems) is prepared in one of the four Bell states, say $\ket{\Phi^+}$. Assume Alice measures in the basis $\{\ket0_A,\ket1_A\}$ with possible outcomes $0$ and $1$, and similarly for Bob. Then both Alice and Bob measure $0$ and $1$ with equal frequency. In fact, they would obtain the same probabilities for measurements done in any arbitrary orthonormal basis, e.g.~of the type:
$$\left\{\ket{\alpha_0} =  \cos\alpha\ket 0 +\sin\alpha\ket 1\; ,  \ \ket{\alpha_1=\alpha_0 + \pi/2} =  -\sin\alpha\ket 0 +\cos\alpha\ket 1\right\}$$
To be more precise, when Alice measures in the basis $\{\ket{\alpha},\ket{\alpha+\pi/2}\}$, she obtains the outcomes $0$ and $1$ with equal probabilities independently of Bob's choice of measurement $\{\ket{\beta},\ket{\beta+\pi/2}\}$. For our set-up, the statement can be easily verified by computing the required probabilities with the formula given by Equation \ref{ip} in Section \ref{CQS}. In general, the statement that Alice's outcome cannot be influenced by Bob's choice of measurement is known as the {\itshape no-signalling condition}. 

\vspace{12pt}
If Alice and Bob communicate to each other only the frequencies with which they obtain their different outcomes, they have no way of distinguishing between the four Bell states. In order to discern, they have to correlate their measurements, and communicate the frequencies with which the correlated outcomes are obtained, as summarised in the table below: 
$$\begin{array}{c|c|c|c}
\varstr{14pt}{9pt}
~~~(0,0) ~~~& ~~~(0,1)~~~ & ~~~(1,0)~~~ & ~~~(1,1)~~~ \\
\hline
\varstr{14pt}{9pt}~~~1/2~~~ & ~~~0 ~~~& ~~~0 ~~~& ~~~1/2~~~
\end{array}
 $$
 
With this information, they can decide that the original state can only be $\ket{\Phi^+}$ or $\ket{\Phi^-}$. In order to further distinguish between these, Alice and Bob have to measure other observables. For example, Alice and Bob could perform the following measurements:
$$\begin{array}{c|c|c}
\varstr{14pt}{9pt}
~~~\text{Alice / Bob}~~~ & ~~~\text{Eigenbasis}~~~ & ~~~\text{Outcomes}~~~\\
\hline
\varstr{14pt}{9pt} a & \ket0,~~\ket1& 0,1\\ \hline
\varstr{14pt}{9pt} a' & ~~\cos \frac{\pi}{6} \ket0 + \sin \frac{\pi}{6} \ket 1, ~~~ -\sin \frac{\pi}{6}\ket0+\cos \frac{\pi}{6}\ket1 ~~& 0,1\\\hline
\varstr{14pt}{9pt} b & \ket0,~~\ket1& 0,1\\ \hline
\varstr{14pt}{9pt} b' &~~ \cos \frac{\pi}{6} \ket0 - \sin \frac{\pi}{6} \ket 1,~~~ ~~\sin \frac{\pi}{6}\ket0+\cos \frac{\pi}{6}\ket1 ~~& 0,1\\
\end{array}
$$

In this scenario, Alice can choose between measurements $a$ and $a'$ and Bob between $b$ and~$b'$. A particular choice will be called, from now on, a {\itshape measurement context}. Thus there are four possible measurement contexts, 
$ \{a,b\}, \{a,b'\},\{a',b\},\{a',b'\}$.

By elementary operations, one can show that, for $\ket{\alpha} = \cos\alpha\ket 0 +\sin\alpha\ket 1$ and $\ket\beta = \cos\beta\ket 0 +\sin\beta\ket 1$, the following expressions hold: 
\begin{align*}
 \big|\!\left(\bra\alpha \otimes\bra\beta \right) \ket {\Phi^+} \big|^2 &= \frac{1}{2}\cos^2(\alpha - \beta) \\
  \big|\!\left(\bra\alpha \otimes\bra\beta \right) \ket {\Phi^-} \big|^2 &= \frac{1}{2}\cos^2(\alpha + \beta) 
\end{align*}

Thus, if the entangled state corresponds to $\Phi^+$, Alice and Bob would report the following set of correlated probabilities:
$$\begin{array}{c|c|c|c|c}
\varstr{14pt}{9pt}
~~~A~~~~~B~~~&~~~(0,0) ~~~& ~~~(0,1)~~~ & ~~~(1,0)~~~ & ~~~(1,1)~~~ \\
\hline
\varstr{14pt}{9pt}~~~a~~~~~b~~~ &~~~1/2~~~ & ~~~0 ~~~& ~~~0 ~~~& ~~~1/2~~~\\\hline
\varstr{14pt}{9pt}~~~a~~~~~b'~~~ &~~~3/8~~~ & ~~~1/8 ~~~& ~~~ 1/8 ~~~& ~~~3/8~~~\\\hline
\varstr{14pt}{9pt}~~~a'~~~~~b~~~ &~~~3/8~~~ & ~~~1/8 ~~~& ~~~1/8 ~~~& ~~~3/8~~~\\\hline
\varstr{14pt}{9pt}~~~a'~~~~~b'~~~ &~~~1/8~~~ & ~~~3/8 ~~~& ~~~3/8 ~~~& ~~~1/8~~~\\
\end{array}
 $$
while if the state was $\Phi^-$, they would obtain:
$$\begin{array}{c|c|c|c|c}
\varstr{14pt}{9pt}
~~~A~~~~~B~~~&~~~(0,0) ~~~& ~~~(0,1)~~~ & ~~~(1,0)~~~ & ~~~(1,1)~~~ \\
\hline
\varstr{14pt}{9pt}~~~a~~~~~b~~~ &~~~1/2~~~ & ~~~0 ~~~& ~~~0 ~~~& ~~~1/2~~~\\\hline
\varstr{14pt}{9pt}~~~a~~~~~b'~~~ &~~~3/8~~~ & ~~~1/8 ~~~& ~~~ 1/8 ~~~& ~~~3/8~~~\\\hline
\varstr{14pt}{9pt}~~~a'~~~~~b~~~ &~~~3/8~~~ & ~~~1/8 ~~~& ~~~1/8 ~~~& ~~~3/8~~~\\\hline
\varstr{14pt}{9pt}~~~a'~~~~~b'~~~ &~~~1/2~~~ & ~~~0 ~~~& ~~~0 ~~~& ~~~1/2~~~\\
\end{array}
 $$

Now let us consider the CH74 inequality for the $\Phi^+$ state:
\begin{equation*}
P(a,\alpha; b,\beta) -P(a,\alpha; b',\beta') + P(a',\alpha'; b,\beta)  + P(a',\alpha'; b',\beta') \leq P(a',\alpha') + P(b,\beta) 
\end{equation*}
In our case, the right hand side is always $1/2+1/2$. In order to check the inequality, we need to take one entry from each row of the above table, in total 16 choices. Among these, some violate the CH74 inequality, e.g.:
$$P(a,0; b,0) -P(a,0; b',1) + P(a',0; b,0)  + P(a',0; b',1) = \frac{1}{2}-\frac{1}{8}+\frac{3}{8}+\frac{3}{8} = \frac{9}{8} \nleq 1  $$

\subsection{Contextuality and the Kochen-Specker theorem}

We will refer to a {\itshape measurement context} as a set of compatible measurements, i.e.~measurements that can be jointly performed in a certain experimental setup. In quantum mechanics, contexts correspond to commutative algebras of self-adjoint operators, which can be simultaneously diagonalised. This is, indeed, the use of contextuality that will be employed in the next chapters on the topos approach to quantum physics. 
The notion of contextuality discussed in the following section is more general, as it does not rely on the standard operator-algebra formulation of quantum mechanics. As such, the results obtained in this picture will be independent of the particular way in which quantum mechanics has been traditionally formulated and interpreted.

The notion of contextuality arises naturally from the question: is it possible to assign definite values to all (hidden) variables at any given moment of time, independently of the device used to measure them? The Kochen-Specker theorem shows that this is not the case, and for quantum systems with more than two levels, the value of an observable depends on which commuting set of observables is being measured along with it.  

The Kochen-Specker theorem asserts that quantum mechanics forbids the simultaneous existence of definite values for observables which cannot be measured together, that is it forbids non-contextual hidden variables.

\section{Non-locality and Contextuality in Sheaf-Theoretic Language}

In a series of papers \cite{AbrBra11, AbrShaneRui11, AbrHar12, AbrBra14}, Abramsky et al.~formulated a mathematical description of contextuality phenomena in terms of sheaves over a poset of contexts. Their work was partly inspired by \cite{b1}, in which Butterfield and Isham realised that the Kochen-Specker theorem can be elegantly reformulated in terms of the non-existence of global sections of a certain presheaf. 

While the spectral presheaf used by Butterfield and Isham was based on an operator algebra and, in this sense, it relied on concepts specific to quantum mechanics, the work of Abramsky and Brandenburger was carried out at a much higher level of generality, whithout using any of the characteristic mathematical structures of quantum mechanics. It moreover introduced many new key structures in order to be able to capture and analyse probability tables such as those used in Section~\ref{sec:MeasurementContexts} in the discussion on non-locality and contextuality. We shall describe some of these key structures in the rest of this section.

Let $X$ be a set of measurements and $O$ the set of possible outcomes for each measurement. For the present discussion, it suffices to consider the same $O$ for all measurements, although one could in general allow for different outcomes for each individual measurement. 

Suppose now that a subset $U\subset X$ of measurements are simultaneously performed. The outcomes obtained in this particular context are specified by a {\itshape section}: $ s:U\longrightarrow O $, which associates to a measurement $m\in U$ an outcome $s(m)$. The space of sections over $U$ is denoted by $\mathcal E(U)$ and is a subset of $O^U$, where $O^U$ denotes the set of all functions from $U$ to $O$. If one considers $\mathcal E(U)$, together with the natural restriction maps 
\begin{equation*}
\begin{aligned}
\ \ \ \ \ \ \ \ {\rm res}_U^{U'}: \mathcal E(U')& \longrightarrow \mathcal E(U) \ \ \ \ {\text{for any }} U\subseteq U'\\
s &\longmapsto s|_U
\end{aligned}
\end{equation*}
one immediately has ${\rm res}_U^U = {\rm id}_U$ and ${\rm res}_U^{U'} \circ{\rm res}_{U'}^{U''} = {\rm res}_{U}^{U''}$ for any $U\subseteq U'\subseteq U''$, thus $\mathcal E$ is a pre-sheaf, namely the pre-sheaf of sections over the poset  $\mathcal P(X)$. In other words, $\mathcal E$ is a functor $\mathcal E:\mathcal P(X)^{\rm op} \longrightarrow {\mathbf{Set}}$. 

The construction, $\mathcal E$ is in fact a sheaf since local sections can be uniquely glued together. That is, given a cover $\{U_i\}_{i\in I}$ of $U$ and a family of sections $\{s_i \in \mathcal E(U_i)\}_{i\in I}$ which are compatible on every intersection, i.e.~$s_i|_{U_i\cap U_j} = s_j|_{U_i\cap U_j}$, there is a unique section $s\in \mathcal E(U)$ such that its restriction to $U_i$ is $s_i$ for all $i\in I$. The sheaf condition is trivially satisfied, as one can always glue together partial functions on a discrete space which agree on overlaps by taking the union of their graphs. $\mathcal E$ will be called the {\itshape sheaf of events}.

\subsection{Measurement covers, the distribution functor and no-signalling}
So far we imposed no restrictions on the poset $\mathcal P(X)$, which we think of as a set of labels for different basic measurements. However, in quantum mechanics only certain measurements can be performed jointly, thus it makes sense to introduce the notion of {\itshape measurement cover} $\mathcal M\subset \mathcal P(X)$ of $X$, composed of measurement contexts only, with the property that the union of all contexts contained in $\mathcal M$ equals $X$. Additionally, we will require that $\mathcal M$ contains only maximal contexts (maximal sets of compatible measurements), that is if $C, C'\in \mathcal M$ and $C\subseteq C'$ then $C=C'$ (anti-chain condition on $\mathcal M$). 

\begin{example}
{\itshape Bell-type scenarios.} In the formulation of Bell-type theorems on non-locality, one refers to composed systems, whose parts may be space-like separated, in a fashion similar to our discussion in Section~\ref{sec:MeasurementContexts}. The class of measurement covers used in Bell-type scenarios and scenarios involving other non-local devices such as PR-boxes, can be described as follows. Let $I$ denote a set of indices labelling the different parts of a composed system and for each $i\in I$ let $X_i$ be the set of basic measurements that can be performed on the part labelled by $i$. Then $X$ is constructed as the disjoint union of the family $\{X_i\}_{i\in I}$. The measurement cover $\mathcal M$ is defined as the set of contexts containing exactly one measurement from each part, i.e.~we consider as compatible any two measurements performed on different parts of the system, but we do not allow for compatible measurements on the same part.

We shall refer to a Bell-type scenario which involves $n$ parts, each of which has $k$ possible choices of measurement, each choice with $l$ possible outcomes, as being of $(n,k,l)$-type. Note that for a system of $(n,k,l)$-type, there are $k^n$ measurement contexts, for each of which there are $l^n$ possible assignments of outcomes. Thus there are $(kl)^n$ sections over the contexts. The set of all measurements is of size $kn$, and there are $l^{kn}$ global assignments.
\end{example}


The last ingredient that we need refers to the probabilistic behaviour of quantum systems in their interaction with classical measuring apparatus. Standard probability distributions are represented by non-negative reals between $0$ and $1$, for which the usual addition and multiplication rules apply. Put differently, standard probability distributions are valued in the semiring $(\mathbb R_{\geq 0}, +, 0, \cdot, 1)$, where multiplication distributes over addition and $0$ and $1$ denote the units in the commutative monoids $(\mathbb R_{\geq 0},+,0)$ and $(\mathbb R_{\geq 0},\cdot, 1)$, respectively. If one allows for negative probabilities, the semiring $\mathbb R_{\geq 0}$ has to be extended to the reals $\mathbb R$. Interestingly, the results of Bell's theorem do not hold if one allows for negative probabilities. A third type of semiring of interest to us is the boolean semiring, $\mathbb B=(\{0,1\},\vee,0,\wedge,1)$. 

Fixing a semiring $R$, we define an $R$-distribution over a set $S$ as a function $d:S\rightarrow R$ with finite support (i.e.~$d$ is non-zero only on a finite subset of $S$, its support), satisfying the condition $$\sum_{x\in S} d(x) = 1$$ 
We denote by $\mathcal D_R(S)$ the set of $R$-distributions over $S$. In the case of the semiring $\mathbb R_{\geq 0}$, this is the set of probability distributions with finite support on $S$. In the case of the booleans $\mathbb B$, it is the set of non-empty finite subsets of $S$.
Furthermore, given a function $f:S\longrightarrow T$, we define: 
\begin{align*}
\mathcal D_R(f):\ \mathcal D_R(S) &\ \longrightarrow \ \mathcal D_R(T)\\
d &\ \longmapsto\  \bigg( t\mapsto \sum_{f(s)=t} d(s) \bigg)
\end{align*}
It can be easily seen that $\mathcal D_R$ is functorial, $\mathcal D_R(g\circ f) = \mathcal D_R(g)\circ \mathcal D_R(f)$ and $\mathcal D_R({\rm id}_S) = {\rm id}_{\mathcal D_R(S)}$. Thus we have defined a functor $\mathcal{D}_R:\mathbf{Set}\longrightarrow \mathbf{Set}$. 

By composing the distribution functor with the event sheaf $\mathcal E:\mathcal P(X)^{\rm op} \longrightarrow {\mathbf{Set}}$, we obtain a presheaf $\mathcal D_R\mathcal E: \mathcal P(X)^{{\rm op}}\longrightarrow \mathbf{Set}$. This assigns to a set of measurements $U$ the set $\mathcal D_R(\mathcal E(U))$ of distributions on sections over $U$. The above arrow function, applied to the restriction maps ${\rm res}_U^{U'}$, with $U\subseteq U'$, gives: 
\begin{align*}
\mathcal D_R\mathcal E(U')&\  \longrightarrow\ \mathcal D_R\mathcal E(U) \\
d& \ \longmapsto \ d|_U
\end{align*}
where for each $s\in \mathcal E(U)$,  
\begin{equation}\label{eq:resd}
d|_U :=\sum_{{s'\in\mathcal E(U');\ }{s'|_U=s}} d(s')
\end{equation}
Thus $d|_U$ is the {\itshape marginal} of the distribution $d$, in the sense that it associates to each section $s$ in the smaller context $U$ the sum of the weights of all sections $s'$ in the larger context which restrict to $s$.

\begin{example}
Let us consider a $(2,2,2)$ Bell-type scenario, in which Alice can perform measurements $a$ and $a'$ and Bob can perform $b$ and $b'$. There are two possible outcomes, $0$ and $1$, for each measurement.  The relevant measurement cover $\mathcal M$ consists of four maximal contexts: 
$$\mathcal M = \{ \{a,b\}, \{a',b\}, \{a,b'\},\{a',b'\} \}$$
Over each maximal context $C\in \mathcal M$, there are four sections, e.g.~for $C=\{a,b\}$ we have: 
$$\mathcal E(\{a,b\}) = \{\{a\rightarrow0,b\rightarrow0\},\{a\rightarrow0,b\rightarrow1\},\{a\rightarrow1,b\rightarrow0\},\{a\rightarrow1,b\rightarrow1\}\}$$

Further, assume that we can associate a distribution $e_C\in\mathcal D_R\mathcal E(C)$ to each context $C\in\mathcal M$. This gives a number $e_C(s)$ for each section $s\in \mathcal E(C)$. The distributions $e_C$ form the rows of the following table:
$$\begin{array}{c|c|c|c|c}
\varstr{14pt}{9pt}
~~~A~~~~~B~~~&~~~(0,0) ~~~& ~~~(0,1)~~~ & ~~~(1,0)~~~ & ~~~(1,1)~~~ \\
\hline
\varstr{14pt}{9pt}~~~a~~~~~b~~~ &~~~p_1~~~ & ~~~p_2 ~~~& ~~~ p_3 ~~~& ~~~p_4~~~\\\hline
\varstr{14pt}{9pt}~~~a~~~~~b'~~~ &~~~ p_5 ~~~ & ~~~ p_6 ~~~& ~~~ p_7 ~~~& ~~~ p_8 ~~~\\\hline
\varstr{14pt}{9pt}~~~a'~~~~~b~~~ &~~~ p_9 ~~~ & ~~~ p_{10} ~~~& ~~~ p_{11} ~~~& ~~~ p_{12} ~~~\\\hline
\varstr{14pt}{9pt}~~~a'~~~~~b'~~~ &~~~ p_{13} ~~~ & ~~~ p_{14} ~~~& ~~~ p_{15} ~~~& ~~~ p_{16} ~~~\\
\end{array}
 $$
\end{example}

The data given by $\mathcal M$, the sheaf of events $E$ and a family of distributions $\{e_C\}_{C\in\mathcal M}$, will be called an {\itshape empirical model}. In the standard case of probability tables, the numbers $p_i$ are non-negative integers, and the values along each row sum up to $1$; hence the distributions $e_C$ are probability distributions. One advantage of using empirical models is that they are formulated in an elegant, robust and general mathematical language which does not depend in particular on the Hilbert space formalism of quantum mechanics. 

\begin{example} One could also consider the weights $p_i$ to take values in other semirings, for example the boolean semiring. The following model is a possibilistic version of a non-local Hardy model \cite{Hardy93}, which can be viewed as specifying the {\itshape support} of a standard probabilistic Hardy model. 
$$\begin{array}{c|c|c|c|c}
\varstr{14pt}{9pt}
~~~A~~~~~B~~~&~~~(0,0) ~~~& ~~~(0,1)~~~ & ~~~(1,0)~~~ & ~~~(1,1)~~~ \\
\hline
\varstr{14pt}{9pt}~~~a~~~~~b~~~ &~~~1~~~ & ~~~1 ~~~& ~~~ 1 ~~~& ~~~1~~~\\\hline
\varstr{14pt}{9pt}~~~a~~~~~b'~~~ &~~~ 0 ~~~ & ~~~ 1 ~~~& ~~~ 1 ~~~& ~~~ 1 ~~~\\\hline
\varstr{14pt}{9pt}~~~a'~~~~~b~~~ &~~~ 0 ~~~ & ~~~ 1 ~~~& ~~~ 1 ~~~& ~~~ 1 ~~~\\\hline
\varstr{14pt}{9pt}~~~a'~~~~~b'~~~ &~~~ 1 ~~~ & ~~~ 1 ~~~& ~~~ 1 ~~~& ~~~ 0 ~~~\\
\end{array}
 $$

\end{example}

The {\itshape no-signalling condition} (the statement that the outcome obtained by Alice cannot be influenced by Bob's choice of a measurement) can be elegantly formulated in terms of the presheaf $\mathcal D_R\mathcal E$. We define a no-signalling empirical model for a measurement cover $\mathcal M$ to be an empirical model for which the family of distributions $\{e_C\}_{C\in\mathcal M}$ is compatible in the sense of the sheaf condition: for any $C,C'\in\mathcal M$,
\begin{equation}\label{eq:compatibility}
 e_C|_{C\cap C'} = e_{C'}|_{C\cap C'} 
\end{equation}
where the restrictions $e_C|_{C\cap C'}$ and $e_{C'}|_{C\cap C'}$ are defined as in Eq.~\ref{eq:resd}.

\begin{example}
For our previous example, involving two parties, Alice and Bob, consider the contexts $C=\{m_a,m_b\}$ and $C'=\{m_a,m_b'\}$. Fixing a section $s_0\in\mathcal E(\{m_a\})$, e.g.~$s_0=\{m_a\rightarrow 0\}$, the no-signalling condition can be expressed as
$$\sum_{\stackrel{s\in\mathcal E(C)}{s|_{m_a}=s_0}} e_C(s) = \sum_{\stackrel{s'\in\mathcal E(C')}{s'|_{m_a}=s_0}} e_{C'}(s')$$

For the above table, with $m_a=a, m_b=b,m_b'=b'$ and $s_0=\{a\rightarrow 0\}$, we obtain the condition: 
$$p_1+p_2 = p_5+p_6 $$
\end{example}

\subsection{Global Sections}
Let us recapitulate the mathematical structure introduced above. We started with a set of measurements $X$ and we defined the measurement sheaf of events $\mathcal E$, which assigns to each $U\subseteq X$ the set of sections $\{s:U\rightarrow O\}$. A particular section over $U$ specifies a particular set of outcomes, one outcome for each measurement in $U$. (Remember we assumed that all measurements have the same set of possible outcomes $O$).  Further, we defined the presheaf of $R$-valued distributions, $\mathcal D_R\mathcal E$, which assigns to each $U\subseteq X$  the set $\mathcal D_R(\mathcal E(U))$ of distributions on $\mathcal E(U)$, which is the set of sections over $U$. The distribution presheaf satisfied the important property~\ref{eq:resd}. 

Further, we had to deal with the factual requirement that, in general, not all measurements can be performed together. This introduces the notion of contexts and we formed the set $\mathcal M$ of maximal contexts (i.e.~maximal sets of compatible measurements). 
Since $\mathcal M\subset \mathcal P(X)$, passing to the measurement cover does not affect the definitions and properties of the event sheaf $\mathcal E$ and the distribution presheaf $\mathcal D_R\mathcal E $ on $X$.  
Each context $C\in\mathcal M$ comes with a set of sections~$O^C$. An empirical model is specified by providing, for each $C\in \mathcal M$, a probability distribution $e_C\in \mathcal D_R\mathcal E(C)$ over~$\mathcal E(C)=O^{C}$. These probabilities may or may not be compatible, in the sense of Eq.~\ref{eq:resd}, with a probability distribution over $\mathcal E(X)=O^X$. If such a `global' probability distribution exists, the model corresponds to a {\itshape deterministic hidden variable} scenario, in the sense that hidden variables determine which section $s\in\mathcal E(X)$ is being chosen in any certain situation and context\footnote{The unfamiliar reader might find helpful the following remarks: we have in mind the preparation of a state in the same conditions over and over again. If one had access to the complete set of variables that characterise this state, he should obtain the same set of outcomes each time he performs a `complete' measurement, as the state was prepared in the same conditions. The fact that, in practice (i.e.~in quantum mechanics), this does not happen, is explained in theories invoking hidden variables, by the existence of unaccessible variables which can alter the values of the accessible variables.}. The sheaf condition implies that the assignment of outcomes is independent of the measurement context, thus the existence of global probability distributions in the above sense is a form of {\itshape non-contextuality}. 

For no-signalling empirical models, the compatibility condition \ref{eq:compatibility} satisfied by the family of distributions $\{e_C\}_{C\in\mathcal M}$ implies that a global section $d\in \mathcal D_R\mathcal E(X)$, compatible with $\{e_C\}$, if it exists, will satisfy the sheaf condition with respect to $\mathcal M$. 

The existence of a global section $d\in \mathcal D_R\mathcal E(X)$ which restricts to yield the probabilities specified by the empirical model on each context $C\in\mathcal M$, i.e.~$d|_C=e_C$, has very important implications. Thus given a section $t\in \mathcal E(C)$,
\begin{equation}\label{eq:factorisation1}
e_C(t) = d|_C(t) = \sum_{\stackrel{s\in\mathcal E(X)}{s|_C=t}} d (s) = \sum_{s\in\mathcal E(X)} \delta_{s}|_C (t)\cdot d(s)
\end{equation}
where, for a global section $s\in\mathcal E(X)$, $\delta_{s}\in\mathcal D_R\mathcal E(X)$ is the distribution defined globally by $\delta_s(s)=1$ and $\delta_s(s')=0$ for $s\neq s'$. Hence, the presence of a global section $d\in\mathcal D_R\mathcal E(X)$, such that $d|_C=e_C$ turns out to be equivalent with the statement that the empirically observed probabilities $e_C(t)$ can be obtained by {\itshape averaging} over the hidden variables with respect to the distribution $d$. Further, from the definition of the globally induced distributions $\delta_s$, it follows that 
\begin{equation}\label{eq:factorisation2}
\delta_s|_C (t) = \prod_{x\in C} \delta_{s|_{\{x\}}} (t|_{\{x\}})
\end{equation}
which means that the probability distribution induced by $s$ {\itshape factorizes} as a product of probabilities assigned to individual measurements, independent of the context in which they appear. Note that this is precisely the notion of {\itshape locality} employed in Bell-type scenarios, e.g.~in Equation~\ref{eq:Bell1} we assumed that the probability  $P(s; a,\alpha; b,\beta)$ of the joint outcome $\{\alpha\mapsto a, \beta\mapsto b\}$ determined by the hidden variable $s$ is the product of the probabilities that $s$ determines for the outcomes $\{\alpha\mapsto a\}$ and $\{\beta\mapsto b\}$, i.e.~$P(s; a,\alpha; b,\beta)=P_1(s; a,\alpha)\, P_2(s; b,\beta)$ leading to:
\begin{equation*} 
P(a,\alpha; b,\beta) = \int \rho(s)\, P_1(s; a,\alpha)\, P_2(s; b,\beta)\, ds
\end{equation*}
in agreement with Equation~\ref{eq:factorisation1}.

We summarise the above discussion through the following:
\begin{proposition}
The existence of a global section for an empirical model implies the existence of a {\itshape local (non-contextual) deterministic hidden-variable model} which realises it. 
\end{proposition}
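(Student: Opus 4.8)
The plan is to exhibit the hidden-variable model explicitly and then read off both the realisation property and the locality/non-contextuality directly from the two factorisation identities \eref{eq:factorisation1} and \eref{eq:factorisation2} established in the discussion above. First I would take the hidden-variable space to be the set of global assignments $\Lambda := \mathcal E(X) = O^X$, and take the given global section $d \in \mathcal D_R\mathcal E(X)$ itself as the distribution (the ``prior'') over $\Lambda$. The key conceptual point is that each hidden variable $\lambda = s \in \mathcal E(X)$ is already a \emph{deterministic} valuation: it assigns to every measurement $x \in X$ a definite outcome $s|_{\{x\}}$, simultaneously and before any context is chosen. Hence the proposed model is manifestly deterministic and non-contextual by construction, since the value $s$ allots to $x$ does not depend on which maximal context $C \ni x$ one later selects.

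Next I would verify that this model realises the empirical data. Fixing a context $C \in \mathcal M$ and a section $t \in \mathcal E(C)$, the hypothesis $d|_C = e_C$ together with \eref{eq:factorisation1} yields
\begin{equation*}
e_C(t) = \sum_{s \in \mathcal E(X)} \delta_s|_C(t) \cdot d(s),
\end{equation*}
so the empirical weight of observing $t$ in context $C$ is recovered by averaging, with respect to $d$, the deterministic predictions $\delta_s|_C(t)$ of the individual hidden variables. Since $\delta_s|_C = \delta_{s|_C}$ is the point distribution concentrated on the restricted valuation $s|_C$, this is precisely the statement that $\{e_C\}$ arises as a $d$-average of deterministic value assignments.

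Finally, locality in the Bell-type sense is read off from \eref{eq:factorisation2}, which gives
\begin{equation*}
\delta_s|_C(t) = \prod_{x \in C} \delta_{s|_{\{x\}}}(t|_{\{x\}}).
\end{equation*}
This shows that for each hidden variable $s$ the joint response over $C$ factors into a product of single-measurement responses, each depending only on the outcome $s$ assigns to the individual $x$ and not on the other measurements in $C$; this matches exactly the factorisation of the joint probability assumed in \eref{eq:Bell1}. Assembling the three observations, every $e_C$ is obtained as a $d$-average of locally-factoring, context-independent deterministic responses, which is the asserted local non-contextual hidden-variable model.

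I expect no genuine technical obstacle here, since the argument mostly assembles identities already derived; the two points deserving care are conceptual rather than computational. The first is pinning down, precisely enough to match the extracted structure, what one means by a ``local deterministic hidden-variable model'', so that $(\Lambda, d)$ together with the evaluation maps $s \mapsto s|_{\{x\}}$ genuinely instantiates that definition. The second is the bookkeeping across semirings: for $R = \mathbb B$ the object $d$ is a support set and the sums in \eref{eq:factorisation1} become disjunctions, so the hardest part will be checking that the averaging/factorisation reading survives verbatim in the possibilistic case, where ``realises'' must be interpreted as matching supports rather than numerical probabilities.
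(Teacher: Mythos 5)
Your proposal is correct and is essentially identical to the paper's own justification: the paper proves this Proposition precisely by the discussion preceding it, taking the global assignments $\mathcal{E}(X)$ as the hidden-variable space with $d$ as the distribution, reading realisation off Equation~\eqref{eq:factorisation1} and locality/non-contextuality off Equation~\eqref{eq:factorisation2}. Your added remark about the Boolean semiring (supports in place of probabilities) is a sensible point of care but does not change the argument.
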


It is interesting to note that the converse of the above proposition also holds, as Abramsky and Brandenburger have shown in Theorem 8.1 \cite{AbrBra11}.

\subsection{Existence of Global Sections}
The above discussion leads to the following problem: {\itshape given an empirical model, determine it if has a global section}. Let us refine the expression of this problem. An empirical model consists of a measurement cover $\mathcal M$ (which covers a set of measurements $X$), together with the choice of a (local) section of the distribution presheaf $e_C\in\mathcal D_R\mathcal E(C)$, one over each context $C\in M$ -- more precisely, one distribution $e_C$ over each set of local sections (events) $\mathcal E(C)$. In addition, for no-signalling empirical models the local distributions $e_C$ satisfy the compatibility condition~\ref{eq:compatibility}. What we mean by a global section for an empirical model $(\mathcal M,\{e_C\})$ is a global section of the presheaf $\mathcal D_R\mathcal E$, whose restriction to each context $C\in\mathcal M$ agrees with the local sections $e_C$. Of course, global sections for the presheaf $\mathcal D_R\mathcal E$ always exist, since $\mathcal D_R\mathcal E(X)$ represents the set of distributions on the set $\mathcal E(X) = O^X$. 

We shall give a general linear-algebraic method of finding global sections for a given empirical model $(\mathcal M,\{e_C\})$. The main step involves constructing a \textit{incidence matrix} for the empirical model. This matrix is defined using only $\mathcal{M}$ and the event sheaf $\mathcal{E}$. It can be applied to any empirical model $(\mathcal M,\{e_C\})$ with respect to any distribution functor $\mathcal{D}_R$.

To define the incidence matrix, we first form the disjoint union $\coprod_{C\in\mathcal{M}} \mathcal{E}(C)$ of all the sections over the contexts in $\mathcal{M}$, and specify an enumeration $t_1,\ldots,t_p$ of this set. We also specify an enumeration $s_1,\ldots,s_q$ of all the global sections of the sheaf $\mathcal{E}$. We then form the $(p\times q)$-matrix $\mathbf{M}$, with entries defined as follows:
\begin{equation*}
\mathbf{M}[i,j]=\left\lbrace \begin{array}{cc} 1, &  s_j|_C=t_i , \text{ where }t_i\in \mathcal{E}(C)\\
0, &  \text{otherwise} 
\end{array}
\right.
\end{equation*}

It acts by matrix multiplication on distributions in $\mathcal{D}_R\mathcal{E}(X)$, seen as row vectors:
$d \mapsto (d|_C)_{C\in\mathcal{M}}$. The result of all multiplications of this form will be the set of families $\{e_C\}_{C\in \mathcal{M}}$ which arise from global sections.

A given empirical model assigns an element in the semiring $R$ to each section $t_i\in \coprod_{C\in\mathcal{M}} \mathcal{E}(C)$. Thus it can be specified by a vector $\mathbf{V}$ of length $p$, where $\mathbf{V}[i]=e_C(t_i)$. We can also introduce a vector $\mathbf{X}$ of length $q$ of `unknowns', one for each global section $s_j\in \mathcal{E}(X)$. Now a solution for the linear system $\mathbf{MX}=\mathbf{V}$ will be a vector of values in $R$, one for each $s_j$. 

In the case of Bell-type scenarios of $(n,k,l)$-type, such solutions are automatically distributions - due to the regular structure of the incidence matrix for these cases - and they are in bijective correspondence with global sections of the empirical model. For more general scenarios, one can simply augment $\mathbf{M}$ with an extra row, every entry in which is $1$, and similarly to augment $\mathbf{V}$ with an extra element, also equal to $1$. A solution for this augmented system enforces the constraint 
$$\mathbf{X}[1]+\ldots+\mathbf{X}[q]=1$$
and hence ensures that the first $q$ entries of $\mathbf{X}$ define a distribution on $\mathcal{E}(X)$.

\begin{example}
Consider a general $(2,2,2)$ Bell-type scenario:
$$\begin{array}{c|c|c|c|c}
\varstr{14pt}{9pt}
~~~A~~~~~B~~~&~~~(0,0) ~~~& ~~~(0,1)~~~ & ~~~(1,0)~~~ & ~~~(1,1)~~~ \\
\hline
\varstr{14pt}{9pt}~~~a~~~~~b~~~ &~~~p_1~~~ & ~~~p_2 ~~~& ~~~ p_3 ~~~& ~~~p_4~~~\\\hline
\varstr{14pt}{9pt}~~~a~~~~~b'~~~ &~~~ p_5 ~~~ & ~~~ p_6 ~~~& ~~~ p_7 ~~~& ~~~ p_8 ~~~\\\hline
\varstr{14pt}{9pt}~~~a'~~~~~b~~~ &~~~ p_9 ~~~ & ~~~ p_{10} ~~~& ~~~ p_{11} ~~~& ~~~ p_{12} ~~~\\\hline
\varstr{14pt}{9pt}~~~a'~~~~~b'~~~ &~~~ p_{13} ~~~ & ~~~ p_{14} ~~~& ~~~ p_{15} ~~~& ~~~ p_{16} ~~~\\
\end{array}
 $$

Here $X={a,b,a',b'}$ and the measurement cover is $\mathcal M = \{ \{a,b\}, \{a',b\}, \{a,b'\},\{a',b'\} \}$. Assume $O=\{0,1\}$ is the set of possible outcomes for all observables. Assume also that $R=\mathbb R_{\geq 0}$, i.e.~the numbers $p_1,\ldots,p_{16}$ are probabilities in the usual sense. Note that any measurement performed at at part $A$ is compatible with any measurement performed by part $B$. The event sheaf $\mathcal E$ has 16 global sections corresponding to:  
\begin{equation}
\begin{aligned}
(a,b,a',b')\in &\left\{(0,0,0,0),(0,0,0,1),(0,0,1,0),(0,0,1,1)\right.\\
&~~(0,1,0,0),(0,1,0,1),(0,1,1,0),(0,1,1,1)\\
&~~(1,0,0,0),(1,0,0,1),(1,0,1,0),(1,0,1,1)\\
&~\left.(1,1,0,0),(1,1,0,1),(1,1,1,0),(1,1,1,1)\right\}
\end{aligned}
\end{equation}
Suppose now there is a probability distribution $d$ over $\mathcal E(X)$ which associates probabilities $q_1,\ldots,q_{16}$ to the above global sections in the given order. Then one must have: 
\begin{equation}\label{eq:linearSystem}
\begin{array}{ll}
p_1=q_1+q_2+q_3+q_4 & ~~~~~~~p_2=q_5+q_6+q_7+q_8 \\
p_3=q_9+q_{10}+q_{11}+q_{12}& ~~~~~~~p_4=q_{13}+q_{14}+q_{15}+q_{16}\\
p_5=q_1+q_3+q_5+q_7& ~~~~~~~p_6=q_2+q_4+q_6+q_8\\
p_7=q_9+q_{11}+q_{13}+q_{15}& ~~~~~~~p_8=q_{10}+q_{12}+q_{14}+q_{16}\\
p_9=q_1+q_2+q_9+q_{10}&  ~~~~~~~p_{10}=q_3+q_4+q_{11}+q_{12}\\
p_{11}=q_5+q_6+q_{13}+q_{14}&  ~~~~~~~p_{12}=q_7+q_8+q_{15}+q_{16}\\
p_{13}=q_1+q_5+q_9+q_{13}&  ~~~~~~~p_{14}=q_2+q_6+q_{10}+q_{14}\\
p_{15}=q_3+q_7+q_{11}+q_{15}&  ~~~~~~~p_{16}=q_4+q_8+q_{12}+q_{16}
\end{array}
\end{equation}

and the $(16\times 16)$ incidence matrix is 

\begin{equation}\label{eq:incidenceMatrix}
\mathbf{M}=\left(
\begin{array}{cccccccccccccccc}
 1 & 1 & 1 & 1 & 0 & 0 & 0 & 0 &
   0 & 0 & 0 & 0 & 0 & 0 & 0 & 0
   \\
 0 & 0 & 0 & 0 & 1 & 1 & 1 & 1 &
   0 & 0 & 0 & 0 & 0 & 0 & 0 & 0
   \\
 0 & 0 & 0 & 0 & 0 & 0 & 0 & 0 &
   1 & 1 & 1 & 1 & 0 & 0 & 0 & 0
   \\
 0 & 0 & 0 & 0 & 0 & 0 & 0 & 0 &
   0 & 0 & 0 & 0 & 1 & 1 & 1 & 1
   \\
 1 & 0 & 1 & 0 & 1 & 0 & 1 & 0 &
   0 & 0 & 0 & 0 & 0 & 0 & 0 & 0
   \\
 0 & 1 & 0 & 1 & 0 & 1 & 0 & 1 &
   0 & 0 & 0 & 0 & 0 & 0 & 0 & 0
   \\
 0 & 0 & 0 & 0 & 0 & 0 & 0 & 0 &
   1 & 0 & 1 & 0 & 1 & 0 & 1 & 0
   \\
 0 & 0 & 0 & 0 & 0 & 0 & 0 & 0 &
   0 & 1 & 0 & 1 & 0 & 1 & 0 & 1
   \\
 1 & 1 & 0 & 0 & 0 & 0 & 0 & 0 &
   1 & 1 & 0 & 0 & 0 & 0 & 0 & 0
   \\
 0 & 0 & 1 & 1 & 0 & 0 & 0 & 0 &
   0 & 0 & 1 & 1 & 0 & 0 & 0 & 0
   \\
 0 & 0 & 0 & 0 & 1 & 1 & 0 & 0 &
   0 & 0 & 0 & 0 & 1 & 1 & 0 & 0
   \\
 0 & 0 & 0 & 0 & 0 & 0 & 1 & 1 &
   0 & 0 & 0 & 0 & 0 & 0 & 1 & 1
   \\
 1 & 0 & 0 & 0 & 1 & 0 & 0 & 0 &
   1 & 0 & 0 & 0 & 1 & 0 & 0 & 0
   \\
 0 & 1 & 0 & 0 & 0 & 1 & 0 & 0 &
   0 & 1 & 0 & 0 & 0 & 1 & 0 & 0
   \\
 0 & 0 & 1 & 0 & 0 & 0 & 1 & 0 &
   0 & 0 & 1 & 0 & 0 & 0 & 1 & 0
   \\
 0 & 0 & 0 & 1 & 0 & 0 & 0 & 1 &
   0 & 0 & 0 & 1 & 0 & 0 & 0 & 1
   \\
\end{array}
\right)
\end{equation}

One can solve the linear system $\mathbf{MX}=\mathbf{V}$ and obtain $\mathbf{X}=[q_1,\ldots, q_{16}]$ in terms of the given probabilities $\mathbf{V}=[p_1,\ldots,p_{16}]$ and require that $\mathbf{X}\geq\mathbf{0}$. If this system has a non-negative solution, then the model admits a realisation in terms of local deterministic hidden variables and is non-contextual, otherwise, the model is said to be {\itshape weakly contextual}. 
\end{example}

So far we have assumed that the distributions in $\mathcal D_R(\mathcal E (C))$ are real valued. We made this assumption explicitly or implicitly, by talking about probability distributions. It is both interesting and important to discuss models involving $\mathbb B$-valued distributions (where $\mathbb B$ is the boolean semiring). In fact, any {\itshape probabilistic model} (i.e.~a model involving $\mathbb R_{\geq 0}$-distributions) can be turned into a {\itshape possibilistic model} (i.e.~a model involving $\mathbb B$-distributions) by replacing all non-zero probabilities with~``1'' (True) and all null probabilities with ``0'' (False). Many quantum information-theoretic results, such as Bell's theorem, generalise to possibilistic models (see \cite{AbrHar12}). These considerations motivate the hierarchy of contextuality discussed below.

\section{A Hierarchy of Contextuality} \label{bckgc}

A probabilistic model $(X,\mathcal M, \{e_C\}, \mathbb R_{\geq 0})$ is {\itshape non-contextual} if there exists a global probability distribution $d\in \mathcal D_{\mathbb R_{\geq 0}}\mathcal E(X)$ compatible with the local probability distributions $\{e_C\}$. The compatibility condition can be expressed by the equivalence between $(X,\mathcal M, \{e_C\}, \mathbb R_{\geq 0})$ and $(X,\mathcal M, \{d|_C\}, \mathbb R_{\geq 0})$. If no such global distributions exist, the model is {\itshape weakly contextual}. 

The probabilistic model $(X,\mathcal M, \{e_C\}, \mathbb R_{\geq 0})$ induces a possibilistic model $(X,\mathcal M, \{e^{\mathbb{B}}_C\}, \mathbb B)$ with boolean-valued distributions. The boolean-valued distributions $\{e^{\mathbb{B}}_C\}$ assign the boolean value ``1'' to those sections which are in the support of the corresponding probability distributions $\{e_C\}$.
If the model $(X,\mathcal M, \{e^{\mathbb{B}}_C\}, \mathbb B)$ admits a global boolean distribution $d\in \mathcal D_{\mathbb B}\mathcal E(X)$ compatible with $\{e^{\mathbb{B}}_C\}$, then $(X,\mathcal M, \{e_C\}, \mathbb R_{\geq 0})$ is either non-contextual or weakly contextual. Otherwise, it is said to be {\itshape logically contextual}. 

\begin{remark}\label{rk:LW}
A logically contextual model is automatically weakly contextual, since a compatible global probability distribution $d\in \mathcal D_{\mathbb R_{\geq 0}}\mathcal E(X)$ induces a compatible global boolean distribution $d\in \mathcal D_{\mathbb B}\mathcal E(X)$. The converse, however, is not true: there are weakly contextual models which are not logically contextual. In general, we say that an empirical model is \textbf{probabilistically non-extendable} if it has no global section over $\mathcal{D}_{\mathbb{R}\geq 0}$, and \textbf{possibilistically non-extendable} if  it has no global section over $\mathcal{D}_{\mathbb{B}}$.
\end{remark}

Working over the boolean ring comes with a number of simplifications. For instance, the set of distributions $\mathcal D_{\mathbb B}\mathcal E(C)$ can be identified with $\mathcal P(\mathcal E(C))$ for any subset $C\subseteq X$. Indeed, a boolean distribution $d\in \mathcal D_{\mathbb B}\mathcal E(X)$ is a way of distinguishing between `possible' sets of outcomes and `impossible' sets of outcomes. Moreover, the boolean distribution $d$ induces a possibilistic model $(X,\mathcal M, \{d|_C\}, \mathbb B)$. If this model reproduces $(X,\mathcal M, \{e^{\mathbb{B}}_C\}, \mathbb B)$, or, equivalently, the support of the model $(X,\mathcal M, \{e_C\}, \mathbb R_{\geq 0})$, the model is not logically contextual. 

There is a further refinement to this hierarchy of contextuality. Let $\{s_1,\ldots,s_n\}=\mathcal E(X)$ be the set of global sections of the events sheaf. The global sections of the presheaf $\mathcal D_{\mathbb B}\mathcal E$ are boolean distributions over $\mathcal E(X)$. Let $d_1,\ldots,d_n$ be the global distributions with minimal support $\{s_1\},\ldots,\{s_n\}$, respectively. These distributions induce possibilistic models $(X,\mathcal M, \{d_1|_C\}, \mathbb B),$ $\ldots,$ $(X,\mathcal M, \{d_n|_C\}, \mathbb B)$. If all these models are such that their support falls outside of the support of the original model $(X,\mathcal M, \{e_C\}, \mathbb B)$, then any other distribution with a larger support $\{s_{i_1},s_{i_2},\ldots\}$ will produce a possibilistic model whose support will also fall outside of the support of the original model $(X,\mathcal M, \{e_C\}, \mathbb B)$. In this case, the model $(X,\mathcal M, \{e_C\}, \mathbb B)$ is said to be {\itshape strongly contextual}. In particular, the model is also logically contextual. However, there are logically contextual models which are not strongly contextual.

In the case of dichotomic measurements (i.e. measurements with only two outcomes) we can make a connection to logic by interpreting one outcome as \textit{true} and the other outcome as \textit{false}. This allows us to think of the set of measurements $X$ as a set of boolean variables. If $C$ is a finite context, any subset of $\mathbf{2}^C$ can be defined by a propositional formula. For example, each joint outcome $s:C\rightarrow \mathbf{2}$ determines a propositional formula 
$$\varphi_s \; = \; \bigwedge_{x\in C,\, s(m)=True}m \; \AND \; \bigwedge_{x\in C,\,s(m)=False} \neg m$$
The only satisfying assignment of $\varphi_s$ in $\mathbf{2}^C$ is $s$.

The propositional formula whose set of satisfying assignments is the support of $C$ is $$\varphi_C:=\bigvee_{s\in S(C)} \varphi_s.$$

An empirical model is \emph{logically contextual} if there exists some $C \in \mathcal{C}$ and some $s\in S(C)$ such that the formula 
$$\Phi \; = \; \varphi_{s} \; \AND \; \bigwedge_{V\in \mathcal{C}\backslash C} \varphi_{V}$$ 
is not satisfiable. 
This says that there is a possible joint outcome $s$ which cannot be accounted for by any valuation on all the variables in $X$ which is consistent with the support of the model. This immediately implies that there is no joint distribution on all the observables which marginalizes to yield the empirically observable probabilities as in Remark \ref{rk:LW}.
As originally shown in the bipartite case by Fine \cite{fine1982hidden}, and in a very general form in \cite{AbrBra11}, the non-existence of a joint distribution is equivalent to the usual definition of non-locality as given by Bell \cite{Bell:1964kc}.

\subsection{Strong contextuality as a CSP}

A constraint satisfaction problem (CSP) \cite{32,33} is specified by a triple $(V,K,\mathcal{R})$ where $V$ is a finite set of variables, $K$ is a finite set of values, and $\mathcal{R}$ is a finite set of constraints. A constraint is a pair $(C,S)$ where $C\subseteq V$ and $S\subseteq K^C$. This formulation is equivalent to the more common one where a constraint is specified as a list of $k$ variables together with a set of $k$-tuples of values. An assignment $s:V\rightarrow K$ satisfies a constraint $(C,S)$ if $s|_C\in S$. A solution of the CSP $(V,K,\mathcal{R})$ is an assignment $s:V\rightarrow K$ which satisfies every constraint in $\mathcal{R}$.

To any empirical model $e$ defined over a cover $\mathcal{M}$, with outcome set $O$, we can associate the CSP $(X,O,\{\mathrm{supp}(e_C)\}_{C\in\mathcal{M}})$ with $e$. An empirical model $e$ is maximally contextual if and only if the corresponding CSP has no solution.

In the case of dichotomic measurements (i.e. measurements with only two possible outcomes), the CSP reduces to a boolean satisfiability problem. In this case global sections correspond precisely to satisfying assignments for the formula
$$\varphi_e=\bigwedge_{C\in\mathcal{M}} \varphi_C$$
Hence an empirical model $e$ is strongly contextual if and only if the above formula is unsatisfiable.




\subsection{Contextuality for Quantum States}\label{CQS}

The hierarchy of contextuality which has so far been defined for abstract empirical models can be naturally lifted to apply to quantum states. If we fix observables for each party, a $n$-qubit quantum state gives rise to an empirical model. We shall mainly be concerned with $(n,2,2)$ scenarios where each of the $n$ parties has access to one qubit of a quantum state and can choose to perform one out of two available measurements on their qubit, each with two possible outcomes. The two available measurements can be represented by $2\times 2$ self-adjoint unitaries, $A$ and $B$. The two possible outcomes of a given measurement correspond to the two eigenvalues, $+1$ and $-1$ of the corresponding unitary. We usually associate the Boolean value $True$ to the $+$ eigenvalue and the Boolean value $False$ to the $-$ eigenvalue.

A compatible set of measurements is given by a choice of either $A$ or $B$ at each of the $n$ measurement sites. A given quantum state $|\Psi\rangle$ determines a probability distribution on the set of possible outcomes for each compatible set of measurements. Thus if $X_1X_2\ldots X_n$ is a compatible set of measurements (i.e. $X_i\in\{A,B\}$ for all $i$) the probability of obtaining the outcome $\sigma_1\sigma_2\ldots \sigma_n$ where $\sigma_i\in\{+,-\}$ is given by the squared norm of the inner product
\begin{equation}\label{ip}
 P(\Psi, \sigma_1\sigma_2\ldots \sigma_n)=|\langle e_1|\otimes \langle e_2|\otimes\ldots \otimes\langle e_n|\Psi\rangle |^2
\end{equation}

where $e_i$ is the eigenvector corresponding to the $\sigma_i$ eigenvalue of the unitary $X_i$ representing the measurement performed by the $i^{th}$ party. 


Thus for fixed observables $A$ and $B$ as above, a quantum state gives rise to an empirical model with rows indexed by $n$-tuples $X_1X_2\ldots X_n$, $X_i\in\{A,B\}$ and columns indexed by $n$-tuples $\sigma_1\sigma_2\ldots\sigma_n$. The section on a given row and column is given by formula (\ref{ip}) above. The section is in the support of the model if and only if the inner product (\ref{ip}) is non-zero.

The observables most frequently used in this thesis are the three Pauli matrices:

\begin{equation}
X=\left(\begin{array}{cc} 0&1\\ 1&0\end{array}\right), \ \ \ Y=\left(\begin{array}{cc} 0&-i\\ i&0\end{array}\right), \ \ \ Z=\left(\begin{array}{cc} 1&0\\ 0&-1\end{array}\right)
\end{equation}


\subsection{The Bell state}

We look again at the $(2,2,2)$-type scenario which can be represented in quantum mechanics by measuring the $\ket{\Phi^+}$ Bell state, as discussed in Section \ref{sec:MeasurementContexts}. This scenario is specified by an empirical model summarized in the following table:

$$\begin{array}{c|c|c|c|c}
\varstr{14pt}{9pt}
~~~A~~~~~B~~~&~~~(0,0) ~~~& ~~~(0,1)~~~ & ~~~(1,0)~~~ & ~~~(1,1)~~~ \\
\hline
\varstr{14pt}{9pt}~~~a~~~~~b~~~ &~~~1/2~~~ & ~~~0 ~~~& ~~~0 ~~~& ~~~1/2~~~\\\hline
\varstr{14pt}{9pt}~~~a~~~~~b'~~~ &~~~3/8~~~ & ~~~1/8 ~~~& ~~~ 1/8 ~~~& ~~~3/8~~~\\\hline
\varstr{14pt}{9pt}~~~a'~~~~~b~~~ &~~~3/8~~~ & ~~~1/8 ~~~& ~~~1/8 ~~~& ~~~3/8~~~\\\hline
\varstr{14pt}{9pt}~~~a'~~~~~b'~~~ &~~~1/8~~~ & ~~~3/8 ~~~& ~~~3/8 ~~~& ~~~1/8~~~\\
\end{array}
 $$

We are interested in finding a probability distribution on the global assignments $\mathcal{E}(X)$. This amounts to solving $\mathbf{MX}=\mathbf{V}$ over the reals, subject to the constraint $\mathbf{X}\geq \mathbf{0}$, which is in fact a linear programming problem. 

However, it is easier in this case to give a direct argument, as in \cite{AbrBra11}, showing that there is no such solution. This implies that the above model has no hidden-variable realization, which proves Bell's theorem \cite{Bell:1964kc}. The argument starts by considering $4$ out of the 16 equations, corresponding to rows 1,6, 11 and 13 of the incidence matrix $\mathbf{M}$ given in Equation \ref{eq:incidenceMatrix}. For ease, we write $X_i$ instead of $\mathbf{X}[i]$.

\begin{align*}
X_1 +X_2 +X_3 +X_4  = \mathbf{V}[1] & =1/2 \\
X_2 +X_4 +X_6 +X_8  = \mathbf{V}[6]& =1/8 \\
X_3 +X_4 +X_{11}+X_{12} =  \mathbf{V}[11] & =1/8 \\
X_1 +X_5 +X_9 +X_{13}  = \mathbf{V}[13] & =1/8 
\end{align*}


Adding the last three equations yields
$$X_2 +2X_4 +X_6 +X_8 + X_3 +X_{11}+X_{12}+ X_1 +X_5 +X_9 +X_{13} = 3/8 $$
Since all these terms must be non-negative, the left hand side of this equation must be greater than or equal to the left-hand side of the first equation, yielding the required contradiction.

In terms of our hierarchy, it is easy to see that, although this model is weakly contextual, it is not logically contextual. If we exclude from the set $\mathcal{E}(X)$ those global sections which map $a$ and $b$ to different outcomes, we are left with one example of a set of global sections (or equivalently, with one example of a boolean distribution over the set of global sections) whose restriction to each context is equal to the model's support.

\subsection{The possibilistic Hardy model}

The original purpose of the Hardy model, introduced by Lucien Hardy in \cite{Hardy92, Hardy93}, was to give a `logical' proof of Bell's theorem in the bipartite case. Its support is specified by the following table:

$$\begin{array}{c|c|c|c|c}
\varstr{14pt}{9pt}
~~~A~~~~~B~~~&~~~(0,0) ~~~& ~~~(0,1)~~~ & ~~~(1,0)~~~ & ~~~(1,1)~~~ \\
\hline
\varstr{14pt}{9pt}~~~a~~~~~b~~~ &~~~1~~~ & ~~~1 ~~~& ~~~1 ~~~& ~~~1~~~\\\hline
\varstr{14pt}{9pt}~~~a~~~~~b'~~~ &~~~0~~~ & ~~~1 ~~~& ~~~ 1 ~~~& ~~~1~~~\\\hline
\varstr{14pt}{9pt}~~~a'~~~~~b~~~ &~~~0~~~ & ~~~1 ~~~& ~~~1 ~~~& ~~~1~~~\\\hline
\varstr{14pt}{9pt}~~~a'~~~~~b'~~~ &~~~1~~~ & ~~~1 ~~~& ~~~1 ~~~& ~~~0~~~
\end{array}
 $$

It has been shown in \cite{AbrBra11} that this model is logically contextual, as it admits no solutions over the Boolean semiring. As for the previous example, this is can be done by a direct argument. This time the equations specified by the incidence matrix are equations over $\mathbb{B}$ which can also be interpreted as logical clauses. For example, the equation specified by the fifth row of the incidence matrix is 
$$X_1\vee X_3\vee X_5 \vee X_7 = \mathbf{V}[5] = 0$$
This yields the equivalent logical formula
$$\neg X_1\wedge \neg X_3\wedge \neg X_5 \wedge \neg X_7 $$
We focus on the formulas corresponding to rows 1, 5, 9 and 16 of the incidence matrix:
\begin{align*}
X_1\vee X_2&\vee X_3 \vee X_4\\
\neg X_1\wedge \neg X_3&\wedge \neg X_5 \wedge \neg X_7 \\
\neg X_1\wedge \neg X_2&\wedge \neg X_9 \wedge \neg X_{10} \\
\neg X_4\wedge \neg X_8&\wedge \neg X_{12} \wedge \neg X_7{16}
\end{align*}
Since every disjunct in the first formula appears as a negated conjunct in one of the other three formulas, there is no satisfying argument. 

According to Remark \ref{rk:LW}, the Hardy model satisfies a stronger non-locality property than the Bell model. So far we have proved this directly, but it also follows from the general results in \cite{ManFri}, which show that models which rely on Hardy's paradox are complete for the $(2,2,2)$-type cases, and in particular that there must be at least three null sections in the support in order for non-locality to hold, while the Bell model has only two zero entries.

The Hardy model on the other hand is not strongly contextual, as the global assignment $\{a\mapsto 1, b\mapsto 1, a'\mapsto 0, b'\mapsto 0\}$ is one example of a global section whose restriction to each context is compatible with the model's support.


\subsection{Dicke states}

A permutation-symmetric $n$-qubit state is one which is invariant under the action of the full symmetry group $S_n$.
A natural basis for the permutation-symmetric states is provided by the  \emph{Dicke states} \cite{dicke1954coherence}, which are also physically significant. 
For each $n \geq 2$, $0 < k < n$ we define:
\[ S(n, k) \; := \; K \sum_{\mbox{{\small perm}}} \ket{0^{k}1^{n-k}} . \]
Here $K = {n \choose k}^{-1/2}$ is a normalization constant, and we sum over all products of $k$ $0$-kets and $n-k$ $1$-kets.

Note that the $W$ state is the $S(3,2)$ Dicke state in the above notation. For each $n > 2$, and $0 < k < n$, the Dicke state $S(n,k)$ is logically contextual.

We have excluded the cases $k=0$ and $k=n$, since in these cases $S(n, k)$ = $\ket{0^n}$ or $\ket{1^n}$, and these are obviously product states. We have also excluded the bipartite case, for which $S(2,1)$ is the Bell state $\frac{\ket{01} + \ket{10}}{\sqrt{2}}$.

If each party is allowed choose to measure one of the Pauli observables $X$ and $Z$, a Dicke state $S(n, k)$ gives rise to an $(n, 2, 2)$ empirical model. The model is specified by a table with $2^n$ rows, corresponding to the possible choices of an observable at each site. We shall focus firstly on the $\frac{n(n-1)}{2}$ rows $R_{ij}$, where $X$ observables are selected at sites $i$ and $j$, and $Z$ observables at the remaining sites. Let $T_{ij}$ be the support of the model at row $R_{ij}$.

Now consider any joint outcome $t$ for this row in which there are $k$ outcomes corresponding to the $+$ eigenvalue and $(n-k)$ outcomes corresponding to the $-$ eigenvalue, and the outcome for $X^i$ is different to the outcome for $X^j$. We claim that $t$ is not in $T_{ij}$. If we compute the inner product whose squared norm gives the probability for $t$, we see that there are two terms, of the form $+1/c$ and $-1/c$ respectively. Thus the probability of $t$ is $0$, and it is not in the support.
We can express this in logical terms by saying that $T_{ij}$ satisfies the formula
\begin{equation}
\label{XXZeq}
\bigwedge_{k \neq i,j, t(k) = {+}} Z^k \; \wedge \; \bigwedge_{k \neq i,j, t(k) = {-}} \neg Z^k \;\; \IMP \; \; (X^i \leftrightarrow X^j) . 
\end{equation}
We now consider the row where $Z$ measurements are selected by every party. The support of this row is described by the formula
\begin{equation}
\label{Zeq}
\bigvee_{\pi\in S_n} \; [\bigwedge_{i=1}^k Z^{\pi(i)} \; \wedge \; \bigwedge_{j=k+1}^n \neg Z^{\pi(j)} ]. 
\end{equation}
This is the logical counterpart of the description of $S(n, k)$ in the $Z$-basis.

From each disjunct $D$ of~(\ref{Zeq}) together with the relevant instances of~(\ref{XXZeq}), we can prove that $X^i \leftrightarrow X^j$ for all $i$, $j$ such that $Z^i$ and $Z^j$ appear with opposite polarity in $D$.
Note that, by the conditions on $k$ and $n$, both polarities do appear in $D$. By the transitivity of logical equivalence, it follows that $X^i \leftrightarrow X^j$ can be derived for all $i$, $j$. Thus $D$, together with the formulas~(\ref{XXZeq}), implies the formula
\begin{equation}
\label{Xeq}
\bigwedge_{i,j} \; X^i \leftrightarrow X^j . 
\end{equation}
Thus~(\ref{Zeq}) together with the conjunction of all instances of~(\ref{XXZeq}) implies~(\ref{Xeq}).

It follows that any global section which satisfies these formulas must restrict to just two joint outcomes in the row where $X$ measurements are selected by every party, namely those with the same outcome at every part.

To complete the argument, it suffices to show that these two outcomes form a proper subset of the support at that row. If we calculate the probability for each of these events, we obtain
\[ \left(\frac{{n \choose k}}{(\sqrt{2})^n \sqrt{{n \choose k}}} \right)^2 \;\;\; = \;\;\; \frac{{n \choose k}}{2^n} . \]
Thus we must show that
\[ \frac{{n \choose k}}{2^n} \; < \; \frac{1}{2} , \]
or equivalently
\[ {n \choose k} \;\; < \;\; 2^{n-1} \;\; = \;\; \sum_{l=0}^{n-1} {n-1 \choose l} \]
which follows from Pascal's rule:
\[ {n \choose k} \; = \; {n-1 \choose k-1} + {n-1 \choose k} . \]
Note however that to obtain a strict inequality, we need the assumption that $n>2$; the argument for the Bell state $S(2,1)$ fails at exactly this point.


We also note that logical contextuality is preserved by the action of local unitaries $U_1 \otimes \cdots \otimes U_n$. If a state $\ket{\psi}$ is logically contextual with respect to measurement bases 
\[ \eta_1^{+}, \eta_1^{-}, \ldots , \eta_n^{+}, \eta_n^{-} , \]
then $U_1 \otimes \cdots \otimes U_n \ket{\psi}$ is logically contextual with respect to the measurement bases
\[ U_1 \eta_1^{+}, U_1 \eta_1^{-}, \ldots , U_n \eta_n^{+}, U_n \eta_n^{-} . \]
This follows since inner products and hence probabilities are preserved:
\[ \begin{array}{lcl}
\langle U_1 \eta_1^{\plusminus}\otimes \cdots \otimes U_n \eta_n^{\plusminus} \mid (U_1 \otimes \cdots \otimes U_n) \ket{\psi} \rangle  & = & \langle (U_1  \otimes \cdots \otimes U_n) \eta_1^{\plusminus} \otimes \cdots \otimes \eta_n^{\plusminus} \mid (U_1 \otimes \cdots \otimes U_n) \ket{\psi} \rangle \\
& = & \langle (U_1 \otimes \cdots \otimes U_n)^{\dagger} (U_1  \otimes \cdots \otimes U_n) \eta_1^{\plusminus} \otimes \cdots \otimes \eta_n^{\plusminus}   \ket{\psi}  \\
& = & \langle  \eta_1^{\plusminus} \otimes \cdots \otimes  \eta_n^{\plusminus} \ket{\psi}  .
\end{array} \]
Thus the orbits of the Dicke states under the actions of local unitaries are all logically contextual.

\subsection{Permutationally symmetric states}

In \cite{wang2012nonlocality} it is shown that all permutation-symmetric states \emph{except} the unitary orbit of the Dicke states admit a Hardy argument, making use of the Majorana representation of permutation-symmetric states. This is easily converted into a proof of logical contextuality. By combining this result with the one presented above, we can conclude that all permutation-symmetric $n$-partite entangled states, for $n>2$, are logically contextual.

\subsection{GHZ models}

A GHZ model of type $(n,2,2)$ can be specified abstractly as follows. Each part can choose between two measurements labeled as $X^{i}$ and $Y^{i}$, respectively, with $1\leq i\leq n$. The outcomes are labeled as $0$ and $1$. For each context $C$, every section $t$ in the support of the model must satisfy two conditions. First, if the number of $Y$ measurements in $C$ is a multiple of $4$, the number of $1$'s in the outcomes allowed by $t$ is even. Second, if the number of $Y$ measurements in $C$ is $4k+2$, the number of $1$s in the outcomes allowed under the assignment is odd. 

In quantum mechanics, such models can be obtained by measuring the $X$ and $Y$ Pauli observables for the $n$-partite GHZ state 
$$\ket{GHZ} =\frac{ \ket{0 \ldots 0} +\ket{1\ldots 1}}{\sqrt{2}}$$

To show that these are strongly contextual, we proceed by contradiction.

We first consider the case when $n=4k$, $k\geq 1$. Assume that $s$ is a global section compatible with the support of the $GHZ$ model. If $Y$ measurements are taken at every part, the number of $1$ outcomes under the assignment is even. 

If we replace any two $Y$'s by $X$'s, we must have the opposite parity for the number of $1$ outcomes under the assignment. Thus for any $Y^{i},Y^{j}$ which have been assigned the same outcome, if we substitute $X$'s in those positions, they must receive different values. Similarly, for any $Y^{i},Y^{j}$ assigned different values, the corresponding $X^{i},X^{j}$ must receive the same value.

Suppose firstly that not all $Y^{i}$ are assigned the same value by $s$. Then there exist some $i,j,k$ such that $Y^{i}$ is assigned the same value as $Y^{j}$, and $Y^{j}$ is assigned a different value to $Y^{k}$. Then $X^{i}$ is assigned the same value as $X^{k}$, and $X^{j}$ is assigned the same value as $X^{k}$. By transitivity, $X^{i}$ is assigned the same value as $X^{j}$, yielding a contradiction.

The remaining cases are those where all $Y$'s receive the same value. Then any pair of $X$'s must receive different values. But by taking any 3 $X$'s, this yields a contradiction, since there are only two values, so some pair must receive the same value.

The case when $n=4k+2$, $k\geq 1$, is proved in the same fashion, interchanging the parities. When $n\geq 5$ is odd, we start with a context containing one $X$, and again proceed similarly.

The most familiar case, for $n=3$ does not admit this argument, which relies on having at least $4$ $Y$'s in the initial configuration. However, this case is also strongly contextual, \cite{AbrBra11}. The proof uses a case analysis to show that there are $8$ possible global sections satisfying the parity constraint on the $3$ measurement combinations with two $Y$s and one $X$, and all of these violate the constraint for the $XXX$ measurement.

\end{chapter}
\newpage
$ $
\newpage
\begin{chapter}{Classifying a Class of Permutationally Asymetric Quantum States}\label{CC}

In Chapter \ref{SSC} we have seen that there is a hierarchy of forms of non-locality or contextuality which empirical models may satisfy. This hierarchy has three levels:
\begin{itemize}
 \item A model is \textbf{strongly contextual} if its support has no global section; that is, there is no simultaneous assignment of outcomes to all the measurements whose restriction to each compatible set of measurements is in the support.
 \item A model is \textbf{logically contextual} if there are events in the support of some compatible family of measurements $F_j$ which are not consistent with the supports of the other measurement contexts.
 \item Finally, a model is \textbf{weakly contextual} if it is contextual, but neither logically nor strongly contextual.
\end{itemize}

These notions form a proper hierarchy. Strong contextuality implies logical contextuality, which implies contextuality in the usual sense. On the other hand, there are weakly contextual models which are not logically contextual, and logically contextual models which are not strongly contextual. 

The contextual characterisation of empirical models can also be used to characterise $n$-partite quantum states shared by $n$ observers, as long as we allow each observer to choose one out of a finite set of measurements, each with a fixed set of possible outcomes. For example, when the set of allowed measurements contains two elements, each with two possible outcomes, the resulting empirical model corresponds to a $(n,2,2)$ Bell scenario. 



Note that the bipartite case seems to be anomalous within the landscape of multipartite entangled states. For example, the only strongly contextual bipartite models are given by super-quantum devices known as PR-boxes  \cite{AbrBra11, popescu1994quantum}. These are however not quantum realizable. By contrast, for all $n>2$, the $n$-partite GHZ states are strongly contextual \cite{AbrBra11}. Moreover, it is known that in the bipartite case, all entangled states \emph{except} the maximally entangled ones admit Hardy arguments, and hence are logically contextual \cite{Hardy93}


We have seen at the end of the previous chapter that (almost) all permutationally symmetric entangled states are at least logically contextual. We will now look at a class of highly non-permutation-symmetric entangled states, the \emph{balanced states with functional dependencies}. These states are described by Boolean functions, and have a rich structure, allowing a detailed analysis. We provide logical contextuality witnesses (i.e. measurement choices) for almost all of these states and we also reveal a large collection of strongly contextual multipartite entangled states. 

\section{Balanced States with Functional Dependency}

For each $n\geq 2$, a $n$-ary Boolean function is a function $F:\{0,1\}^n\rightarrow \{0,1\}$. Each $n$-ary Boolean function can be expressed as a multivariate polynomial over $GL(2)$:
$$F(x_1,\ldots,x_n)=a_0 + \sum_{i}a_1^{i}x_i+ \sum_{i,j}a_2^{i,j}x_ix_j+\ldots + a_n^{1,2,\ldots,n}x_1x_2\ldots x_n$$
There are $2^n=1 + n + \binom{n}{2} + \ldots + \binom{n}{n}$ summands in the expression of the above polynomial, each of which containing a binary coefficient $a_t^{i_1,\ldots,i_t}$. Hence there are $2^{2^{n}}$ distinct $n$-variate polynomials over $GF(2)$. Alternatively, each $n$-ary Boolean function can be expressed as a propositional formula in the Boolean variables $x_1,\ldots,x_n$ \cite{CraHam11}.

We define a balanced $n+1$-qubit quantum state with a functional dependency given by a $n$-variate polynomial $F$ as above to be a state which has the form
$$\Psi_F(n+1)=\frac{1}{\sqrt{2^n}}\sum_{q_1q_2\ldots q_n=00\ldots0}^{11\ldots 1} |q_1q_2\ldots q_n F(q_1,q_2,\ldots,q_n) \rangle $$
when expressed in the $Z$-basis. 

We start by classifying the tripartite functionally dependent balanced states in terms of their contextuality properties. A classification of the $n+1$-qubit states for $n>2$ can then be obtained using the results from the tripartite scenarios.

\section{Contextuality classification for the tripartite case}

\subsection{Polynomials of degree zero}

There are $2^{2^{2}}=16$ tripartite balanced states with a functional dependency. Two of these, namely 
$$\frac{1}{2}|000\rangle+|010\rangle+|100\rangle+|110\rangle=\left(\frac{|0\rangle+|1\rangle}{\sqrt{2}}\right)^{\otimes2}\otimes|0\rangle$$
 and 
$$\frac{1}{2}|001\rangle+|011\rangle+|101\rangle+|111\rangle=\left(\frac{|0\rangle+|1\rangle}{\sqrt{2}}\right)^{\otimes2}\otimes|1\rangle$$ 
are obviously product states, and hence non-contextual. They correspond to the constant polynomials $F_0(q_1,q_2)=0$ and $F_1(q_1,q_2)=1$ respectively.

\subsection{Degree one polynomials}\label{dict}
There are six states whose corresponding polynomials have degree one. Two of these are given by the functional dependencies which correspond to the two-variable propositional formulas $XOR$ and $NXOR$. Another four states are given by dictatorships, i.e. the value of the last qubit is dictated either by the value of the first qubit or by the value of the second qubit. We shall look at these two classes of states below.

\noindent\textbf{XOR and NXOR}
The polynomials corresponding to the $XOR$ and $NXOR$ states have the form $$F^a_{XOR}(q_1,q_2)=a+q_1+q_2$$ with $a=0$ for $XOR$ and $a=1$ for $NXOR$. 

\begin{theorem}
 The $XOR$ state is strongly contextual if each party chooses between $Y$ and $Z$ measurements.
\end{theorem}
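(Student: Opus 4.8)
The plan is to show that the $(3,2,2)$ empirical model obtained from the $XOR$ state under $Y/Z$ measurements has no global section; for dichotomic measurements this is exactly the definition of strong contextuality. Writing the state in the $Z$-basis,
$$\ket{\Psi_{XOR}} \;=\; \tfrac{1}{2}\big(\ket{000}+\ket{011}+\ket{101}+\ket{110}\big),$$
I recognise it as the uniform superposition over the even-parity strings, i.e.\ a stabilizer state, and I expect the proof to follow the Mermin--GHZ template: single out the measurement contexts on which the joint outcome is deterministically forced, record each as a parity equation on the $\plusminus 1$ outcomes, and derive a global parity contradiction.

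First I would compute, for each context $M_1M_2M_3$ with $M_i\in\{Y,Z\}$, the correlation $\bra{\Psi_{XOR}}M_1M_2M_3\ket{\Psi_{XOR}}$. The efficient route is through the stabilizer group, which is generated by $Z_1Z_2Z_3$, $X_1X_2$ and $X_2X_3$ (one checks these are commuting, independent, and fix $\ket{\Psi_{XOR}}$). Using $ZX=iY$, the only $Y/Z$-products lying in this group, up to sign, are $Z_1Z_2Z_3$, $Y_1Y_2Z_3$, $Y_1Z_2Y_3$ and $Z_1Y_2Y_3$, whereas $Y_1Y_2Y_3$ and each single-$Y$ product are absent and so have vanishing correlation. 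This yields exactly four deterministic constraints: writing $z_i,y_i\in\{+1,-1\}$ for the outcomes assigned to the $Z$ and $Y$ measurements at site $i$,
$$z_1z_2z_3=+1,\qquad y_1y_2z_3=-1,\qquad y_1z_2y_3=-1,\qquad z_1y_2y_3=-1.$$
Any candidate global section, restricted to one of these four contexts, must lie in the corresponding support and hence satisfy the associated equation; the remaining four contexts (all-$Y$ and the single-$Y$ ones) impose no deterministic constraint and are not needed.

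The contradiction is then immediate. Multiplying the four equations together, each of the six variables $z_1,z_2,z_3,y_1,y_2,y_3$ occurs exactly twice on the left, so the left-hand side collapses to $+1$, while the right-hand side is $(+1)(-1)(-1)(-1)=-1$. Thus no assignment of outcomes to the six local observables is jointly consistent with the supports, so the support admits no global section, and the $XOR$ model under $Y/Z$ measurements is strongly contextual.

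I expect the only delicate point to be the sign bookkeeping in the correlations, since the whole argument hinges on the three two-$Y$ contexts carrying correlation $-1$ rather than $+1$; this is precisely what makes the four parities inconsistent. I would double-check the crucial sign by the direct action $Y_1Y_2Z_3\ket{\Psi_{XOR}}=-\ket{\Psi_{XOR}}$, using $Y\ket{0}=i\ket{1}$ and $Y\ket{1}=-i\ket{0}$, which confirms $\bra{\Psi_{XOR}}Y_1Y_2Z_3\ket{\Psi_{XOR}}=-1$; once the four correlations are pinned down, the rest is the standard GHZ parity contradiction.
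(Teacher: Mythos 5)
Your proof is correct, and it reaches the contradiction by a genuinely different (and slicker) route than the paper. The paper computes the full support table by direct inner-product calculations and then performs a case analysis: assuming a global section with $Z^3=+$, the $ZZZ$ row forces $Z^1=Z^2$, the $YYZ$ row forces $Y^1\neq Y^2$, and the $YZY$, $ZYY$ rows force $Y^1=Y^2$ (equal to $\pm Y^3$ according to the value of $z=Z^1=Z^2$), a contradiction; the case $Z^3=-$ is handled symmetrically. You instead identify $\ket{\Psi_{XOR}}$ as the stabilizer state with group generated by $Z_1Z_2Z_3$, $X_1X_2$, $X_2X_3$, read off that the only deterministic $Y/Z$ contexts are $ZZZ$ (eigenvalue $+1$) and $YYZ$, $YZY$, $ZYY$ (eigenvalue $-1$) -- which exactly matches the paper's table, including the crucial minus signs, e.g.\ $Z_1Z_2Z_3\cdot X_1X_2 = (iY_1)(iY_2)Z_3 = -Y_1Y_2Z_3$ -- and then obtain the contradiction in one stroke by multiplying the four parity equations, since every variable occurs twice on the left while the right-hand side is $-1$. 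Both proofs rest on the same four contexts and the same combinatorial inconsistency, but your Mermin-style argument buys two things: the supports come from the stabilizer group rather than from case-by-case amplitude computations, and the parity product eliminates the case analysis entirely; it also generalizes immediately, which is essentially what the paper later does by hand in its $n{+}1$-partite strong-contextuality theorem for polynomials of the form $q_i+q_j+F_{n-2}$. What the paper's more pedestrian argument buys is self-containedness: it never invokes stabilizer formalism, and the explicit table is reused verbatim for the $NXOR$ state by relabelling columns.
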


\begin{proof}
The support of the probability table for the $XOR$ state is 
\begin{center}
\begin{tabular}{l|cccccccc}
& $+++$ & $++-$ & $+-+$ & $+--$ & $-++$ & $-+-$ & $--+$ & $---$ \\ \hline
$YYY$ &  $1$ & $1$ & $1$ & $1$ & $1$ & $1$ & $1$ & $1$ \\
$YYZ$ &  $0$ & $1$ & $1$ & $0$ & $1$ & $0$ & $0$ & $1$ \\
$YZY$ &  $0$ & $1$ & $1$ & $0$ & $1$ & $0$ & $0$ & $1$ \\
$ZYY$ &  $0$ & $1$ & $1$ & $0$ & $1$ & $0$ & $0$ & $1$ \\
$YZZ$ &  $1$ & $1$ & $1$ & $1$ & $1$ & $1$ & $1$ & $1$ \\
$ZYZ$ &  $1$ & $1$ & $1$ & $1$ & $1$ & $1$ & $1$ & $1$  \\
$ZZY$ &  $1$ & $1$ & $1$ & $1$ & $1$ & $1$ & $1$ & $1$ \\
$ZZZ$ & $1$ & $0$ & $0$ & $1$ & $0$ & $1$ & $1$ & $0$ \\
\end{tabular}
\end{center}

One can simply inspect the table above and check that none of the sections in the support of the $ZZZ$ row can be extended to global sections (i.e. each possible global  assignment consistent with the support of the $ZZZ$ row will restrict to a section outside the support on at least one of the three rows $YYZ$, $YZY$ and $ZYY$). Thus there cannot be any global assignment of outcomes whose restriction to each set of compatible measurements is in the support of the model. 

It is worth at this point to give a more formal expression to this argument in order to gain a better understanding of what is actually going on. For this recall that the $+$ and $-$ eigenstates of the $Z$ observable are $|0\rangle$ and $|1\rangle$ respectively while for the $Y$ observable they are (modulo some normalization constant which does not play any role in our argument) $|Y^+\rangle:=|0\rangle + i|1\rangle$ and $|Y^-\rangle:=|0\rangle-i|1\rangle$ respectively

We start our argument by assuming that a global section does exist. Assume next that this global section makes the assignment $Z^3=+$. The probability of obtaining the outcome $Z^1Z^2+$ with $Z^i\in\{+,-\}$ is given by the squared norm of the inner product 
$$\langle e_{Z^1} e_{Z^2} 0|XOR\rangle=\langle e_{Z^1} e_{Z^2}0|\frac{|000\rangle+|011\rangle+|101\rangle+|110\rangle}{2}$$
where $e_+=0$ and $e_-=1$. If we regard each $e_{Z^i}$ as an element of $GF(2)$ then the inner product above is non-zero only if $$F^0_{XOR}(e_{Z^1},e_{Z^2})=e_{Z^1}+e_{Z^2}=0$$

So the sections in the support of the $ZZZ$ for which $Z^3= +$ must have $Z^1=Z^2$, as the table confirms. 

Next consider the $YYZ$ set of compatible measurements. The probability (modulo normalization constants) of obtaining the outcome $Y^1Y^2+$ with $Y^i\in\{+,-\}$ for this set of measurements is given by the squared norm of the inner product
\begin{equation}\label{yyz}
\langle Y^{Y^1} Y^{Y^2} 0|XOR\rangle =\langle Y^{Y^1} Y^{Y^2} 0|\frac{|000\rangle+|011\rangle+|101\rangle+|110\rangle}{2}
\end{equation}
We have
\begin{equation*}
 \begin{aligned}
  \langle Y^+Y^+|&=\langle 00|+i\langle 01|+i\langle 10|-\langle 11|\\
\langle Y^+Y^-|&=\langle 00|-i\langle 01|+i\langle 10|+\langle 11|\\
\langle Y^-Y^+|&=\langle 00|+i\langle 01|-i\langle 10|+\langle 11|\\
\langle Y^-Y^-|&=\langle 00|-i\langle 01|-i\langle 10|-\langle 11|
 \end{aligned}
\end{equation*}
and since $F^0_{XOR}(0,1)=F^0_{XOR}(1,0)\neq 0$ the imaginary part of the tensor products above will not bring any contribution towards the value of the inner product (\ref{yyz}). The only contribution will come from the real part of the tensor products above, and it is easy to see that the inner product (\ref{yyz}) will vanish when $Y^1=Y^2$. So we must have $Y^1\neq Y^2$ in any global assignment which sends $Z^3$ to $+$ in order to stay within the support of the $YYZ$ row.

On the other hand, the probabilities of obtaining the outcomes $Y^1zY^3$ and $zY^2Y^3$, where $z=Z^1=Z^2$, for the $YZY$ and $ZYY$ sets of compatible measurements are given by the inner products
\begin{equation}
 \begin{aligned}
  \langle Y^{Y^1} e_z Y^{Y^3} &|XOR\rangle =\left(\langle 0e_z0|+iY^3\langle 0e_z1|+iY^1\langle 1e_z0|-(Y^1Y^3)\langle 1e_z1|\right)\ |XOR\rangle\\ 
\langle  e_zY^{Y^2}Y^{Y^3} &|XOR\rangle =\left(\langle e_z00|+iY^3\langle e_z01|+iY^2\langle e_z10|-(Y^2Y^3)\langle e_z11|\right)\ |XOR\rangle \label{zyy}
 \end{aligned}
\end{equation}
If $e_z=0$ the imaginary part of the two expressions in (\ref{zyy}) will be equal to zero for all values of $Y^i$. If $e_z=1$ the real part of the two expressions in (\ref{zyy}) will vanish for all values of $Y^i$. In the first case the expressions are non-zero only if $Y^1=Y^2=-Y^3$ and in the second case they are non-zero only if $Y^1=Y^2=Y^3$. But both these assignments violate the previous requirement that $Y^1\neq Y^2$.

So far we have established the fact that no global section can assign the outcome $+$ to $Z^3$. If on the other hand the outcome $-$ is assigned to $Z^3$, we can construct a similar argument which yields a contradiction. This time the sections in the support of $ZZZ$ for which $Z^3=-$ must have $Z^1=-Z^2$. The sections in the support of $YYZ$ must have $Y^1=Y^2$, while those in the support of $YZY$ and $ZYY$ must either have $Y^1=-Y^3=-Y^2$ for $e_{Z^2}=0$, $e_{Z^3}=1$ or $Y^1=Y^3=-Y^2$ for $e_{Z^2}=1$ and $e_{Z^3}=0$. \qed

\end{proof}

\begin{theorem}
 The $NXOR$ state is also strongly contextual if each party chooses between $Y$ and $Z$ measurements.
\end{theorem}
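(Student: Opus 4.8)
The plan is to reduce to the previous theorem by exploiting the fact that the $NXOR$ state is obtained from the $XOR$ state by a single local bit-flip. Writing the functional dependency as $F^1_{XOR}(q_1,q_2)=1+q_1+q_2$, one checks directly in the $Z$-basis that $\Psi_{NXOR}(3)=\frac12(\ket{001}+\ket{010}+\ket{100}+\ket{111})$, which is exactly $(I\otimes I\otimes X)\,\Psi_{XOR}(3)$, where $X$ is the Pauli flip applied to the third qubit. Thus the two states differ only by a local unitary on the third party.

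First I would record how this unitary interacts with the two permitted measurements. Since $X$ anticommutes with both $Y$ and $Z$, we have $XZX=-Z$ and $XYX=-Y$; equivalently, at the level of eigenvectors, $X$ interchanges $\ket0\leftrightarrow\ket1$ and, up to an irrelevant global phase, $\ket{Y^+}\leftrightarrow\ket{Y^-}$. Because a state and its image under a local unitary give the same measurement statistics once the unitary is absorbed into the measured observable, measuring $Z^3$ (resp.\ $Y^3$) on $\Psi_{NXOR}(3)$ produces exactly the statistics of measuring $Z^3$ (resp.\ $Y^3$) on $\Psi_{XOR}(3)$, but with the two possible outcomes of the third party swapped. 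The measurements of the first two parties, and all the measurement contexts, are left untouched.

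Consequently the empirical model of $\Psi_{NXOR}(3)$ over the cover $\{Y,Z\}^{\{1,2,3\}}$ is obtained from the $XOR$ model of the previous theorem simply by relabelling the outcome at site $3$ ($+\leftrightarrow-$) uniformly across every row. Such a relabelling is an isomorphism of empirical models: it maps support to support and commutes with the restriction maps, hence it carries global sections to global sections. Therefore $\Psi_{NXOR}(3)$ admits a global section compatible with its support if and only if $\Psi_{XOR}(3)$ does, and strong contextuality transfers verbatim from the previous theorem.

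The step requiring the most care is verifying that the single outcome-relabelling at site $3$ is simultaneously correct for all of the relevant rows $ZZZ$, $YYZ$, $YZY$, $ZYY$, and so on. This is guaranteed because the flip $X$ acts only on the third tensor factor and does so independently of which observables the other parties choose, so nothing about the argument is special to any one context. If one prefers a self-contained computation, the same conclusion can be reached directly by writing out the support table (it is the $XOR$ table with the third outcome column inverted) and rerunning the inner-product argument of the previous proof, now with the parity condition on $Z^3$ reading $e_{Z^1}+e_{Z^2}=1$ when $Z^3=+$ and $e_{Z^1}+e_{Z^2}=0$ when $Z^3=-$, the $Y$-measurement analysis being identical.
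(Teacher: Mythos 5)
Your proof is correct, but it takes a genuinely different route from the one in the paper. The paper re-runs the direct contradiction argument on the $NXOR$ support table: assuming a global section with $Z^3=+$, it derives $Z^1\neq Z^2$ from the $ZZZ$ row, $Y^1=Y^2$ from $YYZ$, and $Y^1\neq Y^2$ from $YZY$ and $ZYY$, a contradiction (and symmetrically for $Z^3=-$); only afterwards does it remark that the two tables are related by a relabelling, and the relabelling it identifies interchanges $+$ and $-$ in \emph{all three} outcome positions, whereas yours flips only the third party's outcomes. (Both relabellings are valid, since flipping any odd number of qubits exchanges the even- and odd-parity supports.) Your argument instead derives the relabelling a priori from the local unitary identity $\Psi_{NXOR}=(I\otimes I\otimes X)\,\Psi_{XOR}$ together with $XZX=-Z$, $XYX=-Y$, and then transfers strong contextuality through the resulting isomorphism of empirical models, never repeating the case analysis. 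What your approach buys is conceptual economy and generality: it makes explicit that strong contextuality is invariant under local unitaries composed with outcome relabellings (a fact the thesis itself exploits later, e.g.\ the LU-invariance lemmas of Chapter \ref{LNL} and the relabelling arguments for $NAND$, $OR$, etc.\ in Section \ref{two}), so the $NXOR$ case requires no new verification at all. What the paper's direct argument buys is self-containedness: the reader sees concretely which rows force the contradiction without needing to trust the model-isomorphism machinery. Your closing sanity check of the parity conditions ($e_{Z^1}+e_{Z^2}=1$ when $Z^3=+$, and $=0$ when $Z^3=-$) is also consistent with $F^1_{XOR}(q_1,q_2)=1+q_1+q_2$, so the self-contained fallback you sketch would go through as well.
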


\begin{proof}
The support of the probability table for the $NXOR$ state is 
\begin{center}
\begin{tabular}{l|cccccccc}
& $+++$ & $++-$ & $+-+$ & $+--$ & $-++$ & $-+-$ & $--+$ & $---$ \\ \hline
$YYY$ &  $1$ & $1$ & $1$ & $1$ & $1$ & $1$ & $1$ & $1$ \\
$YYZ$ &  $1$ & $0$ & $0$ & $1$ & $0$ & $1$ & $1$ & $0$ \\
$YZY$ &  $1$ & $0$ & $0$ & $1$ & $0$ & $1$ & $1$ & $0$ \\
$ZYY$ &  $1$ & $0$ & $0$ & $1$ & $0$ & $1$ & $1$ & $0$ \\
$YZZ$ &  $1$ & $1$ & $1$ & $1$ & $1$ & $1$ & $1$ & $1$ \\
$ZYZ$ &  $1$ & $1$ & $1$ & $1$ & $1$ & $1$ & $1$ & $1$  \\
$ZZY$ &  $1$ & $1$ & $1$ & $1$ & $1$ & $1$ & $1$ & $1$ \\
$ZZZ$ & $0$ & $1$ & $1$ & $0$ & $1$ & $0$ & $0$ & $1$ \\
\end{tabular}
\end{center}

The argument for strong contextuality follows the same pattern as for the $XOR$ state. We assume by contradiction that a global section exists, and that it makes the assignment $Z^3=+$. Then from the $ZZZ$ row we obtain the requirement that $Z^1\neq Z^2$. From the $YYZ$ row we obtain that $Y^1=Y^2$ and from the $YZY$ and $ZYY$ rows we obtain that $Y^1\neq Y^2$, which is a contradiction. 

Similarly, if $Z^3=-$ we must have $Z^1=Z^2$ and $Y^1\neq Y^2$ from the $ZZZ$ and $YYZ$ rows. This means we must also have $Y^1=Y^2$ from the $YZY$ and $ZYY$ rows, which again is a contradiction.

Note at this point that the similarity between these two arguments for strong contextuality is due to the similar structure of the tables for the $XOR$ an $NXOR$ states. Namely, the second table can be obtained from the first by interchanging the $+$ and $-$ signs which label the table columns. Thus the second argument is the same as the first, only with the $+$ and $-$ signs interchanged.
 \qed

\end{proof}

\noindent\textbf{Dictatorships}

The four degree one polynomials of the form $F^a_{1}(q_1,q_2)=a+q_1$ and $F^a_{1}(q_1,q_2)=a+q_2$ where $a\in\{0,1\}$ correspond to the dictatorship states, where the value of the last qubit is dictated by the value of either the first or of the second qubit. In the $Z$ basis these states are
$$\Delta^+_2:=\frac{|0\rangle+|1\rangle}{\sqrt{2}}\otimes \frac{|00\rangle+|11\rangle}{\sqrt{2}}$$
or
$$\Delta^-_2:= \frac{|0\rangle+|1\rangle}{\sqrt{2}}\otimes \frac{|01\rangle+|10\rangle}{\sqrt{2}}$$
if the dictatorship is given by the second qubit. Similarly, if the dictatorship is given by the first qubit, we have two possible states
$$\Delta^+_1:= \frac{|0_2\rangle+|1_2\rangle}{\sqrt{2}}\otimes \frac{|0_10_3\rangle+|1_11_3\rangle}{\sqrt{2}}$$
and
$$\Delta^-_1:=\frac{|0_2\rangle+|1_2\rangle}{\sqrt{2}}\otimes \frac{|0_a1_3\rangle+|1_10_3\rangle}{\sqrt{2}}$$
where the subscripts $1$, $2$ and $3$ indicate whether the qubit belongs to the first, second or third party respectively.

\begin{proposition}\label{bell}
The four dictatorship states are weakly contextual for suitable dichotomic choices of measurements.
\end{proposition}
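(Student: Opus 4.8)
The plan is to exploit the fact that each of the four dictatorship states is, up to a relabelling of the parties, a tensor product of the single-qubit product state $\frac{\ket0+\ket1}{\sqrt2}$ with one of the two-qubit Bell states. For definiteness take $\Delta^+_2 = \frac{\ket0+\ket1}{\sqrt2}\otimes\frac{\ket{00}+\ket{11}}{\sqrt2}$, in which party~$1$ holds the unentangled qubit $\frac{\ket0+\ket1}{\sqrt2}$ while parties~$2$ and~$3$ share $\ket{\Phi^+}$. I would assign to parties~$2$ and~$3$ exactly the dichotomic observables $a,a',b,b'$ used in the analysis of the Bell state earlier in this chapter, which produce an empirical model already shown there to be weakly contextual but not logically contextual; for party~$1$ any two dichotomic observables will do, since its qubit is a product state and the choice is immaterial. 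The resulting $(3,2,2)$ empirical model $e$ is then the tensor product $e=e_1\otimes e_2$ of the non-contextual single-qubit model $e_1$ of party~$1$ with the weakly contextual Bell model $e_2$ of parties~$2,3$: on each context its probabilities factor as a product, and its support is the product of the two supports.

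First I would show $e$ is probabilistically contextual. Suppose for contradiction that a global distribution $d\in\mathcal{D}_{\mathbb{R}_{\geq 0}}\mathcal{E}(X)$ compatible with $e$ exists. Writing each global section as a pair $(s_1,s_2)$, with $s_1$ over party~$1$'s measurements and $s_2$ over the measurements of parties~$2,3$, the marginal $d_2(s_2):=\sum_{s_1}d(s_1,s_2)$ defines a distribution on the global sections of the Bell factor. Since $e_1$ normalises to $1$ on each of its contexts, marginalising the identity $d|_{C_1\cup C_2}=e^1_{C_1}\cdot e^2_{C_2}$ over the party-$1$ coordinate shows that $d_2$ restricts to $e^2_{C_2}$ on every Bell context $C_2$; hence $d_2$ would be a global distribution realising the Bell model, contradicting its weak contextuality. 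Thus no such $d$ exists and $e$ is at least weakly contextual.

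Next I would show $e$ is \emph{not} logically contextual, which also excludes strong contextuality. Because $e_1$ is non-contextual it admits a global boolean distribution $d_1^{\mathbb B}$ reproducing $\mathrm{supp}(e_1)$ on each context, and because the Bell model $e_2$ is not logically contextual it admits a global boolean distribution $d_2^{\mathbb B}$ reproducing $\mathrm{supp}(e_2)$. The product boolean distribution $d^{\mathbb B}=d_1^{\mathbb B}\otimes d_2^{\mathbb B}$, whose support is the product of the two supports, then restricts on each context $C_1\cup C_2$ to $\mathrm{supp}(e^1_{C_1})\times\mathrm{supp}(e^2_{C_2})=\mathrm{supp}(e_{C_1\cup C_2})$. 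Hence $d^{\mathbb B}$ is a global boolean distribution compatible with the support of $e$, so $e$ sits exactly at the weak level of the hierarchy.

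Finally, the remaining states $\Delta^-_2,\Delta^+_1,\Delta^-_1$ are obtained from $\Delta^+_2$ by a single-qubit local unitary (sending $\ket{\Phi^+}$ to the relevant Bell state) together with a relabelling of which party carries the free qubit. Since inner products, and therefore both the probabilistic and the possibilistic contextuality data, are preserved under local unitaries, the same conclusion carries over to all four states with the correspondingly transformed measurements. The main obstacle is not any individual step but the careful bookkeeping required to set up the tensor product of empirical models and to verify that the marginals and product supports behave as claimed; once this factorisation is in place, weak contextuality follows by combining the non-contextuality of the single-qubit factor with the established weak-but-not-logical contextuality of the Bell factor.
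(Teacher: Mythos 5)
Your proposal is correct and takes essentially the same approach as the paper: both proofs exploit the factorisation of each dictatorship state as (free qubit) $\otimes$ (Bell pair), so that the tripartite empirical model factors as a product and any hypothetical global probability distribution for it would marginalise to a global distribution for the bipartite Bell model, contradicting the latter's weak contextuality. The differences are only bookkeeping — the paper fixes machine-checked observables $U\left(\frac{\pi}{2},\frac{\pi}{8}\right)$, $U\left(\frac{\pi}{2},\frac{5\pi}{8}\right)$ (and analogues for the $\Delta^-$ states) and handles all four states by direct analogy, while you reuse the Bell-model measurements already analysed in Chapter \ref{SSC} and dispatch the remaining three states via local-unitary invariance; your extra argument that the model is not logically contextual is the content the paper defers to Theorem \ref{bellth}.
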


\begin{proof}

Consider the general form of an observable, given in terms of angles $\theta$ and $\phi$ on the Bloch sphere
$$U(\theta,\phi):=\left(\begin{array}{cc}
                          \cos\theta& e^{-i\phi}\sin\theta\\
			  e^{i\phi}\sin\theta&-\cos\theta
                         \end{array}\right)$$

We will use the fact that the bell basis states $\Phi^+=\frac{|00\rangle+|11\rangle}{\sqrt{2}}$ and $\Phi^-=\frac{|00\rangle+|11\rangle}{\sqrt{2}}$ are weakly contextual with respect to suitable choices of measurements.

It can be machine checked that the state $\Phi^+$ is weakly contextual if we allow each party to choose between the measurements $A:=U \left(\frac{\pi}{2},\frac{\pi}{8}\right)$ and $B:=U\left(\frac{\pi}{2},\frac{5\pi}{8}\right)$, while the state  $\Phi^-$ is weakly contextual if we allow each party to choose between the measurements $C:=U\left(\frac{\pi}{8},\frac{\pi}{2}\right)$ and $D:=U\left(\frac{5\pi}{8},\frac{\pi}{2}\right)$

In fact, it can also be machine checked that this choice of measurements gives a maximal violation of Bell inequalities for both states.

The probability models of the dictatorship states can be obtained from the probability models of the states $\Phi^+$ and $\Phi^-$ in a straightforward way. Let $|+_A\rangle$ and $|-_A\rangle$ stand for the eigenstates of $A$ and $|+_B\rangle$ and $|-_B\rangle$ stand for the eigenstates of $B$. 

Define the two constants
\begin{align*}
a_+:&=\frac{1}{\sqrt{2}}(\langle+_A|0\rangle+\langle+_A|1\rangle)\\
a_-:&=\frac{1}{\sqrt{2}}(\langle-_A|0\rangle+\langle-_A|1\rangle) 
\end{align*}
Note that $a_++a_-=1$. 
Similarly, define the two constants
\begin{align*}
b_+:&=\frac{1}{\sqrt{2}}(\langle+_B|0\rangle+\langle+_B|1\rangle)\\
b_-:&=\frac{1}{\sqrt{2}}(\langle-_B|0\rangle+\langle-_B|1\rangle) 
\end{align*}
Up to two decimal points precision, the probability table of the $\Phi^+$ state for the observables $A$ and $B$ is 
\begin{center}
\begin{tabular}{l|cccc}
& $++$ & $+-$ & $-+$ & $--$ \\ \hline
$AA$ &  $0.43$ & $0.07$ & $0.07$ & $0.43$  \\
$AB$ &  $0.07$ & $0.43$ & $0.43$ & $0.07$  \\
$BA$ &  $0.07$ & $0.43$ & $0.43$ & $0.07$  \\
$BB$ &  $0.07$ & $0.43$ & $0.43$ & $0.07$  \\
\end{tabular}
\end{center}

The inner product formula (\ref{ip}) for computing probabilities implies that the probability table of the dictatorship state $\Delta^+_2$ can be expressed in terms of the constants $a_+$, $a_-$, $b_+$ and $b_-$ and the probability table of $\Phi^+$:

\begin{center}
\begin{tabular}{l|cccc|cccc}
& $+++$ & $++-$ & $+-+$ & $+--$ & $-++$ & $-+-$ & $--+$ & $---$ \\ \hline
$AAA$ &  $0.43a_+$ & $0.07a_+$ & $0.07a_+$ & $0.43a_+$ & $0.43a_-$ & $0.07a_-$ & $0.07a_-$ & $0.43a_-$  \\
$AAB$ &  $0.07a_+$ & $0.43a_+$ & $0.43a_+$ & $0.07a_+$ & $0.07a_-$ & $0.43a_-$ & $0.43a_-$ & $0.07a_-$\\
$ABA$ &  $0.07a_+$ & $0.43a_+$ & $0.43a_+$ & $0.07a_+$ & $0.07a_-$ & $0.43a_-$ & $0.43a_-$ & $0.07a_-$ \\
$ABB$ &  $0.07a_+$ & $0.43a_+$ & $0.43a_+$ & $0.07a_+$ & $0.07a_-$ & $0.43a_-$ & $0.43a_-$ & $0.07a_-$ \\
\hline
$BAA$ &  $0.43b_+$ & $0.07b_+$ & $0.07b_+$ & $0.43b_+$& $0.43b_-$ & $0.07b_-$ & $0.07b_-$ & $0.43b_-$  \\
$BAB$ &  $0.07b_+$ & $0.43b_+$ & $0.43b_+$ & $0.07b_+$ & $0.07b_-$ & $0.43b_-$ & $0.43b_-$ & $0.07b_-$\\
$BBA$ &  $0.07b_+$ & $0.43b_+$ & $0.43b_+$ & $0.07b_+$ & $0.07b_-$ & $0.43b_-$ & $0.43b_-$ & $0.07b_-$\\
$BBB$ &  $0.07b_+$ & $0.43b_+$ & $0.43b_+$ & $0.07b_+$ & $0.07b_-$ & $0.43b_-$ & $0.43b_-$ & $0.07b_-$\\
\end{tabular}
\end{center}

Note also that the table of the dictatorship state $\Delta^+_1$ will have the same values as the one above, but the rows will be indexed in the order $AAA$, $AAB$, $BAA$, $BAB$, $ABA$, $ABB$, $BBA$, $BBB$, since the coefficients $a_{+/-}$ and $b_{+/-}$ come from the second qubit's contribution to the inner product.

It is now straightforward to deduce that the states $\Delta^+_1$ and $\Delta^+_2$ are indeed weakly contextual for the same choice of measurements for which the $\Phi^+$  state is weakly contextual, since any probability distribution on the set of global sections of one of these two dictatorship states would restrict to a probability distribution on the set of global sections of the $\Phi^+$ state.

Next note that up to two decimal points precision, the probability table of the $\Phi^-$ state for the observables $C$ and $D$ is
\begin{center}
\begin{tabular}{l|cccc}
& $++$ & $+-$ & $-+$ & $--$ \\ \hline
$AA$ &  $0.43$ & $0.07$ & $0.07$ & $0.43$  \\
$AB$ &  $0.07$ & $0.43$ & $0.43$ & $0.07$  \\
$BA$ &  $0.07$ & $0.43$ & $0.43$ & $0.07$  \\
$BB$ &  $0.07$ & $0.43$ & $0.43$ & $0.07$  \\
\end{tabular}
\end{center}

and the probability tables of the $\Delta^-_1$ and $\Delta^-_2$ dictatorship states can be expressed in terms of the table above and four suitably defined constants $c_{+/-}$ and $d_{+/-}$, so by analogy with the $\Delta^+_1$ and $\Delta^+_2$ case, these states will also be weakly contextual.\qed

\end{proof}

\begin{theorem}\label{bellth}
None of the four dictatorship states is logically contextual, for \textbf{any} dichotomic choice of measurements.
\end{theorem}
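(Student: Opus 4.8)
The plan is to exploit the fact that every dictatorship state factorises, across a suitable bipartition of the three parties, as a tensor product of a single-qubit pure state and a bipartite maximally entangled state. Explicitly, $\Delta^{\pm}_2$ has party $1$ in the product state $\tfrac{1}{\sqrt2}(\ket0+\ket1)$ while parties $2,3$ share the maximally entangled pair $\tfrac{1}{\sqrt2}(\ket{00}\pm\ket{11})$, and symmetrically $\Delta^{\pm}_1$ splits off party $2$, leaving parties $1,3$ maximally entangled. Since each party chooses its dichotomic observable independently, the measurements respect this bipartition, so the empirical model produced by any dichotomic choice is the \emph{product} $M_1\otimes M_2$ of the one-party single-qubit model $M_1$ and the two-party maximally entangled model $M_2$: its contexts are products of contexts of the factors, and, because probabilities multiply across the tensor factors, the support at a product context $C_1\cup C_2$ is exactly $\mathrm{supp}(C_1)\times\mathrm{supp}(C_2)$.

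The first key step is a product lemma: if neither $M_1$ nor $M_2$ is logically contextual, then $M_1\otimes M_2$ is not logically contextual. Using the logical characterisation recalled above, I take any context $C=C_1\cup C_2$ and any $s\in S(C)$; writing $s=s_1\cup s_2$ with $s_i\in\mathrm{supp}(C_i)$, I extend each $s_i$ to a global assignment $g_i$ of $M_i$ whose restriction to every context of $M_i$ lies in the support (possible since $M_i$ is not logically contextual). Then $g_1\cup g_2$ extends $s$, and its restriction to any product context $C_1'\cup C_2'$ is $(g_1|_{C_1'})\cup(g_2|_{C_2'})\in \mathrm{supp}(C_1')\times\mathrm{supp}(C_2')$, so $\varphi_s\wedge\bigwedge_{V\ne C}\varphi_V$ is satisfiable. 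The single-qubit factor $M_1$ is trivially not logically contextual: its two contexts $\{a\}$ and $\{a'\}$ are disjoint, so any support outcome at one extends by choosing any support outcome at the other.

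The crux, and the main obstacle, is to prove that the bipartite maximally entangled model $M_2$ is not logically contextual \emph{for every} dichotomic choice of observables. Here I first record that, because each reduced state is $\tfrac12 I$, every local measurement has uniform marginal; hence in each context the $2\times2$ table of joint probabilities has all row- and column-sums equal to $\tfrac12$, which forces the support to be either all four outcomes or a perfect (anti)correlation of exactly two outcomes --- one or three zeros are impossible. Thus the support of $M_2$ is captured by a system of affine $GF(2)$ constraints of the form $\beta_y=\alpha_x\oplus c_{xy}$, one for each perfectly correlated context, viewed as edges labelled by $c_{xy}$ in the bipartite graph $K_{2,2}$ on $\{a,a'\}\times\{b,b'\}$. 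Since the only cycle in $K_{2,2}$ uses all four edges, the constraint system can fail to be a forest only if all four contexts are perfectly correlated; I will show this forces both parties' observables to coincide up to sign (using that, for a maximally entangled state, perfect correlation fixes one party's measurement direction from the other's via a fixed orthogonal map $E(\hat a,\hat b)=\hat a\cdot R\hat b$ with $|E|=1\iff \hat b=\pm R^{-1}\hat a$), and that under this degeneracy the resulting $4$-cycle is unfrustrated because the product of the four correlation signs is forced to equal $+1$.

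Finally I check the extension property directly from this picture: when the constraint graph is a forest the affine system is under-determined and any local support section clearly extends, while in the degenerate unfrustrated $4$-cycle case there are exactly two global solutions, related by flipping all outcomes, whose restrictions already exhaust the two-element support at every context. In every case each local section extends, so $M_2$ is not logically contextual; combined with the product lemma and the single-qubit case, none of the four dictatorship states is logically contextual for any dichotomic choice of measurements. I expect the only delicate points to be the case analysis showing that a full set of perfect correlations forces degeneracy and the verification that the induced $4$-cycle is never frustrated.
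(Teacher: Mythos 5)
Your route is genuinely different from the paper's, and in one respect stronger. The paper also reduces to the bipartite maximally entangled states via the product structure (its Proposition \ref{bell} expresses the dictatorship tables in terms of the Bell-state tables, and global sections are lifted by giving the third party the outcome whose coefficient $a_{\pm}$ is non-zero), but it then disposes of the bipartite case by writing the fully parameterised $4\times 4$ probability table and checking \emph{by machine} every subset of vanishing-entry conditions; moreover that explicit computation is carried out for the case where the two parties share the same pair of observables, the fully general case being obtained only by citation. Your structural argument --- uniform marginals force each context's support to be either full or a perfect (anti)correlation, followed by a parity analysis of the constraint graph inside $K_{2,2}$ --- removes the computer search entirely and covers arbitrary, possibly different, pairs of local observables for the two parties.

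There is, however, one genuine gap: the claim that \emph{any} forest constraint graph permits extension of every local support section is false as a combinatorial statement, because the three-edge path is a forest and it breaks the claim. Concretely, suppose the contexts $(a,b')$, $(a',b')$, $(a',b)$ are perfectly correlated but $(a,b)$ is not, so that $(a,b)$ has full support. Chaining the three constraints forces $\beta_b=\alpha_a\oplus c_{ab'}\oplus c_{a'b'}\oplus c_{a'b}$ in any global assignment, so the section of $(a,b)$ with $\beta_b=\alpha_a\oplus c_{ab'}\oplus c_{a'b'}\oplus c_{a'b}\oplus 1$, which \emph{is} in the support, does not extend; such a model would be logically contextual. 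This configuration is physically impossible, but your write-up never rules it out: you invoke the rigidity of perfect correlations only in the all-four-edges case. The fix is to apply that same fact one step earlier: correlation at $(a,b')$ and $(a',b')$ gives $\hat a_a=\pm\hat a_{a'}$ (both equal $\pm R\hat b_{b'}$), and correlation at $(a',b)$ then gives $\hat a_a=\pm R\hat b_b$, i.e.\ $(a,b)$ is perfectly correlated as well. Hence three perfect correlations force the fourth, so the realisable constraint graphs are exactly the forests with at most two edges --- where the extension really is immediate, as a short case check (empty graph, single edge, star, matching) confirms --- and the full four-cycle, which your sign-product argument correctly shows is never frustrated. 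With that one additional lemma inserted, your proof is complete.
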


\begin{proof}
The relationship between probability tables discussed in Proposition \ref{bell} allows us to reduce the problem to the bipartite scenario. Thus we seek to prove that neither of the two Bell basis states is logically contextual for any given choice of measurements. 

Let $A:=U(\theta_1,\phi_1)$ and $B:=U(\theta_2,\phi_2)$. Let $c$, $s$ and $f$ stand for $\cos\frac{\theta_1}{2}$, $\sin\frac{\theta_1}{2}$ and $e^{i\phi_1}$ respectively. Similarly, let $k$, $z$ and $v$ stand for $\cos\frac{\theta_2}{2}$, $\sin\frac{\theta_2}{2}$ and $e^{i\phi_2}$ respectively. Then the general form of the probability model of the $\Phi^+$ state is 

\begin{center}
\begin{tabular}{l|cccc}
& $++$ & $+-$ & $-+$ & $--$ \\ \hline
$AA$ &  $|c^2+f^2\cdot s^2|^2$ & $|cs-f^2\cdot cs|^2$ & $|cs-f^2\cdot cs|^2$ & $|s^2+f^2\cdot c^2|^2$  \\
$AB$ &  $|ck+fv\cdot sz|^2$ & $|cz-fv\cdot sk|^2$ & $|sk-fv\cdot cz|^2$ & $|sz+fv\cdot ck|^2$  \\
$BA$ &  $|ck+fv\cdot sz|^2$ & $|sk-fv\cdot cz|^2$ & $|cz-fv\cdot sk|^2$ & $|sz+fv\cdot ck|^2$  \\
$BB$ &  $|k^2+v^2\cdot z^2|^2$ & $|kz-v^2\cdot kz|^2$ & $|kz-v^2\cdot kz|^2$ & $|z^2+v^2\cdot k^2|^2$  \\
\end{tabular}
\end{center}

In most cases, all of the sections in the model of $\Phi^+$ will be in the support, in which case the state is clearly not logically contextual. However, for certain values of $c$, $f$, $v$ and $k$ (which may be chosen independently of each other) the entries of the table above may vanish, which will exclude certain sections from the support. It suffices therefore to check that the resulting possibilistic models are not logically contextual for any choices of $c$, $f$, $v$ and $k$ (and implicitly also of $s$ and $z$) which would allow one or more of the above table entries to vanish. We therefore need to consider each element in the powerset of the following set of conditions on $c$, $s$, $f$, $v$, $z$ and $k$:

$$\left\{c\vee k\! \in \! \{0,\pm 1\},\: f\vee v\! \in \! \{\pm 1,\, \pm i\},\: f\! =\pm \frac{1}{v},\: c=\pm s,\: k=\pm z,\: ck=\pm sz,\: cz=\pm sk\right\} $$

A computer can easily verify that no subset of the above set of conditions leads to a logically contextual probability model. 

Finally, using the relation between probability tables from Proposition \ref{bell}, we note that any global section of the model above can be easily extended to a global section of the corresponding dictatorship state model by adding the  assignment $+$ to the third party's outcome for the $A$ measurement, if $a_+\neq0$ and $-$ otherwise, and similarly for the third party's outcome corresponding to the $B$ measurement. We can therefore conclude that for all possible choices of measurements, the dictatorship states corresponding to $\Phi^+$ can not be logically contextual. 

For the $\Phi^-$ state note that the observables $C:=U(\phi_1,\theta_1)$ and $D:=U(\phi_2,\theta_2)$ will give the probability model
\begin{center}
\begin{tabular}{l|cccc}
& $++$ & $+-$ & $-+$ & $--$ \\ \hline
$AA$ &  $|cs-f^2\cdot cs|^2$ &  $|c^2+f^2\cdot s^2|^2$ & $|s^2+f^2\cdot c^2|^2$ &  $|cs-f^2\cdot cs|^2$   \\
$AB$ &  $|cz-fv\cdot sk|^2$ & $|ck+fv\cdot sz|^2$ & $|sz+fv\cdot ck|^2$  &  $|sk-fv\cdot cz|^2$  \\
$BA$ &  $|sk-fv\cdot cz|^2$ &  $|ck+fv\cdot sz|^2$  & $|sz+fv\cdot ck|^2$ & $|cz-fv\cdot sk|^2$ \\
$BB$ &  $|kz-v^2\cdot kz|^2$ &  $|k^2+v^2\cdot z^2|^2$ & $|z^2+v^2\cdot k^2|^2$ & $|kz-v^2\cdot kz|^2$   \\
\end{tabular}
\end{center}
where $c,k,s,z$ now take $\phi_i/2$ as arguments while $f$ and $v$ take $\theta_i$ as arguments. 

We can show that this model is also not logically contextual, using an argument completely analogous to the one used for the $\Phi^+$ state. Hence the dictatorship states corresponding to the $\Phi^-$ state are also not logically contextual. \qed

\end{proof}

\subsection{Degree two polynomials}\label{two}

There are eight balanced functionally dependent states whose corresponding polynomials have degree two. Four of these correspond to the two-variable propositional formulas $AND,$ $NAND,$ $OR$ and $NOR$. Their respective polynomials have the form $$F^a_{AND}(q_1,q_2)=a+q_1q_2$$ and $$F^a_{OR}=a+q_1+q_2+q_1q_2$$ with $a=0$ for $AND$ and $OR$ and $a=1$ for $NAND$ and $NOR$. 

The other four states correspond to logical implication and its negation. We use $L_1,\ L_{2},\ NL_{1}$ and $NL_{2}$ to denote the propositional formulas $q_1\Rightarrow q_2,\ q_2\Rightarrow q_1$ and $\overline{q_1\Rightarrow q_2}$, $\overline{q_2\Rightarrow q_1}$ respectively. The polynomials corresponding to these propositional formulas are of the form $$F^a_{NL_{i}}=a+q_i+q_1q_2$$ with $i\in\{1,2\}$, $a=0$ for $NL_i$ and $a=1$ for $L_i$.

All the eight states described above turn out to be logically contextual if we choose $Y$ and $Z$ measurements in each part.

\begin{theorem}\label{and}
 The $AND$ state is logically contextual. 
\end{theorem}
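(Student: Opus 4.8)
The plan is to compute the possibilistic support of the $AND$ state under $Y$ and $Z$ measurements at each site, and then to exhibit a single local section lying in the support of some context which admits no globally consistent extension. By the definition of logical contextuality it suffices to produce one such offending event (cf. Remark~\ref{rk:LW}). Writing the state in the $Z$-basis as
$$\Psi_{AND} \; = \; \frac{1}{2}\left( |000\rangle + |010\rangle + |100\rangle + |111\rangle \right),$$
I would first observe that only the four rows $ZZZ$, $YZY$, $ZYY$ and $YYZ$ are needed: the remaining rows ($YYY$, $ZZY$, $YZZ$, $ZYZ$) either have full support, and so impose no constraint, or impose constraints already subsumed by the $ZZZ$ row.

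To obtain the supports I would proceed exactly as in the $XOR$ proof, projecting the $Z$-measured qubits onto $|0\rangle$ or $|1\rangle$ and evaluating $\langle Y^{\sigma}| = \langle 0| - i\sigma\langle 1|$ against the surviving terms. This produces three facts. First, the $ZZZ$ support is exactly the set of outcomes satisfying $Z^3 = Z^1 \wedge Z^2$ (identifying $-$ with the value $1$), so that $Z^3 = -$ forces $Z^1 = Z^2 = -$. Second, projecting qubit~$2$ (resp. qubit~$1$) onto $|1\rangle$ leaves the pair $|00\rangle + |11\rangle$ on the other two qubits, whence the $YZY$ support obeys $Z^2 = - \IMP (Y^1 \neq Y^3)$ and the $ZYY$ support obeys $Z^1 = - \IMP (Y^2 \neq Y^3)$. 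Third, projecting qubit~$3$ onto $|1\rangle$ leaves only the single term $|11\rangle$, against which every joint $Y^1 Y^2$ outcome has nonzero overlap; in particular the section $(Y^1, Y^2, Z^3) = ({+}, {-}, {-})$ lies in the support of the $YYZ$ context.

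I would then take this last section as the witness. Any global assignment extending it has $Z^3 = -$, which by the $ZZZ$ constraint forces $Z^1 = Z^2 = -$; the $YZY$ and $ZYY$ constraints then give $Y^1 \neq Y^3$ and $Y^2 \neq Y^3$, hence $Y^1 = Y^2$. This contradicts $Y^1 = {+} \neq {-} = Y^2$, so no globally consistent assignment restricts to this event, and the model is logically contextual.

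The conceptual content is light, since one need only produce a single non-extendable local section; the only real work, and the step most prone to error, is computing the four supports correctly. In particular one must keep the $Y$-eigenstate phase conventions consistent so that the postselected $|00\rangle + |11\rangle$ states genuinely annihilate the equal-outcome $Y^1Y^3$ and $Y^2Y^3$ sections, which is what drives the whole argument. Once the supports are in hand the contradiction is a short propositional deduction, and I would present it as such, optionally also recording the full $8\times 8$ support table in the style of the $XOR$ proof so that the reader can verify the supports by inspection.
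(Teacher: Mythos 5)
Your proof is correct and follows essentially the same route as the paper's: you choose the same witness section $(Y^1,Y^2,Z^3)=({+},{-},{-})$ in the full-support $YYZ$ context and derive the identical contradiction from the $ZZZ$, $YZY$ and $ZYY$ supports ($Z^3={-}$ forces $Z^1=Z^2={-}$, whence $Y^1\neq Y^3$ and $Y^2\neq Y^3$, contradicting $Y^1\neq Y^2$). The only cosmetic difference is that you compute the relevant supports by explicit postselection rather than by displaying the full $8\times 8$ table, which is the same underlying calculation.
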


\begin{proof}
The support of the probability table for the $AND$ state is 
\begin{center}
\begin{tabular}{l|cccccccc}
& $+++$ & $++-$ & $+-+$ & $+--$ & $-++$ & $-+-$ & $--+$ & $---$ \\ \hline
$YYY$ &  $1$ & $1$ & $1$ & $1$ & $1$ & $1$ & $1$ & $1$ \\
$YYZ$ &  $1$ & $1$ & $1$ & $1$ & $1$ & $1$ & $1$ & $1$ \\
$YZY$ &  $1$ & $1$ & $0$ & $1$ & $1$ & $1$ & $1$ & $0$ \\
$ZYY$ &  $1$ & $1$ & $1$ & $1$ & $0$ & $1$ & $1$ & $0$ \\
$YZZ$ &  $1$ & $0$ & $1$ & $1$ & $1$ & $0$ & $1$ & $1$ \\
$ZYZ$ &  $1$ & $0$ & $1$ & $0$ & $1$ & $1$ & $1$ & $1$  \\
$ZZY$ &  $1$ & $1$ & $1$ & $1$ & $1$ & $1$ & $1$ & $1$ \\
$ZZZ$ & $1$ & $0$ & $1$ & $0$ & $1$ & $0$ & $0$ & $1$ \\
\end{tabular}
\end{center}

The global assignment $Z^1Z^2Z^3Y^1Y^2Y^3=++++++$ is clearly consistent with the support of the $AND$ table, so this state is not strongly contextual for $Y$ and $Z$ measurements. However, not all sections in the support can be extended to global sections. Consider for example the section $Y^1Y^2Z^3=+--$ which is in the support. The only section on the $ZZZ$ row consistent with it is $Z^1Z^2Z^3=---$. But it is now impossible to assign an outcome to $Y^3$ which will make the resulting global section restrict to sections in the support of both of the rows $YZY$ and $ZYY$. In fact, there are only two sections in the support of the $YYZ$ row which cannot be extended to global ones. These are the sections where the two $Y$ measurements are assigned different outcomes, while the $Z$ measurement is assigned the outcome $-$. \qed
\end{proof}

\begin{theorem}\label{nand}
 The $NAND$ state is logically contextual.
\end{theorem}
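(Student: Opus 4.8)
The plan is to reduce the statement to the logical contextuality of the $AND$ state already proved in Theorem \ref{and}, rather than re-running the inner-product computation from scratch. The starting point is the algebraic relation $F^1_{AND}(q_1,q_2) = 1 + q_1q_2 = 1 + F^0_{AND}(q_1,q_2)$ between the two degree-two polynomials, which merely adds $1$ to the value of the third qubit. Writing $X$ for the Pauli bit-flip, I would first check directly from the $Z$-basis expansions that
$$\Psi_{NAND}(3) = (I\otimes I\otimes X)\,\Psi_{AND}(3),$$
so that the two states differ only by a local unitary acting on the party that carries the functional value.

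Next I would record how this bit-flip acts on the probability model. Since $X$ is Hermitian, for any choice of measurements $M_1M_2M_3$ and any eigenvector $\ket{e_3}$ of $M_3$ one has $\langle e_3|X = (X\ket{e_3})^\dagger$, and because $XZX = -Z$ and $XYX = -Y$ the operator $X$ sends each eigenvector of $Z$ (respectively $Y$) to the eigenvector with the opposite eigenvalue, up to an irrelevant phase. Consequently the $NAND$ support in every row coincides with the $AND$ support in the \emph{same} row after the relabelling $\sigma_3 \mapsto \bar\sigma_3$ of the third party's outcome. The one point that must be verified carefully here is exactly this: that $X$ preserves both the $Y$- and the $Z$-eigenbasis as a set, merely interchanging the $+$ and $-$ outcomes, so that the transformed scenario is again the $(Y,Z)$ scenario over the same measurement cover and only the outcome labelling at site $3$ is changed.

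Granting the table correspondence, the conclusion is immediate from the fact, established earlier, that logical contextuality is preserved by local unitaries $U_1\otimes\cdots\otimes U_n$ and is manifestly invariant under relabelling of outcomes; applying this with $U_1=U_2=I$ and $U_3=X$ transports the $AND$ obstruction to a $NAND$ obstruction. Explicitly, the supported-but-non-extendable section $Y^1Y^2Z^3 = {+}{-}{-}$ of the $AND$ model corresponds under $\sigma_3\mapsto\bar\sigma_3$ to the supported section $Y^1Y^2Z^3 = {+}{-}{+}$ of the $NAND$ model, whose unique extension on the $ZZZ$ row is $Z^1Z^2Z^3 = {-}{-}{+}$ and which then cannot be completed consistently with the supports of both the $YZY$ and $ZYY$ rows (the former forcing $Y^3={+}$, the latter $Y^3={-}$). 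At the same time, the image under $\sigma_3\mapsto\bar\sigma_3$ of the all-$+$ global section of $AND$ is a global section of $NAND$, so the model is not strongly contextual. I expect the genuinely delicate step to be the bookkeeping of the previous paragraph, namely keeping straight that the flip acts only on site $3$ across all rows simultaneously; once this is pinned down the result follows with no further calculation. As a self-contained alternative, one may simply display the $NAND$ support table --- obtained from the $AND$ table by flipping the third column index --- and repeat the inspection argument of Theorem \ref{and} verbatim.
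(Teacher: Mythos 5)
Your proof is correct, and at bottom it performs the same reduction as the paper's: both transport the argument of Theorem \ref{and} to the $NAND$ state through the outcome relabelling $\sigma_3\mapsto\bar{\sigma}_3$ at the third site. The difference is in how that relabelling is justified. The paper simply displays the $NAND$ support table and observes by inspection that it is the $AND$ table with the third column label flipped; you instead derive the correspondence structurally, from $\Psi_{NAND}=(I\otimes I\otimes X)\Psi_{AND}$ together with $XZX=-Z$ and $XYX=-Y$ (so $X$ preserves each of the $Y$- and $Z$-eigenbases while swapping eigenvalues), and then invoke the local-unitary invariance of logical contextuality recorded at the end of Chapter \ref{SSC} together with invariance under outcome relabelling. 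Your route buys two things: it explains \emph{why} the table correspondence holds rather than certifying it entry by entry, and it scales --- the relabellings the paper subsequently lists for $OR$, $NOR$, $L_i$, $NL_i$ (e.g.\ $+++\mapsto---$ for $OR$, $+++\mapsto+--$ for $L_1$) arise in exactly the same way from the local unitaries $X\otimes X\otimes X$, $I\otimes X\otimes X$, etc., since adding affine terms to the Boolean polynomial is implemented by bit-flips on the corresponding qubits. What the paper's route buys is brevity and independence from the LU-invariance lemma: once the table is printed, the claim is a two-line observation. Your explicit re-tracing of the witness ($Y^1Y^2Z^3=+-+$ in the support, forcing $Z^1Z^2Z^3=--+$ on the $ZZZ$ row, then $Y^3=+$ from $YZY$ against $Y^3=-$ from $ZYY$) is accurate and is precisely what the paper's instruction to rerun the old argument ``with the new labelling'' unwinds to, as is your observation that the image of the all-$+$ global section shows $NAND$ is not strongly contextual.
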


\begin{proof}
The support of the probability table for the $NAND$ state is 
\begin{center}
\begin{tabular}{l|cccccccc}
& $+++$ & $++-$ & $+-+$ & $+--$ & $-++$ & $-+-$ & $--+$ & $---$ \\ \hline
$YYY$ &  $1$ & $1$ & $1$ & $1$ & $1$ & $1$ & $1$ & $1$ \\
$YYZ$ &  $1$ & $1$ & $1$ & $1$ & $1$ & $1$ & $1$ & $1$ \\
$YZY$ &  $1$ & $1$ & $1$ & $0$ & $1$ & $1$ & $0$ & $1$ \\
$ZYY$ &  $1$ & $1$ & $1$ & $1$ & $1$ & $0$ & $0$ & $1$ \\
$YZZ$ &  $0$ & $1$ & $1$ & $1$ & $0$ & $1$ & $1$ & $1$ \\
$ZYZ$ &  $0$ & $1$ & $0$ & $1$ & $1$ & $1$ & $1$ & $1$  \\
$ZZY$ &  $1$ & $1$ & $1$ & $1$ & $1$ & $1$ & $1$ & $1$ \\
$ZZZ$ & $0$ & $1$ & $0$ & $1$ & $0$ & $1$ & $1$ & $0$ \\
\end{tabular}
\end{center}
Note that this table can be obtained from the $AND$ table by simply relabeling the columns. The relabeling sends the first $+$ to $+$, the second $+$ to $+$ and the third $+$ to $-$, and it sends the first two $-$s to $-$ and the third one to $+$. 

The same argument used in the proof of Theorem \ref{and} can therefore be used to prove the logical contextuality of the $NAND$ state, with the provision that the new labeling replaces the one used within the old argument's statements.\qed
\end{proof}

\begin{remark}
The notation $+++\mapsto ++-$ unambiguously describes the relabeling used in the proof of Theorem \ref{nand}, and we shall use this shorthand notation in further proofs.
\end{remark}

\begin{theorem}
The $OR$, $NOR$, $L_1$, $NL_1$, $L_2$ and $NL_2$ states are all logically contextual.
\end{theorem}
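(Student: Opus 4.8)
The plan is to reduce all six cases to the logical contextuality of the $AND$ state (Theorem \ref{and}), in exactly the spirit of the $NAND$ reduction carried out in Theorem \ref{nand}. The starting observation is that each of the six polynomials arises from $F_{AND}=q_1q_2$ by negating some subset of the inputs $q_1,q_2$ and/or the output. Working over $GF(2)$ one checks $NL_2=q_1q_2+q_2$ (negate $q_1$), $NL_1=q_1q_2+q_1$ (negate $q_2$), and $NOR=q_1q_2+q_1+q_2+1=(q_1+1)(q_2+1)$ (negate both inputs), while $L_1,L_2,OR$ are the output-negations of $NL_1,NL_2,NOR$, respectively. Hence the eight degree-two states form a single orbit of $\Psi_{AND}$ under input- and output-negations.

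I would then note that negating an input $q_i$ (respectively the output) is implemented on the quantum state by applying the Pauli operator $X$ to the corresponding qubit, so that each of the six states has the form $(X^{\otimes S})\,\Psi_{AND}$ for a suitable $S\subseteq\{1,2,3\}$. For instance, a one-line computation gives $\Psi_{NL_1}=\tfrac12(|000\rangle+|010\rangle+|101\rangle+|110\rangle)=(I\otimes X\otimes I)\,\Psi_{AND}$.

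The key step is to verify that this local $X$-action is harmless for the $\{Y,Z\}$ measurement scenario. Since $X$ conjugates $Y$ to $-Y$ and $Z$ to $-Z$, it preserves both the $Y$- and the $Z$-eigenbases and merely interchanges the $+$ and $-$ outcome labels at every site on which it acts; the phase it introduces on the $Y$-eigenstates is immaterial, as it cancels in the squared modulus of the inner product (\ref{ip}). Therefore the support table of each of the six states is obtained from the $AND$ support table by keeping the rows (the $Y/Z$ choices) fixed and relabelling the outcome columns by the associated sign-flip: $+++\mapsto +-+$ for $NL_1$, $+++\mapsto -++$ for $NL_2$, $+++\mapsto --+$ for $NOR$, and $+++\mapsto +--$, $+++\mapsto -+-$, $+++\mapsto ---$ for $L_1$, $L_2$, $OR$.

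Finally, since logical contextuality depends only on the combinatorial pattern of supported and unsupported sections, and is preserved both by such column relabellings and, more conceptually, by local unitaries (the invariance recorded after the Dicke-state argument), the proof of Theorem \ref{and} transports verbatim through each of these relabellings, yielding logical contextuality in all six cases. The only genuine obstacle is the middle step: one must check carefully that the $X$-action fixes the rows and only permutes the outcome columns, with no $Y$-versus-$Z$ crossover -- but this reduces entirely to the elementary behaviour of $X$ on the Pauli eigenbases recorded above, after which the six results follow immediately.
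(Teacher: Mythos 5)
Your proposal is correct and follows essentially the same route as the paper: the paper's proof likewise reduces all six states to the $AND$ state via exactly the column relabelings you list ($+++\mapsto ---$ for $OR$, $+++\mapsto --+$ for $NOR$, $+++\mapsto +--$ for $L_1$, $+++\mapsto +-+$ for $NL_1$, $+++\mapsto -+-$ for $L_2$, $+++\mapsto -++$ for $NL_2$) and then transports the argument of Theorem \ref{and} through them. The only difference is one of presentation: the paper simply asserts the relabelings (implicitly, by inspection of the tables), whereas you derive them conceptually from the $X^{\otimes S}$ local-unitary structure and the fact that $X$ maps each Pauli $Y$- and $Z$-eigenbasis to itself with swapped outcome labels and irrelevant phases --- a cleaner justification of the key step, but not a different proof.
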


\begin{proof}
The support of the probability tables for these states are also obtained from the $AND$ table by column relabelings, so the argument used in the proof of Theorem \ref{and} can again be used to prove the logical contextuality of these states. The necessary relabelings are

\begin{itemize}
                                                                                                                                                                                                                                                                                 \item[1)] $+++\mapsto ---$ for the $OR$ state
\item[2)] $+++\mapsto --+$ for $NOR$
\item[3)] $+++\mapsto +--$ for $L_1$
\item[4)] $+++\mapsto +-+$ for $NL_1$
\item[5)] $+++\mapsto -+-$ for $L_2$ 
\item[6)] $+++\mapsto -++$ for $NL_2$
                                                                                                                                                                                                                                                                                \end{itemize}
\qed
\end{proof}

\begin{remark}
The relabelings above can also be used for the probability tables themselves, not only for their supports, but only for $Y$, $Z$ measurements. For general choices of measurements there is no simple relation between the probability tables of the balanced states with functional dependency given by degree two polynomials, nor between their supports.
\end{remark}

\section{Contextuality classification for the $n+1$-partite case, $n>2$}

We can use the results of the previous section to classify the $n+1$-partite balanced states which have a functional dependency. In the rest of this section, let $F_n$ denote a polynomial in $n$ variables. 

\subsection{Strongly contextual states}

\begin{theorem}
Given a $n+1$-partite balanced quantum state whose functional dependency is given by the polynomial $F_n(q_1,\ldots,q_n)$, the state is strongly contextual if the polynomial $F_n$ is of the form
$$F_n(q_1,\ldots,q_n)=q_i+q_j+F_{n-2}(q_1,\ldots,\hat{q_i},\ldots,\hat{q_j},\ldots,q_n)$$
for some variables $q_i$ and $q_j$ and some polynomial $F_{n-2}$. 
\end{theorem}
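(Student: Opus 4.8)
The plan is to reduce to the XOR case already proved. The state $\Psi_F$ has functional dependency $F_n = q_i + q_j + F_{n-2}(\ldots)$, so the last qubit's value is the XOR of $q_i$, $q_j$, and the $F_{n-2}$-part depending on the remaining variables. The key observation is that the strong contextuality arguments for XOR and NXOR (Theorems for those states) hinged only on how three rows of the table ($YYZ$, $YZY$, $ZYY$) constrain the $Y$-outcomes, and on how the all-$Z$ row constrains them incompatibly. First I would set up notation: label the two distinguished parties $i, j$, and keep track of the remaining $n-2$ input parties plus the output party.

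**The core reduction.** I would argue that by measuring $Z$ on all the parties $q_1,\ldots,q_n$ other than $i$ and $j$ (fixing their outcomes to some values $\vec{z}$), the component of $F_{n-2}$ evaluates to a \emph{constant} $a = F_{n-2}(\vec{z})$. Conditioned on this $Z$-measurement outcome pattern on the spectator input qubits, the remaining three relevant qubits (sites $i$, $j$, and the output site $n+1$) behave exactly like the tripartite XOR state with offset $a$ --- i.e.\ like XOR if $a=0$ and NXOR if $a=1$. More precisely, the inner-product computation (formula~(\ref{ip})) factorizes: the spectator $Z$-measurements contribute a nonzero constant amplitude to each surviving basis string, and the reduced amplitudes on sites $i$, $j$, $n+1$ reproduce the XOR/NXOR support table. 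Thus any global section of the full model, when restricted to the sub-scenario using $Z$ on the spectators and free $\{Y,Z\}$ choices at $i$, $j$, $n+1$, would yield a global section of the tripartite XOR (or NXOR) model.

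**The contradiction.** Since the tripartite XOR and NXOR states are strongly contextual (by the two theorems above), no such global section on sites $i$, $j$, $n+1$ exists. I would run the XOR argument verbatim on these three sites: assume a global section assigns $Z^{n+1}=+$, derive from the $Z$-everywhere row that the $Z$-outcomes at $i,j$ are constrained, derive from the analogue of the $YYZ$ row that $Y^i \neq Y^j$, and derive from the $YZY$ and $ZYY$ analogues that $Y^i = Y^j$, a contradiction; symmetrically for $Z^{n+1}=-$. The only adaptation needed is that the spectator sites carry fixed $Z$-outcomes throughout, which merely shifts the offset by the constant $a$ and so selects XOR-type or NXOR-type constraints --- both of which are strongly contextual.

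**The main obstacle** I anticipate is making the factorization rigorous: I must verify that fixing the spectator $Z$-outcomes genuinely decouples the three active qubits so that their reduced support matches the XOR/NXOR table exactly, including checking that the spectator amplitudes are nonzero (so those rows lie in the support and the restriction argument is valid) and that the $F_{n-2}$ term contributes only the scalar offset $a$ without coupling the $Y$-measurements at $i$ and $j$. Once this decoupling is established, the proof is essentially an invocation of the already-proven tripartite result; the combinatorial heart has been done, and the $n+1$-partite statement follows by embedding.
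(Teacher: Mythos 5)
Your proposal is correct and follows essentially the same route as the paper's own proof: fix the spectator parties' $Z$-outcomes, evaluate $F_{n-2}$ there to get the offset $a$, and then invoke the tripartite $XOR$ (if $a=0$) or $NXOR$ (if $a=1$) strong-contextuality argument on the four rows involving sites $i$, $j$ and the output qubit, for every possible spectator assignment. The factorization you flag as a potential obstacle is exactly what the paper's argument implicitly uses (the post-measurement state on sites $i$, $j$, $n+1$ is precisely the $XOR$/$NXOR$ state), so your proof matches the paper's in both structure and substance.
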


\begin{proof}
If $Y$ and $Z$ measurements are chosen by each party, then we can show that none of the sections in the support of the $ZZZ\ldots Z$ row can be extended to a global section. 

Consider any fixed assignment of outcomes to the $Z$ measurements performed by the first $n$ parties, except the $i^{th}$ and the $j^{th}$ party. Let $\sigma_k\in\{+,-\}$, $k\neq i,j$ denote the outcome corresponding to the measurement performed by the $k^{th}$ party. Next evaluate the polynomial $F_{n-2}$ at the values of $q_1,\ldots,\hat{q_i},\ldots,\hat{q_j},\ldots,q_n$ corresponding to the fixed assignment of outcomes, using the convention that $0$ corresponds to the $+$ outcome and $1$ corresponds to the $-$ outcome. Use $a$ to denote the result of the evaluation.

Depending on the value of $a$ we can use the argument made for the strong contextuality of either the $XOR$ or the $NXOR$ state in order to show that there is no consistent assignment of outcomes which will restrict to sections in the support for all four of the following rows:

\begin{align*}
 Z\ldots ZZ_iZ\ldots ZZ_jZ\ldots ZZ\\
 Z\ldots ZY_iZ\ldots ZY_jZ\ldots ZZ\\
 Z\ldots ZZ_iZ\ldots ZY_jZ\ldots ZY\\
 Z\ldots ZY_iZ\ldots ZZ_jZ\ldots ZY
\end{align*}

Since this can be done for all possible assignments of outcomes to the $Z$ measurements performed by the first $n$ parties, except the $i^{th}$ and the $j^{th}$ party, the quantum state we are considering must be strongly contextual.\qed
\end{proof}

\subsection{Logically contextual states}

\begin{theorem}
Any $n+1$-partite balanced quantum state whose functional dependency is given by a polynomial $F_n(q_1,\ldots,q_n)$ of degree at least two which is not of the form
$$F_n(q_1,\ldots,q_n)=q_i+q_j+F_{n-2}(q_1,\ldots,\hat{q_i},\ldots,\hat{q_j},\ldots,q_n)$$
for any choice of variables $q_i$ and $q_j$ and polynomial $F_{n-2}$ is logically contextual.
\end{theorem}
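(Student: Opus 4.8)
The plan is to reduce the $(n+1)$-partite problem to the tripartite degree-two cases already settled, namely the states $AND$, $NAND$, $OR$, $NOR$, $L_1$, $NL_1$, $L_2$, $NL_2$, each of which is logically contextual (Theorems \ref{and}, \ref{nand} and the theorem immediately following them). The reduction is to let the $n-2$ parties other than two distinguished ones measure $Z$ and post-select on a fixed pattern of their outcomes; this collapses $\Psi_{F_n}$ to a tripartite balanced state $\Psi_G$ whose functional dependency $G$ is the corresponding restriction of $F_n$. Only the hypothesis $\deg F_n \geq 2$ is actually used; the excluded form $q_i+q_j+F_{n-2}$ merely isolates the strongly contextual family treated by the previous theorem (which, being strongly contextual, is also logically contextual).

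First I would locate the two distinguished variables. Since $F_n$ has degree at least two, its multilinear expansion over $GF(2)$ contains a monomial indexed by a set $S_0$ with $|S_0|\geq 2$ and coefficient $a_{S_0}=1$; fix two indices $i,j\in S_0$. After substituting $q_k=c_k$ for $k\neq i,j$, the coefficient of $q_iq_j$ in the resulting two-variable polynomial is the Boolean function $h(c)=\sum_{S\ni i,j} a_S\prod_{k\in S\setminus\{i,j\}}c_k$ of the remaining arguments. Distinct sets $S$ containing $i,j$ give distinct monomials in $c$, so the term arising from $S_0$ cannot cancel; hence $h$ is a nonzero multilinear polynomial and therefore a nonzero function. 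Choosing $c$ with $h(c)=1$ forces the coefficient of $q_iq_j$ to be $1$, so the restricted function $G(q_i,q_j)=\alpha+\beta q_i+\gamma q_j+q_iq_j$ has degree exactly two and is one of the eight logically contextual two-variable functions above.

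It then remains to lift logical contextuality from $\Psi_G$ to $\Psi_{F_n}$. I would let each party measure either $Y$ or $Z$ and argue, via the inner-product formula (\ref{ip}), that projecting each party $k\neq i,j$ (with $k\leq n$) onto the $Z$-eigenstate labelled by $c_k$ turns $\Psi_{F_n}$ into $\Psi_G$ on the three parties $i$, $j$, $n+1$; consequently, in this branch the support of $\Psi_{F_n}$ matches the tripartite support of $\Psi_G$. Logical contextuality of $\Psi_G$ supplies a context $C_0'$ on $\{i,j,n+1\}$ and a section $s_0$ in its support with no global extension. I would then take the full context $C_0$ in which parties $i,j,n+1$ measure as in $C_0'$ and every other party measures $Z$, and the section $s$ on $C_0$ agreeing with $s_0$ on $\{i,j,n+1\}$ and sending each $Z^k$ to the outcome ($+$ if $c_k=0$, $-$ if $c_k=1$); by the branch identification, $s$ lies in the support of $\Psi_{F_n}$ at $C_0$.

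The crux is to show that $s$ admits no global extension. Suppose $g$ were a global assignment consistent with every context's support and extending $s$. Because $Z^k\in C_0$ for each $k\neq i,j$, the value $g(Z^k)$ is pinned to the outcome encoding $c_k$; hence for every tripartite context $C'$, the corresponding full context (parties $i,j,n+1$ as in $C'$, the rest measuring $Z$) sees $g$ restricted precisely to the branch $Z^k=c_k$, and the branch identification places the restriction of $g$ to the measurements on $\{i,j,n+1\}$ inside the support of $\Psi_G$ at $C'$. That restriction would be a global extension of $s_0$ in the tripartite model, contradicting its logical contextuality. The main obstacle I anticipate is exactly this branch-pinning bookkeeping: one must check that including the $Z^k$ measurements in $C_0$ forces any candidate global section onto the single branch on which the full support degenerates to the tripartite support, so that no alternative $Y$-outcomes at the spectator parties can rescue the extension.
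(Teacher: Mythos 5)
Your proposal is correct and follows essentially the same route as the paper's proof: pick two variables $q_i,q_j$ occurring together in a monomial of degree at least two, fix the remaining variables (via spectator $Z$ measurements) so that the coefficient of $q_iq_j$ evaluates to $1$, and lift the logical contextuality of the resulting tripartite degree-two state back to $\Psi_{F_n}$. You merely supply two details the paper leaves implicit --- the argument that the coefficient polynomial of $q_iq_j$ is a nonzero function (so the pinning assignment $c$ exists) and the explicit branch-pinning bookkeeping showing that a global extension would induce one for the tripartite model --- both of which are sound.
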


\begin{proof}
Consider any two variables $q_i$ and $q_j$ which appear in at least one of the terms with degree at least two of the polynomial $F_n$. The polynomial $F_n$ can be rewritten as 
$$F_n(q_1,\ldots,q_n)=F^1_{n-2}+q_iF^2_{n-2}+q_jF^3_{n-2}+q_iq_jF^4_{n-2}$$
where $F^i_{n-2}$ are $n-2$ variable polynomials in $q1,\ldots,\hat{q_i},\ldots,\hat{q_j},\ldots,q_n$.

Next choose any assignment of outcomes to the $Z$ measurements performed by the first $n$ parties, except the $i^{th}$ and the $j^{th}$ party, such that the polynomial $F^4$ evaluates to $1$ at the values of $q_1,\ldots,\hat{q_i},\ldots,\hat{q_j},\ldots,q_n$ corresponding to this assignment. Using this assignment, we have obtained a degree two polynomial in two variables, $q_i$ and $q_j$.

We can now use one of the arguments in Section \ref{two} in order to identify at least two sections in the support of the $$Z\ldots ZY_iZ\ldots ZY_jZ\ldots ZZ$$ row which cannot be extended to a global section consistent with the support of the rows

\begin{align*}
 Z\ldots ZZ_iZ&\ldots ZY_jZ\ldots ZY\\
 Z\ldots ZY_iZ&\ldots ZZ_jZ\ldots ZY\\
&\text{and}\\
 Z\ldots ZZ_iZ&\ldots ZZ_jZ\ldots ZZ \ \ \qed
\end{align*}  
\end{proof}

Note however that showing that at least one global section \textit{does} exist for the class of states considered in the Theorem above is not as simple as in the tripartite case, so strong contextuality cannot be immediately ruled out for these states even in the special case when one considers only $Y$ and $Z$ measurements.

\subsection{Weakly contextual states}\label{dictatorships}

\begin{theorem}
Any $n+1$-partite balanced quantum state whose functional dependency is given by a polynomial $F_n(q_1,\ldots,q_n)$ of degree one which is not of the form
$$F_n(q_1,\ldots,q_n)=q_i+q_j+F_{n-2}(q_1,\ldots,\hat{q_i},\ldots,\hat{q_j},\ldots,q_n)$$
for any choice of variables $q_i$ and $q_j$ and polynomial $F_{n-2}$ is weakly contextual.
\end{theorem}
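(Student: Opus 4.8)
The plan is to reduce the $(n+1)$-partite statement to the two-qubit Bell-state case already settled in Proposition \ref{bell} and Theorem \ref{bellth}, by exploiting the product structure that a single-variable linear dependency forces on the state.

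First I would pin down exactly which polynomials occur. A degree-one polynomial has the form $F_n = a + \sum_k a_1^k q_k$ with $a\in\{0,1\}$ and at least one nonzero coefficient $a_1^k$. If two distinct variables $q_i,q_j$ both occurred, then setting $F_{n-2} := a + \sum_{k\neq i,j} a_1^k q_k$ would display $F_n$ in the excluded form $q_i + q_j + F_{n-2}$. Hence a degree-one polynomial that avoids the excluded form has a single linear term, so that $F_n = a + q_i$ for one variable $q_i$ and $a\in\{0,1\}$. Since the output qubit then depends only on $q_i$, the state $\Psi_F(n+1)$ factorizes in the $Z$-basis as a maximally entangled two-qubit (Bell) state shared between party $i$ and party $n+1$, tensored with $n-1$ single-qubit states $\tfrac{|0\rangle+|1\rangle}{\sqrt2}$ on the remaining parties; this is precisely the $(n+1)$-partite analogue of the tripartite dictatorship states.

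The heart of the argument is that both the presence and the absence of the various grades of contextuality are inherited from the Bell-pair factor. Choosing local observables amounts, by the inner-product formula (\ref{ip}) and the factorization of inner products over tensor products, to producing an empirical model which is the product of the Bell-pair model and the single-qubit models: on a joint context the joint distribution is the product of the factor distributions. For the lower bound I would equip parties $i$ and $n+1$ with the dichotomic observables that make the Bell factor weakly contextual in Proposition \ref{bell}, and give the remaining $n-1$ parties arbitrary dichotomic observables. If the full model admitted a global probability distribution $d$, then the marginal obtained by summing $d$ over the outcomes of the $n-1$ spectator parties would be a global probability distribution for the Bell-pair model (here one uses only that each spectator factor distribution sums to $1$); since the Bell factor has no such global distribution, neither does the full model, so it is contextual. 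For the upper bound I would invoke Theorem \ref{bellth}: for \emph{any} dichotomic measurements the Bell factor admits a global Boolean section, and this extends to a global Boolean section of the full model by assigning each spectator party, for each of its two measurements, any outcome lying in the (necessarily nonempty) support of the corresponding single-qubit distribution. Thus the full state is never logically contextual, and is therefore weakly, but neither logically nor strongly, contextual.

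The step requiring the most care is the claim that local measurements on a tensor-product state yield a product empirical model whose global sections behave multiplicatively --- marginalization yielding a global section of each factor, and products of distributions (or of supports, in the Boolean case) yielding a global section of the product. Conceptually this is exactly the bookkeeping carried out by hand in the tripartite tables of Proposition \ref{bell}, so the only real work is to check that the factorization of the inner products in (\ref{ip}) persists uniformly in the number of spectator qubits, so that the marginalization and extension arguments transfer verbatim from two to $n+1$ parties.
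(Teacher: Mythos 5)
Your proposal is correct and takes essentially the same route as the paper: the non-excluded degree-one polynomials are exactly $F_n=a+q_i$, the state is then a dictatorship state (a Bell pair between parties $i$ and $n+1$ tensored with single-qubit states $\frac{|0\rangle+|1\rangle}{\sqrt{2}}$), weak contextuality follows by marginalizing any hypothetical global distribution onto the Bell factor as in Proposition \ref{bell}, and logical contextuality is excluded by extending the Bell factor's support-consistent global assignments as in Theorem \ref{bellth}. The only phrasing to tighten is the extension step: to rule out logical (not merely strong) contextuality one must witness \emph{every} section in every context's support, so one should extend every support-consistent Bell assignment and allow \emph{all} spectator outcomes lying in the single-qubit supports, rather than one chosen outcome per spectator measurement --- which is precisely the multiplicative principle your final paragraph already states, and the same level of brevity the paper's own proof adopts.
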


\begin{proof}
Any degree one polynomial which is not of the above form must contain precisely one term. Thus the state we are dealing with is a dictatorship state, i.e. the value of the last qubit is dictated by the value of its $i^{th}$ qubit, and the state is either of the form
$$\Delta^+_i:=\left(\frac{|0\rangle+|1\rangle}{\sqrt{2}}\right)^{\otimes n}\otimes \frac{|0_i0_{n+1}\rangle+|1_i1_{n+1}\rangle}{\sqrt{2}}$$
or
$$\Delta^-_i:=\left(\frac{|0\rangle+|1\rangle}{\sqrt{2}}\right)^{\otimes n}\otimes \frac{|0_i1_{n+1}\rangle+|1_i0_{n+1}\rangle}{\sqrt{2}}$$
and its probability table can be expressed in terms of a suitable choice of $n-2$ constants and the probability table of either the $\Phi^+$ or of the $\Phi^-$ state.

A straightforward inductive argument based on the argument used in Proposition \ref{bell} will show that the $n+1$-partite dictatorship states are also weakly contextual for the measurements $U \left(\frac{\pi}{2},\frac{\pi}{8}\right)$, $U\left(\frac{\pi}{2},\frac{5\pi}{8}\right)$ and $U\left(\frac{\pi}{8},\frac{\pi}{2}\right)$, $U\left(\frac{5\pi}{8},\frac{\pi}{2}\right)$ respectively. 

Moreover, the generalization of the argument used in Theorem \ref{bellth} shows that the $n+1$-partite dictatorship states are not logically contextual for any possible dichotomic choice of measurements.\qed 
\end{proof}

\subsection{Non-contextual states}

Any $n+1$-partite balanced quantum state whose functional dependency is given by a constant polynomial is clearly a product state and hence non-contextual.

\end{chapter}
\begin{chapter}{Logical Contextuality is Almost Everywhere}\label{LNL}

We have seen in previous chapters that quantum states together with local observables (i.e. one-qubit measurements which can be made by each of the parties on their part of the state) give rise to probability tables which are, respectively, weakly, logically or strongly contextual.

This leads to the natural question of finding the maximum level of contextuality which a quantum state can exhibit for some choice of local observables. 

We have seen in the previous chapter that maximally entangled two-qubit states are weakly contextual, but they cannot be logically contextual, regardless of the local observables chosen. This is consistent with Hardy's results \cite{Hardy92, Hardy93}, who showed that an inequality-free proof of Bell's theorem (or equivalently, in the language of Abramsky and Brandenburger \cite{AbrBra11}, a logically contextual empirical model) can be given for all entangled two-qubit quantum states which are not maximally entangled.

We have also seen that among the multipartite states considered so far, Dycke states, GHZ states, and balanced states with functional dependencies, almost all of them have turned out to be at least logically contextual, for a suitable choice of observables. The only exceptions were given by those states which were equivalent to tensor products of maximally entangled two-qubit states and single states. In this chapter we shall see that this is not simply a coincidence, and in fact the following statement holds.

\begin{thma}
\label{mainth}
Let $\ket{\psi}$ be an $n$-qubit pure state. Then exactly one of the following two cases must hold:
\begin{enumerate}
\item $\ket{\psi}$ can be written (up to permutation of tensor factors) as a product
\begin{equation}
\label{prodform}
 \ket{\psi} \; = \; \ket{\psi_1} \otimes \cdots \otimes \ket{\psi_k} \tag{4.1}
 \end{equation}
where each $\ket{\psi_i}$ is either a 1-qubit state, or a 2-qubit maximally entangled state.
\item There are local observables such that the probability table arising from $\ket{\psi}$ and these local observables is logically contextual, that is, it admits a ``Hardy paradox'', \ie an inequality-free, probability-free proof of non-locality.
\end{enumerate}
If (2) holds, then only $n+2$ local observables are needed; two each for two of the parties, and one each for the other $n-2$ parties.
Moreover, there is an algorithm to decide which of the above cases holds, and which in case (2) explicitly computes the witnessing local observables. The complexity of this algorithm is $O(d \log^3 d)$ in the dimension $d = 2^n$.
\end{thma}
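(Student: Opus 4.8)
The plan is to prove the dichotomy by establishing the two implications separately, with essentially all the difficulty in showing that failure of the product form~\eqref{prodform} forces logical contextuality.

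\textbf{The easy direction (form~\eqref{prodform} implies no Hardy paradox).} First I would show that a state of the form~\eqref{prodform} is never logically contextual, for any choice of local observables. The key point is that logical contextuality is not created by tensoring with non-contextual models: if $\ket{\psi}=\ket{\psi_1}\otimes\cdots\otimes\ket{\psi_k}$ and each factor's empirical model (for the chosen observables) is not logically contextual, then neither is the product model. Indeed, a measurement context of the whole system splits as a disjoint union of contexts of the factors, the support factorizes accordingly, and any possible joint section $(s_1,\dots,s_k)$ extends factorwise to a global section. It then suffices to treat the two kinds of factor: a $1$-qubit factor is trivially non-contextual, and a $2$-qubit maximally entangled factor is not logically contextual for any dichotomic observables by the bipartite case treated in Theorem~\ref{bellth}; local-unitary invariance of logical contextuality (as established for the Dicke states) lets me assume each such factor is in standard Bell form.

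\textbf{Reduction of the hard direction to a conditioning statement.} For the converse I would reduce everything to the claim that \emph{if $\ket{\psi}$ is not of the form~\eqref{prodform}, then there are two qubits $i,j$, together with a single-qubit observable and a target outcome for each of the remaining $n-2$ qubits, such that the post-measurement state of qubits $i,j$ conditioned on all the targets is entangled but not maximally entangled.} Granting this, I apply Hardy's bipartite construction \cite{Hardy92,Hardy93} to the conditional two-qubit state, obtaining two observables each for parties $i$ and $j$; together with the $n-2$ frozen single observables these give $n+2$ local observables, with exactly two parties holding two. The lift is clean: since the other $n-2$ parties are frozen at their targets, every context of the full model restricts to the ``target slice'', and within that slice the four contexts $\{i,j\}\times\{a/a',b/b'\}$ reproduce Hardy's four contexts for the conditional state. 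The Hardy-possible section (targets on the rest, Hardy's possibility on $i,j$) lies in the support because the target has positive probability, and it fails to extend precisely because Hardy's paradox obstructs extension of the conditional model, exactly as in Remark~\ref{rk:LW}.

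\textbf{Proving the conditioning statement (the crux).} I would pass to the finest tensor factorization of $\ket{\psi}$ into genuinely entangled blocks; failure of~\eqref{prodform} means some block $\ket{\psi_t}$ is either $2$-qubit non-maximally entangled (the claim is then immediate, conditioning only the other blocks) or genuinely entangled on $m\ge 3$ qubits. For the latter, fix the kept qubits $i,j$ and write $\ket{\psi_t}=\sum_{l=1}^R \lambda_l\,\ket{e_l}_{ij}\otimes\ket{f_l}_{\mathrm{rest}}$ via the Schmidt decomposition across $\{i,j\}\mid\mathrm{rest}$; genuine entanglement gives $R\ge 2$. Conditioning the remaining qubits on a product outcome $\ket{\xi}$ yields the two-qubit conditional with coefficient matrix $M(\xi)=\langle\xi\ket{\psi_t}$, whose entanglement is governed by the quadratic form $\det M$. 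The span of $\{\ket{e_l}\}$ is an $R\ge 2$-dimensional subspace of the two-qubit space; every such subspace meets the product (Segre) locus, so $\det M$ vanishes somewhere (a product conditional is reachable), while genuine entanglement forbids the span from consisting entirely of product states — otherwise qubit $i$ or $j$ would factor out — so $\det M$ is not identically zero (an entangled conditional is reachable). Since the conditional depends continuously on $\ket{\xi}$ and the product outcomes form a connected variety, the concurrence takes the value $0$ and a strictly positive value, hence an intermediate value in $(0,1)$.

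\textbf{Main obstacle and the algorithm.} The delicate step is the last one: I must ensure that the \emph{reachable} conditionals (the image of the Segre variety under $\ket{\xi}\mapsto M(\xi)$, not merely their linear span) really interpolate between a product and an entangled state, ruling out the degenerate possibility that every reachable conditional is maximally entangled. I expect to handle this by combining the determinant and intermediate-value argument with the freedom to reselect the pair $(i,j)$ and the conditioned qubit, and by an induction on $m$ that peels one qubit while preserving genuine entanglement; controlling the image of the Segre variety rather than its span is where the real care is needed, and I regard this as the heart of the proof. Finally, the whole argument is constructive, locating the block, the pair $(i,j)$, the conditioning bases, the targets, and the Hardy observables explicitly, so it yields the decision procedure: detecting the factorization~\eqref{prodform} and computing the witnesses reduces to Schmidt-rank and product-structure tests on the $d=2^n$ amplitudes, which can be organized to run in $O(d\log^3 d)$.
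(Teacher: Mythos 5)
Your high-level strategy is essentially the paper's own, just reorganized: the paper's Going Up Lemmas (Lemmas~\ref{guI} and~\ref{lnl}) are precisely one-qubit instances of your ``target slice'' lift --- the extra observable $B(x,y)$ is chosen so that its $+$ outcome steers the remaining qubits onto a state already known to be logically contextual, and the unsatisfiable Hardy formula is shown to stay unsatisfiable after conjoining the conditioning literal --- and the base case is the same Hardy construction. Your easy direction is also correct, resting (as the paper does) on the fact that maximally entangled pairs are not logically contextual for \emph{any} finite sets of local observables (Theorem~\ref{bellth} and the result cited from \cite{mansfield2013}), together with the factorwise extension of global sections for product states.

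The genuine gap is the conditioning statement, which you rightly call the heart of the proof but do not establish; and the tools you sketch do not suffice. First, there is the span/image mismatch you yourself flag: once two or more qubits are being conditioned, the reachable matrices $M(\xi) = \langle \xi | \psi_t \rangle$ with $\xi$ ranging over \emph{product} vectors form a proper subvariety of the span of the $\ket{e_l}$'s, so ``the span meets the product locus'' does not produce a reachable product conditional, and ``$\det M$ is not identically zero on the span'' does not produce a reachable entangled conditional --- note that $\det M(\xi)$ is quadratic in the conjugated amplitudes of $\xi$, while the product $\xi$'s are themselves cut out by quadrics, so $\det M$ can vanish on every product conditioning without vanishing identically. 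Second, your fallback induction (``peel one qubit while preserving genuine entanglement'') is the same unproven existence claim one level down: conditioning can destroy genuine entanglement (computational-basis conditioning of a GHZ block yields product states), so the existence of a good direction must be proved rather than assumed, and one must additionally exclude the degenerate alternative in which every reachable conditional is maximally entangled. These existence questions are exactly what the paper's two structural lemmas settle: the Small Difference Lemma (Lemma~\ref{smalldifflemm}) shows that if \emph{every} interpolant $\tau(a) = a\ket{\psi} + \sqrt{1-a^2}\,\ket{\phi}$ of two states of $\mathcal{P}_n$ remains in $\mathcal{P}_n$, then $\ket{\psi}$ and $\ket{\phi}$ differ in at most one qubit (so the whole state really is of product form), and the 21 Lemma (Lemma~\ref{21lemm}) bounds the degrees of the relevant partial-trace polynomials so that ``some good interpolation parameter exists'' becomes a finite test. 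Without analogues of these two lemmas, your intermediate-value argument has no anchored endpoints, your decision procedure has no termination criterion, and the dichotomy does not close.
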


The remainder of this chapter is devoted to the detailed proof of this theorem. In particular, while the algorithm is quite simple, the arguments justifying its correctness are non-trivial.

\section{Overview of the argument}
\label{Str}

In this section, we shall state a number of main lemmas, and show how Theorem 4 is proved from these lemmas.
The proofs of the lemmas will be given in the following section.

\textbf{Notation}. We shall write $\Pn$ for the set of $n$-qubit pure states of the form~(\ref{prodform}) given in Theorem 4. For succinctness, we shall write the tensor product of a ket $\ket{\psi}$ with a 1-qubit ket $\ket{i}$, $i=0,1$, as $\ket{\psi}\ket{i}$ rather than $\ket{\psi} \otimes \ket{i}$.

We shall prove Theorem 4 by induction on $n$. The case $n=1$ is trivial.
The $n=2$ case is given by the following lemma.

\begin{restatable}[\textbf{The Base Case Lemma}]{lemma}{basecase}
\label{ground}
Every 2-qubit state $\ket{\psi}$ is either in $\PP_2$ (\ie it is either a product state, or a maximally entangled bipartite state), or there are two local observables for each party, which can be computed directly from the Schmidt decomposition of $\ket{\psi}$, and which witness the logical contextuality of $\ket{\psi}$.
\end{restatable}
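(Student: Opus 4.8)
The plan is to reduce an arbitrary 2-qubit state to Schmidt normal form and then either recognise it as a member of $\PP_2$ or build Hardy's paradox explicitly. First I would write $\ket{\psi}$ in its Schmidt decomposition $\ket{\psi} = \cos\lambda\,\ket{e_0}\ket{f_0} + \sin\lambda\,\ket{e_1}\ket{f_1}$ with $0\le\lambda\le\pi/4$ and $\{\ket{e_i}\},\{\ket{f_i}\}$ orthonormal. Applying the local unitaries $U_A,U_B$ that send these Schmidt bases to the computational basis turns $\ket{\psi}$ into $\cos\lambda\ket{00}+\sin\lambda\ket{11}$. Since logical contextuality is preserved under local unitaries (as recorded for the Dicke states, and true for any state by conjugating the witnessing bases), and since $\PP_2$ is closed under local unitaries, it suffices to treat this canonical form and then pull the witnessing observables back through $U_A,U_B$.

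Next I would dispose of the two degenerate values of $\lambda$. If $\lambda=0$ the state is the product state $\ket{00}$, and if $\lambda=\pi/4$ it is the maximally entangled state $\tfrac{1}{\sqrt{2}}(\ket{00}+\ket{11})$; in both cases $\ket{\psi}\in\PP_2$ and the first alternative of the lemma holds. Otherwise $c=\cos\lambda$ and $s=\sin\lambda$ are distinct and strictly positive, and I must exhibit a logically contextual model. For each party I would take two real measurement bases on a common great circle of the Bloch sphere, given by rotation angles $\alpha,\alpha'$ for Alice and $\beta,\beta'$ for Bob, with outcomes labelled $0,1$. To realise the standard Hardy table I impose the three vanishing amplitudes — of outcome $(0,0)$ in context $(a,b')$, of $(0,0)$ in $(a',b)$, and of $(1,1)$ in $(a',b')$ — which, after dividing out cosines, become
$$ \tan\alpha\,\tan\beta' = -\frac{c}{s},\qquad \tan\alpha'\,\tan\beta = -\frac{c}{s},\qquad \tan\alpha'\,\tan\beta' = -\frac{s}{c}. $$
Setting $r=c/s>1$, these are solved by the one-parameter family $\tan\alpha'=\tan\alpha/r^2$, $\tan\beta=-r^3/\tan\alpha$, $\tan\beta'=-r/\tan\alpha$, and the remaining amplitude of $(0,0)$ in context $(a,b)$ is then proportional to $c-sr^3 = c(s^2-c^2)/s^2\neq 0$, so the Hardy witness condition holds automatically. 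The four settings are genuinely distinct precisely because $r\neq 1$, and all angles are explicit functions of $\lambda$, hence of the Schmidt coefficients, as required.

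Finally I would verify the Hardy contradiction directly: the outcome $(0,0)$ in context $(a,b)$ lies in the support, but any global assignment extending it forces $b'\mapsto 1$ (from the vanishing of $(0,0)$ in $(a,b')$) and $a'\mapsto 1$ (from the vanishing in $(a',b)$), whence its restriction to $(a',b')$ is $(1,1)$, contradicting the third vanishing condition. Hence no global section is compatible with the support, the model is logically contextual, and transporting the bases back through $U_A,U_B$ gives the claimed observables.

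The hard part will be the non-degeneracy bookkeeping rather than the algebra. I expect the main work is confirming that the three vanishing equations are simultaneously solvable \emph{exactly} in the regime $0<\lambda<\pi/4$ — the construction collapsing precisely when $c=s$ (maximal entanglement, where $r=1$ forces $a=a'$ and $b=b'$) or $s=0$ (a product state) — and checking that for a generic value of the free parameter all the remaining table entries are nonzero, so that the support is genuinely the Hardy table. This last check is in fact a sanity condition rather than a true obstruction, since the logical argument uses only the three null entries and the single nonzero $(0,0)$ entry, and any accidental extra zeros would only make the witnessing outcome harder to extend.
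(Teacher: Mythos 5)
Your constructive half is essentially the paper's proof: the paper also reduces to the Schmidt form and invokes Hardy's construction from \cite{Hardy93}, the only difference being that it quotes Hardy's explicit witnessing bases
$\ket{u_i}\propto \beta^{1/2}\ket{+}_i+\alpha^{1/2}\ket{-}_i$, $\ket{d_i}\propto \beta^{3/2}\ket{+}_i-\alpha^{3/2}\ket{-}_i$,
whereas you re-derive equivalent observables by solving the three vanishing-amplitude equations in terms of $r=c/s$. Your algebra checks out: the three zeros force $\tan\alpha'=\tan\alpha/r^2$, $\tan\beta=-r^3/\tan\alpha$, $\tan\beta'=-r/\tan\alpha$, the surviving $(0,0)$ amplitude is proportional to $c(s^2-c^2)/s^2\neq 0$ exactly when $0<\lambda<\pi/4$, and the extension argument you give is the standard Hardy contradiction. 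So as a proof of the implication ``$\ket{\psi}\notin\PP_2 \Rightarrow$ logically contextual with explicitly computable observables,'' your argument is sound.

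The genuine gap is that you stop at the inclusive disjunction. When $\lambda=\pi/4$ (or $\lambda=0$) you simply place the state in $\PP_2$ and declare the first alternative to hold, but the lemma is used to support Theorem 4, which asserts that \emph{exactly one} of the two cases holds, and the algorithm's \textsf{No} answer must certify that the state is \emph{not} logically contextual. For that one must prove the negative direction: a maximally entangled two-qubit state admits \emph{no} choice of local observables whatsoever that yields a logically contextual model. This is precisely the ``subtle remaining point'' in the paper's proof, which it settles by appeal to Theorem \ref{bellth} of the previous chapter (for a single pair of dichotomic observables per party) and, for arbitrary finite sets of local observables, Theorem 2.6.5 of \cite{mansfield2013}. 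Nothing in your construction addresses this; observing that \emph{your particular} Hardy construction collapses when $r=1$ only shows that one family of witnesses fails there, not that every possible witness fails. Without this half, the dichotomy is not strictly disjoint and the main theorem's ``exactly one'' claim (and the correctness of the decision algorithm) does not follow.
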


The following lemmas and  corollary will be used in the induction step.

\begin{lemma}[\textbf{Going Up Lemma I}]
\label{guI}
If an $n$-qubit state $\ket{\psi}$ is logically contextual with some choice of local observables, then for any $n$-qubit state $\ket{\theta}$, and  $\alpha, \beta \in \Bbb{C}$ with $\alpha \neq 0$ and $|\alpha|^2 + |\beta|^2 = 1$, the states
\[ \alpha \ket{\psi} \ket{0} + \beta \ket{\theta} \ket{1}, \qquad \beta \ket{\theta} \ket{0} + \alpha \ket{\psi} \ket{1} \]
are also logically contextual with the same choice of observables, augmented with a single additional observable for the $n+1^{th}$ party.
\end{lemma}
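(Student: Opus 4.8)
The plan is to take the single extra observable for the $(n{+}1)^{\text{th}}$ party to be the computational-basis measurement $Z^{n+1}$, whose eigenvectors are $\ket 0$ (outcome $+$) and $\ket 1$ (outcome $-$), and to show that conditioning the $(n{+}1)^{\text{th}}$ outcome on $+$ reproduces exactly the empirical model of $\ket\psi$ on the first $n$ sites. Write $\ket\Omega = \alpha\ket\psi\ket0 + \beta\ket\theta\ket1$ for the first of the two lifted states. Since site $n{+}1$ carries only one observable, every maximal context of the lifted scenario has the form $C' = C\cup\{Z^{n+1}\}$ with $C$ a context of the original $n$-party scenario, and each section over $C'$ splits as $(t,\sigma)$, where $t$ is an assignment on $C$ and $\sigma\in\{+,-\}$ is the outcome at site $n{+}1$. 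The contexts of the two scenarios are thus in bijection.

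First I would pin down the support correspondence. Let $\langle e_t|$ denote the product of eigen-bras on the first $n$ sites determined by $t$. A one-line computation using $\langle 0|0\rangle = 1$ and $\langle 0|1\rangle = 0$ gives
\[ (\langle e_t|\otimes\langle 0|)\ket\Omega = \alpha\,\langle e_t|\psi\rangle, \qquad (\langle e_t|\otimes\langle 1|)\ket\Omega = \beta\,\langle e_t|\theta\rangle . \]
Because $\alpha\neq 0$, the first identity shows that $(t,+)$ lies in the support of the $\ket\Omega$-model at $C'$ exactly when $t$ lies in the support of the $\ket\psi$-model at $C$; in other words the ``$+$ layer'' of the lifted support is a faithful copy of the original support. (The ``$-$ layer'' is controlled by $\ket\theta$ and is irrelevant to the witness.)

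Next I would transport the witness. By hypothesis there is a context $C_0$ and a section $s_0$ in the support of the $\ket\psi$-model at $C_0$ admitting no extension to a global assignment on the first $n$ sites that is consistent with all supports of the $\ket\psi$-model. By the correspondence $(s_0,+)$ lies in the support of the $\ket\Omega$-model at $C_0' = C_0\cup\{Z^{n+1}\}$, and I claim it has no global extension there. Suppose $(g,\tau)$ were such an extension, with $g$ an assignment to all the observables of the first $n$ parties and $\tau$ the outcome of $Z^{n+1}$. Restriction to $C_0'$ forces $\tau = +$ and $g|_{C_0} = s_0$; moreover consistency with the lifted support requires $(g|_C,+)$ to lie in the support of the $\ket\Omega$-model at $C\cup\{Z^{n+1}\}$ for every $C$, whence $g|_C$ lies in the support of the $\ket\psi$-model at $C$ by the correspondence. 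Then $g$ is a global assignment on the first $n$ sites that extends $s_0$ and is consistent with every support of $\ket\psi$, contradicting the logical contextuality of $\ket\psi$. Hence the $\ket\Omega$-model is logically contextual, witnessed by the original observables together with $Z^{n+1}$.

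The second state $\beta\ket\theta\ket0 + \alpha\ket\psi\ket1$ is handled identically, with the roles of $+$ and $-$ at site $n{+}1$ interchanged: here $(\langle e_t|\otimes\langle 1|)\cdot(\beta\ket\theta\ket0+\alpha\ket\psi\ket1) = \alpha\,\langle e_t|\psi\rangle$, so the faithful copy is now the ``$-$ layer'' and the witness is $(s_0,-)$. The only point demanding care is the bookkeeping in the extension step, namely that a hypothetical global section of the lifted model genuinely projects down to a global section of the original model consistent with all of its supports; this is precisely where the hypothesis $\alpha\neq 0$ enters, and it is the real content of the proof rather than any computation.
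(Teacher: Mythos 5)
Your proof is correct and follows essentially the same route as the paper's: the paper proves Going Up Lemma II in detail (with a general observable $B(x,y)$ and propositional-formula bookkeeping) and notes that Lemma I follows by the same argument with the $Z$ measurement adjoined for the $(n{+}1)^{\text{th}}$ party, which is exactly what you do — the inner-product identity $(\langle e_t|\otimes\langle 0|)\ket\Omega=\alpha\langle e_t|\psi\rangle$ with $\alpha\neq 0$ gives the faithful ``$+$ layer'' support correspondence, and the non-extendable section $s_0$ lifts to the non-extendable section $(s_0,+)$. Your semantic argument about global assignments is just the model-theoretic rendering of the paper's unsatisfiability manipulation $\Omega=\Psi\wedge z_{n+1}$, so there is no substantive difference.
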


\begin{restatable}[\textbf{Going Up Lemma II}]{lemma}{basecaseII}
\label{lnl}
If $|\theta\rangle=\alpha|\psi\rangle+\beta|\phi\rangle$ is a logically contextual $n$-qubit state under some choice of local observables, then for any  non-zero $x,y \in \Bbb{C}$ such that $|x\alpha|^2+|y\beta|^2=1$, the state $|\omega\rangle=x\alpha|\psi\rangle|0\rangle+y\beta|\phi\rangle|1\rangle$ is also logically contextual under the same choice of observables, augmented with a single additional observable for the $n+1^{th}$ party.
\end{restatable}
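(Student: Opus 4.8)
The plan is to prove the lemma by transferring the logical-contextuality witness of $|\theta\rangle$ directly to $|\omega\rangle$, after adjoining to the $(n+1)^{th}$ party a single, carefully chosen dichotomic observable. The naive choice---measuring the last qubit in the computational $\{\ket0,\ket1\}$ basis---splits the support of $|\omega\rangle$ into the supports of the two separate $n$-qubit models for $|\psi\rangle$ (on outcome $+$) and $|\phi\rangle$ (on outcome $-$); this would force me to relate the contextuality of $|\theta\rangle$ to that of $|\psi\rangle$ or $|\phi\rangle$ individually, which fails because of amplitude cancellations: a joint outcome $e$ can have $\langle e|\psi\rangle\neq 0$ and $\langle e|\phi\rangle\neq 0$ while $\langle e|\theta\rangle=\alpha\langle e|\psi\rangle+\beta\langle e|\phi\rangle=0$. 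The key idea, which sidesteps this entirely, is instead to measure the last qubit in the basis whose positive eigenvector is $\ket{w^{+}}\propto \tfrac{1}{\bar x}\ket0+\tfrac{1}{\bar y}\ket1$, together with its orthogonal partner $\ket{w^{-}}$, and to let $W=\ket{w^{+}}\bra{w^{+}}-\ket{w^{-}}\bra{w^{-}}$ be the new observable for party $n+1$.

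First I would verify that $W$ is a legitimate dichotomic observable (a self-adjoint unitary with eigenvalues $\pm1$), which is immediate since $\ket{w^{+}},\ket{w^{-}}$ form an orthonormal basis of $\bbC^2$ and $x,y\neq 0$. Next I would compute, for any context $C$ of the first $n$ parties and any joint outcome $e=e_1\otimes\cdots\otimes e_n$, the amplitude
\[ \langle e_1|\otimes\cdots\otimes\langle e_n|\otimes\langle w^{+}|\omega\rangle \;=\; x\alpha\,\langle w^{+}|0\rangle\,\langle e|\psi\rangle + y\beta\,\langle w^{+}|1\rangle\,\langle e|\phi\rangle \;=\; \lambda\,(\alpha\langle e|\psi\rangle+\beta\langle e|\phi\rangle) \;=\; \lambda\,\langle e|\theta\rangle , \]
where $\lambda:=x\langle w^{+}|0\rangle=y\langle w^{+}|1\rangle\neq 0$ by the defining proportions of $\ket{w^{+}}$. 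The upshot is that the outcome $w^{+}$ of $W$ reproduces the $|\theta\rangle$ model exactly: for every such $C$ and every section $s$ over $C$, the extended section $s\cdot w^{+}$ lies in the support of the $|\omega\rangle$ model over $C\cup\{W\}$ if and only if $s$ lies in the support of the $|\theta\rangle$ model over $C$.

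From this support identity I would deduce the structural fact that the global sections of the $|\omega\rangle$ model which assign the value $w^{+}$ to $W$ are in restriction-preserving bijection with the global sections of the $|\theta\rangle$ model---one direction restricts a global section of $|\omega\rangle$ to the first $n$ parties, the other appends $W\mapsto w^{+}$. Finally I would invoke the hypothesis: since $|\theta\rangle$ is logically contextual, there is a context $C_0$ and a section $s_0$ in the support of its model over $C_0$ that admits no extension to a global section of the $|\theta\rangle$ model. The section $s_0\cdot w^{+}$ then lies in the support of the $|\omega\rangle$ model; any global section extending it is forced to assign $w^{+}$ to $W$ (as $W$ is the last party's only observable), hence restricts to a global section of the $|\theta\rangle$ model extending $s_0$---which does not exist. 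Therefore $s_0\cdot w^{+}$ has no global extension, exhibiting the logical contextuality of $|\omega\rangle$ under the original observables together with $W$.

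The only genuinely delicate point---and the place where an unthinking approach breaks down---is the choice of the extra observable: one must tune its eigenbasis to $(x,y)$ so that a single outcome recovers $|\theta\rangle$ itself rather than its constituents $|\psi\rangle,|\phi\rangle$. Once $W$ is chosen this way the argument is a direct lift of the witness, parallel to (but not reducible to) Going Up Lemma I, and no case analysis on amplitude cancellations is required.
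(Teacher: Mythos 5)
Your proposal is correct and takes essentially the same route as the paper: your $\ket{w^{+}}\propto \tfrac{1}{\bar x}\ket0+\tfrac{1}{\bar y}\ket1$ is, up to normalization, exactly the paper's eigenvector $|b_+\rangle=\overline{y}|0\rangle+\overline{x}|1\rangle$ of the observable $B(x,y)$, and the central amplitude identity $\langle e|\langle w^{+}|\omega\rangle=\lambda\langle e|\theta\rangle$ is the paper's Equation (\ref{theta1}). The only difference is presentational: you lift the non-extendable section semantically via a bijection on global sections, whereas the paper carries out the equivalent argument syntactically, showing the propositional formula $\Omega$ for $\ket{\omega}$ reduces to $\Psi\wedge z_{n+1}$ and is therefore unsatisfiable.
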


As an easy consequence of these lemmas, we have:
\begin{restatable}{corollary}{corr}
If $\ket{\theta}$ is logically contextual under some choice of local observables, so are $\ket{\theta} \otimes \ket{\eta}$ and $\ket{\eta} \otimes \ket{\theta}$ for any state $\ket{\eta}$, with the same choice of local observables for each party in $\ket{\theta}$, and a single observable for each party in $\ket{\eta}$. 
\end{restatable}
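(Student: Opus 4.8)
The plan is to prove both statements $\ket{\theta} \otimes \ket{\eta}$ and $\ket{\eta} \otimes \ket{\theta}$ by induction on the number $m$ of qubits of $\ket{\eta}$, with Going Up Lemma I doing all of the real work (the Base Case Lemma and Going Up Lemma II are not needed here). The base case $m=0$ is immediate, since then $\ket{\theta}\otimes\ket{\eta}=\ket{\theta}$ is logically contextual by hypothesis. For the reversed product I would first reduce $\ket{\eta}\otimes\ket{\theta}$ to $\ket{\theta}\otimes\ket{\eta}$: logical contextuality is a property of the empirical (probability) model that is manifestly invariant under relabelling the parties, and $\ket{\eta}\otimes\ket{\theta}$ is obtained from $\ket{\theta}\otimes\ket{\eta}$ by permuting tensor factors together with their attached observables, so the two are logically contextual simultaneously. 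Hence it suffices to treat $\ket{\theta}\otimes\ket{\eta}$.

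For the inductive step, suppose the claim holds for all $\ket{\eta}$ with $m$ qubits, and let $\ket{\eta}$ have $m+1$ qubits. First I would split off the last qubit, writing
\[ \ket{\eta} \; = \; a\,\ket{\eta_0}\ket{0} + b\,\ket{\eta_1}\ket{1}, \]
where $\ket{\eta_0},\ket{\eta_1}$ are normalized $m$-qubit states, $a,b\ge 0$, and $|a|^2+|b|^2=1$ (absorbing phases into $\ket{\eta_0},\ket{\eta_1}$). Tensoring with $\ket{\theta}$ and grouping by the value of the last qubit gives
\[ \ket{\theta}\otimes\ket{\eta} \; = \; a\,(\ket{\theta}\otimes\ket{\eta_0})\,\ket{0} + b\,(\ket{\theta}\otimes\ket{\eta_1})\,\ket{1}. \]
By the inductive hypothesis both $\ket{\theta}\otimes\ket{\eta_0}$ and $\ket{\theta}\otimes\ket{\eta_1}$ are logically contextual, using the original observables on the parties of $\ket{\theta}$ together with one observable for each of the first $m$ parties of $\ket{\eta}$. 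I would then apply Going Up Lemma I with the logically contextual state taken to be $\ket{\theta}\otimes\ket{\eta_0}$, the auxiliary (``arbitrary'') state taken to be $\ket{\theta}\otimes\ket{\eta_1}$, and coefficients $\alpha=a$, $\beta=b$, concluding that $a(\ket{\theta}\otimes\ket{\eta_0})\ket{0}+b(\ket{\theta}\otimes\ket{\eta_1})\ket{1}=\ket{\theta}\otimes\ket{\eta}$ is logically contextual, at the cost of a single extra observable for the newly added $(m+1)$-th party of $\ket{\eta}$. Tracking observables across the $m+1$ applications yields exactly the count claimed: the observables on $\ket{\theta}$ are never disturbed, and one observable is introduced per party of $\ket{\eta}$.

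The one point needing care, and the only genuine obstacle, is the hypothesis $\alpha\neq 0$ in Going Up Lemma I, which fails precisely when $a=0$, i.e. when the last qubit of $\ket{\eta}$ has no $\ket{0}$-component. In that case $\ket{\eta}=\ket{\eta_1}\ket{1}$ (up to phase) and $\ket{\theta}\otimes\ket{\eta}=(\ket{\theta}\otimes\ket{\eta_1})\,\ket{1}$, which I would instead obtain from the \emph{second} conclusion of Going Up Lemma I, applied with the logically contextual state $\ket{\theta}\otimes\ket{\eta_1}$, coefficient $\alpha=b\neq 0$, and $\beta=0$, so that $\beta(\,\cdot\,)\ket{0}+\alpha(\ket{\theta}\otimes\ket{\eta_1})\ket{1}=\ket{\theta}\otimes\ket{\eta}$. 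Symmetrically, the degenerate case $b=0$ uses the first conclusion with $\beta=0$. Thus every instance is covered by one of the two forms of the lemma, the induction closes, and the corollary follows.
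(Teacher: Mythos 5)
Your proof is correct and follows essentially the same route as the paper's: induction on the number of qubits of $\ket{\eta}$, splitting off the last qubit by bilinearity, and applying Going Up Lemma I to the two states $\ket{\theta}\otimes\ket{\eta_0}$, $\ket{\theta}\otimes\ket{\eta_1}$ supplied by the inductive hypothesis. You are in fact somewhat more careful than the paper, which silently passes over both the degenerate case $a=0$ (where the $\alpha\neq 0$ hypothesis of the lemma fails and one must invoke the second form of its conclusion, exactly as you do) and the reduction of $\ket{\eta}\otimes\ket{\theta}$ to $\ket{\theta}\otimes\ket{\eta}$ by permutation invariance, which the paper covers elsewhere by its lemma on permutation of tensor factors.
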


We now consider the induction step where we have an $n+1$-qubit state $\ket{\omega}$, $n>1$. We can write
\begin{equation}
\label{omeq}
\ket{\omega} \; = \; \alpha \ket{\psi} \ket{0} + \beta \ket{\phi} \ket{1} .
\end{equation}
By the Going Up Lemma I, if either $\ket{\psi}$ or $\ket{\phi}$ are logically contextual, so is $\ket{\omega}$, and we are done.

Suppose now that $\ket{\psi}$ and $\ket{\phi}$ are both in $\Pn$. We consider the parameterised family of states
\begin{equation}
\label{taudef}
\tau(a) \; = \; a \ket{\psi} + \sqrt{1 - a^2} \ket{\phi}, \qquad a \in [0, 1] . 
\end{equation}
If for some $a$, $\tau(a)$ is logically contextual, so is $\ket{\omega}$, by the Going Up Lemma II.
For the remaining case, we have the following rather remarkable result.

\begin{restatable}[\textbf{The Small Difference Lemma}]{lemma}{smalldiff}
\label{smalldifflemm}
Let $\ket{\psi}$ and $\ket{\phi}$ be states  in $\Pn$, and suppose that for all $a \in [0, 1]$, $\tau(a)$ is in $\Pn$, where $\tau(a)$ is defined by~(\ref{taudef}).
Then $\ket{\psi}$ and $\ket{\phi}$ differ in at most one qubit.
\end{restatable}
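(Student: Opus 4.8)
The plan is to prove the structural statement directly: I will show that $\ket{\psi}$ and $\ket{\phi}$ share a \emph{common} product structure (the same partition of the qubits into single-qubit factors and maximally-entangled pairs), and that at most one block of this common structure can differ between them. Since any two single-qubit states, and likewise any two maximally-entangled bipartite states, are related by a unitary acting on a single qubit, this is precisely the assertion that $\ket{\psi}$ and $\ket{\phi}$ differ in at most one qubit. If $\ket{\psi}$ and $\ket{\phi}$ are linearly dependent they agree up to phase and differ in no qubit, so I assume them independent; then $\tau(a)\neq 0$ for $a\in(0,1)$. Throughout write $b:=\sqrt{1-a^2}$.

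First I would fix the \emph{type} of each qubit. For a state in $\Pn$, the one-qubit reduced density matrix at site $i$ is pure when $i$ lies in a single-qubit factor and equals $\tfrac12 I$ when $i$ belongs to an entangled pair; these are distinguished by the determinant, which is $0$ in the pure case and $\tfrac14$ in the maximally-mixed case. Writing $\hat\sigma_i(a)$ for the \emph{normalised} one-qubit marginal of $\tau(a)$ at site $i$, the map $a\mapsto\det\hat\sigma_i(a)$ is continuous on $(0,1)$ (the denominator $\Tr\,\sigma_i(a)=\|\tau(a)\|^2$ is strictly positive by independence) and, since every $\tau(a)\in\Pn$, takes values in the two-point set $\{0,\tfrac14\}$; hence it is constant, and by continuity at the endpoints $\ket{\psi}$ and $\ket{\phi}$ have marginals of the same type at every site. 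The same argument applied to the two-qubit marginal $\hat\rho_{ij}(a)$ at two entangled sites $i,j$ — whose determinant is $0$ when $i,j$ are partners in one pair and $\tfrac1{256}$ when they lie in different pairs — shows the pairing is constant along the arc. Thus $\ket{\psi}$ and $\ket{\phi}$ have a common block partition $\mathcal B$, and each is the tensor product over $\mathcal B$ of the pure states $\ket{\psi_B}$, $\ket{\phi_B}$ determined by its block marginals.

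It remains to bound the number of differing blocks. Call a block $B$ \emph{differing} if $\ket{\psi_B}\not\propto\ket{\phi_B}$, and suppose two distinct blocks $B_1,B_2$ differ. Cutting along the boundary of $B_1$, $\tau(a)=a\,\ket{\psi_{B_1}}\otimes\ket{\psi'}+b\,\ket{\phi_{B_1}}\otimes\ket{\phi'}$, where $\ket{\psi'}=\bigotimes_{B\neq B_1}\ket{\psi_B}$ and $\ket{\phi'}=\bigotimes_{B\neq B_1}\ket{\phi_B}$. A sum of two product vectors $a\,u\otimes v+b\,x\otimes y$ (with $a,b\neq0$) has Schmidt rank $1$ across the cut iff $u\propto x$ or $v\propto y$; here $\ket{\psi_{B_1}}\not\propto\ket{\phi_{B_1}}$, and $\ket{\psi'}\not\propto\ket{\phi'}$ because $B_2$ differs, so $\tau(a)$ has Schmidt rank $2$ across $B_1\mid B_1^c$. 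But $B_1$ is a block of $\mathcal B$, so any state of structure $\mathcal B$, in particular $\tau(a)$ for $a\in(0,1)$, is a product across this cut with Schmidt rank $1$ — a contradiction. Hence at most one block $B_0$ differs. If none differs, $\ket{\psi}\propto\ket{\phi}$; if $B_0$ differs, then $\ket{\phi}$ arises from $\ket{\psi}$ by replacing $\ket{\psi_{B_0}}$ with $\ket{\phi_{B_0}}$, and this replacement is implemented by a single-qubit unitary on one qubit of $B_0$ (trivially if $B_0$ is a singleton; via $\ket{\phi_{B_0}}=(U\otimes I)\ket{\psi_{B_0}}$ for a maximally-entangled pair). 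In either case $\ket{\psi}$ and $\ket{\phi}$ differ in at most one qubit.

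The main obstacle is the structure-matching step: one must ensure that membership of \emph{every} $\tau(a)$ in $\Pn$ forces the same block structure throughout, rather than allowing a jump (two pairs recombining, or a pair degenerating to a product) at isolated parameter values. The determinant/continuity argument resolves this precisely because ``pure versus maximally mixed'' and ``partners versus different pairs'' are each detected by a continuous quantity confined to a two-point set; the only delicate points are that the marginals stay normalisable (ensured by linear independence) and that one correctly enumerates the reduced matrices permitted by the $\Pn$ structure. A secondary subtlety worth flagging is the reading of ``differ in at most one qubit'': since $\ket{\Phi^+}$ and the singlet $\ket{\Psi^-}$ are both maximally entangled yet their combinations $a\ket{\Phi^+}+b\ket{\Psi^-}$ remain maximally entangled, a differing block may legitimately be an entangled pair, so the conclusion is the local-unitary statement above rather than literal agreement of all but one tensor factor.
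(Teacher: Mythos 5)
Your proof is correct, and it takes a genuinely different route from the paper's. The paper partitions $\Pn$ into finitely many entanglement types, applies K\"onig's infinity lemma to extract a sequence $a_i \to 1$ with all $\tau(a_i)$ of one fixed type, uses compactness of the local unitary group to exhibit $\ket{\psi}$ as a limit of local-unitary images of a representative of that type, and then runs a partial-inner-product computation to show that all but one tensor component of $\ket{\phi}$ coincide with those of $\ket{\psi}$. You replace all of this with two elementary observations: the one-site and two-site marginal determinants of the normalised $\tau(a)$ are continuous functions of $a$ confined to a two-point set, hence constant, which pins down a single common block partition along the whole arc; and a sum of two product vectors whose factors are nonproportional on both sides of a cut has Schmidt rank two, which forbids two differing blocks. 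Your argument is shorter, avoids the limit and compactness machinery entirely (including the paper's slightly glossed step where constancy of $\epsilon_i\ket{\psi_1^i}\otimes\cdots\otimes\ket{\psi_{k-1}^i}$ is taken to force equality, rather than equality up to phase, of the individual factors), at the harmless cost of using the hypothesis for all $a$ rather than only along a sequence accumulating at $1$.

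The most valuable part of your submission, however, is the closing remark, because it is not a mere subtlety of reading: it identifies an error in the paper's own proof. The paper's final paragraph asserts that the differing component cannot be a two-qubit maximally entangled state, on the grounds that the corresponding linear combinations of these components "would also have to be maximally entangled, yielding a contradiction." As your example shows, there is no contradiction: $a\ket{\Phi^+} + \sqrt{1-a^2}\,\ket{\Psi^-}$ is maximally entangled for every $a \in [0,1]$, so the pair $\ket{\psi} = \ket{\Phi^+}$, $\ket{\phi} = \ket{\Psi^-}$ satisfies every hypothesis of the lemma while the two states share no common single-qubit tensor factor. Hence only your local-unitary formulation of the conclusion is true; the literal formulation (agreement of all but one single-qubit factor) is false. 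This has downstream consequences you should be aware of: the induction step of Theorem 4 and Step 6 of the algorithm invoke the lemma in the literal form, writing $\ket{\omega} = \ket{\Psi}\otimes\ket{\xi}$ with $\ket{\Psi} \in \PP_{n-1}$ and $\ket{\xi}$ a two-qubit state. Under the corrected conclusion there is an unhandled case in which the differing block is a Bell pair, e.g. $\ket{\omega} = \alpha\ket{\Phi^+}\ket{0} + \beta\ket{\Psi^-}\ket{1}$ with $\alpha\beta \neq 0$ and $|\alpha| \neq |\beta|$, which lies in neither $\PP_3$ nor the form $\ket{\Psi}\otimes\ket{\xi}$, so its logical contextuality is not established by the paper's argument as written. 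In short: your proof establishes the strongest true version of the lemma, and your counterexample shows precisely where the paper's proof, and its later use of the lemma, would need repair.
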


Applying this result to our decomposition~(\ref{omeq}) of $\ket{\omega}$, we have the following possibilities:
\begin{itemize}
\item $\ket{\psi} = \ket{\phi}$, in which case, from~(\ref{omeq}) and the bilinearity of the tensor product:
\[ \ket{\omega} \; = \; \alpha \ket{\psi} \ket{0} + \beta \ket{\psi} \ket{1} \; = \; \ket{\psi} \otimes (\alpha \ket{0} + \beta \ket{1}) . \]
Since by assumption $\ket{\psi}$ is in $\Pn$, $\ket{\omega}$ is in $\PP_{n+1}$.

\item $\ket{\psi} \neq \ket{\phi}$, in which case (up to permutation) we can write
\[ \begin{array}{lcl}
\ket{\psi} & = & \ket{\Psi} \otimes \ket{\eta} \\
\ket{\phi} & = & \ket{\Psi} \otimes \ket{\theta}
\end{array}
\]
where $\ket{\Psi}$ is in $\PP_{n-1}$ and $\ket{\eta}$ and $\ket{\theta}$ are 1-qubit states.
From this and~(\ref{omeq}), using the bilinearity of tensor product again we have
\[ \ket{\omega} = \ket{\Psi} \otimes \ket{\xi} \]
where $\ket{\xi}$ is a 2-qubit state.
We can apply the Base Case Lemma to $\ket{\xi}$ to conclude that $\ket{\xi}$ is either in $\PP_2$, in which case $\ket{\omega}$ is in $\PP_{n+1}$, or $\ket{\xi}$ is logically contextual, in which case $\ket{\omega}$ is logically contextual by the corollary to the Going Up Lemma.
\end{itemize}

\noindent At this point, we have established (1) and (2) of Theorem 4, but it seems that we require an infinite search to determine if there exists some $a \in [0, 1]$ for which $\tau(a)$ is logically contextual.

However, the following lemma shows that we only need to test a fixed, finite number of values for $a$ to determine this.

\begin{restatable}[\textbf{The 21 Lemma}]{lemma}{lemmtwoone}
\label{21lemm}
With the same notation as in the Small Difference Lemma, suppose that $\tau(a)$ is in $\Pn$ for 21 distinct values of $a$ in $[0, 1]$. Then $\tau(a)$ is in $\Pn$ for all $a \in [0, 1]$.
\end{restatable}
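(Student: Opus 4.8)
The plan is to exploit that membership in $\Pn$ is governed by \emph{algebraic} conditions, and that the curve $a \mapsto \tau(a)$ can be reparametrised so that these conditions become binary forms of bounded degree; the constant $21$ will then be exactly one more than a bound on that degree. First I would clear the square root. Since $\Pn$ is invariant under rescaling of the state vector (it is a condition on rays), membership of $\tau(a)$ in $\Pn$ is unchanged if we multiply by $1/\sqrt{1-a^2}$; setting $t = a/\sqrt{1-a^2}$ gives the projective curve $\sigma[s:t] = s\ket{\phi} + t\ket{\psi}$, whose every amplitude $s\phi_x + t\psi_x$ is homogeneous of degree $1$ in $(s,t)$. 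The $21$ distinct values of $a \in [0,1]$ correspond to $21$ distinct points $[s:t] = [\sqrt{1-a^2}:a] \in \mathbb{P}^1$ (the value $a=1$ being the point $[0:1] \sim \ket{\psi}$, which lies in $\Pn$ by hypothesis). So it suffices to prove: if $\sigma[s:t] \in \Pn$ at more than $20$ points of $\mathbb{P}^1$, then $\sigma[s:t] \in \Pn$ for all $[s:t]$.

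Next I would write $\Pn$ as a zero set. A state lies in $\Pn$ precisely when its qubits can be grouped into singletons and maximally entangled pairs, and this can be tested by polynomial conditions on the one- and two-qubit reduced density matrices: for each qubit $i$ the condition that $\rho_i$ be pure or maximally mixed, namely $\det\rho_i\,(\det\rho_i - \tfrac14(\operatorname{Tr}\rho_i)^2) = 0$, and for each pair the condition that $\rho_{ij}$ be a product of its marginals or a pure maximally entangled state. The crucial observation, and the reason the bound is uniform in $n$, is that each entry of any such reduced density matrix is a sum (over the remaining $2^{n-2}$ basis indices) of products $c_y \overline{c_{y'}}$, hence homogeneous of degree $2$ in $(s,t)$ irrespective of $n$; increasing $n$ multiplies the number of terms but never the degree. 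Consequently each fixed-order condition, evaluated along $\sigma[s:t]$, is a binary form whose degree is bounded by an absolute constant, which I would track to be at most $20$.

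The main obstacle is the disjunction hidden in ``$\Pn$'': it is a finite union $\bigcup_P V_P$ over pairings $P$, and the number of pairings grows with $n$, so one cannot simply multiply the defining forms of all the components (that would destroy the $n$-independent degree bound). The resolution I would pursue is to \emph{localise} the failure: assuming $\ket{\psi} \neq \ket{\phi}$, I would isolate a fixed pair of qubits at which $\ket{\psi}$ and $\ket{\phi}$ are incompatibly structured, and show that consistency of $\tau(a)$ with $\Pn$ forces the single two-qubit ``product-or-Bell'' form at that pair to vanish, this form being a binary form of degree at most $20$ that is not identically zero. Having reduced to a single binary form $h(s,t)$ of degree at most $20$, the endgame is immediate: a nonzero binary form of degree at most $20$ has at most $20$ projective zeros, so $21$ distinct zeros force $h \equiv 0$, whence $\sigma[s:t] \in \Pn$ for every $[s:t]$ and therefore $\tau(a) \in \Pn$ for all $a \in [0,1]$. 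Carrying out the localisation and verifying the exact degree $20$ --- rather than merely ``some constant'' --- is where the real work lies.
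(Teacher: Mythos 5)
You have the right skeleton, and it is the same one the paper uses: along the curve $\tau(a)$, membership in $\Pn$ should be governed by polynomial conditions whose degree is bounded independently of $n$ (because entries of reduced density matrices are quadratic in $(a,\sqrt{1-a^2})$ no matter how many qubits there are), and $21$ exceeds the bound. But your proposal has a genuine gap at exactly the point you flag as ``where the real work lies'': the localisation step is never performed, and it cannot be deferred --- it is the whole difficulty. Your endgame requires a single binary form $h$ of degree at most $20$, not identically zero, whose vanishing along the curve is \emph{equivalent} to membership: you need ``$\tau(a)\in\Pn\Rightarrow h=0$'' to collect $21$ zeros, and you need ``$h\equiv 0\Rightarrow$ membership everywhere'' to conclude. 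Because $\Pn$ is a union of components indexed by pairings of the qubits, no single such form presents itself, and your plan to isolate ``a fixed pair of qubits at which $\ket{\psi}$ and $\ket{\phi}$ are incompatibly structured'' presupposes structural control over how $\ket{\psi}$ and $\ket{\phi}$ differ. In the paper, that control is precisely the conclusion of the Small Difference Lemma, whose hypothesis is that $\tau(a)\in\Pn$ for \emph{all} $a$ --- the very statement you are trying to prove --- so it cannot be invoked here, and you give no independent route to it.

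The paper's proof shows how to avoid needing any equivalent form. Fix one qubit, say the last. Every pairing either leaves it as a singleton or pairs it with exactly one other qubit, so the disjunction over the (growing number of) pairings collapses, at the level of this one qubit's marginal $\rho_n$, to a two-way disjunction of \emph{necessary} conditions: $\rho_n$ is pure, or $\rho_n$ is maximally mixed. Each entry of $\rho_n$ is quadratic in $(a,b)$ with $b=\sqrt{1-a^2}$; the maximally-mixed condition, after isolating the $ab$ term and squaring, is a quartic in $a$ (at most $4$ roots unless identically satisfied), and the purity condition $\Tr\rho_n^2=1$ becomes, by the paper's count, a degree-$16$ polynomial equation (at most $16$ roots unless degenerate). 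With $21$ values in hand, pigeonhole forces at least $5$ of them onto the quartic or at least $17$ onto the degree-$16$ equation, so one of the two polynomials is degenerate and the corresponding condition holds for all $a\in[0,1]$. This is where the constant really comes from: $21=4+16+1$, two separate bounds glued by pigeonhole, not $20+1$ for one form. Using only necessary conditions is what makes the union-over-pairings obstruction evaporate; the residual step from ``$\rho_n$ is pure (or maximally mixed) for all $a$'' to the lemma's conclusion is then handled in tandem with the Small Difference Lemma (see the paper's closing remark that the degenerate purity case is exactly the case where $\ket{\psi}$ and $\ket{\phi}$ differ in one qubit). If you want to keep your projective formulation, the repair is to adopt this necessary-condition-plus-pigeonhole mechanism rather than to hunt for a single equivalent form.
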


Thus this lemma allows us to determine which case applies on the basis of a finite number of tests.

In the next section, we shall give proofs of these lemmas. We shall then give an explicit algorithm in Section~5 to complete the proof of Theorem 4.

\section{Auxiliary lemmas}

Firstly, we collect a few useful basic properties.

\subsection{Background lemmas}

We consider  relations $\sim \; = \; \{ \sim_n \}_{n \in \Bbb{N}}$, where $\sim_n$ is an equivalence relation on $n$-qubit states.
We say that $\sim$ is  \emph{LC invariant} if for all $n$-qubit states $\ket{\psi}$, $\ket{\phi}$:
\begin{itemize}
\item If $\ket{\psi} \sim_n \ket{\phi}$, then $\ket{\psi} \in \Pn$ iff $\ket{\phi} \in \Pn$.
\item If $\ket{\psi} \sim_n \ket{\phi}$, then $\ket{\psi}$ is logically contextual iff  $\ket{\phi}$ is logically contextual.
\end{itemize}

\begin{lemma}
\label{permlemm}
The relation induced by permutation of tensor factors is  LC invariant.
\end{lemma}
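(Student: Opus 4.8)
The plan is to handle the two clauses of LC invariance separately, for the relation $\ket{\psi}\sim_n\ket{\phi}$ that holds exactly when $\ket{\phi}=P_\sigma\ket{\psi}$ for some $\sigma\in S_n$, with $P_\sigma$ the unitary permuting the $n$ tensor factors, $P_\sigma(\ket{v_1}\otimes\cdots\otimes\ket{v_n})=\ket{v_{\sigma^{-1}(1)}}\otimes\cdots\otimes\ket{v_{\sigma^{-1}(n)}}$. The first clause is essentially definitional: Theorem 4 defines the members of $\Pn$ to be exactly those states that, \emph{up to permutation of tensor factors}, are products of $1$-qubit states and $2$-qubit maximally entangled states, so $\Pn$ is by construction closed under every $P_\sigma$. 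Hence $\ket{\psi}\in\Pn$ implies $P_\sigma\ket{\psi}\in\Pn$, and applying this to $\sigma^{-1}$ yields the converse; thus $\ket{\psi}\sim_n\ket{\phi}$ forces $\ket{\psi}\in\Pn\iff\ket{\phi}\in\Pn$.

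For the second clause I would exhibit a relabeling isomorphism between the two empirical models. Suppose $\ket{\psi}$ is logically contextual, witnessed by a finite set of local observables at each party $i$, and put $\ket{\phi}=P_\sigma\ket{\psi}$; assign to party $i$ of $\ket{\phi}$ the observables originally used at party $\sigma^{-1}(i)$. Writing $e_i$ for an eigenvector of the chosen observable at party $i$, the probability formula~(\ref{ip}) combined with the identity
$$\langle e_1|\otimes\cdots\otimes\langle e_n|\,P_\sigma|\psi\rangle\;=\;\langle e_{\sigma(1)}|\otimes\cdots\otimes\langle e_{\sigma(n)}|\psi\rangle$$
shows that the probability table of $\ket{\phi}$ under the permuted observables is obtained from that of $\ket{\psi}$ under the original observables by permuting, via $\sigma$, the coordinate positions of the rows (measurement choices) and of the columns (joint outcomes). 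In particular a joint outcome lies in the support of one table iff its $\sigma$-image lies in the support of the other.

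Finally I would observe that this coordinate permutation is a bijection on measurement contexts and on joint outcomes which commutes with the restriction maps ${\rm res}_U^{U'}$ of the event sheaf $\mathcal E$, and so induces an isomorphism of the two empirical models and of the associated presheaves $\mathcal D_{\mathbb B}\mathcal E$. Since logical contextuality is defined purely in terms of this data — the non-existence of a global boolean section of $\mathcal D_{\mathbb B}\mathcal E$ whose restriction to each context agrees with the support — it is transported across the isomorphism, so $\ket{\psi}$ is logically contextual iff $\ket{\phi}$ is. The only delicate point I expect is the index bookkeeping in the displayed identity and the verification that relabeling commutes with marginalization; both are routine once the convention for $P_\sigma$ is fixed, so there should be no genuine obstacle.
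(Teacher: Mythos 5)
Your proof is correct. Note, however, that the paper itself offers no proof of this lemma: it is stated, together with the companion lemma on LU equivalence, as a background fact whose verification is treated as routine. Your argument supplies precisely the details being taken for granted. The first clause is indeed definitional, since $\Pn$ is defined by the phrase ``up to permutation of tensor factors,'' and permutations compose. For the second clause, your bookkeeping checks out: with the convention that $P_\sigma$ places qubit $\sigma^{-1}(i)$ in slot $i$, assigning party $i$ the observables originally used at party $\sigma^{-1}(i)$ makes the two probability tables agree up to the $\sigma$-relabeling of rows (measurement choices) and columns (joint outcomes), so supports correspond bijectively, the relabeling commutes with marginalization, and the existence or non-existence of a compatible global section of $\mathcal{D}_{\mathbb{B}}\mathcal{E}$ is transported. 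This is the same inner-product-preservation style of argument that the paper does spell out explicitly in Chapter 2, but there only for local unitaries rather than for permutations of tensor factors; your write-up is the natural analogue for the permutation case.
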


\begin{lemma}
The relation of LU equivalence is  LC invariant. Here LU equivalence refers to the relation induced by the action of local (1-qubit) unitaries.
\end{lemma}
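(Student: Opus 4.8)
The plan is to use that a local unitary $U_1 \otimes \cdots \otimes U_n$ acts factorwise on tensor products, so that it preserves both the product-of-blocks structure defining $\Pn$ and every measurement probability. First I would record that the relation is symmetric: writing $\ket{\phi} = (U_1 \otimes \cdots \otimes U_n)\ket{\psi}$, each $U_i^{-1} = U_i^\dagger$ is again a $1$-qubit unitary, so for each of the two biconditionals it suffices to prove one implication.

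For the $\Pn$-clause, I would suppose $\ket{\psi} \in \Pn$ and, after permuting tensor factors (harmless by Lemma~\ref{permlemm}), write $\ket{\psi} = \ket{\psi_1} \otimes \cdots \otimes \ket{\psi_k}$ with each $\ket{\psi_j}$ a $1$-qubit state or a $2$-qubit maximally entangled state. Since the local unitary acts independently on the qubits of each block, $\ket{\phi}$ is a product of the same shape; a single $1$-qubit unitary keeps a $1$-qubit factor a $1$-qubit state, so the only real point is that $U_i \otimes U_{i'}$ preserves maximal entanglement of a $2$-qubit factor. This is where the one genuine idea enters: maximal entanglement is equivalent to the reduced density matrix equalling $\tfrac{1}{2}I$ (equal Schmidt coefficients), and local unitaries transform the reduced state by conjugation $\rho_A \mapsto U_i \rho_A U_i^\dagger$, which preserves its spectrum and hence the maximally-mixed condition. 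Thus $\ket{\phi} \in \Pn$, and the converse comes from the same argument applied to the $U_i^\dagger$.

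For the logical-contextuality clause, the key computation is already in hand from the discussion of the Dicke states: if $\ket{\psi}$ is logically contextual with respect to bases $\eta_1^{\plusminus}, \ldots, \eta_n^{\plusminus}$, then for every choice of signs
\[ \langle U_1 \eta_1^{\plusminus} \otimes \cdots \otimes U_n \eta_n^{\plusminus} \mid (U_1 \otimes \cdots \otimes U_n)\ket{\psi}\rangle \; = \; \langle \eta_1^{\plusminus} \otimes \cdots \otimes \eta_n^{\plusminus} \mid \psi\rangle , \]
because the factors $U_i^\dagger U_i$ cancel. Hence the probability table of $\ket{\phi}$ computed with the rotated bases $\{U_i \eta_i^{\plusminus}\}$ coincides entry-by-entry with that of $\ket{\psi}$ for $\{\eta_i^{\plusminus}\}$; in particular the two empirical models have identical supports. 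Since logical contextuality depends only on the support (via the satisfiability of the associated formula $\Phi$), I conclude that $\ket{\phi}$ is logically contextual with the rotated observables, and the converse again follows by using the $U_i^\dagger$.

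I do not expect a serious obstacle here: the probability-preservation identity is already established, and the only substantive new observation is the LU-invariance of maximal entanglement. The one point to handle with care is the interaction between the local unitaries and the ``up to permutation'' clause in the definition of $\Pn$, which I isolate cleanly by appealing to Lemma~\ref{permlemm} rather than tracking permutations explicitly.
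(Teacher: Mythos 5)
Your proof is correct and takes essentially the same route as the paper: the paper states this lemma without a separate proof, and the probability-preservation identity you invoke is exactly the computation given at the end of the Dicke-states discussion in Chapter 2, which is the paper's implicit justification for the logical-contextuality clause. Your handling of the $\Pn$ clause (local unitaries preserve the block structure, and maximal entanglement is preserved because the reduced state transforms by conjugation, leaving its spectrum fixed) correctly fills in the detail the paper leaves unstated.
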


The following result will be used in the proof of the Small Difference lemma.
It refers to the ``partial inner product'' operation described e.g.~in \cite[p.~129]{QPSI}.\footnote{This is actually  the application of a linear map to a vector under Map-State duality \cite{abramsky2008categorical}.}

\begin{lemma}
\label{tensorthlemm}
Let $\ket{\phi}$ be a state in $\HH \otimes \KK$. 
For any states $\ket{\eta}$ in $\HH$ and $\ket{\theta}$ in $\KK$, if for all $\ket{\eta^{\bot}}$ orthogonal to $\ket{\eta}$,
$\langle \eta^{\bot} | \phi \rangle = \mathbf{0}$,
and  for all $\ket{\theta^{\bot}}$ orthogonal to $\ket{\theta}$,
$\langle \theta^{\bot} | \phi \rangle = \mathbf{0}$,
then (up to global phase) $\ket{\phi}  = \ket{\eta} \otimes \ket{\theta}$.
\end{lemma}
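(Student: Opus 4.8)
The plan is to unpack the two hypotheses as statements about which tensor sub-blocks $\ket{\phi}$ can occupy, and then observe that the two conditions together pin $\ket{\phi}$ down to the one-dimensional subspace spanned by $\ket{\eta} \otimes \ket{\theta}$. Since $\ket{\phi}$, $\ket{\eta}$, $\ket{\theta}$ are states, I would first assume without loss of generality that $\ket{\eta}$ and $\ket{\theta}$ are unit vectors, and introduce the rank-one projections $P_\eta = \ket{\eta}\bra{\eta}$ and $P_\theta = \ket{\theta}\bra{\theta}$ onto their spans. Then I would extend $\ket{\eta}$ to an orthonormal basis $\{\ket{e_i}\}$ of $\HH$ with $\ket{e_1} = \ket{\eta}$, and $\ket{\theta}$ to an orthonormal basis $\{\ket{f_j}\}$ of $\KK$ with $\ket{f_1} = \ket{\theta}$, and expand $\ket{\phi} = \sum_{i,j} c_{ij}\, \ket{e_i} \otimes \ket{f_j}$.

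The key step is to read off each hypothesis from this expansion. The partial inner product $\langle e_i | \phi \rangle = \sum_j c_{ij} \ket{f_j}$ is a vector in $\KK$. The first hypothesis, applied to the orthogonal vectors $\ket{e_i}$ for $i \geq 2$, forces $\langle e_i | \phi \rangle = \mathbf{0}$, hence $c_{ij} = 0$ for all $i \geq 2$ and all $j$; equivalently $(P_\eta \otimes I)\ket{\phi} = \ket{\phi}$, i.e. $\ket{\phi} \in \mathrm{span}(\ket{\eta}) \otimes \KK$. Symmetrically, the second hypothesis gives $c_{ij} = 0$ for all $j \geq 2$ and all $i$, i.e. $(I \otimes P_\theta)\ket{\phi} = \ket{\phi}$, so $\ket{\phi} \in \HH \otimes \mathrm{span}(\ket{\theta})$. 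Combining the two, the only surviving coefficient is $c_{11}$, so $\ket{\phi} = c_{11}\, \ket{\eta} \otimes \ket{\theta}$; since $\ket{\phi}$ is a unit vector, $|c_{11}| = 1$, and the claim follows up to the global phase $c_{11}$.

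I do not expect a genuine obstacle here, as the argument is essentially linear algebra once the partial inner product is unpacked. The one point needing care is the precise meaning of the hypotheses: I would stress that ``vanishing against every $\ket{\eta^{\bot}}$ orthogonal to $\ket{\eta}$'' is exactly the statement that $\ket{\phi}$ has no component outside $\mathrm{span}(\ket{\eta}) \otimes \KK$, and then verify that the two block conditions intersect in the single line $\mathrm{span}(\ket{\eta} \otimes \ket{\theta})$ rather than merely constraining $\ket{\phi}$ separately. In the finite-dimensional qubit setting of interest the basis expansion makes this transparent, and the projection formulation $(P_\eta \otimes I)\ket{\phi} = \ket{\phi} = (I \otimes P_\theta)\ket{\phi}$ shows that the same conclusion holds verbatim in the general Hilbert-space setting.
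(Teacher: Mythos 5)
Your proof is correct and follows essentially the same route as the paper's: both extend $\ket{\eta}$ and $\ket{\theta}$ to orthonormal bases, apply the partial-inner-product hypotheses to the basis vectors orthogonal to $\ket{\eta}$ (resp.\ $\ket{\theta}$), and conclude that $\ket{\phi}$ can have no component along any product basis vector other than $\ket{\eta}\otimes\ket{\theta}$. The only difference is presentational: you read this off directly from the coefficients $c_{ij}$, whereas the paper phrases the same final step in terms of orthocomplements, namely $\ket{\phi}^{\bot\bot} \subseteq S^{\bot} = (\ket{\eta}\otimes\ket{\theta})^{\bot\bot}$.
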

\begin{proof}
We extend $\ket{\eta}$ into an orthonormal basis $\ket{\eta_1}, \ldots , \ket{\eta_n}$ with $\ket{\eta} = \ket{\eta_1}$, and similarly extend $\ket{\theta}$ into $\ket{\theta_1}, \ldots , \ket{\theta_m}$ with $\ket{\theta} = \ket{\theta_1}$. 
Then $B = \{ \ket{\eta_i} \otimes \ket{\theta_j} \}_{i,j}$ forms an orthonormal basis of $\HH \otimes \KK$. Note that
\[ \HH \otimes \KK \; = \; (\ket{\eta} \otimes \ket{\theta})^{\bot \bot} \oplus S^{\bot \bot} \]
where $S = \{ \ket{\eta_i} \otimes \ket{\theta_j} \mid (i,j) \neq (1,1) \}$. Hence $S^{\bot \bot} = (\ket{\eta} \otimes \ket{\theta})^{\bot}$.
By our assumption and the defining property of the partial inner product \cite[Equation (6.47)]{QPSI}, $S \subseteq \ket{\phi}^{\bot}$. Hence
\[ \ket{\phi}^{\bot\bot} \; \subseteq \; S^{\bot} = S^{\bot \bot \bot} = (\ket{\eta} \otimes \ket{\theta})^{\bot \bot} . 
\]
Since these are one-dimensional subspaces, this implies that, up to global phase, $\ket{\phi}  = \ket{\eta} \otimes \ket{\theta}$.
\end{proof}

We now turn to detailed proofs of the main lemmas.
For convenience, we shall repeat the statements of the lemmas.

\subsection{The Base Case lemma}

\basecase*

\begin{proof}
This is essentially Hardy's construction in \cite{Hardy93}.
Using the  Schmidt decomposition, every two-particle entangled state can be written in the form
$$|\psi\rangle=\alpha|+\rangle_1|+\rangle_2 + \beta|-\rangle_1|-\rangle_2$$
for an appropriate choice of basis states $|\pm\rangle_i$ for each particle $i$, and normalized non-zero real constants $\alpha$ and $\beta$. 

The logical contextuality of $|\psi\rangle$ is witnessed by a set of four dichotomic observables, two for each of the two parties, namely $U_i$ and $D_i$, $i=1,2$. These observables can be defined as $U_i=|u_i\rangle\langle u_i|$ and $D_i=|d_i\rangle\langle d_i|$ where
\begin{align}
 |u_i\rangle&=\frac{1}{\sqrt{|\alpha|+|\beta|}} (\beta^{\frac{1}{2}}|+\rangle_i + \alpha^{\frac{1}{2}}|-\rangle_i)\\
 |d_i\rangle&=\frac{1}{\sqrt{|\alpha|^3+|\beta|^3}} (\beta^{\frac{3}{2}}|+\rangle_i - \alpha^{\frac{3}{2}}|-\rangle_i)
\end{align}
Hardy's paper also explains why it is not possible to run this particular non-locality argument when either $\alpha$ or $\beta$ are equal to zero (product states), or when $|\alpha| = |\beta|$ (maximally entangled states), that is, when the state $|\psi\rangle$ belongs to $\mathcal{P}_2$. 

There is one subtle remaining point. To show that the dichotomy asserted in the lemma is strictly disjoint, we must show that in the maximally entangled case, there is \emph{no} choice of local observables which can give rise to logical contextuality.
This is shown for the case where each party has the same two local observables as in Theorem \ref{bellth} in the previous chapter, and more generally for any finite sets of local observables as Theorem 2.6.5 in \cite{mansfield2013}.
\end{proof}

\subsection{The Going Up lemmas}

The proofs of the two Going Up lemmas are quite similar. We shall prove the second, which is somewhat harder.

\basecaseII*

\begin{proof}
Since $|\theta\rangle$ is logically contextual there must be some context $U'$ and some $s'\in S(U')$ such that the formula $\Psi=\varphi_{s'}\wedge\bigwedge_{U\in\mathcal{U}\backslash U'} \varphi_U$ is not satisfiable. We will show that it is possible to construct a similar unsatisfiable formula in order to prove the logical non-locality of $|\omega\rangle$.

The $n+1^{th}$ party is assigned a single observable $B = B(x,y)$, whose eigenvectors are $|b_+\rangle=\overline{y}|0\rangle+\overline{x}|1\rangle$ and $|b_-\rangle=x|0\rangle-y|1\rangle$. The observable $B$ is given by the self-adjoint matrix
\begin{equation}\label{Bobservable}
B(x,y)= \left( \begin{array}{cc}
             -|x|^2+|y|^2 & 2x\overline{y} \\ 2\overline{x}y & |x|^2-|y|^2)
            \end{array}\right)  
\end{equation}
For any $n$-qubit state $|\mu\rangle$ we have
\begin{align*}
\langle\mu|\langle b_+|\cdot|\omega\rangle &= x\alpha\langle\mu|\psi\rangle\langle b_+|0\rangle+ y\beta\langle\mu|\phi\rangle\langle b_+|1\rangle\\
&=xy\alpha\langle\mu|\psi\rangle + xy\beta\langle\mu|\phi\rangle\\
&=xy(\alpha\langle\mu|\psi\rangle+\beta\langle\mu|\phi\rangle)\\
&=xy\langle\mu|\theta\rangle
\end{align*}
which implies
\begin{equation}\label{theta1}
\langle\mu|\theta\rangle|^2=0 \Leftrightarrow |\langle\mu|\langle b_+|\cdot|\omega\rangle|^2=0
\end{equation}
The augmented set of allowed measurements $\widetilde{X}=X\cup{(B,n+1)}$ is then covered by the family of compatible subsets 
\[ \widetilde{\mathcal{U}}:=\{\widetilde{U}=U\cup{(B,n+1)}~|~U\in \mathcal{U}\} . \]

Let $T(\widetilde{U})$ denote the support of $\widetilde{U}\in\widetilde{\mathcal{U}}$. Equation~(\ref{theta1}) implies that $S(U)=\{s~|~s+\in T(\widetilde{U}\}$ where $s+ :\widetilde{U}\rightarrow\mathbf{2}$ extends $s$ by mapping $(B,n+1)$ to $+$. As a side remark, note that the presence of an analogously defined section $\sigma-$ in $T(\widetilde{U})$ does not necessarily imply the presence of $\sigma$ in $S(U)$.

Now let $t':=s'+$. We have $t'\in T(\widetilde{U'})$ and we can define the analogue of the proposition $\varphi_{s'}$ as
\begin{align*}
\varphi_{t'}&=\bigwedge_{x\in\widetilde{U'},\, t'(x)=+} x \AND \bigwedge_{x\in\widetilde{U'},\, t'(x)=-}\neg x\\
&=\bigwedge_{x\in U',\,s'(x)=+} x \AND \bigwedge_{x\in U',\,s'(x)=-} \neg x \AND z_{n+1}\\
&=\varphi_{s'}\AND z_{n+1}
\end{align*}
where $z_{n+1}$ denotes the boolean variable corresponding to the outcome on the $n+1^{th}$ qubit. Recall that $+$ stands for true and $-$ stands for false.

It also holds that 
\begin{align*}
&\varphi_{\widetilde{U}}=\bigvee_{t\in T(\widetilde{U})}\varphi_t=\bigvee_{t\in T(\widetilde{U}),\,t=s+} \varphi_t \vee \bigvee_{t\in T(\widetilde{U}),\, t=s-}\varphi_t\\
=&\left(\bigvee_{s\in S(U)}(\varphi_s\AND z_{n+1})\right) \vee \left(\bigvee_{t\in T(\widetilde{U}),\, t=\sigma-} \varphi_{\sigma}\AND \neg z_{n+1}\right)\\
=&\left(\left(\bigvee_{s\in S(U)}\varphi_s\right)\AND z_{n+1} \right)\vee \left(\underbrace{\left(\bigvee_{t=\sigma-\in T(\widetilde{U})} \varphi_{\sigma} \right)}_{\gamma_U} \AND \neg z_{n+1}\right)\\
=&\left(\varphi_U\AND z_{n+1}\right)\vee \left(\gamma_U \AND \neg z_{n+1}\right)
\end{align*}

We can now define the formula $\Omega$ which specifies the joint outcome $t'=s'+\in \widetilde{U'}$ as well as the joint outcomes within the supports of all compatible sets of measurements $\widetilde{U}\neq \widetilde{U'}$. This formula is just the conjunction of $\Psi$ and $z_{n+1}$. In order to show that $|\omega\rangle$ is logically contextual, it suffices to show that $\Omega$ has no satisfiable assignment. This is indeed the case, as the fact that $\Psi$ is not satisfiable implies that $\Omega$ is also not satisfiable, thus completing our proof. Indeed, we have
\begin{align*}
\Omega &=\varphi_{t'}\AND\bigwedge_{\widetilde{U}\in \mathcal{\widetilde{U}}\backslash\widetilde{U'}} \varphi_{\widetilde{U'}} \\
&=(\varphi_{s'}\AND z_{n+1})\AND \bigwedge_{\widetilde{U}\in \mathcal{\widetilde{U}}\backslash\widetilde{U'}} \left[ (\varphi_{\widetilde{U}}\AND z_{n+1})\vee (\gamma_U \AND \neg z_{n+1})\right] \\
&=\bigwedge_{\widetilde{U}\in \mathcal{\widetilde{U}}\backslash\widetilde{U'}}\left[ (\varphi_{s'}\AND z_{n+1})\AND ((\varphi_{\widetilde{U}}\AND z_{n+1})\vee(\gamma_U\AND z_{n+1}))\right]\\
&=\bigwedge_{\widetilde{U}\in \mathcal{\widetilde{U}}\backslash\widetilde{U'}}\left[ (\varphi_{s'}\AND z_{n+1})\AND(\varphi_{\widetilde{U}}\AND z_{n+1})\right]\\
&=\varphi_{s'}\AND\left(\bigwedge_{\widetilde{U}\in \mathcal{\widetilde{U}}\backslash\widetilde{U'}} \varphi_{\widetilde{U}}\right)\AND z_{n+1} = \Psi\AND z_{n+1} 
\end{align*}
\end{proof}

The proof of the Going Up Lemma I follows by a similar argument, where the observables are augmented by the  $Z$ measurement for the $n+1^{th}$ party.

The Going Up lemmas have the following useful corollary.

\corr*

\begin{proof}
We argue by induction on the number of qubits in $\ket{\eta}$. 
If $\ket{\eta} = \alpha \ket{0} + \beta\ket{1}$, then by bilinearity of the tensor product,
\[ \ket{\theta} \otimes \ket{\eta} = \alpha \ket{\theta}\ket{0} + \beta \ket{\theta} \ket{1} \]
and we can apply the Going Up Lemma I.

For the inductive case, we can write
\[ \ket{\eta} = \alpha \ket{\eta_0} \ket{0} + \beta \ket{\eta_1} \ket{1} \]
and by bilinearity
\[ \ket{\theta} \otimes \ket{\eta} = \alpha \ket{\theta}\ket{\eta_0} \ket{0} + \beta \ket{\theta} \ket{\eta_1} \ket{1} . \]
By induction hypothesis, $\ket{\theta}\ket{\eta_0}$ and $\ket{\theta}\ket{\eta_1}$ are logically contextual, and we can apply the Going Up Lemma I again to conclude.
\end{proof}

\subsection{The Small Difference lemma}

\smalldiff*

\begin{proof}
Firstly, we note that each state $\ket{\psi}$ in $\Pn$ has an \emph{entanglement type}, which can be described by a graph on $n$ vertices with an edge from $i$ to $j$ when the corresponding qubits of $\ket{\psi}$ are maximally entangled.
There are finitely many such graphs, and we can partition $\Pn$ into $P_1, \ldots , P_M$ according to the entanglement type.
All the states in each $P_l$ are LU equivalent.

As before, we can write $\ket{\psi}$, up to permutation of tensor factors, as
\begin{equation}
\label{psieq}
\ket{\psi} \; = \; \ket{\psi_1} \otimes \cdots \otimes \ket{\psi_k} 
\end{equation}
where each $\ket{\psi_i}$ is either a 1-qubit state, or a 2-qubit maximally entangled state.

We recall the definition of the parameterised family of states $\tau(a)$, $a \in [0, 1]$:
\[ \tau(a) \; = \; a \ket{\psi} + g(a) \ket{\phi}  \]
where $g(a) = \sqrt{1 - a^2}$.

Now for each $\delta \in [0, 1]$, we define a set
\[ R_{\delta}=\{  \tau(a) \mid 1-\delta < a \leq 1\} \]
Under our assumption on $\tau(a)$, each set $R_{\delta}$, which is infinite,  is partitioned among the sets $P_1, \ldots , P_M$.
Also, $\delta < \delta'$ implies $R_{\delta} \supset R_{\delta'}$. Hence, by an application of K\"onig's infinity lemma \cite{levy2012basic}, we can conclude that there is  $l$ with $1 \leq l \leq M$, and  an infinite increasing sequence $\{ a_i \}$ with supremum $1$, such that $\tau(a_i)$ is in $P_l$ for all $i$.

Since all the states $\tau(a_i)$ are LU-equivalent, we can express them in terms of a representative state $\ket{\Theta} \in P_l$ as
\begin{equation}\label{theta}
\tau(a_i) \; = \; U^1_{a_i}\otimes U^2_{a_i}\otimes\ldots\otimes U^n_{a_i} |\Theta\rangle
\end{equation}
Since $\Un(2)^n$ is compact, there is a convergent subsequence $\{U^1_{b_i}\otimes \ldots\otimes U^n_{b_i}\}_{b_i}$, whose limit as $i\rightarrow\infty$ is ${W^1}\otimes\ldots \otimes {W^n}$.
The limit of the corresponding subsequence $\{ a_{b_i} \}$ is still $1$.
Hence $\ket{\psi}$ is also a member of $P_l$, as
\begin{equation*}\label{psi}
|\psi\rangle \; = \; \lim_{i\rightarrow\infty} \tau(a_i) \; = \; \lim_{i\rightarrow\infty} \tau(a_{b_i}) \; = \;  {W^1}\otimes\ldots\otimes {W^n}|\Theta\rangle .
\end{equation*}
This, together with Equation (\ref{theta}), implies that we can express each $\tau(a_i)$ as
\[ \tau(a_i) \;  = \; [(U^1_{a_i}{W^1}^{\dagger})\otimes\ldots\otimes (U^n_{a_i}{W^n}^{\dagger})]|\psi\rangle . \]
Equivalently, using Equation \ref{psieq} and the definition of $\tau(a_i)$, we can obtain a family of equations, one for each $a_i$:
\begin{align}\label{psiphi}
a_i |\psi\rangle + g(a_i)|\phi\rangle &=[(U^1_{a_i}{W^1}^{\dagger})\otimes\ldots\otimes (U^n_{a_i}{W^n}^{\dagger})]|\psi\rangle  \nonumber \\
&= |\psi_1^{i}\rangle\otimes\ldots\otimes |\psi_{k}^{i}\rangle
\end{align}
where $|\psi_{j}^{i}\rangle$ is the state obtained after the LU transformation of $|\psi_j\rangle$, $1 \leq j \leq k$.

We shall now make use of the ``partial inner product'' operation described e.g.~in \cite[p.~129]{QPSI}.  We will use this operation to probe the components of~(\ref{psiphi}).

By Lemma~\ref{tensorthlemm}, if for all $j$, and for any state $\ket{\psi_j^{\perp}}$ orthogonal to $|\psi_j\rangle$, the application of $\ket{\psi_j^{\perp}}$ to $|\phi\rangle$ results in a null vector, we must have  $|\phi\rangle=|\psi\rangle$, and the lemma is proved. 

Otherwise, assume there is some $j$, and some $\ket{\psi_j^{\perp}}$, such that $\langle\psi_j^{\perp}|\phi\rangle$ is a non-zero vector.
For ease of notation, we take $j=k$. 

Applying $\langle\psi_k^{\perp}|$ on both sides of Equation (\ref{psiphi}), we obtain
\begin{align}
g(a_i)\langle\psi_k^{\perp}|\phi\rangle \; &= \; |\psi_1^{i}\rangle\otimes\ldots\otimes |\psi_{k-1}^{i}\rangle \underbrace{\langle\psi_k^{\perp}|\psi_k^{i}\rangle}_{\neq 0} \\
\underbrace{\langle\psi_k^{\perp}|\phi\rangle}_{constant\ vector} \; &= \; \epsilon_i |\psi_1^{i}\rangle\otimes\ldots\otimes |\psi_{k-1}^{i}\rangle \label{final}
\end{align}
The fact that the LHS of~(\ref{final}) is constant implies that for any $i$ and $j$ we must have $\epsilon_i=\epsilon_j$. We write $\epsilon$ for this common value.
We must also have $|\psi_t^{i}\rangle=|\psi_t^{j}\rangle$ for all $t<k$, and we write $\ket{\psi_t'}$ for the common value. In fact we have 
\begin{equation}\label{limit}
|\psi_t\rangle=\lim_{i\rightarrow\infty}|\psi_t^{i}\rangle=|\psi_t'\rangle, \quad 1 \leq t <k 
\end{equation}
This means that we can rewrite Equation (\ref{final}) as 
\[ \langle\psi_k^{\perp}|\phi\rangle \; = \; \epsilon |\psi_1\rangle\otimes\ldots\otimes |\psi_{k-1}\rangle . \]
A similar analysis will apply to any state $\ket{\eta}$ orthogonal to $\ket{\psi_k}$ for which $\langle \eta | \phi \rangle$ is non-zero.
Using Equation (6.48) from \cite{QPSI}, and bilinearity of the tensor product, we obtain
\begin{equation}\label{form}
|\phi\rangle \; = \; x|\phi_2\rangle\otimes|\psi_k\rangle + \epsilon|\psi_1\rangle\otimes\ldots\otimes|\psi_{k-1}\rangle\otimes \ket{\xi}\
\end{equation}
for some $\ket{\phi_2}$ and $\ket{\xi}$.
We can use Equation (\ref{limit}) to rewrite Equation (\ref{psiphi}) as
\[ a_i |\psi\rangle + g(a_i) |\phi\rangle = |\psi_1\rangle\otimes\ldots\otimes |\psi_{k-1}\rangle\otimes|\psi_{k}^{i}\rangle . \]
For any $j<k$, if we apply any state $\ket{\psi_j^{\perp}}$, orthogonal to $|\psi_j\rangle$, to the above equation we obtain $\langle\psi_j^{\perp}|\phi\rangle=0$. 
Together with Equation (\ref{form}), this implies that $|\phi_2\rangle=|\psi_1\rangle\otimes\ldots\otimes|\psi_{k-1}\rangle$. 
So $|\phi\rangle$ and $|\psi\rangle$ can differ by at most one component. 

These  components cannot be  two-qubit maximally entangled states, since by assumption all linear combinations $\tau(a)$ of $|\psi\rangle$ and $|\phi\rangle$, with $a \in [0, 1]$, belong to $\mathcal{P}_n$, and hence, using bilinearity again, the corresponding linear combinations  of these components  would also have to be maximally entangled, yielding a contradiction.

Thus we conclude that $\ket{\psi}$ and $\ket{\phi}$ can differ at most in a  one-qubit component.
\end{proof}

\subsection{The 21 lemma}

We shall need some elementary facts about partial traces (see e.g.~\cite{NieChu})):
\begin{itemize}
\item A pure state in $\HH \otimes \KK$ is a product state  if and only if tracing out over $\HH$ results in a pure state.
\item Tracing out over one party of a maximally entangled bipartite state yields a maximally mixed state.
\item A mixed state $\rho$ is pure if and only if $\Tr \rho^2 = 1$.
\end{itemize}

\lemmtwoone*

\begin{proof}

If a state belongs to $\mathcal{P}_n$ then all partial traces over $n-1$ parties result either in a pure state or in the maximally mixed state $\frac{1}{2}I_2$. 

We can express $|\psi\rangle$ and $|\phi\rangle$ as 
\begin{align*}
 |\psi\rangle \; &= \; \sum_{\sigma_i} a^0_{\sigma_i}|\sigma_i\rangle|0\rangle+ \sum_{\sigma_i} a^1_{\sigma_i}|\sigma_i\rangle|1\rangle\\
 |\phi\rangle \; &= \; \sum_{\sigma_i} b^0_{\sigma_i}|\sigma_i\rangle|0\rangle+ \sum_{\sigma_i} b^1_{\sigma_i}|\sigma_i\rangle|1\rangle
\end{align*}
where the $\sigma_i$ index the elements of the computational basis on $n-1$ qubits.

This means we can write the density matrix corresponding to $|\tau(a)\rangle=a|\phi\rangle + b |\psi\rangle$, with $b = g(a)$, as
\begin{align*}
 |\tau(a)\rangle\langle\tau(a)| \; = & \; \sum_{\sigma_i,\sigma_j}(a a^0_{\sigma_i}+b b^0_{\sigma_i})(a\overline{a^0_{\sigma_j}}+b\overline{b^0_{\sigma_j}})|\sigma_i\rangle\langle\sigma_j|\otimes |0\rangle\langle 0| \\
& \; + \sum_{\sigma_i,\sigma_j}(a a^0_{\sigma_i}+b b^0_{\sigma_i})(a\overline{a^1_{\sigma_j}}+b\overline{b^1_{\sigma_j}})|\sigma_i\rangle\langle\sigma_j|\otimes |0\rangle\langle 1|\\  
& \; + \sum_{\sigma_i,\sigma_j}(a a^1_{\sigma_i}+b b^1_{\sigma_i})(a\overline{a^0_{\sigma_j}}+b\overline{b^0_{\sigma_j}})|\sigma_i\rangle\langle\sigma_j|\otimes |1\rangle\langle 0|\\
& \; +  \sum_{\sigma_i,\sigma_j}(a a^1_{\sigma_i}+b b^1_{\sigma_i})(a\overline{a^1_{\sigma_j}}+b\overline{b^1_{\sigma_j}})|\sigma_i\rangle\langle\sigma_j|\otimes |1\rangle\langle 1|
\end{align*}

The partial trace over the first $n-1$ qubits of $|\tau(a)\rangle$ is given by
\begin{align*}
\rho_n \; = \; & \Tr_{n-1} |\tau(a)\rangle\langle\tau(a)| \; = \; \sum_{\sigma_i} \langle\sigma_i|\tau(a)\rangle\langle\tau(a)|\sigma_i\rangle \\
= \; &\sum_{\sigma_i}(a a^0_{\sigma_i}+b b^0_{\sigma_i})(a\overline{a^0_{\sigma_i}}+b\overline{b^0_{\sigma_i}})|0\rangle\langle 0| + \sum_{\sigma_i}(a a^0_{\sigma_i}+b b^0_{\sigma_i})(a\overline{a^1_{\sigma_i}}+b\overline{b^1_{\sigma_i}})|0\rangle\langle 1| \\
& + \sum_{\sigma_i}(a a^1_{\sigma_i}+b b^1_{\sigma_i})(a\overline{a^0_{\sigma_i}}+b\overline{b^0_{\sigma_i}})|1\rangle\langle 0| + \sum_{\sigma_i}(a a^1_{\sigma_i}+b b^1_{\sigma_i})(a\overline{a^1_{\sigma_i}}+b\overline{b^1_{\sigma_i}})|1\rangle\langle 1| \\
= \; &|0\rangle\langle 0|\sum_{\sigma_i}(a^2 a^0_{\sigma_i}\overline{a^0_{\sigma_i}} +(1-a^2) b^0_{\sigma_i}\overline{b^0_{\sigma_i}} + a\sqrt{1-a^2}(a^0_{\sigma_i}\overline{b^0_{\sigma_i}}+b^0_{\sigma_i}\overline{a^0_{\sigma_i}})) \\ 
\; + \; & |0\rangle\langle 1|\sum_{\sigma_i}(a^2 a^0_{\sigma_i}\overline{a^1_{\sigma_i}} +(1-a^2) b^0_{\sigma_i}\overline{b^1_{\sigma_i}} + a\sqrt{1-a^2}(a^0_{\sigma_i}\overline{b^1_{\sigma_i}}+b^0_{\sigma_i}\overline{a^1_{\sigma_i}})) \\
\; +\; & |1\rangle\langle 0|\sum_{\sigma_i}(a^2 a^1_{\sigma_i}\overline{a^0_{\sigma_i}} +(1-a^2) b^1_{\sigma_i}\overline{b^0_{\sigma_i}} + a\sqrt{1-a^2}(a^1_{\sigma_i}\overline{b^0_{\sigma_i}}+b^1_{\sigma_i}\overline{a^0_{\sigma_i}}))\\
\; +\; & |1\rangle\langle 1|\sum_{\sigma_i}(a^2 a^1_{\sigma_i}\overline{a^1_{\sigma_i}} +(1-a^2) b^1_{\sigma_i}\overline{b^1_{\sigma_i}} + a\sqrt{1-a^2}(a^1_{\sigma_i}\overline{b^1_{\sigma_i}}+b^1_{\sigma_i}\overline{a^1_{\sigma_i}})) 
\end{align*}

The partial trace $\rho_n$ is equal to the maximally mixed state if and only if 
\begin{align*}
 1/2 \; &= \; \sum_{\sigma_i}(a^2 a^0_{\sigma_i}\overline{a^0_{\sigma_i}} +(1-a^2) b^0_{\sigma_i}\overline{b^0_{\sigma_i}} + a\sqrt{1-a^2}(a^0_{\sigma_i}\overline{b^0_{\sigma_i}}+b^0_{\sigma_i}\overline{a^0_{\sigma_i}})) \\
0 \; &= \; \sum_{\sigma_i}(a^2 a^0_{\sigma_i}\overline{a^1_{\sigma_i}} +(1-a^2) b^0_{\sigma_i}\overline{b^1_{\sigma_i}} + a\sqrt{1-a^2}(a^0_{\sigma_i}\overline{b^1_{\sigma_i}}+b^0_{\sigma_i}\overline{a^1_{\sigma_i}}))\\
0\; &= \; \sum_{\sigma_i}(a^2 a^1_{\sigma_i}\overline{a^0_{\sigma_i}} +(1-a^2) b^1_{\sigma_i}\overline{b^0_{\sigma_i}} + a\sqrt{1-a^2}(a^1_{\sigma_i}\overline{b^0_{\sigma_i}}+b^1_{\sigma_i}\overline{a^0_{\sigma_i}}))\\
1/2 \; &= \; \sum_{\sigma_i}(a^2 a^1_{\sigma_i}\overline{a^1_{\sigma_i}} +(1-a^2) b^1_{\sigma_i}\overline{b^1_{\sigma_i}} + a\sqrt{1-a^2}(a^1_{\sigma_i}\overline{b^1_{\sigma_i}}+b^1_{\sigma_i}\overline{a^1_{\sigma_i}})) 
\end{align*}

Each of these equations yields a polynomial of degree 4 in $a$. Indeed, each equation has the form
\[ cab + q(a) = d \]
where $c$ and $d$ are constants, $b = \sqrt{1 - a^2}$, and $q(a)$ is a quadratic polynomial in $a$.
We can write this as
\[ cab  = -q(a) + d \]
and square both sides to obtain
\[ c^2a^2(1 - a^2) = (-q(a) + d)^2 \]
which is a quartic polynomial in $a$.
Hence, there can be at most 4 values of $a$  in $[0, 1]$ for which the partial trace $\rho_n$ is equal to a maximally mixed state.

On the other hand, $\rho_n$ is equal to a pure state if and only if $\Tr\rho_n^2=1$. By a similar analysis, this condition turns out to be equivalent to a polynomial equation of degree 16 in $a$. Unless the polynomial is degenerate, \ie the coefficients cancel so that the equation reduces to $1=1$, there can be at most 16 values of $a$ for which the partial trace $\rho_n$ is equal to a pure state.

Therefore, if there are more than $4+16=20$ values of $a$ in $[0, 1]$ for which the linear combination $\tau(a)$ belongs to $\mathcal{P}_n$, we can conclude that one of the polynomial equations above was degenerate, hence $\tau(a) \in\mathcal{P}_n$ for all values of $a$. 
\end{proof}

\paragraph{Remark} If $\ket{\psi}$ and $\ket{\phi}$ differ by one qubit, the partial trace of $\tau(a)$ for any value of $a$ will yield a pure state, and hence we will always have $\Tr\rho_n^2=1$. Thus the polynomial will indeed be degenerate in this case. This shows the necessity for the Small Difference Lemma.

\section{The algorithm}

We now give an explicit, albeit informal description of the algorithm which follows straightforwardly from our results.

We begin with a subroutine which we will use to test if a state is in $\Pn$.

\vspace{.1in}
\begin{tabular}{ll}
\textsc{subroutine} & \textsf{Test}$\Pn$  \\

\textbf{Input} & $n$-qubit quantum state $\ket{\theta}$ \\
\textbf{Output} & Either \\
& \textsf{Yes}, and entanglement type of $\ket{\theta}$, or \\
& \textsf{No}
\end{tabular}

\begin{enumerate}
\item Compute the $n-1$ partial traces $\rho_i$ over $n-1$ qubits of $\ket{\theta}$.
If any $\rho_i$ is not a maximally mixed state, compute $\Tr \rho_i^2$.
If $\Tr \rho_i^2 \neq 1$, return \textsf{No}.

We now have the list $\{ i_1, \ldots , i_k \}$ of indices for which the maximally mixed state was returned.

\item For each $i_p$ in the list, find its ``partner'' $i_q$ by computing the partial traces $\rho_{i_p,i_q}$ over $n-2$ qubits, and then testing if $\Tr \rho^2_{i_p,i_q} = 1$.\\
If we cannot find the partner for some $i_p$, return \textsf{No}.

\item Otherwise, we return \textsf{Yes}.
We also have the complete entanglement type of $\ket{\theta}$, and we have computed all the single-qubit components. $\Box$
\end{enumerate}

\begin{tabular}{ll}
\textsc{algorithm} & \\
\textbf{Input} & An $n$-qubit state $\ket{\omega}$ \\

\textbf{Output} & Either \\
& \textsf{Yes} if $\ket{\omega}$ is logically contextual, \\ 
& together with a list of $n+2$ local observables, or \\
& \textsf{No} if $\ket{\omega}$ is in $\Pn$.
\end{tabular}

\subsection*{Base Cases}

\begin{enumerate}
\item If $n=1$, output \textsf{No}.
\item If $n=2$, apply the Hardy procedure of the Base Case Lemma to the Schmidt decomposition of $\ket{\omega}$.
\end{enumerate}

\subsection*{Recursive Case: $n+1$, $n>1$}

\begin{enumerate}
\item We apply \textsf{Test}$\PP_{n+1}$ to $\ket{\omega}$. If $\ket{\omega}$ is in $\PP_{n+1}$, return \textsf{No}.

\item Otherwise,
we write
\[ \ket{\omega} \; = \; \alpha \ket{\psi} \ket{0} + \beta \ket{\phi} \ket{1} . \]
Explicitly, if $\ket{\omega}$ is represented by a $2^{n+1}$-dimensional complex vector
\[ \sum_{\sigma \in \{ 0, 1 \}^{n+1}} a_{\sigma} \ket{\sigma} \]
in the computational basis, we can define
\[ \alpha = \sqrt{\sum_{\sigma \in \{ 0, 1 \}^n} | a_{\sigma 0} |^2}, \qquad \beta = \sqrt{\sum_{\sigma \in \{ 0, 1 \}^n} | a_{\sigma 1} |^2} \]
\[ \ket{\psi} = \frac{1}{\alpha} \sum_{\sigma \in \{ 0, 1 \}^n} a_{\sigma 0} \ket{\sigma}, \qquad \ket{\phi} = \frac{1}{\beta} \sum_{\sigma \in \{ 0, 1 \}^n} a_{\sigma 1} \ket{\sigma} . \]

\item We apply \textsf{Test}$\Pn$ to $\ket{\psi}$. If $\ket{\psi}$ is not in $\Pn$, we proceed recursively with $\ket{\psi}$, and then extend the observables using the construction of the Going Up Lemma I. 

\item Otherwise, we proceed similarly with $\ket{\phi}$.

\item Otherwise, both $\ket{\psi}$ and $\ket{\phi}$ are in $\Pn$. \\
For $a$ in $(0, 1)$, we define
\[ \tau(a) \; := \; a \ket{\psi} + \sqrt{1 - a^2} \ket{\phi} . \]
For $19$ distinct values in $(0, 1)$, we assign these values to $a$, and apply \textsf{Test}$\Pn$ to $\tau(a)$.

If we find a value of $a$ for which $\tau(a)$ is not in $\Pn$, we use that value to compute the local observable $B(\frac{\alpha}{a},\frac{\beta}{\sqrt{1-a^2}})$ for the $n+1^{th}$ party, as specified in the Going Up Lemma II, and continue the recursion with the $n$-qubit state $\tau(a)$.

\item Otherwise, by the 21 Lemma and the Small Difference Lemma, the only remaining case is where $\ket{\psi}$ and $\ket{\phi}$ differ in one qubit. We have these qubits $\ket{\psi_1}$, $\ket{\phi_1}$ from our previous applications of \textsf{Test}$\Pn$.
In this final case, we can write $\ket{\omega}$ as
\[ \ket{\omega} = \ket{\Psi} \otimes \ket{\xi} \]
where $\ket{\Psi}$ is in $\PP_{n-1}$, and $\ket{\xi}$ is a 2-qubit state. 
Moreover, we have 
\[ \ket{\xi} = \alpha \ket{\psi_1}\ket{0} + \beta \ket{\phi_1}\ket{1} . \]

\item We apply the Base Case procedure to $\ket{\xi}$, which we know cannot be maximally entangled, by Step 1.
We output \textsf{Yes}, together with the two local observables for each party produced by the Hardy construction, and the $n-2$ local observables for $\ket{\Psi}$ produced by the Corollary to the Going Up lemmas. $\Box$
\end{enumerate}

The above algorithm of course involves computation over the real and complex numbers. More precisely, with the usual coding of complex numbers as pairs of reals, we require the field operations and comparison tests  on real numbers. For simplicity, we discuss the complexity of the algorithm in the Blum-Shub-Smale model of computation \cite{blum1989theory}, where we assume that arbitrary real numbers can be stored, and the above operations performed, with unit cost.
Thus the input size of an $n$-qubit state is the dimension $d = 2^n$.

The \textsf{Test}$\Pn$ subroutine performs $n-1$ partial traces over $n-1$ qubits. Each such partial trace involves computing the 4 entries of a matrix, where each entry is a sum over $2^{n-1}$ products. It also computes $O(n^2)$ partial traces over $n-2$ qubits, each of which involves computing 16 entries, each a sum over $2^{n-2}$ products. 
For an $n$-qubit input, at each level of the recursion, we call the subroutine a number of times bounded by a constant, and the recursion terminates in $O(n)$ steps.
Thus we obtain a complexity bound of $O(d \log^3 d)$ operations, in the input size $d$. Of course, in practice the limiting factor is the exponential size of the classical  representation of a quantum state.

The algorithm has been implemented in \textsf{Mathematica}\texttrademark, and has been tested on input states of up to 10 qubits.

\end{chapter}

\newpage
$ $
\newpage

\part{Second part}
\newpage
$ $
\newpage

\begin{chapter}{Overview of the Topos Approach}\label{TA}

In this chapter we shall gradually build up the necessary mathematical machinery needed to understand the main topos theoretic constructions. The abstract mathematics is presented first, and we have limited our exposition of category theoretical abstractions to those notions which are absolutely indispensable for understanding the concept of an elementary topos and the categorial definition of Gelfand duality, plus a few illuminating examples and comments. The basic category theoretical concepts introduced here can be found in any standard textbook, such as \cite{Gol, MacMoe} and, for the sake of clarity, we have chosen to follow the slightly more simplified presentation in \cite{Gol}.

We then proceed to give the technical definitions of those physical concepts which the topos approach seeks to model, again restricting ourselves to those which are absolutely necessary for the understanding of the following chapters. 

\section{The Basics}

\subsection{Categories and functors}

A category may be thought of in the first instance as a universe for a particular kind of mathematical discourse. In axiomatic terms a category $\mathscr{C}$ consists of 

\singlespacing

\begin{itemize}
\item[a)] a class $Ob(\mathscr{C})$ of \textbf{objects}
\item[b)] for each $A, B\in Ob(\mathscr{C})$ a class $\mathscr{C}(A, B)$ of \textbf{morphisms} from $A$ to $B$
\item[c)] for each $A, B, D\in Ob(\mathscr{C})$ a binary opperation (called \textbf{composition of morphisms})
\begin{equation*}
\begin{aligned}
\mathscr{C}(B, D)\times \mathscr{C}(A, B) &\longrightarrow \mathscr{C}(A, D)\\
(g, f) &\mapsto g\circ f
\end{aligned}
\end{equation*}
which satisfies the following axioms
\begin{itemize}
\item[(i)] \textbf{identity:} for each $A\in Ob(\mathscr{C})$ there exists an identity $1_A\in\mathscr{C}(A, A)$ such that 
\begin{equation*}
\begin{aligned}
f\circ 1_A&=f,\ \ \ \forall f\in \mathscr{C}(A,B)\\
1_A\circ f &=f,\ \ \ \forall f\in \mathscr{C}(B,A)
\end{aligned}
\end{equation*}
\item[(ii)] \textbf{functoriality:} for any $A,B,D,E \!\in\! Ob(\mathscr{C})$, $f\! \in\!\mathscr{C}(A,B)$, $g\! \in\! \mathscr{C}(B,D)$, $h\! \in\!\mathscr{C}(D,E)$
\begin{equation*}
h\circ(g\circ f) = (h\circ g)\circ f
\end{equation*}
\end{itemize}
\end{itemize}

\doublespacing

\noindent The opposite category of $\mathscr{C}$ is denoted by $\mathscr{C}^{op}$ and it has the same objects as $\mathscr{C}$ but the directions of the arrows are reversed. 


\begin{definition}
An arrow $m : C \rightarrow D$ is monic if for any pair of arrows $f,g:B\rightrightarrows C$ if $m\circ f = m\circ g$ then $f = g$.
\end{definition}

\noindent \textbf{Examples}

\noindent 1. One of the most familiar examples of a category is the category $\mathbf{Set}$ whose objects are sets and whose arrows are functions between sets. 

\noindent 2. If $P$ is a partially ordered set, we can regard it as a category in which the objects are the elements of $P$ and the arrows correspond to the partial order on $P$. Thus if $x\leq y$ in the poset $P$ we have an arrow $x\rightarrow y$ in the poset $P$ seen as a category.

\begin{definition}
Given two categories $\mathscr{C}$ and $\mathscr{D}$, a covariant functor $F:\mathscr{C}\longrightarrow \mathscr{D}$ is given by:

\begin{itemize}
\singlespacing
\item[a)] a map $F:Ob(\mathscr{C})\longrightarrow Ob(\mathscr{D})$
\item[b)] for all $A,B\in Ob(\mathscr{C})$, a map $F:$ $\mathscr{C}(A,B)\longrightarrow$ $\mathscr{D}(FA,FB)$ satisfying the principles of 
\begin{itemize}
\onehalfspacing
\item[i)] \textbf{identity preservation:} $F(1_A)=1_{FA}$
\item[ii)] \textbf{functoriality preservation:} $F(f\circ g)=Ff\circ Fg $
\end{itemize}
\end{itemize}

\doublespacing

\end{definition}

\noindent A contravariant functor $F:\mathscr{C}\longrightarrow \mathscr{D}$ is a covariant functor from $\mathscr{C}^{op}$ to $\mathscr{D}$. Functors can be composed in an obvious manner.

\noindent A contravariant \textbf{Set}-valued functor defined on a category $\mathscr{C}$ is called a presheaf over the base category $\mathscr{C}$. Note that this can also be regarded as a covariant functor defined on $\mathscr{C}^{op}$ with values in \textbf{Set}.

\begin{definition}
If $F$ and $G$ are (covariant) functors between the categories $\mathscr{C}$ and $\mathscr{D}$, then a natural transformation $\eta$ from $F$ to $G$ associates to every object $X$ in $\mathscr{C}$ a morphism $\eta_X : F(X) \rightarrow G(X)$ in $\mathscr{D}$ such that for every morphism $f : X \rightarrow Y$ in $\mathscr{C}$ we have $\eta_Y \circ F(f) = G(f) \circ\eta_X$; this means that the following diagram is commutative:
\[\xymatrix{F(X) \ar[r]^{F(f)} \ar[d]_{\eta_X} & F(Y) \ar[d]^{\eta_Y} \\ G(X)\ar[r]_{G(f)} & G(Y)}\]
\end{definition}

For contravariant functors the arrows $F(f)$ and $G(f)$ in the definition above will be reversed. The two functors $F$ and $G$ are called naturally isomorphic if there exists a natural transformation from $F$ to $G$ such that $\eta_X$ is an isomorphism for every object $X$ in $\mathscr{C}$.

Natural transformations act on functors just like functors act on objects of a category. Hence the collection of functors between two categories together with the natural transformations between them is a category in itself. A notable example is the collection  \textbf{Set}$^{\mathscr{C}^{op}}$ of all presheaves \textbf{Set}-valued presheaves over $\mathscr{C}$.

\begin{definition}
If $\mathscr{C}$ and $\mathscr{D}$ are categories then functors $F:\mathscr{C}\rightarrow \mathscr{D}$ and $G:\mathscr{D}\rightarrow \mathscr{C}$ form an adjunction if for any $A$ in $\mathscr{C}$ and $B$ in $\mathscr{D}$, there an isomorphism
$$\theta_{A,B} : {\mathscr{D}}(FA,B) \rightarrow {\mathscr{C}}(A,GB)$$
which is natural in both $A$ and $B$. In this case we say that $F$ is the left adjoint of $G$, or equivalently that $G$ is the
right adjoint of $F$.
\end{definition}

What is meant by `naturality in $A$' in the definition above is that given any $A'$ in $\mathscr{C}$ and any arrow $h:A'\rightarrow A$ the following diagram commutes:
\[\xymatrix{ {\mathscr{D}}(FA,B)\ar[r]^{\theta_{A,B}} \ar[d]_{(Fh)^*} &  {\mathscr{C}}(A,GB)\ar[d]^{h^*} \\  {\mathscr{D}}(FA',B) \ar[r]_{\theta_{A',B}} &  {\mathscr{C}}(A',GB)}\]
where $h^*$ denotes composition on the right by $h$. Similarly, for `naturality in $B$'.

\begin{definition}
Given two categories $\mathscr{C}$ and $\mathscr{D}$, an equivalence of categories consists of a functor $F : \mathscr{C} \rightarrow \mathscr{D}$, a functor $G : \mathscr{D} \rightarrow \mathscr{C}$, and two natural isomorphisms $\epsilon: FG \rightarrow I_{\mathscr{D}}$ and $\eta : I_{\mathscr{C}}\rightarrow GF$. Here $FG: \mathscr{D}\rightarrow \mathscr{D}$ and $GF: \mathscr{C}\rightarrow \mathscr{C}$ denote the respective compositions of $F$ and $G$, while $I_{\mathscr{C}}: \mathscr{C}\rightarrow \mathscr{C}$ and $I_{\mathscr{D}}: \mathscr{D}\rightarrow \mathscr{D}$ denote the identity functors on $\mathscr{C}$ and $\mathscr{D}$, assigning each object and morphism to itself. If $F$ and $G$ are contravariant functors one speaks of a duality of categories instead.
\end{definition}
\subsection{Limits and colimits}

It is possible to define limits and colimits in a category $\mathscr{C}$ by means of diagrams in $\mathscr{C}$. A diagram of type $J$ in $\mathscr{C}$ is a functor from $J$ to $\mathscr{C}$:
$$ F : J \rightarrow \mathscr{C}$$
The category $J$ is thought of as index category, and the diagram $F$ is thought of as indexing a collection of objects and morphisms in $\mathscr{C}$ patterned on $J$. The actual objects and morphisms in $J$ are largely irrelevant, only the way in which they are interrelated matters.

A diagram is said to be finite whenever $J$ is.

\begin{definition}\label{con}
Let $F : J \rightarrow \mathscr{C}$ be a diagram of type $J$ in a category $\mathscr{C}$. A cone to $F$ is an object $N$ of $\mathscr{C}$ together with a family $\psi_X : N \rightarrow F(X)$ of morphisms indexed by the objects of $J$, such that for every morphism $f : X \rightarrow Y$ in $J$, we have $F(f) \circ \psi_X = \psi_Y$:
\[\xymatrix{ & N \ar[ld]_{\psi_X} \ar[rd]^{\psi_Y} & \\ F(X) \ar[rr]_{F(f)} & & F(Y)}\]

\end{definition}
\begin{definition}\label{lim}

A limit of the diagram $F : J \rightarrow \mathscr{C}$ is a cone $(L, \varphi)$ to $F$ such that for any other cone $(N,\psi)$ to $F$ there exists a unique morphism $!_N : N \rightarrow L$ such that $\varphi_X \circ !_N = \psi_X$ for all $X$ and $Y$ in $J$:
\[\xymatrix{& N \ar@{.>}[d]^{!_N} \ar@/^-1pc/[ldd]_{\psi_X} \ar@/^1pc/[rdd]^{\psi_Y}  & \\ & L \ar[ld]_{\varphi_X} \ar[rd]^{\varphi_Y} & \\ F(X) \ar[rr]_{F(f)} & & F(Y)}\]
\end{definition}

The dual notions of limits and cones are colimits and co-cones. It is straightforward to obtain the definitions of these by inverting all morphisms in Definitions \ref{con} and \ref{lim}.

Important examples of limits include pullbacks, products and terminal objects, while examples of colimits include coproducts and initial objects. So, for instance, a pullback $(P,\varphi)$ is the limit of a diagram $F$ of type 
\[\xymatrix{& \CIRCLE \ar[d] \\ \CIRCLE \ar[r] & \CIRCLE }\]
in a category $\mathscr{C}$. More explicitly, if the image of $F$ in $\mathscr{C}$ is 
\[\xymatrix{& X \ar[d]^f \\ Y \ar[r]_g & Z }\]
we say that the object $P$, together with the family of arrows $\varphi=\{\varphi_X,\, \varphi_Y\}$, giving the limit of this diagram is the pullback of $f$ and $g$, and we call the diagram below
\[\xymatrix{P \ar[r]^{\varphi_X} \ar[d]_{\varphi_Y} & X \ar[d]^f \\ Y \ar[r]_g & Z }\]
a pullback square.

A product is the limit $(P,\varphi)$ of a diagram $F$ of type $\CIRCLE \ \ \CIRCLE$, while a coproduct is the colimit of a diagram of the same type. More explicitly, if the image of $F$ in $\mathscr{C}$ is $A \ \ B$ we say that the object $P$, together with the family of arrow $\varphi=\{\varphi_A,\varphi_B\}$, giving the limit of the diagram is the product of $A$ and $B$. Similarly, the object, and arrows, giving the colimit of the same diagram is called the coproduct of $A$ and $B$.

A terminal object $\mathbf{1}$ in a category $\mathscr{C}$ is the limit of the empty diagram in $C$. Thus the terminal object has the property that for any other object $N$ in $\mathscr{C}$ there is a unique arrow $!_N$ going from $N$ to $\mathbf{1}$. Similarly, an initial object $\mathbf{0}$ in a category $\mathscr{C}$ is the colimit of the empty diagram in $\mathscr{C}$ and there is a unique arrow going from $\mathbf{0}$ to each $N$ in $\mathscr{C}$.

\noindent \textbf{Examples}

\noindent 1. The terminal object in \textbf{Set} is the one-element set.\\
\noindent 2. The terminal object in \textbf{Set}$^{\mathscr{C}^{op}}$ is the functor (presheaf) which assigns the one element set $\{*\}$  to all objects in $\mathscr{C}$ and the identity on $\{*\}$ to every $\mathscr{C}$-arrow.

\subsection{Lattices and Heyting algebras}

A lattice is a partially ordered set in which any two elements have a least upper bound (also called a join) as well as a greatest lower bound (also called a meet). The symbols $\wedge$ and $\vee$ are used to denote these two operations. A lattice can also be defined in categorial language as a poset which, when seen as a category, has all finite limits and colimits. From this point of view, the lattice operations $\wedge$ and $\vee$ correspond to the categorical notions of product and coproduct.

A morphism of lattices is a monotone function which preserves both (binary) meets and (binary) joins.

A lattice is called complete if it has all limits and colimits. A complete lattice has an initial element given by the empty join, which we denote by \textbf{0}, and dually, a terminal element given by the empty meet, which we denote by \textbf{1}.

\begin{definition}
An element $x$ of a lattice $L$ with global minimum \textbf{0} is called an \textbf{atom} if any element $y$ in $L$ which is strictly smaller than $x$ is equal to \textbf{0}. A lattice is \textbf{atomic} if for every non-zero element $y\in L$ there is some atom $x$ such that $x\leq y$. An \textbf{atomistic} lattice is an atomic lattice such that each element is a join of atoms.
\end{definition}

\begin{definition}
An orthocomplemented lattice or ortholattice is a bounded lattice which is equipped with a function $\perp$ that maps each element $x$ to an orthocomplement $x^\perp$ in such a way that the following axioms are satisfied:

\begin{enumerate}
\singlespacing
 \item[(i)] complement law: $x^\perp\vee x =1$ and $x^\perp\wedge x=0$
 \item[(ii)] involution law: $x^{\perp\perp}=x$
 \item[(iii)] order-reversing: if $x\leq y$ then $y^\perp\leq x^\perp$.
\end{enumerate}

\doublespacing

\end{definition}

We call a lattice complemented if it is equipped with a (not necessarily unique) function that satisfies only the first of the three conditions above. We say that a lattice $L$ is distributive if, for all $x, y, z\in L$,
$$x \wedge (y \vee z) = (x \wedge y) \vee (x \wedge z)$$ and hence also 
$$x\vee (y\wedge z) = (x\vee y)\wedge (x\vee z)$$
We say that two elements $x$ and $y$ of $L$ are orthogonal if $x\leq y^\perp$.

\begin{definition}
A Boolean algebra is a complemented distributive lattice (and thereby also orthocomplemented).
\end{definition}

Stone's representation theorem \cite{stone} for Boolean algebras states that every Boolean algebra is isomorphic to the Boolean algebra of all clopen subsets of some (compact totally disconnected Hausdorff) topological space called the Stone space.

Boolean algebras are intimately connected with classical propositional logic. The Lindenbaum algebra given by the set of sentences in propositional calculus modulo tautology has the structure of a Boolean algebra \cite{tarski}. A truth assignment in propositional calculus is then a homomorphism from the Lindenbaum algebra to the two-element Boolean algebra.

\begin{definition}
A Heyting algebra is a lattice $H$ containing a bottom element with the property that for any two elements $x,y \in H$ there exists an exponential $x\Rightarrow y$ characterised by
$$z \leq (x \Rightarrow y) \iff z \wedge x \leq y$$
A Heyting algebra is called \textit{complete} if the underlying lattice is complete.
\end{definition}

The exponential $x\Rightarrow y$ is also referred to as the pseudo-complement of $x$ relative to $y$. When $y$ is equal to the bottom element, $x\Rightarrow \mathbf{0}$ is referred to as the pseudo-complement of $x$. In a Boolean algebra, the pseudo-complement is actually the complement of an element, so every Boolean algebra is also a Heyting algebra. However, while the definition of the exponential implies that the meet of an element and its pseudo-complement is always equal to \textbf{0}, their join may be less than \textbf{1}. This means that there are Heyting algebras which are not Boolean.

It can be shown that the underlying lattice of a Heyting algebra is distributive. Moreover, if the underlying lattice is complete, then for any family $(h_i)_{i\in I}$ of elements of $H$, the following infinite distributivity law holds:
$$\forall h \in H,\ \ h\wedge\bigvee_{i\in I} h_i= \bigvee_{i\in I} (h\wedge h_i)$$

Just like classical propositional logic is modelled by Boolean algebras, the Lindenbaum algebra of intuitionistic logic has the structure of a Heyting algebra. Intuitionistic logic can have more than two truth values, as we will see in the following section.

\subsection{Topoi}

An (elementary) topos is a category which is similar to the category of sets and functions in a sense which will be made precise in this section. The elementary topoi which we will be considering in this thesis are all Grothendieck topoi of \textbf{Set}-valued presheaves over a base category given by a partially ordered set. We will pay special attention to topoi of presheaves throughout the remainder of this section.

We start our discussion by extending the notion of subset to a general category. Thus we define, as in \cite{Gol}, a subobject of an object $A$ to be an object $S$ from the same category such that there exists a monic $m:S\rightarrow A$. In \textbf{Set} the collection of subobjects of a given set $A$ is equivalent to the power set of $A$. The set-theoretic operations of union, intersection and complement give the structure of a Boolean algebra to this collection of subobjects.

In \textbf{Set}$^{\mathscr{C}^{op}}$, a subobject of a presheaf $\underline{A}$ is another presheaf $\underline{S}$ which associates a subset $\underline{S}_C$ of $\underline{A}_C$ to every object of $\mathscr{C}$ in a way which is compatible with the restrictions $\underline{A}(f)|_{_{\underline{S}_C}}$ for all arrows $C\stackrel{f}{\longrightarrow} D$ in $\mathscr{C}^{op}$. In other words, there is a natural transformation between the presheaves $\underline{A}$ and $\underline{S}$ given by the collection of monics:
\[\xymatrix{\underline{S}_C \ar[rr]^{\underline{A}(f)|_{_{\underline{S}_C}}} \ar@{>->}[d] & & \underline{S}_D \ar@{>->}[d] \\ \underline{A}_C \ar[rr]^{\underline{A}(f)} && \underline{A}_D}\]

A point of an object $A$ in a category $\mathscr{C}$ is an arrow (which is necessarily monic) from the terminal object in $\mathscr{C}$ to $A$. For the category $\mathbf{Set}$ the terminal object is the one object set $\{*\}$ and so the categorical notion of point coincides with the familiar notion of an element of a given set. In \textbf{Set}$^{\mathscr{C}^{op}}$ a point of a presheaf is a collection of set-theoretic points which is compatible with the restriction maps. It is also called a global section.

It is a well-known fact that in \textbf{Set}, a subset $S\subseteq A$ is uniquely determined by its characteristic function $\xi_S$. This connection can be generalised to arbitrary categories, as long as they contain what is known as a subobject classifier.

\begin{definition}
In a category $\mathscr{C}$ with finite limits, a subobject classifier is an object $\Omega$ and a monic $true : \mathbf{1} \rightarrow \Omega$ going from the terminal object in $\mathscr{C}$ to $\Omega$ such that for any subobject $m : S \rightarrow A$ there is a unique arrow $\chi_S:A \rightarrow\Omega$ making the following square a pullback:

\[\xymatrix{S \ar[r]^{!_S} \ar@{>->}[d]_{m} & \mathbf{1} \ar[d]^{true} \\ A \ar[r]_{\chi_S} & \Omega}\]
\end{definition}

One can readily check that in $\mathbf{Set}$ the subobject classifier is given by $\Omega=\{0,1\}$ together with the map $true(*)=1$, and $\chi_S$ above is simply the characteristic function corresponding to $S$.

Using the fact that the pullback of a monic is itself monic one can prove that, in any category where such constructions are possible, there is a bijective correspondence between the collection of subobjects of an object $A$ and the collection of arrows between $A$ and $\Omega$. In particular
$$Sub(\mathbf{1})\cong\mathscr{C}(1,\Omega)$$

The collection of subobjects of a given object can be given the structure of a Heyting algebra using a suitable generalisation of the set-theoretic operations of union, intersection and complement (see, for example Ch. 7 in \cite{Gol}). This means that the category theoretic points of $\Omega$ are in bijective correspondence with the subobjects of the terminal object, and as such they form a Heyting algebra. The points of $\Omega$ can be thought of as the truth-values of the intuitionistic logic associated with the topos in which $\Omega$ is defined. 

We now describe the subobject classifier in \textbf{Set}$^{\mathscr{C}^{op}}$. For this we need to introduce the notion of sieve. For a $\mathscr{C}$-object $A$, let $\mathfrak{S}_A$ be the collection of all $\mathscr{C}$-arrows with co-domain $A$. Note that $\mathfrak{S}_A$ is ``closed under right composition", i.e. if an arrow $B\stackrel{g}{\longrightarrow} C$ in $\mathscr{C}^{op}$ can be composed with $f\in\mathfrak{S}_A$ then $f\circ g\in\mathfrak{S}_A$ since the arrow $f\circ g$ has co-domain $A$.

\begin{definition}
A sieve on $A$ is a subset $\mathfrak{s}$ of $\mathfrak{S}_A$ which is itself closed under right composition.
\end{definition}

Note that for every object $A$ there are at least two $A$-sieves $\mathfrak{S}_A$ and $\emptyset$, the empty sieve. 

Now we are ready to define the subobject classifier in \textbf{Set}$^{\mathscr{C}^{op}}$ as the presheaf $\Omega$ which associates the set of sieves on $A$ to every $\mathscr{A}$-object $A$. To each arrow $A\stackrel{f}{\longrightarrow} B$ in $\mathscr{C}$, the subobject classifier associates a function 
\begin{align*}
\Omega(f):\Omega(B)&\longrightarrow\Omega(A)\\
\mathfrak{s}&\longmapsto \mathfrak{s}_A=\{g\in\mathfrak{S}_A~|~ f\circ g\in \mathfrak{s}\}
\end{align*}

The set $\mathfrak{s}_A$ is closed under right composition because if $h$ can be composed with some $g\in\mathfrak{s}_A$ then $f\circ g\circ h\in\mathfrak{s}$ because sieves on $B$ are closed under right composition, and so $g\circ h\in \mathfrak{s}_A$, as required.

The arrow $true:\mathbf{1}\rightarrow\Omega$ is defined as the natural transformation that has components $true_A: \{*\}\rightarrow \Omega(A)$ given by $true_A(*)=\mathfrak{S}_A$.

\noindent \textbf{Examples}

\noindent 1. In the limiting case when the category $\mathscr{C}$ has only one object $A$ and only one arrow, namely the identity on $A$, there are precisely two $A$-sieves: $\mathfrak{S}_A=\{id_A\}$ and $\emptyset$. In this case, the definition of the subobject classifier in \textbf{Set}$^{\mathscr{C}^{op}}$ reduces to the classical one, as expected. 

\noindent 2. When the category $\mathscr{C}$ is a poset, there is an arrow from $A$ to $B$ only when $A\leq B$. We can identify arrows to $B$ with their domains. A sieve on $B$ is therefore the equivalent of a lower set on $B$ in order-theoretic terms. To see how subobjects are classified in this case, consider a subpresheaf $\underline{X}$ of some presheaf $\underline{Y}$ and, for simplicity, assume that the monic between them is given by inclusions at every stage. For every stage $A$ an element of $\underline{Y}_A$ is classified according to whether that element together with its restrictions to all possible $\underline{Y}_B$  belong to the sets $\underline{X}_A$ or $\underline{X}_B$ respectively. This classification yields a sieve on $A$ for every point because if $y$ restricts to an element of $\underline{X}_B$ for $B\leq A$ then it also restricts to an element of $\underline{X}_C$ for any $C\leq B$. The sieve, which is a lower set on $A$, possibly empty, tells us from which stage onwards we can expect to find the restriction of $y\in \underline{Y}_A$ within the subpresheaf $\underline{X}$.


We need one more concept: exponentiation, before we are able to state the formal definition of a topos. In the category of sets the exponential $B^A$ is the set of all functions from $A$ to $B$. To characterise $B^A$ by arrows note that associated with $B^A$ is a special arrow, the evaluation function:
$$ev:B^A\times A\rightarrow B$$
given by the rule $ev(f,x)=f(x)$. The evaluation function is distinguished from the set of all functions of the form $g:C\times A\rightarrow B$ because given any such $g$, there is exactly one function $\hat{g}:C\rightarrow B^A$ making the diagram below commute:
\[\xymatrix@R=2ex{B^A\times A \ar[rd]^{ev}& \\ & B \\ C\times A \ar[ru]_{g} \ar@{.>}[uu]^{\hat{g}\times id_A} & }\]

By abstraction, we say that a category $\mathscr{C}$ has exponentiation if it has all binary products and if for any $\mathscr{C}$-objects $A$ and $B$ there is a $\mathscr{C}$-object $B^A$ and a $\mathscr{C}$-arrow $ev:B^A\times A\rightarrow B$ which enjoys the universal property we have described in the diagram above. If such an arrow exists, the assignment of $\hat{g}$ to $g$ establishes an adjunction
$$\mathscr{C}(C\times A, B) \cong \mathscr{C}(C, B^A)$$

\singlespacing

\begin{definition}

An elementary topos is a category which has
\begin{enumerate}
 \item all limits and colimits taken over finite index categories
 \item exponentials
 \item a subobject classifier
\end{enumerate}

\end{definition}

\doublespacing

\section{C*-Algebras and von Neumann Algebras}

In this section we shall briefly present a selection of standard algebraic results which will allow the reader to understand the statement and implications Gelfand duality. This important duality will then be used in the following section to define the central construction of the Topos Approach - the spectral presheaf. A more detailed presentation of the concepts introduced in this section can be found in \cite{KadRin} and \cite{Dix69}.

\begin{definition}
A Banach algebra $B$ is an associative algebra over the real or complex numbers, whose underlying vector space is a Banach space. The algebra multiplication and the Banach space norm must be related by the following inequality called submultiplicativity:
$$\forall x,y\in B,\ \ \ ||xy||\leq||x||~||y||$$
This ensures that the multiplication operation, seen as a function from $B\times B$ to $B$ is continuous with respect to the norm topology.
\end{definition}

The prototypical example of a commutative Banach algebra is the space $C_0(X)$ of complex-valued continuous functions on a locally compact Hausdorff space, which vanish at infinity. The algebra operations are pointwise addition and multiplication of functions. When the space $X$ is compact, this is equivalent to the algebra $C(X)$ of continuous complex valued functions on $X$.

\begin{definition}
A $C^*$-algebra $A$ is a Banach algebra over the field of complex numbers, together with an involution map $^*:A\rightarrow A$ satisfying the following properties:
\begin{itemize}
\singlespacing
\item $(x^*)^*=x$
\item $(x+y)^*=x^*+y^*$
\item $(\lambda x)^*=\bar{\lambda}x^*$
\item $(xy)^*=y^*x^*$
\item $||x^*x||=||x^*x||=||x||\ ||x^*||$
\end{itemize}
for all elements $x,y$ of the algebra. 
\end{definition}

A $C^*$-algebra is called \textit{unital} if it contains a unit element $u$ such that 
$$ua=au=a, \ \ \forall a\in A$$

\doublespacing
The prototypical example of a non-commutative $C^*$-algebra is the complex algebra of continuous bounded linear operators on a Hilbert space $H$ with involution given by the adjoint, which we shall denote by ${\mathcal B}(H)$.  A subset of ${\mathcal B}(H)$ is a $C^*$-algebra if it is a topologically closed set in the norm topology of operators, and closed under involution, that is under the operation of taking adjoints of operators. Note that if the Hilbert space $H$ is finite-dimensional, then $\mathcal{B}(H)\simeq M_n(\mathbb{C})$.

The Banach algebra $C(X)$ is also an example of a $C^*$-algebra since the operation of pointwise complex conjugation satisfies the involution properties above.

\begin{definition}
A $^*$-homomorphism is a homomorphism between two $C^*$-algebras which is compatible with the involution maps of the two algebras i.e., $\phi(x^*)=\phi(x)^*$ for all $x\in A$.
\end{definition}

\begin{definition}
An element $x$ of a $C^*$-algebra is called self-adjoint if it satisfies $x^*=x$.
\end{definition}

\begin{definition}
If $A$ is a unital Banach algebra we define the spectrum of an element $x\in A$ to be the set of scalars $\lambda$ such that $x-\lambda 1_A$ is not invertible. We denote it by $\sigma(x)$.
\end{definition}


Note that the spectrum of a self-adjoint element of a $C^*$-algebra consists only of real numbers. As we have mentioned in the introduction, self-adjoint operators of $C^*$-algebras have a special significance from the physical point of view: they represent the observables of the system described by the respective algebra. The spectrum of a self-adjoint operator contains the possible values which the observable represented by that operator can take upon measurement.

\begin{definition}
A $C^*$-algebra $B$ is called a \textbf{$W^*$-}algebra if as a Banach space it is the dual of a Banach space (i.e. if there exists a Banach space $B_*$ such that $(B_*)^*=B$, where $(B_*)^*$ is the dual of the Banach space $B_*$). $B_*$ is called the \textbf{predual of $B$}.

A \textbf{von Neumann algebra} is a $C^*$-subalgebra of some $\mathcal B(H)$ that is closed in the weak (and equivalently, the strong) operator topology on $\mathcal B(H)$.
\end{definition}

Von Neumann algebras arise as (faithful) representations of $W^*$-algebras by linear operators on a Hilbert space, and $W^*$-algebras can be seen as abstract, representation-free von Neumann algebras.

An important example of a von Neumann algebra is the algebra $\mathcal{B}(H)$ of all bounded operators on a Hilbert space $H$. Moreover it is known that every von Neumann algebra is isomorphic to a subalgebra of some $\mathcal{B}(H)$ for a suitable Hilbert space $H$.

Recall that a self-adjoint element of a $C^*$-algebra or von Neumann algebra is called a projection if it is idempotent. We can define a partial order $\leq$ on the projections of a von Neumann algebra such that $p\leq q$ iff $pq=qp=p$. Note that the projections of a von Neumann algebra form a complete orthomodular lattice.

\begin{definition}
An \textbf{atom} of a von Neumann algebra $N$ is an atom of the projection lattice $\mathcal{P}(N)$. A von Neumann algebra is called \textbf{atomic} if every non-zero projection is greater than an atom of that algebra.
\end{definition}

\begin{definition}
A linear functional $f$ on a $C^*$-algebra is called positive if $f(x^*x)\geq 0$ for all non-zero elements $x$ of the algebra.
\end{definition}

A bounded linear functional $f$ is positive if and only if $||f||=f(1)$ (see Theorem 4.3.2 in \cite{KadRin}). If $f$ and $g$ are two linear functionals on an algebra, we say that $f$ is \textit{majorized} by $g$ if $f-g$ is positive.

Note that every positive linear functional defines an inner product on $A$:
$$\forall x,y\in A\ \ \left\langle x|y\right\rangle :=f(y^*x)$$
The Cauchy-Schwartz inequality (see Proposition 4.3.1 in \cite{KadRin}) then gives $|f(y^*x)|^2\leq f(x^*x)f(y^*y)$ for all $x,y\in A$.

\begin{definition}
A state of an algebra is a positive linear functional of norm $1$.
\end{definition}

\begin{definition}
The positive linear functionals of norm $\leq 1$ form a convex subset of the dual space of a $C^*$-algebra $A$. The non-zero extremal points of this subset are called pure states.
\end{definition}

In the case of a commutative $C^*$-algebra the pure states are precisely those positive linear functionals of norm $1$ which are multiplicative. For non-commutative algebras this does not hold in general.

\subsection{Gelfand duality and Gelfand spectra}




\begin{definition} \label{Def_GelfandSpectrum}
The \textbf{Gelfand spectrum} of a commutative $C^*$-algebra $A$ is the set of pure states of that algebra, equipped with the weak*-topology. We will denote it by $\mathrm{Spec}\,  A$.
\end{definition}

The spectrum of a unital $C^*$-algebra is a compact Hausdorff space with respect to the weak*-topology. The Gelfand spectrum of a commutative $C^*$-algebra $A$ and the spectrum of an element $x\in A$ are related in the following way: 
$$\sigma(x)=\{f(x)\ |\ f\in\mathrm{Spec}\,  A\}$$

\begin{theorem}[Gelfand and Naimark, 1943] \label{GNTh}For every unital and commutative $C^*$-algebra $A$, there exists a *-isomorphism between $A$ and $C(\mathrm{Spec}\,  A)$:
\begin{align*}
{\mathcal G}:A&\longrightarrow C(\mathrm{Spec}\,  A)\\
a&\longmapsto \ \ \overline{a}:\mathrm{Spec}\,  A\rightarrow \ \mathbb{C}\\
&\ \ \ \ \ \ \ \ \ \ \ \ \ f\ \ \mapsto f(a)
\end{align*}
called the \textbf{Gelfand representation of $A$}. The functional $\overline{a}$ is called the \textit{Gelfand transform} of the element $a$. 
\end{theorem}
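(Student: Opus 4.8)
The plan is to identify $\mathrm{Spec}\,  A$ with the space of \emph{characters} of $A$ — the nonzero multiplicative linear functionals — which, as noted above, coincide with the pure states in the commutative case, and then to verify in turn that $\mathcal{G}$ is a unital $*$-homomorphism, that it is isometric (hence injective), and finally that it is surjective by an appeal to the Stone--Weierstrass theorem. Throughout I would freely use the facts recorded above: that $\mathrm{Spec}\,  A$ is a compact Hausdorff space in the weak*-topology, that $\sigma(a) = \{ f(a) : f \in \mathrm{Spec}\,  A \}$, and that the spectrum of a self-adjoint element is real.

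First I would record that each $f \in \mathrm{Spec}\,  A$ is multiplicative, so that $\overline{ab}(f) = f(ab) = f(a)f(b) = \overline{a}(f)\,\overline{b}(f)$ and $\overline{1}(f) = f(1) = 1$; together with linearity this shows $\mathcal{G}$ is a unital algebra homomorphism, and $\overline{a}$ lies in $C(\mathrm{Spec}\,  A)$ because $f \mapsto f(a)$ is weak*-continuous on the compact space $\mathrm{Spec}\,  A$. For the $*$-compatibility, write any $a = x + iy$ with $x = \tfrac{1}{2}(a + a^*)$ and $y = \tfrac{1}{2i}(a - a^*)$ self-adjoint; it then suffices to show $f(b) \in \mathbb{R}$ for self-adjoint $b$, and this is immediate from $f(b) \in \sigma(b) \subseteq \mathbb{R}$. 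Hence $f(a^*) = f(x) - i f(y) = \overline{f(a)}$, that is $\mathcal{G}(a^*) = \overline{\mathcal{G}(a)}$.

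The isometry is the crux. By the spectral relation, $\|\overline{a}\|_\infty = \sup_{f} |f(a)| = r(a)$, the spectral radius of $a$, so I must establish $r(a) = \|a\|$ for every $a$. In the commutative setting every element is normal, and I would first treat a self-adjoint $b$: the $C^*$-identity gives $\|b^2\| = \|b^*b\| = \|b\|^2$, which iterates to $\|b^{2^n}\| = \|b\|^{2^n}$, whence the spectral-radius formula yields $r(b) = \lim_{n} \|b^{2^n}\|^{1/2^n} = \|b\|$. For general $a$, multiplicativity of the characters gives $r(a^*a) = \sup_f |f(a^*a)| = \sup_f |\overline{f(a)}f(a)| = \sup_f |f(a)|^2 = r(a)^2$, and since $a^*a$ is self-adjoint the previous case gives $\|a\|^2 = \|a^*a\| = r(a^*a) = r(a)^2$. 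Thus $\|\overline{a}\|_\infty = r(a) = \|a\|$, so $\mathcal{G}$ is isometric and in particular injective.

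Finally, for surjectivity I would observe that the image $\mathcal{G}(A)$ is a subalgebra of $C(\mathrm{Spec}\,  A)$ that contains the constant function $\overline{1} = 1$, is closed under complex conjugation (by the $*$-compatibility just proved), and separates the points of $\mathrm{Spec}\,  A$, since $f \neq g$ forces $f(a) \neq g(a)$ for some $a$, i.e.\ $\overline{a}(f) \neq \overline{a}(g)$. The complex Stone--Weierstrass theorem then shows $\mathcal{G}(A)$ is dense in $C(\mathrm{Spec}\,  A)$; but $\mathcal{G}(A)$ is complete, being the isometric image of the Banach space $A$, hence closed, and therefore all of $C(\mathrm{Spec}\,  A)$. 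I expect the isometry step to be the main obstacle: it is exactly here that the $C^*$-identity is used in an essential way — a general commutative Banach algebra need not satisfy $r(a) = \|a\|$ — whereas the remaining steps are routine once the characters are known to be real on self-adjoint elements.
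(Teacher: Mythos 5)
Your proof is correct, but there is nothing in the paper to compare it against: the thesis states this theorem as classical background (attributed to Gelfand and Naimark, with references to Kadison--Ringrose and Dixmier) and gives no proof of it, proceeding directly to its categorical consequences. Your argument is the standard textbook proof, and it is complete relative to the facts the paper records just before the theorem statement: multiplicativity of pure states in the commutative case gives the homomorphism property; reality of $\sigma(b)$ for self-adjoint $b$ together with $\sigma(x)=\{f(x) : f\in\mathrm{Spec}\, A\}$ gives $*$-compatibility; the $C^*$-identity plus the Beurling--Gelfand spectral radius formula, applied first to self-adjoint elements and then to $a^*a$, gives the isometry; and Stone--Weierstrass combined with completeness of the isometric image gives surjectivity. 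You correctly identify the isometry as the step where the $C^*$-identity enters essentially, which is indeed what separates $C^*$-algebras from general commutative Banach algebras, where the Gelfand representation need be neither injective nor isometric.
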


Let \textbf{UComC}$^*$\textbf{Alg} denote the category of unital commutative $C^*$-algebras, and let \textbf{KHausSp}$^{op}$ denote the opposite of the category of compact Hausdorff spaces. A direct consequence of Theorem \ref{GNTh} is that these are equivalent categories. The equivalence is given by the following pair of functors:
\[\xymatrix{ \mathbf{UComC}^*\mathbf{Alg}\ar@<0.5ex>[r]^\Sigma & \mathbf{KHausSp}^{op} \ar@<0.5ex>[l]^{C(-)}}\]
\vskip -40pt
\begin{align*}
A\ \ \ \ \ \ \ \ &\longrightarrow \ \ \ \ \ \ \ \mathrm{Spec}\,  A\\
C(X)\ \ \ \ \ &\longleftarrow \ \ \ \ \ \ \ X\\
\end{align*}

Both functors are contravariant and act in the following way: if $\varphi: A\rightarrow B$ is a unital *-homomorphism between two $C^*$-algebras, $A$ and $B$, then $\Sigma(\varphi)$ is given by
\begin{align*}
\Sigma(\varphi):\mathrm{Spec}\,  B&\longrightarrow \mathrm{Spec}\,  A\\
f&\longmapsto f\circ\varphi.
\end{align*}
Similarly, if $\phi:X\rightarrow Y$ is a morphism between two compact Hausdorff spaces, $X$ and $Y$, then $C(\phi)$ is given by
\begin{align*}
C(\phi):C(Y)&\longrightarrow C(X)\\
g&\longmapsto g\circ\phi.
\end{align*}



\section{The spectral presheaf}\label{spectral}

In classical mechanics, Gelfand duality establishes a correspondence between the state space $S$ of a physical system, which at the most basic level can be seen as a set, and the commutative  $C^*$-algebra of observables, which is given by the collection of real-valued functions 
\begin{equation}\label{f_A}
f_A:S\rightarrow \Bbb{R}
\end{equation}
on state space, under the operations of pointwise addition and multiplication. Each function $f_A$ corresponds to an observable of the classical system. In this interpretation of classical mechanics, propositions about the system correspond to measurable subsets of $S$ of the form $f_A^{-1}(D)$. Such a subset represents the proposition asserting that the observable $A$ takes values in the subset $D$ of the real numbers.

Within the topos approach \cite{b1,b2,b3,b4, d1,d2,d3,d4}, the spectral presheaf associated with the non-commutative von Neumann algebra $N$ of observables of a quantum system is the analogue of the state space of a classical system. 

We denote the set of all commutative subalgebras (or contexts) of $N$, minus the trivial one $V_0=\mathbb{C}1$, by $\mathcal{V}(N)$. This is a partially ordered set under inclusion, and as such it forms a category. We can use Gelfand duality to associate a Hausdorff space to each context, and this collection of Hausdorff spaces forms the spectral presheaf.

\begin{definition}
The \textbf{spectral presheaf} $\underline{\Sigma}^N$ of a given von Neumann algebra $N$ is the following contravariant functor from the category $\mathcal{V}(N)$ to the category of sets:

\begin{enumerate}
\item[a)] on objects: for all $V\in \mathcal{V}(N)$, let $\underline{\Sigma}^N_V$ be the Gelfand spectrum of $V$, i.e. the set of multiplicative positive linear functionals of norm one, or equivalently, the set of pure states on $V$, equipped with the weak-* topology
\item[b)] on arrows: for all inclusions $i_{VV'}:V'\hookrightarrow V$, let $\underline{\Sigma}^N(i_{VV'}): \underline{\Sigma}^N_V \rightarrow \underline{\Sigma}^N_{V'}$ be the function that sends each pure state $f$ to its restriction $f|_{V'}$ to the
smaller algebra. This function is well-known to be continuous and surjective.
\end{enumerate}
\end{definition}

When no confusion arises we will simply write $\underline{\Sigma}$ instead of $\underline{\Sigma}^N$.

For the sake of obtaining a better understanding of this abstract construction, we consider a concrete von Neumann algebra $N:=M_n(\Bbb{C})$, which corresponds to a quantum system described by the Hilbert space $\mathcal{H}:=\Bbb{C}^n$, and we show what the spectral presheaf looks like in this particular case. 

Let $(\psi_1,\psi_2,\ldots,\psi_n)$ be an orthonormal basis of $\mathcal{H}$, and let $(P_1,P_2,\ldots,P_n)$ be the $n$ projections onto the one-dimensional subspaces spanned by each of the basis vectors. These projections are pairwise orthogonal, i.e., $P_iP_j=\delta_{ij}P_i$.

Using von Neumann's double commutant construction (see below), we can define the abelian subalgebra of $N$ generated by the $n$ projections considered above, $V=\{P_1,P_2,\ldots,P_n\}''$.

\begin{definition}
If $\mathcal{B}(H)$ is the algebra of bounded operators on some Hilbert space $H$ and $\mathcal{F}\subseteq \mathcal{B}(H)$, the \textbf{commutant} of $\mathcal{F}$ is the subset of $\mathcal{B}(H)$ consisting of all elements that commute with every element of $\mathcal{F}$, that is
$$\mathcal{F}'=\{T\in \mathcal{B}(H)~|~TS=ST,\ \forall S\in\mathcal{F}\}$$
The \textbf{double commutant} of $\mathcal{F}$ is just $(\mathcal{F}')'$ and is usually denoted by $\mathcal{F}''$.
\end{definition}
 

With respect to the basis $(\psi_1,\psi_2,\ldots,\psi_n)$, the abelian algebra $V$ is equal to the algebra $D_n$ of diagonal $n\times n$ matrices with complex entries on the diagonal. In this way, every orthonormal basis of $\Bbb{C}^n$ determines a maximal abelian subalgebra of $N$, and all maximal abelian subalgebras of $N$ are isomorphic to each other.  

The abelian von Neumann subalgebras of $V\simeq D_n$ above are the diagonal matrices with $k$ independent entries. A diagonal matrix with $k$ independent entries corresponds to a collection $\mathcal{S}$ of $k$ pairwise disjoint sets $S_1, S_2,\ldots, S_k$ whose union is the set $\{1,2,\ldots, n\}$. Each set in $\mathcal{S}$ contains the indices of those positions on the diagonal which have the same value. For example, the maximal algebra $D_n$ corresponds to the collection
$$\left\{\ \{1\}, \{2\}, \ldots, \{n\}\ \right\}$$

The minimal algebra $\Bbb{C} I_n$ corresponds to the collection
$$\left\{\ \{1,2,\ldots,n\} \right\}$$
Recall that this algebra is excluded by convention from the poset $\mathcal{V}(N)$.

Given a collection $\mathcal{S}$ containing $k$ sets $S_1,\ldots S_k$, for every $i$ from $1$ to $k$ we can construct the projections $$Q_i:=\sum_{j\in S_i} P_j$$ The algebra corresponding to $\mathcal{S}$ is
$$V_{\mathcal{S}}=\{Q_1,Q_2,\ldots, Q_k\}''$$
Of course, there are other projections whose double commutant would give the same algebra, but the advantage of using this particular set of projections is that they form an `orthonormal basis' for $V_{\mathcal S}$ in the sense that the projections are pairwise orthogonal and sum up to the identity $1$.

The Gelfand spectrum of $V_{\mathcal{S}}$ hence contains $k$ elements and is equipped with the discrete topology. The spectral elements are linear multiplicative functionals, and they are determined by their behaviour on the `basis' $Q_1,... Q_k$, that is
$$f_i(Q_j)=\delta_{ij},\ \  i\in \{1,2,\ldots,k\}$$
If $A\in V_{\mathcal{S}}$ is an arbitrary operator, $A=\sum_{i=1}^k a_iQ_i$ for some (unique) complex coefficients $a_i$. For each $i$, linearity gives us $f_i(A)=a_i$. The $a_i$s are the eigenvalues of $A$ and if $A$ is self-adjoint, they are real numbers.

If $V_{\mathcal{S}}\subseteq V_{\tilde{\mathcal{S}}}$, then $\tilde{\mathcal{S}}$ is a refinement of $\mathcal{S}$ in the following sense: every set $S_i\in\mathcal{S}$ can be written as the union of $t_i\geq 1$ pairwise disjoint sets $\tilde{S}^i_1, \tilde{S}^i_2\ldots,\tilde{S}^i_{t_i}$ such that 
$$\tilde{\mathcal{S}}=\left\{\ \tilde{S}^1_i, \tilde{S}^1_2,\ldots \tilde{S}^1_{t_1},\ \ \ \tilde{S}^2_i, \tilde{S}^2_2,\ldots \tilde{S}^2_{t_2},\ \ \ \ldots, \ \ \ \tilde{S}^k_i, \tilde{S}^k_2,\ldots \tilde{S}^k_{t_1} \right\}$$
where all the sets in $\tilde{\mathcal{S}}$ are pairwise disjoint. For each $i\in\{1,\ldots, k\}$ and $j\in\{1,\ldots,t_i\}$ define the projection
$$Q^i_j:=\sum_{x\in \tilde{S}^i_j}P_x$$
The collection of all projections of this form generates the algebra $V_{\tilde{\mathcal{S}}}$ via the double commutant construction. The pure states of $V_{\tilde{\mathcal{S}}}$ are therefore given as
$$\tilde{f}^i_j(Q^u_v)=\delta_{iu}\delta_{jv}$$
The restriction map going from $\Sigma_{V_{\tilde{\mathcal{S}}}}$ to $\Sigma_{V_{\mathcal{S}}}$ takes the spectral element $\tilde{f}^i_j$ to the element $f_i$.

\subsection{Daseinisation}\label{sec:CPersp}

We can use the spectral theorem to interpret projections of a von Neumann algebra $N$ as propositions of the form ``$A\varepsilon\Delta$", that is propositions of the form ``the physical quantity $A$, which is
represented by the self-adjoint operator $A\in N$, has a value in the set Borel set $\Delta$". More precisely, each projection corresponds to an equivalence class of such propositions.

If we take a commutative subalgebra $V$ of $N$, every state $f\in \mathrm{Spec}\, V$ of $V$ gives us a way to assign truth values to propositions which involve quantities
represented by self-adjoint operators from $V$. Any such $f$ can take only one of the two values $0,1$ when applied to a projection $P\in V$, since
$$f(P)=f(P^2)=f(P)f(P)$$
So we can assign to those propositions which correspond to the projection $P$ the value true if $f(P)=1$, and false if $f(P)=0$. We know from the Kochen-Specker theorem that, under certain natural conditions, it
would not be possible to make such truth-value assignments for the projections of the non-commutative algebra $N$ (unless $N$ was a type $I_2$-algebra).

The projections in a commutative von Neumann algebra $V$ correspond bijectively to clopen subsets of $\mathrm{Spec}\, V$:

\begin{proposition}\label{alpha}
If $\mathcal{P}(V)$ is the lattice of all projections in $V$ and $Cl(\mathrm{Spec}\, V)$ is the lattice of clopen subsets of $\mathrm{Spec}\, V$, then the map 
\begin{align*}
\alpha_V: \mathcal{P}(V)&\rightarrow Cl(\mathrm{Spec}\, V) \\
P&\mapsto S_{P}:=\{f\in \mathrm{Spec}\, V~|~f(P)=1\}
\end{align*}
is a lattice isomorphism.
\end{proposition}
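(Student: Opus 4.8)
The plan is to reduce everything to Gelfand duality (Theorem \ref{GNTh}), which identifies the unital commutative $C^*$-algebra $V$ with $C(\mathrm{Spec}\, V)$ via the $*$-isomorphism $\mathcal{G}: a \mapsto \overline{a}$, where $\overline{a}(f) = f(a)$. The central observation I would establish first is that a projection in $V$ corresponds under $\mathcal{G}$ precisely to the characteristic function of a clopen subset of $\mathrm{Spec}\, V$. Indeed, if $P \in \mathcal{P}(V)$ then $\overline{P} = \mathcal{G}(P)$ is self-adjoint, hence real-valued, and idempotent, since $\overline{P}^2 = \overline{P^2} = \overline{P}$; a continuous idempotent real-valued function can take only the values $0$ and $1$, so $\overline{P} = \chi_{S_P}$ with $S_P = \overline{P}^{-1}(\{1\}) = \{f \mid f(P) = 1\}$. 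As $\{1\}$ is clopen in the range $\{0,1\}$ and $\overline{P}$ is continuous, $S_P$ is clopen. This shows $\alpha_V$ is well-defined as a map into $Cl(\mathrm{Spec}\, V)$.

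Next I would verify that $\alpha_V$ is a lattice homomorphism by exploiting the multiplicativity of the pure states $f \in \mathrm{Spec}\, V$ (recall that, for a commutative $C^*$-algebra, the pure states are exactly the multiplicative functionals, as noted just before Definition \ref{Def_GelfandSpectrum}). In a commutative algebra the meet and join of projections are $P \wedge Q = PQ$ and $P \vee Q = P + Q - PQ$. Since each $f$ is multiplicative and takes values in $\{0,1\}$ on projections, $f(PQ) = f(P)f(Q)$ realises the Boolean conjunction and $f(P+Q-PQ)$ the Boolean disjunction of $f(P)$ and $f(Q)$. Hence $S_{P \wedge Q} = S_P \cap S_Q$ and $S_{P \vee Q} = S_P \cup S_Q$, so $\alpha_V$ preserves both lattice operations, as well as top and bottom via $\overline{1} = \chi_{\mathrm{Spec}\, V}$ and $\overline{0} = \chi_{\emptyset}$.

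For bijectivity: injectivity follows from the faithfulness of the Gelfand representation, since $S_P = S_Q$ forces $\overline{P} = \chi_{S_P} = \chi_{S_Q} = \overline{Q}$ and therefore $P = Q$. For surjectivity, given any clopen $U \subseteq \mathrm{Spec}\, V$, the characteristic function $\chi_U$ is continuous precisely because $U$ is clopen, so $\chi_U \in C(\mathrm{Spec}\, V)$ and $P := \mathcal{G}^{-1}(\chi_U)$ exists in $V$; since $\chi_U$ is real and idempotent and $\mathcal{G}$ is a $*$-isomorphism, $P$ satisfies $P^* = P$ and $P^2 = P$, i.e. $P$ is a projection, with $S_P = U$. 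A bijective lattice homomorphism is automatically a lattice isomorphism, which completes the argument.

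The genuinely substantive points, as opposed to formal bookkeeping, are (i) that continuous idempotents are exactly the characteristic functions of clopen sets — this is where the topology of $\mathrm{Spec}\, V$ and continuity do the real work — and (ii) the surjectivity step, which crucially uses that $U$ being clopen (not merely Borel or open) is exactly what makes $\chi_U$ continuous and hence realisable as $\mathcal{G}(P)$. I expect no serious obstacle beyond invoking Gelfand duality in its unital form, which is legitimate since a von Neumann algebra is unital, and keeping careful track of the identification of pure states with multiplicative functionals in the commutative setting.
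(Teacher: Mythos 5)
Your proof is correct and takes essentially the same route as the paper's: both reduce the statement to Gelfand duality, identifying projections with continuous idempotents, i.e.\ characteristic functions of clopen subsets of $\mathrm{Spec}\, V$. The only difference is thoroughness — the paper checks clopenness via preimages of the rays $\left(\tfrac{1}{2},\infty\right)$ and $\left(-\infty,\tfrac{1}{2}\right)$ and then simply invokes the fact that the Gelfand representation is a $*$-isomorphism to get bijectivity, leaving preservation of meets and joins implicit, whereas you verify the lattice homomorphism property and surjectivity explicitly.
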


\begin{proof}
It is easy to check that $S_{P}$ is indeed a clopen subset of $\mathrm{Spec}\, V$. We have
$$S=\overline{P}^{-1}\left(\,\left(\frac{1}{2},\infty\right)\,\right)$$
and so $S$ is open. Similarly 
$$\mathrm{Spec}\, V \backslash S= \overline{P}^{-1}\left(\,\left(-\infty,\frac{1}{2}\right)\,\right)$$
and so $\mathrm{Spec}\, V\backslash S$ is open, hence $S$ is closed.

Since the Gelfand representation is a *-isomorphism for unital commutative algebras, $\alpha_V$ must be a bijective map. \qed
\end{proof}

Note that this result is also a consequence of the Stone representation theorem, as shown by de Groote (see Theorem 3.2 in \cite{Groote}): the projections in $V$ form a complete Boolean algebra $\mathcal{P}(V)$, which is isomorphic to the complete Boolean algebra of clopen subsets of the Stone space of $\mathcal{P}(V)$. For an abelian von Neumann algebra $V$, the Stone space of $\mathcal{P}(V)$ is homeomorphic to the Gelfand spectrum of $V$.

We have seen that for a `classical part' of a quantum system described by a commutative algebra $V$ there is a correspondence between propositions, or rather the projections which
represent them, and clopen subsets of the Gelfand spectrum of the algebra $V$. Next we will see that for quantum systems (as a whole) there is an analogous correspondence between
propositions and clopen sub-objects of the spectral presheaf.

The collection of all contexts of a non-commutative von Neumann algebra $N$ can be understood as the collection of all classical perspectives on a quantum system. As we have
mentioned before, the idea behind the spectral presheaf is to characterise a quantum system by taking into account all the classical perspectives at the same time. In order to do
this, we need to adapt every proposition about the whole quantum system to each possible classical context. That is, given a proposition ``$A\varepsilon\Delta$" and its representing projection $P$, we want to choose for every context $V$ the strongest proposition implied by ``$A\varepsilon\Delta$" which can be made from the perspective of that context. For
projections, this is equivalent to taking the smallest projection in any context $V$ that is larger or equal to $P$:
$$\delta^o(P)_V:=\bigwedge\{ Q\in\mathcal{P}(V)~|~Q\geq P\}$$
If $P\in\mathcal{P}(V)$, the above approximation will simply be equal to $P$. We will call the original proposition ``$A\varepsilon\Delta$" the \textit{global proposition}, while a
proposition ``$B\varepsilon\Gamma$" corresponding to the projection $\delta^o(P)_V$ will be called a \textit{local proposition}.

From the family of projections $(\delta^o(P)_V)_{V\in\mathcal{V}(N)}$ we can obtain a family of clopen subsets of the Gelfand spectra $(\mathrm{Spec}\, V)_{V\in\mathcal{V}(N)}$ by
choosing for every $V$ the subset
$$S_{\delta^o(P)_V}=\alpha_V(\delta^o(P)_V)\subseteq\mathrm{Spec}\, V$$

These subsets form a subobject under the restriction mappings of the spectral presheaf $\underline{\Sigma}$ and so we can give the following definition, as in \cite{Doe11b}.

\begin{definition}
The \textbf{daseinisation of a projection $P$} is the subobject (or equivalently, the subpresheaf) $\underline{\delta(P)}$ of the spectral presheaf $\underline{\Sigma}$ given by the collection of clopen subsets $(S_{\delta^o(P)_V})_{V\in\mathcal{V}(N)}$, together with the restriction mappings between them.
\end{definition}

We denote the collection of all sub-objects of the spectral presheaf by $Sub(\underline{\Sigma})$. It can be seen as the analogue of the lattice of subsets of the state space of a classical system.

\begin{definition}
A subobject $\underline{S}$ of the spectral presheaf $\underline{\Sigma}$ such that for each $V\in\mathcal{V}(N)$ the component $\underline{S}_V$ is a clopen subset of
$\underline{\Sigma}_V$ is called a \textbf{clopen subobject}.
\end{definition}

Daseinisation thus gives us a map between the projections of a von Neumann algebra and the collection of clopen subobjects of the spectral presheaf
$$\underline{\delta}:\mathcal{P}(N)\rightarrow Sub_{cl}(\underline{\Sigma})$$
since all sub-objects obtained from the daseinisation of projections are clopen. The collection $Sub_{cl}(\underline{\Sigma})$ can be seen as the analogue of the collection of measurable subsets of the state space of a classical system.

The daseinisation $\underline{\delta(P)}$ of a projection $P$ representing the proposition ``$A\varepsilon\Delta$" can be seen as quantum the analogue of the measurable subset $f_A^{-1}(D)$ of the state space of a classical system, where $f_A$ represents a classical observable, as in Equation \ref{f_A}. We interpret $\underline{\delta(P)}$ as the representative of the global proposition ``$A\varepsilon\Delta$".

The daseinisation $\underline{\delta(P_\psi)}$ of a projection $P_\psi$ which projects onto the ray spanned by the vector $\psi$ is called the \textit{pseudo-state} associated to
$\psi$. It can be regarded as the analogue of a point in the state space of a classical system. It is important to note however that the pseudo-states are not global elements of $\underline{\Sigma}$. In
fact, global elements of a presheaf are the category-theoretical analogues of points. Isham and Butterfield have observed \cite{b1} that the Kochen-Specker theorem is equivalent to the fact that the spectral presheaf has no global elements. A global element $\gamma$ of $\underline\Sigma$ would pick one $\gamma_V\in\underline\Sigma_V$ for each context $V$ such that, whenever $V'\subset V$, one would have $\gamma_V|_{V'}=\gamma_{V'}$. Each $\gamma_V$ assigns values to all physical quantities described by self-adjoint operators $A$ in $V$
by evaluation, i.e., by simply forming $\gamma_V(A)$. If $A$ is contained in different commutative subalgebras $V,\widetilde V$, then it is also contained in $V':=V\cap\widetilde
V$, and $\gamma_V(A)=\gamma_{V'}(A)=\gamma_{\widetilde V}(A)$, so the defining condition of the global element $\gamma$ guarantees that $A$ is assigned the same value in every
context. Since every self-adjoint operator is contained in some commutative subalgebra $V$, a global element $\gamma$ of $\underline\Sigma$ would provide a consistent assignment
of values to all self-adjoint operators. But the Kochen-Specker theorem precisely shows that this is impossible, hence such global elements $\gamma$ cannot exist.

Pseudo-states on the other hand are minimal sub-objects in a suitable sense: they come from rank-1 projections, the smallest non-trivial projections, and daseinisation is order-preserving,
so pseudo-states are the smallest non-trivial sub-objects of $\underline\Sigma$ that can be obtained from daseinisation. Hence, pseudo-states are `as close to points as possible'.

Note also that, in analogy with the classical case where the powerset of the state space formed a Boolean algebra, the collection of all sub-objects of the spectral presheaf can be turned (see \cite{Doe11a}) into a complete Heyting algebra by defining suitable meet and join operations. 

\begin{definition}
 If $\underline{S_1}$ and $\underline{S_2}$ are two sub-objects of the spectral presheaf, their join is defined by stagewise unions in the following way:
$$(\underline{S_1}\vee\underline{S_2})_V=\underline{S_1}_V\cup\underline{S_2}_V$$
Similarly, their meet is given by stagewise intersections:
$$(\underline{S_1}\wedge\underline{S_2})_V=\underline{S_1}_V\cap\underline{S_2}_V$$
\end{definition}

In \cite{Doe11a, Doe12}, D\"{o}ring has also shown that the clopen sub-objects of the spectral presheaf $\underline{\Sigma}$ form a complete Heyting algebra under stagewise meet and join operations. For more information about the daseinisation map, its physical interpretation, its properties, and in particular its relation to the logic of the Topos Approach and to the lattice of clopen subobjects of the spectral presheaf the interested reader is directed to the clear and concise presentation exposed in \cite{Doe11a}. 

\subsection{States as measures on the spectral presheaf}\label{measures}

In general, a state on a von Neumann algebra is a positive linear functional of norm one on that algebra. Given such a state, we can associate to it a certain measure on the
corresponding spectral presheaf. This construction was explored in detail by D\"oring, who also showed that measures on the spectral presheaf can be defined without reference to
states and moreover that from each abstractly defined measure a unique state can be reconstructed \cite{ms}. We give a brief overview of these ideas below. 

In classical physics states are represented by probability measures on state space, and pure states are represented by Dirac measures. A probability measure assigns a number between $0$ and $1$ to each measurable subset of state space. Within the topos approach the role of the state space is played by the spectral presheaf, and so in analogy with classical mechanics we would like states to be represented by probability measures on (clopen subobjects of) the spectral presheaf. However, since subobjects of the spectral presheaf are not simply sets, but collections of sets, we can not expect the values taken by the measure to be given by single numbers. Instead we would expect to obtain a collection of such numbers, one for each context of the algebra which represents our system. With this in mind we give the following definitions of presheaves of real values and their global sections, which will be essential to our discussion.

\begin{definition}
Given a von Neumann algebra $N$ and its associated poset of abelian subalgebras $\mathcal{V}(N)$, let $\downarrow V:=\{W\in\mathcal{V}(N)~|~W\subseteq V\}$ denote the down-set of a context $V\in\mathcal{V}(N)$.The presheaf $\underline{\mathbb{R}^\succeq}$ is defined
\begin{itemize}
\item on objects: $\underline{\mathbb{R}^\succeq}_V=\{f:\downarrow V\rightarrow \mathbb{R}~|~ f \text{ is order reversing }\}$
\item on arrows: for $i_{V'V}: V' \hookrightarrow V$, $\underline{\mathbb{R}^\succeq}(i_{V'V}):\mathbb{R}\rightarrow\mathbb{R}$ is given by $$\underline{\mathbb{R}^\succeq}(i_{V'V})(f):=f|_{_{\downarrow V'}}$$
\end{itemize}
\end{definition}

Note that this presheaf lives in the same topos as $\underline{\Sigma}^N$. However, we do not explicitly specify this topos, by indicating the base category, when discussing this and similar presheaves of real values. It is usually clear from the context, which base category we are using.

A global section of this presheaf can be regarded as an order-reversing function from the partially ordered set $\mathcal{V}(N)$ to the real numbers equipped with the usual ordering.

The presheaf defined above plays an important role within the topos approach, and is discussed extensively in \cite{d3,disham}. However, when defining measures we will only use a sub-presheaf of this presheaf of real numbers, which we denote by $\underline{[0,1]^\succeq}$. Later on, when we will introduce the notion of entropy we will encounter a closely related presheaf, $\underline{[0,\ln n]^\preceq}$, where $n$ denotes the dimension of the algebra corresponding to our system. In this case global sections will be equivalent to order-preserving functions from $\mathcal{V}(N)$ to the real interval $[0,\ln n]$.

\begin{definition}
Given a von Neumann algebra $N$, a measure on its associated spectral presheaf $\underline{\Sigma}$ is a mapping
\begin{align*}
 \mu: \mathrm{Sub}_{\mathrm{cl}}(\underline{\Sigma}) & \longrightarrow \Gamma \underline{[0,1]^{\succeq}}\\
 \underline{S}=(\underline{S}_V)_{V\in\mathcal{V}(N)} &\longmapsto  \mu(\underline{S}):\mathcal{V}(N)\rightarrow [0,1] \\
 & \ \ \ \ \ \ \ \ \ \ \ \ \ \ \ \ V\ \ \ \mapsto  \mu(\underline{S}_V) 
\end{align*}
which satisfies the following conditions:
\begin{enumerate}
 \item $\mu(\underline{\Sigma})=1_{\mathcal{V}(N)}$
 \item for all $\underline{S}_1$, $\underline{S}_2\in\mathrm{Sub}_{\mathrm{cl}}(\underline{\Sigma})$, it holds that 
  $$\mu(\underline{S}_1\vee\underline{S}_2)+\mu(\underline{S}_1\wedge\underline{S}_2)=\mu(\underline{S}_1)+\mu(\underline{S}_2)$$
\end{enumerate}
where the addition, just like the meet and the join for sub-objects, is defined as a stagewise operation.

\end{definition}

The conditions above also imply that $\mu(\underline{0})=0$, where $\underline{0}$ is the subobject of $\underline{\Sigma}$ which assigns the empty set to each context. Note also that we have slightly abused notation by writing $\mu$ both for the measure and for its contextual components.

In this text we will mostly be concerned with a particular type of von Neumann algebras, the algebras of bounded linear operators on finite dimensional Hilbert
spaces (i.e. matrix algebras).  For these algebras the states can be identified with the density matrices: to each density matrix $\rho\in M_n$, we can associate the functional
\begin{align*}
A &\longmapsto \mathrm{Tr}(\rho A), \ \ \  \forall A\in M_n
\end{align*}
and moreover every positive linear functional of unit norm is of this form in the finite dimensional setting. With this in mind, when talking about matrix algebras we shall refer to the density matrices as states on those algebras.

\begin{definition}
Given a state $\rho$ on the matrix algebra $M_n$, it is straightforward to define its associated measure:
\begin{align*}
 \mu_\rho: \mathrm{Sub}_{\mathrm{cl}}(\underline{\Sigma}^{M_n}) & \longrightarrow \Gamma \underline{[0,1]^{\succeq}}\\
 \underline{S}=(\underline{S}_V)_{V\in\mathcal{V}(M_n)} &\longmapsto  \mu_\rho(\underline{S}):\mathcal{V}(M_n)\rightarrow [0,1] \\
 & \ \ \ \ \ \ \ \ \ \ \ \ \ \ \ \ \ \ \ V\ \ \ \ \mapsto  \mathrm{Tr}(\rho P_{\underline{S}_V})
\end{align*}
where $P_{\underline{S}_V}=\alpha_V^{-1}(\underline{S}_V)$. 
\end{definition}

One can easily check that the function $\mu_\rho(\underline{S})$ is order reversing and that $\mu_\rho$ satisfies the two properties required in the definition of a measure. This
is explicitly done in \cite{ms} for arbitrary von Neumann algebras without type $I_2$ summands.

On the other hand, an abstract measure on the spectral presheaf associated to any given algebra $N$, determines a unique state of $N$, provided $N$ contains no direct summand of type$I_2$. The proof of this rather surprising result uses a generalized version of Gleason's theorem, a review of which can be found in \cite{GGleason}. 

\begin{definition}
A finitely additive probability measure $m$ on the projections of a von Neumann algebra $N$ is a map
$$m:\mathcal{P}(N)\rightarrow [0,1]$$
such that  $m(I)=1$ and if $P$ and $Q$ are orthogonal projections then $$m(P\vee Q)=m(P+Q)=m(P)+m(Q)$$
\end{definition}

\begin{theorem}[Generalized Gleason's Theorem]\label{gleason}
Each finitely additive probability measure on the projections of a von Neumann algebra without type $I_2$ summands can be uniquely extended to a state on that algebra.
\end{theorem}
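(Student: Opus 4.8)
The plan is to prove uniqueness first, which is straightforward, and then to concentrate on existence, which carries all the analytic weight. For uniqueness, suppose $\omega_1$ and $\omega_2$ are states agreeing on every projection of $N$. By the spectral theorem every self-adjoint $a \in N$ is a norm limit of finite real-linear combinations of its spectral projections; since states are norm-continuous, $\omega_1$ and $\omega_2$ agree on all self-adjoint elements, and hence -- writing an arbitrary element as $a + ib$ with $a,b$ self-adjoint -- on all of $N$. Thus at most one state restricts to $m$.

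For existence I would reduce via the type decomposition $N \cong N_{I_1} \oplus N_{I_{\geq 3}} \oplus N_{\mathrm{cont}}$, where the hypothesis removes the type $I_2$ summand. On the abelian summand $N_{I_1}$ the projection lattice is a complete Boolean algebra isomorphic to the clopen subsets of its Gelfand spectrum, and $m$ restricts to a finitely additive probability measure there; the finitely additive Riesz representation (integration against this measure) then produces a state extending it. On a type $I_n$ summand with $n \geq 3$ one fibres over the centre and reduces the problem to the matrix algebra $M_n(\mathbb{C})$, where the classical Gleason theorem for $\dim \mathcal{H} \geq 3$ applies and exhibits $m$ as $P \mapsto \mathrm{Tr}(\rho P)$ for a density matrix $\rho$.

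The substantive step is to treat the remaining (type $I_\infty$, $II$ and $III$) part uniformly and, above all, to show that the purely lattice-theoretic datum $m$ actually linearises. Here the plan is to follow the solution of the Mackey--Gleason problem (reviewed in \cite{GGleason}): one first shows that $m$ is bounded and extends, via the spectral calculus, to a finitely additive real functional $\phi$ on the self-adjoint part $N_{sa}$, additive on every commuting pair; one then proves that this quasi-linear $\phi$ is genuinely linear, i.e.\ additive even for non-commuting self-adjoint elements. Linearity of $\phi$ on $N_{sa}$, complexification, and the estimate $|\phi(a)| \leq \|a\|$ coming from $0 \leq m \leq 1$ then deliver a bounded positive functional $\omega$ with $\omega(I) = 1$, which is the desired state.

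The hard part is exactly this passage from orthogonal additivity on the projection lattice to full linearity, since the lattice a priori carries no linear structure and finite additivity only constrains $m$ on orthogonal sums. This is where Gleason's original three-dimensional argument, and its non-commutative generalisation, is unavoidable, and it is also where the exclusion of type $I_2$ becomes essential: for $M_2(\mathbb{C})$ the projection lattice is the Bloch sphere $P = \tfrac{1}{2}(I + \vec{n}\cdot\vec{\sigma})$, on which there exist finitely additive frame functions (arising from non-quadratic functions on $S^2$) that are not of the form $\mathrm{Tr}(\rho\,\cdot)$, so no linear extension exists. Assembling the summand-wise extensions through the direct-sum decomposition then completes the construction of $\omega$ and, with the uniqueness already established, proves the theorem.
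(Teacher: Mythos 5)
You should first be aware that the paper contains \emph{no} proof of this theorem to compare against: it is imported as a known result, with the reader pointed to the literature (``a review of which can be found in \cite{GGleason}''), and is then used as a black box to pass from abstract measures on the spectral presheaf back to quantum states. So the only meaningful comparison is with the literature proof, namely Gleason's original theorem together with the solution of the Mackey--Gleason problem by Bunce and Wright.

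Measured against that, your uniqueness argument is complete and correct (and, as you implicitly note, needs no hypothesis on type): the spectral projections of a self-adjoint element lie in the von Neumann algebra, every self-adjoint element is a norm limit of finite real combinations of them, and states are norm continuous. Your existence argument, however, is a roadmap rather than a proof: the decisive step --- that the quasi-linear functional built from $m$ by spectral calculus is genuinely linear on $N_{sa}$ --- \emph{is} the Mackey--Gleason problem, and you invoke its solution from the very reference the paper cites instead of establishing it. That is defensible (it is exactly what the paper does), but you should state explicitly that this step is cited, not proved; as written your proposal has the same logical status as the paper's one-line citation, only with the skeleton of the argument made visible. Two further points need care if you want to upgrade the sketch. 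First, in the finite type $I$ part, ``fibring over the centre'' means $N \cong \bigoplus_{n\geq 3} Z_n \otimes M_n(\mathbb{C})$ with each $Z_n$ abelian, and applying finite-dimensional Gleason fibrewise requires a genuine measurable-selection and gluing argument, not merely the $M_n(\mathbb{C})$ case. Second, your final assembly ``through the direct-sum decomposition'' is automatic only for finitely many central summands: the type decomposition is in general a countably infinite direct sum, and a merely finitely additive $m$ need not satisfy $\sum_n m(z_n)=1$ --- on $\ell^\infty$, for instance, a free-ultrafilter measure vanishes on every coordinate projection yet has total mass one --- so the summand-wise extensions cannot simply be recombined; the extension exists, but not by that gluing. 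Your strengths are that you correctly isolate where the analytic content lives and why type $I_2$ must be excluded (finitely additive measures on the projection lattice of $M_2(\mathbb{C})$, i.e.\ on the Bloch sphere, that admit no linear extension); the gaps above are what separate the outline from a proof.
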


Using this powerful result we can show that each measure on the spectral presheaf uniquely determines a state on the corresponding algebra by showing that such a measure determines a unique finitely additive probability measure on the projections of the respective algebra. This has been done by D\"{o}ring in \cite{ms}, and we will reproduce his proof in the remainder of this section.

Given a measure $\mu$ on the spectral presheaf $\underline{\Sigma}$ associated to some von Neumann algebra $N$, let $\underline{S}$ be a clopen subobject of $\underline{\Sigma}$. From Proposition \ref{alpha} we know that for each context $V$ there exists an isomorphism $\alpha_V$ between $\mathcal{P}(V)$ and $Cl(\underline{\Sigma}_V)$. If $P=\alpha_V^{-1}(\underline{S}_V)$ we define 
$$m(P)=\mu(\underline{S})(V)=\mu(\underline{S}_V)$$
First note that for every projection $P$, there is some subobject $S$ and context $V$ such that $\alpha_V^{-1}(S_V) = P$: for example, the daseinisation of $P$ has this property, if we set $V = {P, 1-P}''$. But we also have to show that this does not depend on the choice of the subobject $\underline{S}$ and the context $V$, i.e. we must show that if $\underline{\tilde{S}}$ is another subobject of $\underline{\Sigma}$ and $\tilde{V}$ is a context such that 
$\alpha_V^{-1}(\underline{S}_V)=\alpha_{\tilde{V}}^{-1}(\underline{\widetilde{S}}_{\tilde{V}})$
then $\mu(\underline{S}_V)=\mu(\underline{\tilde{S}}_{\tilde{V}})$.
For this we will need two intermediate results.

\begin{lemma}
If $\underline{S}$ is a clopen subobject of $\underline{\Sigma}$ and $V'\subseteq V$ are two contexts such that $P$ is contained in both $V$ and $V'$ and $\alpha_V^{-1}(\underline{S}_V)=\alpha_{V'}^{-1}(\underline{S}_{V'})=P$, then $\mu(\underline{S}_V)=\mu(\underline{S}_{V'})$.
\end{lemma}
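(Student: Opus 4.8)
The plan is to establish the two inequalities $\mu(\underline{S}_{V'}) \geq \mu(\underline{S}_V)$ and $\mu(\underline{S}_{V'}) \leq \mu(\underline{S}_V)$ separately and conclude equality. The first is immediate: by definition $\mu(\underline{S})$ is a global section of $\underline{[0,1]^{\succeq}}$, hence an order-reversing function on $\mathcal{V}(N)$, so the hypothesis $V' \subseteq V$ gives at once $\mu(\underline{S}_{V'}) \geq \mu(\underline{S}_V)$. All the work goes into the reverse inequality.

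For the reverse inequality I would exploit complementation inside each context. Since $P$ lies in the unital algebra $V'$ (and hence in $V$), so does $1 - P$, and under the isomorphism $\alpha_W$ of Proposition \ref{alpha} the clopen set $\alpha_W(1-P)$ is precisely the complement $\mathrm{Spec}\, W \setminus \alpha_W(P)$ for $W \in \{V', V\}$. Reading conditions (1) and (2) of the definition of a measure stagewise, the component $\mu_W := \mu(-)(W)$ behaves as an ordinary finitely additive probability measure on the Boolean algebra $Cl(\underline{\Sigma}_W)$, with $\mu_W(\mathrm{Spec}\, W) = 1$ and $\mu_W(\emptyset) = 0$; hence $\mu_W(\alpha_W(P)) + \mu_W(\alpha_W(1-P)) = 1$ at both $W = V$ and $W = V'$. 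I would then introduce the daseinisation $\underline{\delta(1-P)}$, which is a clopen subobject whose components at $V$ and $V'$ are exactly $\alpha_V(1-P)$ and $\alpha_{V'}(1-P)$, since $1-P$ already belongs to these contexts and so $\delta^o(1-P)_W = 1-P$ there. Its measure $\mu(\underline{\delta(1-P)})$ is again an order-reversing global section, giving $\mu_{V'}(\alpha_{V'}(1-P)) \geq \mu_V(\alpha_V(1-P))$. Combining this with the two complement relations yields $\mu(\underline{S}_{V'}) = 1 - \mu_{V'}(\alpha_{V'}(1-P)) \leq 1 - \mu_V(\alpha_V(1-P)) = \mu(\underline{S}_V)$, which is exactly the reverse inequality.

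The one genuinely delicate point, which I would state explicitly, is that the value $\mu(\underline{S}_W)$ depends only on the local clopen set $\underline{S}_W$ and not on the ambient global subobject $\underline{S}$ — this is what the notation $V \mapsto \mu(\underline{S}_V)$ in the definition encodes, and it is what licenses treating $\mu_W$ as a bona fide measure on $Cl(\underline{\Sigma}_W)$ and replacing $\underline{S}$ by $\underline{\delta(1-P)}$ when computing the complementary value at $V$ and $V'$. Granting this, the argument is short: the ``upward'' monotonicity built into the order-reversing condition, applied simultaneously to $\underline{S}$ and to the complementary subobject $\underline{\delta(1-P)}$ whose local components sum with those of $\underline{S}$ to the full spectrum, forces both measures to be constant along the chain $V' \subseteq V$. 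I expect the reverse inequality to be the main obstacle; the easy inequality and the stagewise additivity are routine, and the only real subtlety is pinning down the locality of the contextual components so that $\mu_V(\emptyset)=0$ and $\mu_V(\mathrm{Spec}\, V)=1$ may be used freely.
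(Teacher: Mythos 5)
Your proposal is correct and takes essentially the same route as the paper's own proof: both introduce the complementary clopen subobject $\underline{\delta(I-P)}$, use the (stagewise) measure axioms to obtain $\mu(\underline{S})(W)+\mu(\underline{\delta(I-P)})(W)=1$ at $W=V$ and $W=V'$, and then let the order-reversing property of the two global sections force equality along $V'\subseteq V$. Your packaging as two separate inequalities, and your explicit flagging of the locality point (that $\mu(\underline{S})(W)$ depends only on $\underline{S}_W$, with $\mu_W(\emptyset)=0$ and $\mu_W(\mathrm{Spec}\,W)=1$), is only a cosmetic reorganization of the paper's argument.
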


\begin{proof}
Since the maximal projection $I$ is contained in every context, it follows that $I-P\in V',V$. Let $\underline{S^c}$ be another clopen subobject such that 
$$\alpha_V^{-1}(\underline{S^c}_V)=\alpha_{V'}^{-1}(\underline{S^c}_{V'})=I-P$$
Such a subobject certainly exists: $\underline{\delta(I-P)}$, for example, satisfies the above property.

Since every $\alpha$ is a lattice isomorphism, we have
\begin{align*}
(\underline{S}\wedge \underline{S^c})_V=0_V=\emptyset \ \  &, \ \ \ \ (\underline{S}\wedge \underline{S^c})_{V'}=0_{V'}=\emptyset \\
(\underline{S}\vee\underline{S^c})_V=\underline{\Sigma}_V \ \ &, \ \ \ \ (\underline{S}\vee\underline{S^c})_{V'}=\underline{\Sigma}_{V'}
\end{align*}

Using the two defining properties of a measure $\mu$ we obtain
\begin{align*}
1&=\mu(\underline{\Sigma})(V)\\
&=\mu(\underline{S}\vee\underline{S^c})(V)\\
&=\mu(\underline{S})(V)+\mu(\underline{S^c})(V)-\mu(\underline{S}\wedge\underline{S^c})(V)
\end{align*}
Since the last term vanishes we obtain that $\mu(\underline{S})(V)+\mu(\underline{S^c})(V)=1$. Similarly, we can also deduce that $\mu(\underline{S})(V')+\mu(\underline{S^c}(V')=1$. But $\mu(S):\mathcal{V}(N)\rightarrow [0,1]$ is an order-reversing function, hence
\begin{align*}
&\mu(\underline{S})(V')\geq \mu(\underline{S})(V)\\
&\mu(\underline{S^c})(V')\geq \mu(\underline{S^c})(V)
\end{align*}
This implies that in fact $\mu(\underline{S})(V')=\mu(\underline{S})(V)$ and $\mu(\underline{S^c})(V')=\mu(\underline{S^c})(V)$, which completes our proof. \qed
\end{proof}

\begin{lemma}
If $\underline{S}$ and $\underline{\tilde{S}}$ are two subobjects which coincide at $V$, i.e. if $\underline{S}_V=\underline{\tilde{S}}_V$, then $\mu(\underline{S})(V)=\mu(\underline{\tilde{S}})(V)$.
\end{lemma}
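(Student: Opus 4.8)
The plan is to mirror the proof of the preceding lemma almost verbatim, exploiting the modularity of $\mu$ together with a single complementary daseinised subobject. First I would set $P := \alpha_V^{-1}(\underline{S}_V) = \alpha_V^{-1}(\underline{\tilde{S}}_V)$; this is one and the same projection for both subobjects precisely because $\underline{S}_V = \underline{\tilde{S}}_V$ and $\alpha_V$ is a lattice isomorphism by Proposition \ref{alpha}. Then I would introduce the complementary subobject $\underline{S^c} := \underline{\delta(I-P)}$, whose component at $V$ is $\underline{S^c}_V = \alpha_V(\delta^o(I-P)_V) = \alpha_V(I-P) = \underline{\Sigma}_V \setminus \underline{S}_V$, using that $I-P \in \mathcal{P}(V)$ and hence $\delta^o(I-P)_V = I-P$.

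Next I would compute the two stagewise Boolean combinations at $V$ for each of the pairs $(\underline{S},\underline{S^c})$ and $(\underline{\tilde{S}},\underline{S^c})$. Since the measure's meet and join are stagewise and $\alpha_V$ preserves meets and joins, we get $(\underline{S}\wedge\underline{S^c})_V = \alpha_V(P\wedge(I-P)) = \emptyset$ and $(\underline{S}\vee\underline{S^c})_V = \alpha_V(P\vee(I-P)) = \underline{\Sigma}_V$, and identically with $\underline{\tilde{S}}$ in place of $\underline{S}$. Applying the modularity condition (property (2) in the definition of a measure) at the stage $V$ to each pair yields
\[ \mu(\underline{S}\vee\underline{S^c})(V) + \mu(\underline{S}\wedge\underline{S^c})(V) = \mu(\underline{S})(V) + \mu(\underline{S^c})(V), \]
together with the analogous identity obtained by replacing $\underline{S}$ with $\underline{\tilde{S}}$. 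Because the joins have full component $\underline{\Sigma}_V$ and the meets have empty component at $V$, the left-hand sides of the two identities are identical, so subtracting gives $\mu(\underline{S})(V) = \mu(\underline{\tilde{S}})(V)$, which is the claim. Note that modularity and boundedness alone are not enough to close the argument from the simpler identity for the pair $(\underline{S},\underline{\tilde{S}})$ itself, since they do not pin down the two intermediate values; it is the passage through the extreme components $\underline{\Sigma}_V$ and $\emptyset$ that forces equality.

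The step carrying all the weight — and which I expect to be the main obstacle — is justifying that the first summand equals $\mu(\underline{\Sigma})(V) = 1$ and the second equals $\mu(\underline{0})(V) = 0$, i.e. that $\mu(\underline{A})(V)$ is determined by the component $\underline{A}_V$ alone in these two extreme cases. This is exactly the (silent) principle already used in the proof of the preceding lemma. I would establish it directly from the normalisation conditions $\mu(\underline{\Sigma}) = 1_{\mathcal{V}(N)}$ and $\mu(\underline{0}) = 0$, combined with the compatibility of a subpresheaf with the restriction maps of $\underline{\Sigma}$: if $\underline{A}_V = \underline{\Sigma}_V$ then, since each restriction $\underline{\Sigma}(i_{WV})$ is surjective (as recorded in the definition of the spectral presheaf), one obtains $\underline{A}_W = \underline{\Sigma}_W$ for every $W \subseteq V$; dually, $\underline{A}_V = \emptyset$ forces $\underline{A}_W = \emptyset$ for every $W \supseteq V$. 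Feeding this localisation into the order-reversing character of the global section $\mu(\underline{A}) \in \Gamma\underline{[0,1]^{\succeq}}$ together with the two normalisation values then pins $\mu(\underline{A})(V)$ to $1$, respectively to $0$. Thus the argument rests not on modularity in isolation but on its interaction with the normalisation, the order-reversing codomain, and the surjectivity of the spectral presheaf's restriction maps.
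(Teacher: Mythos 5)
Your bookkeeping with the daseinised complement is fine as far as it goes: $P$ is well defined via Proposition \ref{alpha}, $\underline{S^c}=\underline{\delta(I-P)}$ has component $\underline{\Sigma}_V\setminus\underline{S}_V$ at $V$, the two modularity identities are correct, and the lemma would indeed follow from the two extreme claims ``$\underline{A}_V=\underline{\Sigma}_V\Rightarrow\mu(\underline{A})(V)=1$'' and ``$\underline{A}_V=\emptyset\Rightarrow\mu(\underline{A})(V)=0$''. The genuine gap is your final paragraph, which is supposed to prove those claims. The propagation facts are true ($\underline{A}_V=\underline{\Sigma}_V$ forces $\underline{A}_W=\underline{\Sigma}_W$ for all $W\subseteq V$ by surjectivity of the restrictions, and $\underline{A}_V=\emptyset$ forces $\underline{A}_W=\emptyset$ for all $W\supseteq V$), but they interact with the order-reversing property in the useless direction: order-reversal gives $\mu(\underline{A})(V)\leq\mu(\underline{A})(W)$ for $W\subseteq V$ and $\mu(\underline{A})(V)\geq\mu(\underline{A})(W)$ for $W\supseteq V$, so even if you granted the value $1$ at every context below $V$ where the component is full, you would only obtain the trivial bound $\mu(\underline{A})(V)\leq 1$, and dually the empty contexts above $V$ only give $\mu(\underline{A})(V)\geq 0$. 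Nothing pins down the value at $V$ itself. In fact no argument from normalisation, modularity and order-reversal alone can close this: the assignment $\nu$ sending $\underline{S}$ to the constant function with value $\mathrm{Tr}\big(\rho\,\alpha_{W_0}^{-1}(\underline{S}_{W_0})\big)$, for a fixed state $\rho$ and a fixed context $W_0$, satisfies all three conditions (and $\nu(\underline{0})=0$), yet the clopen subobject which equals $\underline{\Sigma}_W$ for $W\subseteq V$ and $\emptyset$ elsewhere is full at $V$ while receiving the value $0$ there whenever $W_0\not\subseteq V$. Note also that your two extreme claims are themselves instances of the lemma being proved (applied to the pairs $(\underline{A},\underline{\Sigma})$ and $(\underline{A},\underline{0})$), so invoking them without an independent proof would be circular.

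What is missing is the stage-locality of the measure, which the paper does not derive but reads off from its definition: a measure assigns to $\underline{S}$ the function $V\mapsto\mu(\underline{S}_V)$, i.e.\ by definition (this is the abuse-of-notation remark, where $\mu$ also denotes the contextual components) the value at stage $V$ is a function of the component $\underline{S}_V$ alone. Granting this, the paper's proof is a single application of modularity to the pair $(\underline{S},\underline{\tilde{S}})$ itself: since meets and joins are stagewise and $\underline{S}_V=\underline{\tilde{S}}_V$, both $(\underline{S}\vee\underline{\tilde{S}})_V$ and $(\underline{S}\wedge\underline{\tilde{S}})_V$ equal $\underline{S}_V$, so the right-hand side of modularity evaluated at $V$ is $2\mu(\underline{S})(V)$ and the conclusion follows, with no auxiliary subobject and no extreme-case analysis. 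Your closing observation that modularity on the pair alone ``does not pin down the two intermediate values'' is correct exactly when one refuses this definitional reading --- but under that refusal your own argument fails as well, as shown above; and once the reading is granted, both your extreme cases and the lemma itself are immediate, making the detour through $\underline{\delta(I-P)}$ unnecessary.
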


\begin{proof}
From the second defining property of a measure $\mu$ we obtain that
\begin{align*}
\mu(\underline{S})(V)+\mu(\underline{\tilde{S}})(V)&=\mu(\underline{S}\vee\underline{\tilde{S}})(V)+\mu(\underline{S}\wedge\underline{\tilde{S}})(V)\\
&=\mu((\underline{S}\vee\underline{\tilde{S}})_V)+\mu((\underline{S}\wedge\underline{\tilde{S}})_V)\\
&=\mu(\underline{S}_V\cup\underline{\tilde{S}}_V)+\mu(\underline{S}_V\cap\underline{\tilde{S}}_V)\\
&=\mu(\underline{S}_V)+\mu(\underline{S}_V)\\
&=\mu(\underline{S})(V)+\mu(\underline{S})(V)
\end{align*}
which implies that $\mu(\underline{S})(V)=\mu(\underline{\tilde{S}})(V)$.\qed
\end{proof}

Now assume that $\underline{S}$ and $\underline{\tilde{S}}$ are two clopen subobjects of $\underline{\Sigma}$ and $V$ and $\tilde{V}$ are two contexts such that $\underline{S}_V$ and $\underline{\tilde{S}}_{\tilde{V}}$ correspond to the same projection $P\in V,\tilde{V}$. Then we must have that $P$ also belongs to $V\cap\tilde{V}$. We know that the clopen subobject $\underline{\delta(P)}$ coincides with $\underline{S}$ at $V$ and it also coincides with $\underline{\tilde{S}}$ at $\tilde{V}$. Moreover, $\underline{\delta(P)}_{V\cap\tilde{V}}\subseteq \underline{\Sigma}_{V\cap\tilde{V}}$ and  $\alpha_{V\cap\tilde{V}}^{-1}(\underline{\delta(P)}_{V\cap\tilde{V}})=P$. From the previous two lemmas we obtain that
\begin{align*}
\mu(\underline{S})(V)&=\mu(\underline{\delta(P)})(V)\\
&=\mu(\underline{\delta(P)})(V\cap\tilde{V})\\
&=\mu(\underline{\delta(P)})(\tilde{V})\\
&=\mu(\underline{\tilde{S}})(\tilde{V})
\end{align*}

This shows that the value $m(P)=\mu(\underline{S})(V)$ is well defined. For any $V$, the projection corresponding to $\underline{\Sigma}_V$ is the maximal projection, $I$. So from the first defining property of a measure $\mu$, we must have $$m(I)=\mu(\underline{\Sigma})(V)=1$$ 
Finally, let $P$ and $Q$ be two orthogonal projections and let $V$ be a context that contains both $P$ and $Q$. Let $\underline{S^P}$ and $\underline{S^Q}$ be two subobjects such that $\alpha_V^{-1}(\underline{S^P}_V)=P$ and $\alpha_V^{-1}(\underline{S^Q}_V)=Q$. Then $(\underline{S^P}\vee\underline{S^Q})_V$ corresponds to $P\vee Q$ and we obtain
\begin{align}
m(P\vee Q)&=\mu(\underline{S^P}\vee \underline{S^Q})(V)\\
&=\mu(\underline{S^P})(V)+\mu(\underline{S^Q})(V)-\mu(\underline{S^P}\wedge\underline{S^Q})(V)\\
&=\mu(\underline{S^P})(V)+\mu(\underline{S^Q})(V)\\
&=m(P)+m(Q)
\end{align}

This shows that the map $m:\mathcal{P}\rightarrow [0,1]$ is indeed a finitely additive probability measure, and so from the generalised version of Gleason's theorem we know that $m$ extends to a unique state $\rho_m$ of the algebra $N$.

In particular this implies that when the algebra $N$ is a finite dimensional matrix algebra, of dimension greater than $2$, there is a bijective correspondence between density matrices and measures on the corresponding spectral presheaf.

\end{chapter}


\begin{chapter}{Quantum Logic vs. the Logic of the Topos Approach}\label{LA}

Quantum logic was initially introduced in the '30s by Birkhoff and von Neumann \cite{BirkVonN}. At that time, the mathematical apparatus of quantum mechanics, mostly developed by von Neumann, was already in place.  Birkhoff and von Neumann's goal was ``to discover what logical structure one may hope to find in physical
theories which, like quantum mechanics, do not conform to classical logic". Their paper discusses the mathematical structures which may be used to describe propositions about the values of physical quantities, and how states assign truth values to these propositions.

Orthomodular lattices  play a prominent role in quantum logic \cite{DCG02,Var07}. Often, existence of atoms is required for conceptual reasons: it corresponds to the assumption that every pure state represents a `physical property'. The prototypical example is $\PH$, the lattice of projections on a Hilbert space $\cH$, which is an atomic complete orthomodular lattice.

In Chapter \ref{TA} we have seen that the topos-based form of logic for quantum systems uses certain presheaf constructions over the poset $\mathcal{V}(\cN)$ of abelian von Neumann subalgebras of a von Neumann algebra $\cN$. For clarity and simplicity, we focus here on the case $\cN=\BH$, the algebra of bounded linear operators on the Hilbert space $\cH$. Its lattice of projections is given by $\PH$. 

An abelian von Neumann subalgebra $V\subset\BH$ has a lattice of projections $\PV$ that is a complete Boolean algebra.  The abelian subalgebras $V\in\mc V(\BH)$ and their corresponding complete Boolean sublattices $\PV\in\mc B(\PH)$ are called \emph{contexts}. Conceptually, they can be thought of as `classical perspectives' on the quantum system, as discussed in Section \ref{sec:CPersp}.


We now turn our attention to an interesting question: can the orthomodular lattice $\PH$, which is traditionally used in quantum logic, be reconstructed from the poset of contexts $\mc B(\PH)$ that underlies the constructions in the topos approach? The answer is affirmative, as was shown by Harding and Navara in \cite{navara}. Their main result is that if $L$ and $M$ are OMLs and $\phi:\BL \ra \mc B(M)$ is an isomorphism of posets, then there is an isomorphism $\phi^*:L \ra M$ with $\phi(B) = \phi^*[B]$ for each Boolean subalgebra $B$ of $L$. 

Conceptually, this means that by considering the partially ordered set of contexts, one does not lose information compared to considering the whole orthomodular lattice. This also implies that the new form of presheaf- and topos-based form of logic for quantum systems is (at least) as rich as traditional quantum logic.

One motivation for the work presented in this chapter comes from a question posed in the concluding section of \cite{navara}: let $\mc S(L)$ denote the lattice of all subalgebras of an OML $L$. Give an order-theoretic construction of $\mc S(L)$ from $\BL$. As a partial solution to this problem, we present here a reasonably direct order-theoretic way of reconstructing an atomic OML $L$ from $\BL$. 

We are concerned only with a single atomic OML $L$, not with morphisms between two OMLs, so our contribution can moreover be seen as the `object counterpart' to the result by Harding and Navara. The atoms of $\BL$ have the form $\{0,P,P^{\perp},1\}$, where $P$ and $P^{\perp}$ are elements of $L$. The main task is to show how the order relations between the elements of $L$ arise from the order relations between the elements of $\BL$. Of course, $L$ determines $\mc S(L)$ in a straightforward way.

\section{Grouping and splitting} We first remark that if $L=\PH$, the projection lattice on a Hilbert space $\cH$, then the height of the poset $\VH$ of abelian subalgebras equals the dimension of $\cH$ (if we include the trivial subalgebra $V_0=\bbC\hat 1$ in $\VH$; otherwise, the dimension of $\cH$ equals the height of $\VH$ plus $1$). The dimension of $\cH$ determines the Hilbert space $\cH$ up to isomorphism, and hence determines $\PH$ up to isomorphism. This is a cheap (and rather indirect) way of `reconstructing' $\PH$ from $\VH$.

In this chapter, we will present a more explicit and generally applicable reconstruction.

\begin{remark}
It would be conceivable to have a reconstruction of an atomic OML $L$ from the poset $\mc B(L)$ of its Boolean subalgebras along the following lines:\footnote{We thank the anonymous referee for suggesting this alternative.}
\begin{itemize}
	\item[a)] First, one identifies the atoms of $L$ from $\mc B(L)$. Of course, this is not entirely straightforward, since atoms in $\mc B(L)$ are of the form $\{0,P,P^\perp,1\}$ for \emph{arbitrary} elements of $L$, not just atom - co-atom pairs, but the identification can be made using only the order-theoretic information encoded within $\mc B(L)$.
	\item[b)] Then, by considering which atoms of $L$ jointly lie in which elements of $\mc B(L)$, one identifies the compatibility (i.e., orthogonality) relations between atoms.
	\item[c)] Finally, one attempts to reconstruct $L$ in a bottom-up fashion, starting from the sets of pairwise orthogonal atoms, using the additional information about relations between elements of $L$ and atoms which is encoded within $\mc B(L)$
\end{itemize}

If one uses a bottom-up reconstruction then, even when $L$ is a Boolean algebra, one must pay attention to the fact that infinite atomic Boolean algebras may have the same set of atoms without being isomorphic. A well-known example is given by the following two Boolean algebras: on the one hand, $B_1=P\mathbb{N}$, the power set of the set of natural numbers, with intersections as meets, unions as joins, and complements (with respect to $\mathbb{N}$) as complement operation; on the other hand the Boolean algebra $B_2$, whose elements are the finite and the cofinite subsets of $\mathbb{N}$ (a subset $S$ of $\mathbb{N}$ is cofinite if $S'=\mathbb{N}\backslash S$ is finite), with intersections, unions and complements as operations.

Clearly, the atoms in both $B_1$ and $B_2$ are the singleton subsets of $\mathbb{N}$, and both algebras are atomic and infinite. Yet, $B_1$ contains more elements than $B_2$ (uncountably many, vs. countably many), so the algebras are not isomorphic. This relates to the fact that $B_1$ is complete, while $B_2$ is not complete.

Our reconstruction algorithm, to be presented below, is also based on the poset $\mc B(L)$ of Boolean subalgebras of a given atomic OML $L$. However, in our reconstruction we will follow a top-down route which can be summarized as follows: if $V=\{0,P,P^\perp,1\}$ is an atom of $\mc B(L)$, we use the order structure of $\mc B(L)$ to identify Boolean sup-algebras of $V$ that contain only $P$, or respectively $P^\perp$, and atoms of $L$ as generating elements (such algebras are called minimal spiked Boolean super-algebras below). We then form equivalence classes, corresponding to $P$, or respectively $P^\perp$, and show that the order on generic elements $P,Q$ of $L$ can be deduced from the structure of these equivalence classes (which is determined by the order on $\mc B(L)$). 

Thus, instead of singling out atoms of $L$ and proceeding from them, we use the order-theoretic information encoded within $\mc B(L)$ to identify the elements of $L$ directly -- they are in a two-to-one correspondence with the atoms of $\mc B(L)$. We then show that the order between generic elements of $L$ can also be deduced from the order structure of $\mc B(L)$. 

We now briefly consider the special case of atomic Boolean algebras in order to make clear how, for example, the two infinite atomic Boolean algebras $B_1,B_2$ are distinguished by our reconstruction procedure:\footnote{In general, by the result by Sachs \cite{Sachs}, two Boolean algebras have order-isomorphic posets of Boolean subalgebras if and only if they are isomorphic.} the poset (in fact, lattice) $\mc B(B)$ of Boolean subalgebras of a Boolean algebra $B$ has atoms itself, namely the Boolean subalgebras of the form $\{0,P,P^\perp,1\}$, where $P$ is an arbitrary element of $B$ (not necessarily an atom). Let $B_1,B_2$ be the two atomic Boolean algebras presented above. Since $B_1$ has more elements than $B_2$, the poset $\mc B(B_1)$ has more atoms than the poset $\mc B(B_2)$, so $B_1$ and $B_2$ can be distinguished at the level of their posets of Boolean subalgebras, and our reconstruction makes use of this.

\end{remark}

Let $L$ be an atomic orthomodular lattice with $0$ and $1$, and let $\BL$ be the set of Boolean subalgebras (BSAs) of $L$, partially ordered under inclusion. Two elements $P$ and $Q$ of $L$ are orthogonal if $P\leq Q^{\perp}$, where $P^{\perp}$ denotes the orthocomplement of $P$. Orthogonality also implies that the meet of $P$ and $Q$ is equal to $0$. It is clear that every element $P$ of $L$ is contained in at least one Boolean subalgebra $V$ of $L$ (for example in $V_P=\{0,P,P^{\perp},1\}$).

Let $\mc F=\{P_1,P_2,\ldots,P_n\ldots\}$ be a (possibly infinite) family of pairwise orthogonal elements in $L$ with join $1$. Then $\mc F$ generates an atomistic BSA $V\subseteq L$. The elements in $\mc F$ are the atoms of the $V$ and since each element of $V$ is a join of elements in $\mc F$, $V$ is an atomistic BSA. 

We say that a BSA $V$ generated by a family $\mc F$ as above has \textbf{dimension} $n=\#\mc F$, the cardinality of $\mc F$. In general, not every BSA $V$ of an atomic orthomodular lattice $L$ is generated by a family $\mc F$ of pairwise orthogonal elements,\footnote{An example is $\PH$, the projection lattice on an infinite-dimensional Hilbert space, which has complete Boolean sublattices that have no atoms at all, e.g. the projection lattice of the abelian von Neumann algebra generated by the position operator. There also are Boolean sublattices of $\PH$ that have some atoms, but are not generated by them.} but each element of $L$ is contained in some BSA, as we have remarked in the previous paragraph.

From now on, we will only consider those BSAs in $\BL$ which are generated by families of pairwise orthogonal elements. This allows us to describe inclusion relations within $\BL$ in terms of grouping and splitting actions. We will write $\mc F_V$ for the family of join $1$, pairwise orthogonal elements generating a BSA $V$.

\begin{definition}
If $\mc F$ and $\mc G$ are two families of pairwise orthogonal elements with join $1$, we say that $\mc G$ is obtained by \textbf{grouping} the elements in $\mc F$ if any $Q\in\mc G$ can be written as a join of elements in $\mc F$. Let $S_Q$ denote the set of elements in $\mc F$ that have join $Q$. The fact that the elements in $\mathcal{G}$ are pairwise orthogonal implies that the sets $S_Q,\;Q\in\mc G,$ are pairwise disjoint. If $\mc G$ is obtained by grouping the elements in $\mc F$, we say that $\mc F$ is obtained by \textbf{splitting} the elements in $\mc G$.
\end{definition}

The BSAs contained in a BSA $V$ are obtained from grouping the elements in $\mc F_V$ while the algebras which contain $V$, if they exist, are obtained from $V$ by splitting the elements in $\mc F_V$. 

A $2$-dimensional BSA $V\subseteq L$ is generated by two complementary elements. We can find out from the order relations within $\BL$ when one (or both) of these elements are atoms. This result will be useful later in our reconstruction of the lattice $L$. 

\begin{lemma}\label{3.2}
Given an atomistic ortholattice $L$ and a $2$-dimensional BSA $V$ of $L$, we have three possible scenarios:
\begin{itemize}
\item[i)] if $V$ is maximal in $\BL$ then its generating elements are complementary atoms. 

\item[ii)] if $V$ is included in a $3$-dimensional BSA $W$ which is maximal in $\BL$ then $V$ is generated by an atom of $L$ together with its complement which is a join of two atoms in $L$. Moreover, $W$ contains precisely two other $2$-dimensional BSAs, apart from $V$ itself.

\item[iii)]if $V$ is neither maximal, nor included in a maximal BSA, then $V$ contains an atom of $L$ if and only if all $4$-dimensional BSAs $W\subseteq L$ which contain $V$ also contain precisely three $3$-dimensional BSAs, $V_1$, $V_2$ and $V_3$, such that $V\subset V_i$, $i\in\{1,2,3\}$. 

\end{itemize}
\end{lemma}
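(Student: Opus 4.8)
The plan is to reduce the whole statement to the combinatorics of the grouping/splitting order on generating families, resting on one basic fact: an element $P$ occurring in some $\mathcal{F}_V$ is an atom of $L$ if and only if it \emph{cannot be split}, i.e. cannot be written as a join $P=x\vee y$ of two nonzero orthogonal elements. One direction holds in any ortholattice: if $P$ were an atom and $P=x\vee y$ with $x\perp y$ and $x,y\neq 0$, then $x,y\leq P$ forces $x=y=P$, whence $P\leq P^{\perp}$ and $P=0$, a contradiction. For the converse I would invoke atomicity and orthomodularity (valid since $L$ is an atomic OML): if $P$ is not an atom, choose an atom $a<P$ and put $y=P\wedge a^{\perp}$; orthomodularity gives $P=a\vee y$, a genuine orthogonal $2$-split, and then $\{a,\,y,\,P^{\perp}\}$ generates a $3$-dimensional BSA properly containing $V=\{0,P,P^{\perp},1\}$. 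Thus, recalling that dimension is the cardinality of $\mathcal{F}_V$, a BSA admits a proper extension exactly when one of its generators is a non-atom, and extensions correspond precisely to splittings.

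Cases (i) and (ii) then follow quickly. For (i): if $V$ is maximal in $\mathcal{B}(L)$ its family $\{P,P^{\perp}\}$ admits no splitting, so by the basic fact both $P$ and $P^{\perp}$ are atoms, and they are complementary by construction. For (ii): write $\mathcal{F}_W=\{a,b,c\}$; maximality of the $3$-dimensional $W$ forces $a,b,c$ to be atoms. Since $V$ is a grouping of $\mathcal{F}_W$ into two blocks, one block is a singleton and the other a pair, so up to relabelling $P=a$ is an atom and $P^{\perp}=b\vee c$ is a join of two atoms. The $2$-dimensional BSAs inside $W$ correspond bijectively to partitions of $\{a,b,c\}$ into two nonempty blocks, i.e. to the choice of which element is isolated; there are exactly three, hence exactly two besides $V$.

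The substance lies in (iii), where I would prove a counting formula. For a $4$-dimensional $W\supseteq V$, let $\{S_P,S_{P^{\perp}}\}$ be the partition of $\mathcal{F}_W$ whose blocks have joins $P$ and $P^{\perp}$. A $3$-dimensional BSA $V'$ with $V\subset V'\subset W$ is exactly a grouping of $\mathcal{F}_W$ into three blocks that still refines to $\{S_P,S_{P^{\perp}}\}$, and such a grouping merges precisely one pair of elements lying in a common block. Hence the number of intermediate $3$-dimensional BSAs is $\binom{|S_P|}{2}+\binom{|S_{P^{\perp}}|}{2}$; with $|S_P|+|S_{P^{\perp}}|=4$ and both $\geq 1$, this equals $3$ for a $1+3$ split and $2$ for a $2+2$ split. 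Now if one generator, say $P$, is an atom, then in every $W\supseteq V$ the block $S_P$ must be the singleton $\{P\}$, so for $\dim W=4$ the split is forced to be $1+3$ and the count is always $3$. Conversely, if neither $P$ nor $P^{\perp}$ is an atom, each admits a $2$-split by the basic fact; combining these yields a $4$-dimensional $W$ with a $2+2$ split, giving only two intermediate BSAs and so violating the universal condition. This is the desired equivalence.

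The hard part will be making the hypothesis of (iii) do its job. The test ``every $4$-dimensional $W\supseteq V$ has exactly three intermediate $3$-dimensional BSAs'' must detect atomicity correctly, so I must rule out the degenerate case where $V$ contains an atom yet has \emph{no} $4$-dimensional extension at all (whereupon the universal statement would be vacuously true). Concretely, when $P$ is an atom I need $P^{\perp}$ to be splittable into at least three orthogonal pieces; if $P^{\perp}$ were only $2$-splittable (a join of two atoms), then $V$ would lie inside a maximal $3$-dimensional BSA, which is exactly case (ii). Thus the exclusion ``$V$ neither maximal nor contained in a maximal BSA'' is precisely what forces the existence of a suitable $4$-dimensional $W$ and guarantees non-vacuity. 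Pinning down this reduction — and checking that the three scenarios jointly cover every $2$-dimensional $V$ for the purpose of recognising atoms — is the delicate step; the remaining orthogonality-partition bookkeeping is routine.
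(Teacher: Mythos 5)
Your proposal is correct and takes essentially the same route as the paper's own proof: atoms are characterised as the unsplittable generators (via atomicity plus orthomodularity), cases (i) and (ii) follow by inspecting groupings of atomic generating families, and case (iii) rests on the same $1+3$ versus $2+2$ splitting dichotomy, with your count $\binom{|S_P|}{2}+\binom{|S_{P^{\perp}}|}{2}$ merely a compact form of the paper's explicit enumeration of intermediate $3$-dimensional BSAs. Your closing worry about vacuity is harmless but unnecessary: the ``only if'' direction is established by exhibiting a $2+2$ extension whenever neither generator is an atom, so the equivalence never relies on the universal statement being non-vacuous.
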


\begin{proof}
For the first two statements, it is sufficient to observe that a BSA is maximal in $\BL$ if neither of its generating elements can be split. Since $L$ is atomistic, this implies that the generating elements of a maximal BSA must be atoms of $L$. This proves the first statement. 

For the second statement note that, $W$ being maximal, must be generated by three pairwise orthogonal atoms, call them $P$, $Q$ and $R$ which add up to the identity. The only BSAs included in $W$ are those generated either by $\{P,Q\vee R\}$ or $\{Q,P\vee R\}$ or $\{R,P\vee Q\}$, so $V$ must also be generated by one of these three families. 

For the third statement, let $\mc F_V=\{P, P^{\perp}\}$ denote the generating family of the $2$-dimensional BSA $V$. If $P$ is an atom of $L$ then any $4$-dimensional algebra $W$ which contains $V$ is obtained by splitting $P^{\perp}$ into three elements, since $P$ is an atom and cannot be split. Hence, $W$ has generating family $\mc F_W=\{P,Q_2,Q_3,Q_4\}$. There are precisely three sub-BSAs of $W$ which contain $V$. These are given by
\begin{align*}
			\mc F_{V_1} &= \{P,Q_2\vee Q_3,Q_4\},\\
			\mc F_{V_2} &= \{P,Q_2\vee Q_4,Q_3\},\\
			\mc F_{V_3} &= \{P,Q_3\vee Q_4,Q_2\}.
\end{align*}
Note that there are three other ways of grouping the elements in $W$ to obtain a $3$-dimensional BSA. The resulting $3$-dimensional BSAs $V_i$, $i=4,5,6$ do not contain $V$, since it is not possible to obtain the element $P_1$ by grouping the elements generating these other algebras.

On the other hand, consider a $2$-dimensional BSA given by $\mc F_{\tilde{V}}=\{Q, Q^{\perp}\}$ generated by two orthogonal elements which are not atoms. Since $Q$ is not an atom, it is possible to write it as a join of two orthogonal non-zero elements (in $L$), that is $Q=Q_1\join Q_2$. Similarly, it is possible to express $Q^{\perp}$ as the join of some orthogonal $Q_3$ and $Q_4$. The BSA $\mc F_{\tilde{W}}=\{Q_1,Q_2,Q_3,Q_4\}$ is a $4$-dimensional algebra which includes $\tilde{V}$, but only two of its sub-BSAs also contain $\tilde{V}$, namely
\begin{align*}
			\mc F_{\tilde{V}_1}=\{Q_1\vee Q_2,Q_3,Q_4\},\\
			\mc F_{\tilde{V}_2}=\{Q_1,Q_2,Q_3\vee Q_4\}.
\end{align*}
\qed

\end{proof}

\section{Spiked BSAs} 

Note that a family of pairwise orthogonal atoms of $L$ with join $1$ generates a \textbf{mBSA} (maximal Boolean subalgebra) of $L$.

\begin{definition}
A \textbf{sub-mBSA of $L$} is a BSA of $L$ generated by a family $\mc F$ of pairwise orthogonal elements with join $1$ with the property that only one element in $\mc F$ is the join of two atoms in $L$, while all others are atoms in $L$. 
\end{definition}

\begin{definition}\label{Spiked}
An algebra is \textbf{spiked} if it is either a mBSA, or is generated by a family $\mc F$ of pairwise orthogonal elements with join $1$ which contains precisely one non-atom of the lattice $L$ (we call this the \textbf{leading} element), while all other elements of $\mc F$ are atoms of $L$. If an algebra is spiked, we will say that its family $\mc F$ of generating elements is also spiked.
\end{definition}

\begin{remark}
The concept of a spiked Boolean subalgebra of an orthomodular lattice is related to the concept of a principal dual subalgebra of a Boolean algebra, introduced by Sachs \cite{Sachs}.\footnote{We would like to thank the anonymous referee for pointing out this connection.} Indeed, since all the atoms of a Boolean algebra $B$ must be pairwise orthogonal, the collection of atoms in any spiked family of elements generates a principal ideal of $B$, while the leading element corresponds to the dual of this ideal \cite{GivHal09}. Their union is a principal dual subalgebra of $B$ according to the definition by Sachs \cite{Sachs}. For the general case when $L$ is an orthomodular lattice, every spiked (non-maximal) BSA will also be a principal dual subalgebra of every Boolean subalgebra $B$ of $L$ that contains the given spiked BSA. However, there is no canonical choice for the Boolean algebra $B$ of which a spiked BSA of $L$ is a principal dual subalgebra, so we have used a non-relational term for denoting those BSAs of $L$ which enjoy the property described in Definition \ref{Spiked}.
\end{remark}

\begin{definition}
Given a BSA $V$ which is not a mBSA, we say that a BSA $W$ is a \textbf{successor} of $V$ if $V\subsetneq W$ and there is no BSA $W'$ such that $V\subsetneq W'\subsetneq W$. We call a successor of a successor of a BSA $V$, if it exists, a double successor of $V$.
\end{definition}

Note that in terms of generating elements, if $W$ is a successor of $V$ then the family of elements generating $W$ is obtained from the family of elements generating $V$ by splitting precisely one element into two pairwise orthogonal elements. 

Since $L$ is atomic and orthomodular, such a splitting is possible whenever $V$ is not a mBSA (i.e. when its generating family contains at least one non-atom of $L$). This is because any non-atomic element $P$ of an atomic  lattice must be larger than some atom $Q$ and the orthomodularity condition then allows us to write $P$ as the join of $Q$ and $Q^{\perp}\wedge P$, which are easily seen to be pairwise orthogonal. Moreover, note that any element of $L$ can be written as a join of pairwise orthogonal atoms of $L$.

\begin{proposition}
Let $V\in\BL$ be a BSA generated by the (possibly infinite) family of elements $\mc F_V=\{P_1,P_2,\ldots,P_k,\ldots\}$. If we assume that $V$ is neither a mBSA, nor a sub-mBSA, then $V$ is spiked if and only if all double successors of $V$ contain precisely three successors of $V$. 
\end{proposition}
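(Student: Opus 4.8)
The whole argument rests on translating the order relation in $\BL$ into the combinatorics of \emph{grouping and splitting}, exactly as set up before the statement: for BSAs generated by pairwise orthogonal families, $V'\subseteq W$ holds precisely when $\mc F_{V'}$ is a grouping (coarsening) of $\mc F_W$, a successor of $V$ is obtained by splitting a single element of $\mc F_V$ into two nonzero orthogonal pieces, and consequently $\#\mc F_W=\#\mc F_V+1$ for any successor $W$. My first step would be to record two consequences of this dictionary. First, a double successor $W$ of $V$ is exactly a $W$ with $V\subsetneq W$ and $\#\mc F_W=\#\mc F_V+2$; equivalently, passing from $\mc F_W$ down to $\mc F_V$ merges the members of $\mc F_W$ into groups whose total size-deficit is $2$. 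Since merging a group of size $k$ lowers the cardinality by $k-1$, this leaves only two possibilities: either two disjoint pairs are merged (the two splits act on two \emph{distinct} elements of $\mc F_V$), or a single triple is merged (the two splits act on the \emph{same} element of $\mc F_V$).

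The core of the proof is then a counting lemma for the successors of $V$ lying below a fixed double successor $W$; these are precisely the $V'$ with $\mc F_{V'}$ a one-element coarsening of $\mc F_W$ that still coarsens to $\mc F_V$. In the ``two pairs'' case $\mc F_W$ contains $\{x_1,x_2\}$ joining to $P_i\in\mc F_V$ and $\{y_1,y_2\}$ joining to $P_j\in\mc F_V$ with $P_i\neq P_j$; the only admissible intermediate coarsenings merge one of the two pairs, giving exactly two successors. In the ``single triple'' case $\mc F_W$ contains $\{z_1,z_2,z_3\}$ joining to $P_i$, and the admissible intermediates merge a pair from the triple, i.e. the three families obtained by replacing the triple by $\{z_1\vee z_2,z_3\}$, $\{z_1\vee z_3,z_2\}$, or $\{z_2\vee z_3,z_1\}$, giving exactly three successors. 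I would check that in each case the listed intermediates are pairwise distinct (the joins are distinct because $z_1,z_2,z_3$, respectively the two pairs, are pairwise orthogonal and nonzero) and that each is a genuine successor of $V$ with $W$ a successor of it. Thus every double successor of $V$ contains either exactly two or exactly three successors, according to whether its two generating splits fall on distinct elements or on one element of $\mc F_V$.

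With this in hand the equivalence is immediate. For the ``only if'' direction, if $V$ is spiked but not a mBSA then $\mc F_V$ has a unique non-atom (the leading element); every split must occur on it, so no double successor can be of the ``two distinct elements'' type, and each therefore contains exactly three successors. For the ``if'' direction I would argue by contraposition: if $V$ is not spiked then, as it is not a mBSA, $\mc F_V$ contains at least two non-atoms $P_i,P_j$, and each can be split into two nonzero orthogonal pieces (by atomicity and orthomodularity, writing a non-atom $P$ as $Q\vee(Q^\perp\wedge P)$ for an atom $Q<P$). Splitting both produces a ``two pairs'' double successor containing only two successors of $V$, contradicting the hypothesis. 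Finally I would invoke the exclusion of sub-mBSAs to ensure the ``only if'' conclusion is not vacuous: since the leading element $P$ of a spiked non-sub-mBSA $V$ is not a join of exactly two atoms, every split $P=A\vee B$ leaves at least one of $A,B$ a non-atom, so every successor of $V$ has a further successor and double successors genuinely exist.

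The step I expect to be the main obstacle is the counting lemma, specifically making rigorous that the merge pattern from $\mc F_W$ to $\mc F_V$ is uniquely determined and that the enumerated intermediate coarsenings are exactly the successors of $V$ below $W$ (no more, no fewer, and distinct). Everything else — the reduction of the successor relation to single splits, and the existence and nondegeneracy remarks secured by the mBSA and sub-mBSA hypotheses — is bookkeeping on top of the grouping/splitting correspondence already established.
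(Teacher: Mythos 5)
Your proposal is correct and follows essentially the same route as the paper: the paper's own proof simply declares the result ``completely analogous to the proof of the third statement of Lemma \ref{3.2}'', and your argument is precisely that analogy written out in full --- splits concentrate on the unique leading element when $V$ is spiked, giving a merged triple and hence exactly three intermediate successors, while a non-spiked, non-maximal $V$ admits two splittable generators yielding a ``two disjoint pairs'' double successor with only two intermediates. Your additional care (uniqueness of the merge pattern, distinctness of the intermediates, and non-vacuousness via the sub-mBSA exclusion) only makes explicit what the paper leaves implicit.
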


\begin{proof}
Completely analogous to the proof of the third statement of Lemma \ref{3.2}.\qed
\end{proof}

This result is important because it shows that the order structure of $\BL$ allows us to decide whether a given BSA $V\in\BL$ is spiked or not. Since every spiked BSA has a distinguished leading element, one can guess that we want to somehow link the elements of the lattice $L$ to the spiked BSAs of $\BL$ using the information encoded within the order structure of $\BL$, which is what we will do in the following section.

Let $V$ be a $2$-dimensional BSA generated by $\mc F_V=\{P,P^{\perp}\}$, and let $\mc S_V$ be the set of spiked BSAs which contain $V$. The generating family $\mc F$ of an element $\tilde V$ of $\mc S_V$ is obtained either by completely splitting $P$ into pairwise orthogonal atoms and splitting $P^{\perp}$ into a spiked family of elements, or by completely splitting $P^{\perp}$ into pairwise orthogonal atoms and splitting $P$ into a spiked family of elements.

The set $\mc S_V$ of spiked BSAs which contain $V$ is partially ordered under inclusion.
\begin{itemize}
	\item [(a)] If $V$ is not spiked, the generating family $\mc F$ of a \emph{minimal} element in $\mc S_V$ with respect to this partial order is obtained by either taking $P$ as the leading element and splitting $P^{\perp}$ into atoms, or by taking $P^{\perp}$ as leading projection and splitting $P^{\perp}$ into atoms. Let $\mc M_V$ denote the set of minimal elements in $\mc S_V$. We call $\mc M_V$ the set of minimal spiked sup-BSAs of $V$ in $\BL$. 
	\item [(b)] If $V$ is spiked, the minimal element of $\mc S_V$ which contains $V$ will of course be $V$ itself. Hence for a spiked $2$-dimensional BSA $V$ we establish by convention the set $\mc M_V$ to be the set of all mBSAs which contain $V$, as these algebras correspond to keeping the atom fixed and completely splitting the co-atom, together with $V$ itself which corresponds to keeping the co-atom fixed.
		\item[(c)] if $V$ is spiked and submaximal, we again define $\mc M_V$ to be the set of all mBSAs which contain $V$ together with $V$ itself.   
	\item [(d)] If $V$ is spiked and maximal then it is generated by a pair of orthocomplementary atoms. These two atoms are  not comparable to any other elements in the lattice $L$ except for the top and bottom elements. The set $\mc M_V$ contains only one element, namely $V$ itself.
\end{itemize}

\section{Reconstructing $L$ from $\BL$} 
As we have already remarked, every $2$-dimensional BSA $V$ with $\mc F_V=\{P,P^{\perp}\}$ is generated by two complementary elements. Hence there is an obvious two-to-one mapping from $L$ to the $2$-dimensional elements of $\BL$, which of course are the atoms of the poset $\BL$. Therefore, in order to generate all the elements of the atomic orthomodular lattice $L$ from the poset $\BL$, we need to assign two elements (corresponding to the two elements $P,P^{\perp}$) to each $2$-dimensional BSA $V$ with $\mc F_V=\{P,P^{\perp}\}$. 

The minimal spiked sup-BSAs of a given $2$-dimensional BSA $V$ make good candidates for this assignment. On the one hand, they can be characterised using only information derived from the poset structure of $\BL$, on the other hand, a minimal spiked sup-BSA of $V$ can be identified with its leading element, which is one of the two generating elements of $V$. Yet, this would give us a many-to-one mapping in general, since there are many (e.g. in $\BH$ continuously many) minimal spiked sup-BSAs of $V$ with the same leading element, corresponding to the many possible ways of splitting its complement. Therefore, it will make sense to define two equivalence classes of algebras within $\mc M_V$ consisting of those algebras whose generating families of elements have the same leading element. 

In the non-degenerate cases (a-c) above, our task is to identify these two equivalence classes using the information encoded within the order structure of $\BL$. By partitioning the sets $\mc M_V$ into two equivalence classes, we are in effect identifying all pairs of complementary elements of the lattice $L$. Later we will see how the order relations between non-complementary elements can be replicated using the corresponding equivalence classes.

Of course, in the degenerate case (d) when $V$ is also maximal, we already know that $V$ is generated by two orthocomplementary atoms, and since these are not comparable to any other elements of $L$, the set $\mc M_V$ does not need any further analysis. 

For a spiked $2$-dimensional BSA $V$ with $\mc F_V=\{P,P^{\perp}\}$, where $P$ is an atom, it is easy to establish what the two equivalence classes should be. One of them, call it $\mathcal{R}_V$, ought to contain the mBSAs which contain $V$ -- this corresponds to keeping the atom $P$ as the `leading' element and completely splitting its complement $P^{\perp}$ into atoms (this is a slight abuse of terminology, since there is no leading element in a mBSA). The other equivalence class, call it $\mathcal{S}_V$, ought to contain only $V$ itself -- this corresponds to keeping the co-atom $P^{\perp}$ as the leading element.

Similarly, for a $2$-dimensional BSA $W$ generated by an element $P$ that is the join of two atoms, together with its complement $P^{\perp}$, we define one equivalence class to contain all the sub-mBSAs in $\mathcal{M}_W$ and the other one to contain all the $3$-dimensional BSAs in $\mathcal{M}_W$.

For non-spiked $2$-dimensional BSA whose (minimal) generating elements are joins of $3$ or more atoms, the two equivalence classes can be determined by considering the inclusion relations between elements belonging to different sets of minimal spiked sup-BSAs, as we will show now.
\begin{lemma}
If $V$ is a non-spiked $2$-dimensional BSA whose generating elements are joins of $3$ or more atoms, and if $A,B\in \mc M_V$, then $A$ and $B$ have the same leading element if and only if there exists some non-spiked $2$-dimensional $W\neq V$ and $C,D\in \mathcal{M}_W$ such that $A\subseteq C$ and $B\subseteq D$.
\end{lemma}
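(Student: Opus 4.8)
The plan is to reduce the entire statement to a single monotonicity principle for \emph{leading elements} under inclusion, and then read off both directions from it. Recall that since $V$ is non-spiked with $\mc F_V=\{P,P^\perp\}$, neither $P$ nor $P^\perp$ is an atom, and every member of $\mc M_V$ is a minimal spiked sup-BSA, so its leading element is either $P$ or $P^\perp$; hence ``$A$ and $B$ have the same leading element'' is a genuine dichotomy. The principle I would isolate first is this: if $X$ is a spiked BSA with non-atomic leading element $\ell_X$ and $X\subseteq Y$ with $Y\in\mc M_W$ for a non-spiked $2$-dimensional $W=\{R,R^\perp\}''\neq V$, then the leading element $\ell_Y\in\{R,R^\perp\}$ of $Y$ satisfies $\ell_Y<\ell_X$. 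To see this I would argue that each atom occurring in $\mc F_X$ is an atom of $L$ lying in $Y$, hence is an atom of $Y$ and so belongs to $\mc F_Y$; since $\ell_Y$ is not an atom, all these atoms lie below $\ell_Y^\perp$, whence $\ell_X^\perp=\bigvee(\text{atoms of }X)\le\ell_Y^\perp$, i.e. $\ell_Y\le\ell_X$, with equality excluded because $\ell_Y\in\{R,R^\perp\}$ and $\ell_Y=\ell_X$ would force $W=V$.

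Granting this principle, the reverse implication ($\Leftarrow$) is a short case analysis via the contrapositive. Suppose $A$ has leading element $P$ and $B$ has leading element $P^\perp$, yet some non-spiked $2$-dimensional $W=\{R,R^\perp\}''\neq V$ and $C,D\in\mc M_W$ satisfy $A\subseteq C$ and $B\subseteq D$. The principle gives $\ell_C<P$ and $\ell_D<P^\perp$ with $\ell_C,\ell_D\in\{R,R^\perp\}$. If $\ell_C=\ell_D$ then this common element is below both $P$ and $P^\perp$, hence below $P\wedge P^\perp=0$, contradicting that it is a nonzero non-atom; if $\{\ell_C,\ell_D\}=\{R,R^\perp\}$ then complementing one of the two strict inequalities (complementation being order-reversing) produces $P<R$ or $P<R^\perp$, clashing with $\ell_C<P$. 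Every case is impossible, so the existence of $W,C,D$ forces $A$ and $B$ to share a leading element.

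For the forward implication ($\Rightarrow$) I would construct $W,C,D$ explicitly. Assume $A,B$ share leading element $P$ (the $P^\perp$ case is symmetric), so $\mc F_A=\{P\}\cup\{a_i\}$ and $\mc F_B=\{P\}\cup\{b_j\}$ with $\bigvee a_i=\bigvee b_j=P^\perp$. Since $P$ is a join of at least three atoms, I choose an atomic decomposition of $P$ and let $R$ be the join of two of its atoms; then $R$ is a non-atom, $R<P$, and $R^\perp\ge P^\perp$ is a non-atom, so $W=\{R,R^\perp\}''$ is non-spiked and differs from $V$ (as $R\notin\{P,P^\perp\}$). Using the orthomodular identity I write $R^\perp=P^\perp\vee(P\wedge R^\perp)$ and fix an atomic decomposition $\{e_k\}$ of the nonzero element $P\wedge R^\perp$. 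Then $\mc F_C=\{R\}\cup\{a_i\}\cup\{e_k\}$ and $\mc F_D=\{R\}\cup\{b_j\}\cup\{e_k\}$ define minimal spiked sup-BSAs $C,D$ of $W$ (leading element $R$, complement fully atomized), so $C,D\in\mc M_W$; and because $P=R\vee\bigvee_k e_k$ lies in both, one checks directly that $A\subseteq C$ and $B\subseteq D$.

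The steps I expect to require the most care are the lattice-theoretic justifications underpinning the monotonicity principle and the construction: namely that an atom of $L$ contained in a BSA is automatically one of that BSA's generating atoms, and the repeated appeals to atomisticity together with the orthomodular identity $x\le y\Rightarrow y=x\vee(y\wedge x^\perp)$ ensuring that the relevant complements split into exactly the atoms I need. These are precisely the tools already used in the proof of Lemma~\ref{3.2} and in the remarks preceding it, so no new machinery is needed; the genuinely new observation is simply that the whole biconditional is governed by how leading elements behave under inclusion in $\BL$.
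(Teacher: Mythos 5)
Your proof is correct, and its overall architecture is the same as the paper's. For the reverse implication you use exactly the paper's argument: monotonicity of leading elements under inclusion (the only non-atom of $\mc F_C$ must sit inside the grouping that produces the leading element of $\mc F_A$, equivalently your orthocomplement argument via the atoms), followed by the identical two-case analysis: $\ell_C=\ell_D$ forces $\ell_C\leq P\wedge P^{\perp}=0$, and $\ell_D=\ell_C^{\perp}$ forces $P<\ell_C<P$ once $W\neq V$ rules out equality.

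The forward implication is where your proposal genuinely differs, and in fact repairs a slip in the paper. The paper splits $P=Q\vee Z$ with $Q$ an \emph{atom} and $Z$ a non-atom, forms $\mc F_C=\{Q,Z,S_1,\ldots\}$, $\mc F_D=\{Q,Z,R_1,\ldots\}$, and then claims $C,D\in\mc M_W$ for the ``non-spiked'' algebra $W$ with $\mc F_W=\{Q,Q^{\perp}\}$. But with $Q$ an atom, $\{Q,Q^{\perp}\}''$ is \emph{spiked} by the paper's own definition, and under the paper's convention for spiked $2$-dimensional algebras $\mc M_W$ consists of the mBSAs containing $W$ together with $W$ itself; the constructed $C,D$, whose generating families contain the non-atom $Z$, therefore lie in neither class. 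The witness has to be the $2$-dimensional algebra generated by the \emph{non-atomic} piece of the split (in the paper's notation, $\{Z,Z^{\perp}\}''$), i.e.\ the algebra generated by the common leading element of $C$ and $D$ — and that is precisely the role your $W=\{R,R^{\perp}\}''$ plays: your $R$ is a join of two atoms under $P$, you atomize $P\wedge R^{\perp}$ alongside the atoms of $P^{\perp}$, and the resulting $C,D$ really are minimal spiked sup-BSAs of a non-spiked $W\neq V$. One small caution on your write-up: your ``monotonicity principle'' is stated for an arbitrary spiked $X$, but the strict inequality $\ell_Y<\ell_X$ fails in that generality (take $X=Y$); the justification you give — that $\ell_Y=\ell_X$ forces $W=V$ — is valid only when $\ell_X\in\{P,P^{\perp}\}$, i.e.\ when $X\in\mc M_V$. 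Since both of your applications take $X\in\{A,B\}\subseteq\mc M_V$, nothing breaks, but the principle should be stated with that restriction.
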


\begin{proof} Assume that $\mc F_W=\{Q,Q^{\perp}\}$ and $\mc F_V=\{P,P^{\perp}\}$ and that $A,B\in\mc M_V$ and $C,D\in\mc M_W$ such that $A\subseteq C$ and $B\subseteq D$. If $P_A, P_B, P_C$ and $P_D$ are the respective leading elements  of $A,B,C$ and $D$ (it makes sense to speak about leading elements, since $\mc M_V$ and $\mathcal{M}_W$ do not contain any mBSAs, as neither $V$ nor $W$ are spiked BSAs), the inclusion relations imply that $P_C\leq P_A$ and $P_D\leq P_B$.

Note at this point that the leading element of a minimal spiked sup-BSA of $V$ must be equal to either $P$ or $P^{\perp}$ (hence $P_A,P_B\in\{P,P^{\perp}\}$), while the leading element of a minimal spiked sup-BSA of $W$ must be equal to either $Q$ or $Q^{\perp}$ (hence $P_C,P_D\in\{Q,Q^{\perp}\}$). Assume towards a contradiction that $P_A\neq P_B$. Then $P_A$ and $P_B$ must be complementary elements. But if $P^{\perp}_A=P_B$, then $P_C$ and $P_D$ must also be complementary elements, otherwise the inclusion relations would imply that $P_A\geq P_C$ and $P^{\perp}_A\geq P_D=P_C$, which is imposible. This means that $P^{\perp}_C=P_D$. However, this leads to a contradiction, since 
\[
			P^{\perp}_C=P_D\leq P_B=P^{\perp}_A\Longrightarrow P_C\geq P_A,
\]
but $P_C=P_A$ is not possible since $W\neq V$.  Hence $P_A$ must be equal to $P_B$.

On the other hand, if $A$ and $B$ have the same leading element, then their generating families are of the form $\mc F_A=\{P, R_1,R_2,\ldots\}$ and $\mc F_B=\{P,S_1,S_2,\ldots\}$, and it is possible to write $P$ as the join of two orthogonal elements $Q$ and $Z$, where $Q$ is an atom, and $Z$ is not an atom. The BSAs $C$ and $D$ given by
\[
			\mc F_C:=\{Q,Z,S_1,S_2,\ldots\} \text{ and } \mc F_D:=\{Q,Z,R_1,R_2,\ldots\}
\]
are sup-BSAs of $A$ and $B$, respectively, and they belong to the set of minimal spiked sup-BSAs of the non-spiked $2$-dimensional BSA $W$ with $\mc F_W=\{Q,Q^{\perp}\}$. \qed
\end{proof}

Once the equivalence classes on the sets of minimal spiked sup-BSAs have been established, it is possible to define an order $\preceq$ on them which replicates the order within the lattice of elements.

\begin{definition}\label{Def1}
If $[X]$ and $[Y]$ are two equivalence classes corresponding to non-spiked BSAs, we say that $[X]\preceq [Y]$ if there exists $A\in[X]$ and $B\in [Y]$ such that $A\supseteq B$.
\end{definition}

If $A\supseteq B$ as above and $\mc F_A=\{P,R_1,R_2,\ldots\}$ with all the $R_i$ atoms while $\mc F_B=\{Q,S_1,S_2,\ldots\}$ with all the $S_i$ atoms, then the generating elements of $B$ are obtained by grouping the generating elements in $A$. This implies that the leading element of $B$ (which is the only non-atom) must be equal to a join of generating elements of $A$. This join must include the leading element of $A$ among its terms, as this is the only possible way of grouping the generating elements of $A$ into a spiked family. Hence there is some index set $I$ such that 
\[
			Q=P\vee\bigvee_{i\in I} S_i
\]
and hence $Q\geq P$.

Since all elements of $[X]$ have the same leading element and similarly, all elements of $[Y]$ have the same leading element, the order relation introduced in Definition \ref{Def1} is well defined. 

Moreover, given any two elements $Q$ and $P$ of an atomic orthomodular lattice which are neither atoms nor co-atoms, and which satisfy the order relation $Q\geq P$ within $L$, one has $Q=P\vee (P^{\perp}\wedge Q)$, and we know that $P^{\perp}\wedge Q$ can be expressed as a join of pairwise orthogonal atoms. Hence the equivalence classes corresponding to elments of $L$ which are neither atoms nor co-atoms, will always be related by $\preceq$ whenever their corresponding leading elements are related within the lattice $L$.

We have to use a different approach for defining the order relations which involve the equivalence classes corresponding to atoms and co-atoms of $L$. 

Recall first that for a spiked $2$-dimensional BSA $V$, the set $\mathcal{R}_V$ denotes the mBSAs which contain $V$, and this is the equivalence class which corresponds to the atom of $V$, while $\mathcal{S}_V$ denotes the one member equivalence class (containing only $V$ itself) which corresponds to the co-atom of $V$.

\begin{definition}\label{Def2}
If $V$ and $W$ are $2$-dimensional BSAs such that $V$ is spiked and $W$ is not spiked, and $[X]\subset\mathcal{M}_W$, then $\mathcal{R}_V\prec [X]$ if there exists $A\in\mathcal{M}_W-[X]$ such that $V\subseteq A$. If this is the case, then also $\mathcal{S}_V\succ \mathcal{M}_W-[X]$. 
\end{definition}

Note that if $V$ with $\mc F_V=\{P,P^{\perp}\}$ is a spiked BSA with $P$ an atom, and $W$ with $\mc F_W=\{Q,Q^{\perp}\}$ is a non-spiked BSA. And if moreover $P<Q$, then $P^{\perp}>Q^{\perp}$ and  there is some way of decomposing $Q$ into a join over a set of atoms which includes $P$. Hence $V$ will be contained in some minimal spiked sup-BSA of $W$ which has $Q^{\perp}$ as its leading element. So $\preceq$ is again well-defined and it captures all the relations between atoms (or co-atoms) and other elements of $L$.

The only relations from $L$ we have not yet captured are those between the atoms and co-atoms themselves. We do this with the following definition.

\begin{definition}\label{Def3}
If both $V$ and $W$ are spiked, their generating elements will contain either equal atoms (if $V=W$) or pairwise orthogonal atoms (if $V$ and $W$ are both contained in some maximal BSA of $L$) or incomparable atoms. Between equivalence classes we then either have $\mathcal{R}_V=\mathcal{R}_W$ and $\mathcal{S}_V=\mathcal{S}_W$, or we define $\mathcal{R}_V\preceq \mathcal{S}_W$ and $\mathcal{R}_W\preceq \mathcal{S}_V$, or they are incomparable. 
\end{definition}

\begin{theorem}			\label{Thm_MainResult}
Let $\mc B_2(L)$ denote the set of $2$-dimensional BSAs of an atomistic ortholattice $L$ which are not mBSAs. Let $\mc M_2(L)$ denote the set of $2$-dimensional mBSAs of $L$. The set $$\mc C(L):=\{\mc M_V/_\sim \}_{V\in \mc B_2(L)}\cup \{A_W^1, A_W^2\}_{W\in \mc M_2(L)} \cup \{0,1\}$$ together with the order $\preceq$ defined in $\ref{Def1}$, $\ref{Def2}$ and $\ref{Def3}$ above, and the additional conventions that $0$ and $1$ stand for the top and the bottom elements of $\mc C(L)$, while the $A_W^1$s and $A_V^2$s are pairs of orthocomplementary atoms which are only comparable with $0$ and $1$, is isomorphic to $L$.
\end{theorem}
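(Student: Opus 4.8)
The plan is to construct an explicit bijection $\Phi\colon L\to\mc C(L)$ and to prove that it transports the order $\le$ of $L$ onto the relation $\preceq$, i.e. that $x\le y$ in $L$ if and only if $\Phi(x)\preceq\Phi(y)$. Because $\le$ is a partial order, this single equivalence simultaneously shows that $\preceq$ is a well-defined partial order and that $\Phi$ is an order isomorphism. The orthocomplementation is then recovered for free: the two objects of $\mc C(L)$ attached to a common $2$-dimensional BSA correspond exactly to a complementary pair $P,P^\perp$, so the canonical involution that swaps the two objects on each such fibre (and exchanges $0,1$) is intertwined by $\Phi$ with $\perp$. Since $L$ is orthomodular, $\Phi$ is thus an isomorphism of orthomodular lattices.

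To define $\Phi$, set $\Phi(0)=0$ and $\Phi(1)=1$. For $P\in L\setminus\{0,1\}$ let $V_P=\{0,P,P^\perp,1\}$, which always lies in $\BL$ since $\{P,P^\perp\}$ is a pairwise orthogonal family with join $1$. The element $P$ is one of the two generators of $V_P$, and I assign to it the unique object of $\mc C(L)$ attached to $V_P$ that carries $P$ as its distinguished generator: if $V_P\in\mc M_2(L)$ (both $P,P^\perp$ atoms) it is one of the atom-objects $A_{V_P}^1,A_{V_P}^2$; if $V_P$ is spiked but not maximal it is $\mathcal R_{V_P}$ when $P$ is the atom and $\mathcal S_{V_P}$ when $P$ is the co-atom; and if $V_P$ is non-spiked it is the class in $\mc M_{V_P}/{\sim}$ whose members have leading element $P$. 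In the non-spiked case $\mc M_{V_P}/{\sim}$ consists of exactly two classes, one per generator, because each minimal spiked sup-BSA keeps one of $P,P^\perp$ whole as its leading element and splits the other completely into atoms (possible since $L$ is atomistic). That these assignments are well-defined and detectable purely from the order of $\BL$ is the content of the Lemma preceding Definition~\ref{Def1} together with the Proposition characterising spikedness. Since $P\mapsto V_P$ is a two-to-one surjection of $L\setminus\{0,1\}$ onto the set of $2$-dimensional BSAs with fibre $\{P,P^\perp\}$, and $\Phi$ refines it to a bijection on each fibre by matching the two attached objects, $\Phi$ is a bijection.

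It then remains to verify $x\le y\iff\Phi(x)\preceq\Phi(y)$ by cases according to whether each of $x,y$ is generic (neither atom nor co-atom), an atom, or a co-atom, equivalently whether $V_x,V_y$ are non-spiked, spiked, or $2$-dimensional mBSAs. The generic--generic case is governed by Definition~\ref{Def1}: the discussion following it shows that $[X]\preceq[Y]$ forces the leading element of $[X]$ to lie below that of $[Y]$ (an inclusion $A\supseteq B$ means the generators of $B$ are obtained by grouping those of $A$, so the leading element of $B$ is a join of generators of $A$ that must contain the leading element of $A$), and conversely that $P\le Q$ is realised by exhibiting suitable containing BSAs using the orthomodular decomposition $Q=P\vee(P^\perp\wedge Q)$ with $P^\perp\wedge Q$ a join of atoms. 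The mixed cases, where one element is an atom or co-atom, are handled by Definition~\ref{Def2} and the paragraph after it, again invoking orthomodularity to split $Q\ge P$; the atom--atom and co-atom cases, including orthogonal atoms lying in a common mBSA and the degenerate $2$-dimensional mBSAs, are handled by Definition~\ref{Def3}, by Lemma~\ref{3.2}, and by the convention that the atom-objects $A_W^i$ are comparable only to $0$ and $1$ — which is correct because a generator of a $2$-dimensional mBSA is simultaneously an atom and a co-atom of $L$, hence comparable to no proper element other than $0$ and $1$.

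The main obstacle is the exhaustiveness and mutual consistency of this case analysis rather than any single deep step. I must check that Definitions~\ref{Def1}--\ref{Def3} and the conventions cover every ordered pair exactly once and never conflict, and in particular that $\preceq$ is independent of the chosen representatives in the mixed cases (for instance, that $\mathcal R_V\prec[X]$ does not depend on which $A\in\mc M_W\setminus[X]$ witnesses it); this is where the two-class structure of each $\mc M_V/{\sim}$ and the earlier order-theoretic detection of leading elements do the real work. A secondary delicate point is the boundary between the non-spiked subcase treated by the Lemma before Definition~\ref{Def1} (generators that are joins of three or more atoms) and the remaining non-spiked configurations whose generators are joins of exactly two atoms; these must be folded into the same argument so that the generic--generic order is captured uniformly. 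Once representative-independence and exhaustiveness are secured, the equivalence $x\le y\iff\Phi(x)\preceq\Phi(y)$ follows, and with it the theorem.
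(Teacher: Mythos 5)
Your proposal is correct and takes essentially the same route as the paper: the paper's own proof likewise just exhibits the assignment $0\mapsto 0$, $1\mapsto 1$, the generators of $2$-dimensional mBSAs to the pairs $\{A_W^1,A_W^2\}$, and every other element $P$ to the class of spiked BSAs $[\{P,R_1,R_2,\ldots\}]\in\mathcal{M}_{\{P,P^{\perp}\}}$ with leading element $P$, delegating order preservation to Definitions \ref{Def1}, \ref{Def2}, \ref{Def3} and the preceding lemmas, exactly as you do. The exhaustiveness and representative-independence issues you flag are indeed where the real content lies, and in the paper they are carried by those earlier results rather than by the (very brief) proof of the theorem itself.
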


\begin{proof}

The lattice isomorphism can easily be constructed using the results presented so far. It sends the top and bottom elements of $L$ to the top and bottom elements of $\mc C(L)$. The orthocomplementary atoms of $L$ are identified with the elements of the pairs of the form $\{A_W^1, A_W^2\}$. And for all the other elements $P\in L$, if $\{P,R_1,R_2,\ldots\}$ is a spiked family of elements, we have the assignment
\[
			P\mapsto [\{P,R_1,R_2,\ldots,\}]\in\mathcal{M}_{\{P,P^{\perp}\}}
\]

\qed
\end{proof}

\end{chapter}
\begin{chapter}{Entropy within the Topos Approach}\label{CE}

It has been argued \cite{1,2,3,5} that Quantum Mechanics can be understood in a more natural way as a theory about the possibilities and impossibilities of information transfer and processing as opposed to a theory about the mechanics of nonclassical waves or particles. Understanding the representation and manipulation of information can help us shed light on the fundamental structure of both classical and quantum theories and it can lead to fresh insights about the essential differences between these two.

Shannon entropy \cite{Sha48} is used in classical mechanics as a measure of the unpredictability of a physical system. Its analogue in quantum mechanics is the von Neumann entropy \cite{vN55}. Several generalizations of these entropies have already been considered \cite{stephanie,barnum} and it is interesting to ask how much information about a quantum state can be encoded using Shannon and von Neumann entropies. This led to the task of finding a formulation of the notion of entropy within the framework of the topos approach. It turns out that this can be done in a natural way, via the contextual entropy construction, which unifies the concepts of Shannon and von Neumann entropy. It should also be noted that throughout this section we only consider quantum theory on finite-dimensional Hilbert spaces.


In classical physics entropy is a real-valued function defined on the set of probability distributions. In the quantum case, entropy is a real-valued function on the set of density matrices. In this chapter we shall define contextual entropy within the topos approach. There the role of the real numbers is played by a slightly more complicated presheaf based on real numbers, whose `points' are given by its global sections, while the set of states is given by a set of measures on a certain non-commutative space (the spectral presheaf). In analogy with the classical case, our entropy will be a map from this set of measures to the set of global sections of our real-number object.

In particular, in section \ref{ctxt} we show how a measure on the spectral presheaf (i.e. a state) gives a canonical probability distribution in each classical context, and how it is therefore possible to associate a Shannon entropy to each classical `perspective' on a state. Contextual entropy is defined in terms of this collection of Shannon entropies, which are shown to form a global section of a certain presheaf of real values (which can differ from context to context). We also show how one can retrieve the von Neumann entropy of a state from such a global section. This confirms our expectation that entropy within the topos approach should `look' like Shannon entropy from each classical perspective, but one can also retrieve the quantum mechanical von Neumann entropy by taking into account all perspectives at the same time. 

In fact, one can do even more than this, and we show that contextual entropy encodes enough information to explicitly reconstruct the quantum state from which it originated. This argument relies on a powerful result known as the Schur-Horn Lemma. In Sections \ref{r1} and \ref{arbitrary} we show how pure quantum states and general quantum states respectively can be reconstructed from the contextual entropy map.


In Section \ref{other} we show that it is possible to adapt other classical entropies within the formalism of the topos approach. In particular, we show how Renyi entropies can be defined within the topos formalism, and moreover we will see that contextual Renyi entropies also encode sufficient information to allow for state reconstruction.

\section{Entropy in Classical and Quantum Mechanics}\label{bckg}

\subsection{Majorization order}

In classical physics states can be interpreted as probability measures on the classical state space. In the discrete case, these measures are simply probability distributions over a set of $n$ outcomes. Given two such distributions we would like to tell which one is more `uniform'. For this purpose one can compare the two distributions in the so called majorization order (for more details, see \cite{BenZyc06}. If $\overrightarrow{x}$ is a vector with $n$ components representing a probability distribution, let $x^{\downarrow}$ denote the vector with the same components as $\overrightarrow{x}$ but arranged in decreasing order.

\begin{definition}
Given two $n$-dimensional probability distributions, we say that $\overrightarrow{x}$ is majorized by $\overrightarrow{y}$, and write $\overrightarrow{x}\prec\overrightarrow{y}$, if and only if 
$$\sum_{i=1}^k x_i^{\downarrow}\leq \sum_{i=1}^k y_i^{\downarrow},\ \ \forall~k=1,\ldots,n$$
\end{definition}

Roughly speaking, this would mean that $\overrightarrow{x}$ is a more  `uniform' probability distribution. The smallest probability distribution with respect to the majorization order is given by the totally mixed distribution $\overrightarrow{x_*}=(1/n,\ldots,1/n)$, while the largest probability distribution with respect to this order is given by any distribution $\overrightarrow{y}$ such that $y^{\downarrow}=(1,0,\ldots,0)$.

Because the passage of time tends to make things more uniform, many processes in physics occur in the direction of the majorization arrow.

\begin{definition}
A bistochastic matrix is a matrix $B$ with positive entries such that the entries of each row and of each column add up to one.
\end{definition}

Note that a bistochastic matrix preserves positivity, and as the sum of the entries of each column add up to one, it also preserves the $l_1$ norm, when acting on positive vectors. The fact that its rows also add up to one implies that it leaves the totally mixed distribution , $\overrightarrow{x_*}$, invariant. Hence a bistochastic matrix acting on the set of probability distributions will cause some kind of contraction of the probability simplex towards its centre. More precisely, we have the following lemma which we state without proof:

\begin{lemma}[Hardy, Littlewood and Polya \cite{HLP}]
 $\overrightarrow{x}\prec\overrightarrow{y}$ if and only if there exists a bistochastic matrix $B$ such that $\overrightarrow{x}=B\overrightarrow{y}$.
\end{lemma}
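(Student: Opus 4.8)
The plan is to prove the two implications separately, treating the construction of $B$ from a majorization relation as the substantial half.

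\textbf{The easy direction} ($\overrightarrow{x}=B\overrightarrow{y}$ implies $\overrightarrow{x}\prec\overrightarrow{y}$). First I would record the variational description of the partial sums: for any vector $\overrightarrow{z}$ of length $n$,
\[ \sum_{i=1}^k z_i^{\downarrow} \; = \; \max_{|S|=k}\ \sum_{i\in S} z_i , \]
the maximum being taken over all $k$-element subsets $S\subseteq\{1,\dots,n\}$. Given $\overrightarrow{x}=B\overrightarrow{y}$ and any such $S$, I would write
\[ \sum_{i\in S} x_i \; = \; \sum_{j} \Big(\sum_{i\in S} B_{ij}\Big) y_j \; = \; \sum_j d_j\, y_j , \]
and observe that, since the columns of $B$ sum to $1$ and its entries are nonnegative, each coefficient satisfies $0\leq d_j\leq 1$, while the unit row sums give $\sum_j d_j = |S| = k$. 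Maximising the linear functional $\overrightarrow{d}\mapsto\sum_j d_j y_j$ over the polytope $\{\overrightarrow{d}:0\leq d_j\leq 1,\ \sum_j d_j=k\}$ is attained at a $0/1$ vertex supported on the $k$ largest $y_j$, which gives $\sum_{i\in S}x_i\leq\sum_{i=1}^k y_i^{\downarrow}$. Taking the maximum over $S$ yields the majorization inequalities, and the case $k=n$ gives equality of the totals because a bistochastic $B$ preserves the sum of the components.

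\textbf{The hard direction} ($\overrightarrow{x}\prec\overrightarrow{y}$ implies the existence of $B$). Here my plan is to show that $\overrightarrow{x}$ can be reached from $\overrightarrow{y}$ by a finite chain of elementary \emph{$T$-transforms} — bistochastic matrices of the form $T=\lambda I+(1-\lambda)Q$, where $Q$ is the transposition of two coordinates and $\lambda\in[0,1]$ — since a product of bistochastic matrices is again bistochastic and would supply the desired $B$. Assuming (after applying permutation matrices, which are themselves bistochastic) that both vectors are sorted in decreasing order and that $\overrightarrow{x}\neq\overrightarrow{y}$, I would locate indices $j<k$ with $y_j>x_j$ and $y_k<x_k$, whose existence is forced by $\sum_i x_i=\sum_i y_i$ together with the partial-sum inequalities. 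A single Robin-Hood step transferring mass from the $j$-th to the $k$-th coordinate of $\overrightarrow{y}$, with $\lambda$ chosen so that the resulting vector $\overrightarrow{y}'$ agrees with $\overrightarrow{x}$ in at least one further coordinate, produces $\overrightarrow{y}'$ satisfying $\overrightarrow{x}\prec\overrightarrow{y}'\prec\overrightarrow{y}$. Iterating and inducting on the number of coordinates in which the two vectors already agree terminates with $\overrightarrow{x}$ itself, exhibiting $B$ as the composite transform.

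The technical heart — and the step I expect to be the main obstacle — is verifying that the intermediate relation $\overrightarrow{x}\prec\overrightarrow{y}'$ is genuinely preserved by each $T$-transform and that suitable indices $j,k$ can always be found at every stage; this is precisely the content of Muirhead's lemma and requires careful bookkeeping of how the partial sums shift under the Robin-Hood move. An alternative, more conceptual route would invoke the Birkhoff--von Neumann theorem (bistochastic matrices form the convex hull of permutation matrices) together with the fact that $\{\overrightarrow{z}:\overrightarrow{z}\prec\overrightarrow{y}\}$ is exactly the convex hull of the permutations of $\overrightarrow{y}$; both the image set $\{B\overrightarrow{y}:B\text{ bistochastic}\}$ and the majorization set then coincide with this polytope, proving both directions at once. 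This is slicker but shifts the burden onto two nontrivial structural theorems, so for a self-contained argument I would prefer the explicit $T$-transform construction above.
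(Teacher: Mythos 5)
You should first note that the paper offers no proof to compare against: the lemma is explicitly stated ``without proof'' and attributed to Hardy, Littlewood and P\'olya \cite{HLP}. Your proposal is the classical argument for this result, and it is sound. The easy direction is complete as written: the variational formula $\sum_{i=1}^k z_i^{\downarrow}=\max_{|S|=k}\sum_{i\in S}z_i$, the column sums giving $0\le d_j\le 1$, the row sums giving $\sum_j d_j=k$, and the observation that the maximum of $\overrightarrow{d}\mapsto\sum_j d_jy_j$ over that polytope is attained at the indicator of the $k$ largest entries of $\overrightarrow{y}$ are all correct. The hard direction is the standard $T$-transform (Robin Hood) induction, and the bookkeeping you defer does go through, but only with a specific choice of indices, which is worth recording: with both vectors sorted, take $k$ to be the smallest index with $y_k<x_k$ and $j$ the largest index below $k$ with $y_j>x_j$ (existence is forced exactly as you argue); then $y_l=x_l$ for $j<l<k$, and with $\delta=\min(y_j-x_j,\,x_k-y_k)$ one checks $y_j>x_j\ge x_k>y_k$ and $\delta\le y_j-y_k$, so $\lambda=1-\delta/(y_j-y_k)\in[0,1]$ and the step is a genuine $T$-transform. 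The resulting $\overrightarrow{y}'$ need not be sorted, so the invariant to carry through the induction is not majorization of sorted vectors directly, but the stronger statement that the partial sums of $\overrightarrow{y}'$ \emph{in its current order} dominate those of the sorted $\overrightarrow{x}$; this follows from $\sum_{l<j}(y_l-x_l)\ge 0$ together with $\delta\le y_j-x_j$, it implies $\overrightarrow{x}\prec\overrightarrow{y}'$ (sorting only increases partial sums), and it is exactly what guarantees that admissible indices $j,k$ can again be found at the next step. Since each step creates at least one new coordinate of agreement and destroys none, the process terminates in at most $n$ steps, and $B$ is the product of the $T$-transforms (and permutations), hence bistochastic. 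Your alternative route via Birkhoff--von Neumann together with Rado's theorem (that $\{\overrightarrow{z}:\overrightarrow{z}\prec\overrightarrow{y}\}$ is the convex hull of the permutations of $\overrightarrow{y}$) is equally valid and, as you say, simply relocates the work onto those two structural results.
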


\subsection{Shannon entropy}

Entropy was initially developed within the framework of thermodynamics, where it was introduced in order to explain the loss of energy within thermodynamic systems. Later, Claude Shannon attempted to mathematically quantify the statistical nature of lost information in phone-line signals. To do this, Shannon developed the very general concept of information entropy, a fundamental cornerstone of information theory \cite{Sha48}.

\begin{definition}
 If $\overrightarrow{x}$ is an $n$-dimensional probability distribution, we define its Shannon entropy to be
$$\mathrm{Sh}(\overrightarrow{x})=\mathrm{Sh}(x_1,\ldots,x_n)=-k\sum_{i=1}^n x_i\ln x_i$$
where $k$ is a positive real number that we usually set equal to $1$, and with the convention that $0\ln 0=0$.
\end{definition}

The Shannon entropy has many useful properties which follow more or less directly from its definition, and we state some of them below, following the standard presentation made in \cite{BenZyc06}.

\begin{itemize}
 \item \textbf{Positivity:} Clearly, $\mathrm{Sh}(\overrightarrow{x})\geq 0$ for all discrete probability distributions.
 \item \textbf{Continuity:} Shannon entropy is a continuous function of the distribution, where the topology of the probability simplex is the natural one inherited from $\mathbb{R}^n$.
 \item \textbf{Expansibility:} $\mathrm{Sh}(x_1,\ldots,x_n)=\mathrm{Sh}(x_1,\ldots,x_n,0)$.
 \item \textbf{Concavity:} $\mathrm{Sh}(p\overrightarrow{x}+(1-p)\overrightarrow{y})\geq p\mathrm{Sh}(\overrightarrow{x})+(1-p)\mathrm{Sh}(\overrightarrow{y})$ for any $p\in[0,1]$.
 \item \textbf{Additivity:} If we have a joint probability distribution of two independent random variables described by probability distributions $\overrightarrow{x}$ and $\overrightarrow{y}$, so that the joint probabilities are products of the individual probabilities, then 
 $$\mathrm{Sh}(x_1y_1,\ldots,x_1y_m,\ \ldots,\ x_ny_1,\ldots,x_ny_m)=\mathrm{Sh}(\overrightarrow{x})+\mathrm{Sh}(\overrightarrow{y})$$
 \item \textbf{Subadditivity:} If we have a joint probability distribution $$\overrightarrow{z}=(z_{11},\ldots,z_{1m},\ \ldots,\ z_{n1},\ldots,z_{nm})$$ of two random variables given by the probability distributions $$\overrightarrow{x}=\left(\sum_{i=1}^m z_{1i},\ldots,\sum_{i=1}^m z_{ni}\right) \text{ and } \overrightarrow{y}=\left(\sum_{j=1}^n z_{j1},\ldots,\sum_{j=1}^n z_{jm}\right)$$ then
  $$\mathrm{Sh}(\overrightarrow{z})\leq \mathrm{Sh}(\overrightarrow{x})+\mathrm{Sh}(\overrightarrow{y})$$
with equality if and only if the two random variables are independent.
 \item \textbf{Monotonicity:} If $\overrightarrow{z}$, $\overrightarrow{x}$ and $\overrightarrow{y}$ are defined as for the subadditivity property, then the Shannon entropy of the joint probability distribution is larger than the Shannon entropy of each of its parts.
 \item \textbf{Recursion property:} If we coarse grain our probability distribution in the sense that we do not distinguish between all the outcomes then we are dealing with a new probability distribution with components
 $$p_1=\sum_{i=1}^{k_1} x_i, \ \ p_2=\sum_{i=k_1+1}^{k_2} x_i, \ \ldots, \ p_r=\sum_{i=k_{r-1}+1}^{k_r} p_i$$
for some $0<k_1<k_2<\ldots<k_r=n$. One can easily show that 
$$\mathrm{Sh}(x_1,\ldots,x_n)=\mathrm{Sh}(p_1,\ldots,p_r)+ \sum_{i=1}^r p_i \mathrm{Sh}\left(\frac{x_1}{p_i},\ldots,\frac{x_{k_1}}{p_i}\right)$$
\item \textbf{Schur concavity:} Shannon entropy is majorization reversing (or Schur concave). This implies that the maximum is attained for the totally mixed probability distribution $\overrightarrow{x_*}=(1/n,\ldots,1/n)$, when $\mathrm{Sh}(\overrightarrow{x_*})=\ln n$, while the minimum, $0$, is attained for any of the probability distributions of random variables with one certain outcome.
\end{itemize}  

One can interpret Shannon entropy as a measure of the uncertainty about the outcome of an experiment that is known to occur according to a given probability distribution, or as the expected length of communication needed to specify the outcome that actually occurs. When using the later interpretation we usually set the constant $k$ to be $1/\ln 2$, which simply means that we use logarithms to the base $2$ instead of natural logarithms. With this choice the entropy is said to be measured in units of bits: if we have a $n=d^a$-dimensional distribution then the maximum value of the Shannon entropy is $\mathrm{log}_2n=a$ bits, which is the length of the string of binary digits one can use to label the outcomes. 

To make this interpretation more precise, consider a source that produces outcomes of an infinite sequence of independent and identically distributed random variables. We want to represent each possible outcome by a code word (i.e. a string of binary numbers) such that given any sequence of code words, it can be read in an unambiguous way.  The expected length of string needed to code one outcome is defined as 
$$L=\sum_i p_il_i$$
where  $p_i$ is the probability that the $i^{th}$ possible outcome will occur and $l_i$ is the length (in bits) of the code word used to represent that outcome. Given the probability distribution, we would like to find optimal codes which minimize the expected length. While we will not describe their construction, we state the following important theorem concerning such codes:

\begin{theorem}[Shannon's noiseless coding theorem]
Given a source distribution $\overrightarrow{x}$, let $L_*$ denote the expected length of a code word used in an optimal code. Then
$$\mathrm{Sh}(\overrightarrow{x})\leq L_*\leq\mathrm{Sh}(\overrightarrow{x})+1$$
\end{theorem}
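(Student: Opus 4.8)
The plan is to establish the two inequalities separately, in both cases reducing the combinatorial statement about codes to an analytic inequality via the \emph{Kraft inequality}. Recall that the Kraft inequality asserts that the codeword lengths $l_1,\ldots,l_n$ of any uniquely decodable binary code satisfy $\sum_{i=1}^n 2^{-l_i}\leq 1$, and conversely that whenever a list of positive integers $l_1,\ldots,l_n$ satisfies this inequality there exists a prefix-free (hence uniquely decodable) binary code with exactly these lengths. Since the codes under consideration are precisely those which can be read unambiguously, I would take the Kraft inequality (together with its converse) as the essential lemma; everything else follows by elementary estimates.

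For the lower bound $\mathrm{Sh}(\overrightarrow{x})\leq L_*$, I would fix any uniquely decodable code with lengths $l_i$ and expected length $L=\sum_i p_i l_i$, and compute the difference
\begin{equation*}
L-\mathrm{Sh}(\overrightarrow{x})=\sum_{i=1}^n p_i l_i + \sum_{i=1}^n p_i\log_2 p_i = \sum_{i=1}^n p_i\log_2\frac{p_i}{2^{-l_i}}.
\end{equation*}
Writing $c=\sum_j 2^{-l_j}$ and $q_i=2^{-l_i}/c$, this rearranges as $D(p\,\|\,q)-\log_2 c$, where $D(p\,\|\,q)=\sum_i p_i\log_2(p_i/q_i)$ is the relative entropy of the two distributions $p$ and $q$. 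The Gibbs inequality (non-negativity of relative entropy, an immediate consequence of the concavity of the logarithm via Jensen's inequality) gives $D(p\,\|\,q)\geq 0$, while the Kraft inequality gives $c\leq 1$, hence $-\log_2 c\geq 0$. Therefore $L\geq\mathrm{Sh}(\overrightarrow{x})$ for every admissible code, and in particular $L_*\geq\mathrm{Sh}(\overrightarrow{x})$.

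For the upper bound I would exhibit a single good code, the Shannon--Fano code, and use optimality of $L_*$. Set $l_i:=\lceil -\log_2 p_i\rceil$ for each $i$ with $p_i>0$ (discarding outcomes of probability $0$ by expansibility). These lengths satisfy $\sum_i 2^{-l_i}\leq\sum_i 2^{\log_2 p_i}=\sum_i p_i=1$, so by the converse Kraft inequality a uniquely decodable code with these lengths exists. Since $l_i<-\log_2 p_i+1$, its expected length obeys
\begin{equation*}
L=\sum_{i=1}^n p_i l_i < \sum_{i=1}^n p_i\bigl(-\log_2 p_i + 1\bigr)=\mathrm{Sh}(\overrightarrow{x})+1.
\end{equation*}
As $L_*$ is the minimal expected length over all such codes, $L_*\leq L<\mathrm{Sh}(\overrightarrow{x})+1$, which is the desired bound.

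The one genuinely nontrivial ingredient, and the step I would expect to require the most care, is the Kraft inequality for arbitrary uniquely decodable codes (the Kraft--McMillan theorem); the prefix-free case is an easy tree-counting argument, but extending it to all unambiguously decodable codes requires the standard trick of bounding $\bigl(\sum_i 2^{-l_i}\bigr)^N$ by counting decodable strings of a fixed length and letting $N\to\infty$. If one is content to restrict attention to prefix codes from the outset --- which entails no loss, since prefix codes already achieve $L_*$ --- this subtlety can be bypassed and the argument becomes entirely elementary.
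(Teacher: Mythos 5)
Your proof is correct. Note that the thesis itself does not prove this theorem at all --- it explicitly defers to the literature (\cite{CovTho91}), and the argument you give is precisely the standard one from that reference: the lower bound via non-negativity of relative entropy combined with the Kraft--McMillan inequality, and the upper bound by exhibiting the Shannon--Fano lengths $l_i=\lceil -\log_2 p_i\rceil$ and invoking the converse Kraft inequality. Both halves check out, and you correctly flag the only delicate point (extending Kraft from prefix codes to all uniquely decodable codes, or equivalently justifying that prefix codes suffice to attain $L_*$), so there is nothing to fill in.
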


For the proof of this theorem and how to find an optimal code we refer the reader to the literature \cite{CovTho91}.

\subsection{Von Neumann entropy}

\noindent The quantum mechanical analogue of Shannon entropy, which is actually an older concept than Shannon entropy, is called the von Neumann entropy \cite{vN55}.  In finite dimensional quantum mechanics states are usually represented by density matrices. Any given $n$-dimensional density matrix $\rho$ can be decomposed as 
$$\rho=\sum_{i=1}^n\lambda_i|e_i\left>\right<e_i|$$
where $\lambda_i$ are the eigenvalues of $\rho$ and $|e_i\left>\right.$ are the corresponding eigenvectors. This decomposition is unique if $\rho$ has distinct eigenvalues. Since the eigenvalues of a density matrix are non-negative and they sum to $1$, we can give the following definition:

\begin{definition}
The von Neumann entropy of the state $\rho$ is defined as the Shannon entropy of the spectrum of $\rho$:
$$\mathrm{VN}(\rho)=-\mathrm{Tr}(\rho\ln\rho)=-\sum_{i=1}^n \lambda_i\ln \lambda_i$$
\end{definition}
Note that unitarily equivalent states have the same von Neumann entropy.

We shall end this section by stating and proving some of the most important properties of the von Neumann entropy. The proofs will rely on several well-known results which extend inequalities that hold for functions defined on $\mathbb{R}$ to functions of operators. We summarize these results below and direct the reader to \cite{BenZyc06} for more details. 

\begin{definition}
Hermitian operators admit a partial order: $B\geq A$ if and only if $B-A$ is a positive operator. An operator monotone function is a function $f$ defined on Hermitian operators such that $f(A)\leq f(B)$ whenever $A\leq B$.
\end{definition}

\begin{theorem}[L\"{o}wner]
A function $f(t)$ on an open interval is operator monotone if and only if it can be extended analytically to the upper half plane and transforms the upper half plane into itself.

In particular, $f(t)=-t\ln t$ is operator monotone.
\end{theorem}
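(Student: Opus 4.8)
The plan is to prove the two assertions of the theorem separately. The substantive content is the Löwner equivalence between operator monotonicity on an open interval $(a,b)$ and being the restriction of a Pick (Nevanlinna) function, and I would organise the whole argument around the integral representation of such functions. Specifically, I would show that $f$ is operator monotone on $(a,b)$ if and only if $f$ extends analytically to $\mathbb{C}^+$ with $f(\mathbb{C}^+)\subseteq\overline{\mathbb{C}^+}$, the continuation being real-analytic across $(a,b)$. Once this characterisation is in place, the ``in particular'' clause becomes a matter of testing the single function $f(t)=-t\ln t$ against the criterion, and the main difficulty will surface precisely there.

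For the reverse implication (Pick $\Rightarrow$ operator monotone) I would start from the Nevanlinna representation of a function analytic on $\mathbb{C}^+$ with nonnegative imaginary part there,
\[ f(z) = \alpha + \beta z + \int_{\mathbb{R}} \frac{1+\lambda z}{\lambda - z}\, d\mu(\lambda), \qquad \beta \ge 0, \]
with $\mu$ a finite positive measure whose support avoids $(a,b)$ when $f$ continues analytically across that interval. It then suffices to verify operator monotonicity term by term: the constant $\alpha$ and the term $\beta z$ with $\beta\ge 0$ are trivially operator monotone, and, after the algebraic rearrangement $\frac{1+\lambda z}{\lambda-z} = -\lambda + \frac{\lambda^2+1}{\lambda - z}$, each kernel is an affine function of the resolvent $\frac{1}{\lambda - t}$ with a positive coefficient. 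The elementary fact that $A\le B$ implies $(\lambda - A)^{-1}\le(\lambda - B)^{-1}$ whenever $\lambda$ lies outside the spectra on the correct side (operator antitonicity of inversion on positive operators) gives monotonicity of each kernel, and integrating the resulting inequality against the positive measure $\mu$ yields $f(A)\le f(B)$.

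For the forward implication (operator monotone $\Rightarrow$ Pick) I would use Löwner's original device of divided differences: operator monotonicity of $f$ forces every Löwner matrix $\big[\tfrac{f(t_i)-f(t_j)}{t_i-t_j}\big]_{i,j=1}^{N}$, with $t_i\in(a,b)$ and diagonal entries $f'(t_i)$, to be positive semidefinite for all $N$ and all choices of nodes. Positive semidefiniteness of this whole family of kernels is exactly the condition characterising restrictions of Pick functions, from which the analytic continuation and the integral representation above are extracted (via a Hamburger/Hankel moment argument, or equivalently by a Krein--Milman analysis of the convex cone of operator monotone functions). This is the technically heavier direction and I expect it to carry the bulk of the work.

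Finally, for the ``in particular'' claim I would apply the criterion directly to $f(t)=-t\ln t$: by the theorem it is operator monotone on an interval exactly when $-z\ln z$ (principal branch) extends to a Pick function. The plan is to compute the imaginary part of the continuation, writing $z=re^{i\theta}$, namely
\[ \operatorname{Im}\big(-z\ln z\big) = -r\big(\theta\cos\theta + \sin\theta\,\ln r\big), \qquad z = re^{i\theta},\ 0<\theta<\pi, \]
and to test its sign. This is where the main obstacle lies: at $\theta=\pi/2$ the right-hand side equals $-r\ln r$, which is positive for $r<1$ and negative for $r>1$, so the imaginary part is not sign-definite and $-z\ln z$ is not a Pick function on any real subinterval. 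The literal Pick test thus does not go through as it does for $\ln t$. The operator inequality that $-t\ln t$ genuinely satisfies, and the form actually needed in the entropy estimates that follow, is therefore reached through the factorisation $-t\ln t = -t\cdot\ln t$: using that $\ln t$ is genuinely operator monotone, with representation
\[ \ln t = \int_0^\infty \Big(\frac{1}{1+\lambda} - \frac{1}{t+\lambda}\Big)\, d\lambda , \]
together with the standard correspondence that $g(t)$ operator monotone on $(0,\infty)$ is equivalent to $t\,g(t)$ operator convex, one concludes that $t\ln t$ is operator convex and hence $-t\ln t$ operator concave. I would make this reduction explicit and flag it as the correct substitute for a direct verification of the stated monotonicity.
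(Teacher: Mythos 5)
Your proposal cannot be compared against a proof in the paper, because the paper offers none: L\"owner's theorem is stated there as a background result with a pointer to \cite{BenZyc06}, and it is never proved. Taken on its own terms, your outline of the equivalence is the standard and correct route: the Nevanlinna representation plus the rearrangement $\frac{1+\lambda z}{\lambda-z}=-\lambda+\frac{\lambda^2+1}{\lambda-z}$ and antitonicity of the operator inverse handles the Pick $\Rightarrow$ operator monotone direction cleanly (with the measure supported off $(a,b)$, as you say), and you correctly locate the real work in the converse, via positive semidefiniteness of the L\"owner matrices $\bigl[\frac{f(t_i)-f(t_j)}{t_i-t_j}\bigr]$ and a Pick--Nevanlinna interpolation or moment argument. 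As a plan this is sound; a full write-up of the forward direction would of course be substantial, but nothing in your sketch would fail.

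The genuinely valuable part of your proposal is the last step, and you are right to flag it: the ``in particular'' clause of the theorem as printed is \emph{false}. Your computation $\operatorname{Im}(-z\ln z)=-r(\theta\cos\theta+\sin\theta\ln r)$ is correct, and at $\theta=\pi/2$ it equals $-r\ln r$, which changes sign at $r=1$; since the analytic continuation of $-t\ln t$ off any real subinterval of $(0,\infty)$ is forced (by the identity theorem) to be the principal branch $-z\ln z$, the theorem's own criterion shows $-t\ln t$ is not operator monotone on \emph{any} interval. The property the chapter actually uses --- in Klein's inequality with $f(t)=t\ln t$ and in the concavity proof for von Neumann entropy --- is operator \emph{convexity} of $t\ln t$, equivalently operator concavity of $-t\ln t$, which is exactly the substitute you supply. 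One small precision: the correspondence you invoke is the Hansen--Pedersen/Bendat--Sherman theorem, which says that for continuous $f$ on $[0,\infty)$, $f$ is operator convex with $f(0)\le 0$ if and only if $f(t)/t$ is operator monotone on $(0,\infty)$; stated as a bare equivalence between ``$g$ operator monotone'' and ``$tg(t)$ operator convex'' it omits the normalization $f(0)\le 0$, though the direction you need ($g=\ln t$ operator monotone, $f(0)=0$, hence $t\ln t$ operator convex) goes through exactly as you argue. So your refutation-plus-repair is the correct resolution, and the paper's statement should read ``operator concave'' rather than ``operator monotone''; none of the subsequent entropy arguments are affected, since they only ever use the convexity of $t\ln t$.
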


\begin{definition}
 An operator concave function is a function $f$ such that
$$f(pA+(1-p)B)\leq pf(A)+(1-p)f(B), \ \ \forall p\in[0,1]$$
\end{definition}

A continuous function mapping $[0,\infty)$ to itself is operator concave if and only if $f$ is operator monotone. An operator convex function $f$ is a function such that $-f$ is operator concave.

\begin{lemma}[Klein's inequality]
If $f$ is an operator convex function and $A$ and $B$ are Hermitian operators then
$$\mathrm{Tr}[f(A)-f(B)]\geq \mathrm{Tr}[(A-B)f'(B)]$$
with equality if and only if $A=B$.

In particular, if we restrict ourselves to $f(t)=t\ln t$ we have
$$\mathrm{Tr}(A\ln A - A\ln B)\geq \mathrm{Tr}(A-B)$$
with equality if and only if $A=B$.
\end{lemma}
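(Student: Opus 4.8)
The plan is to reduce the operator statement to the elementary scalar fact that a convex differentiable function lies above each of its tangent lines, and then recover nonnegativity of the trace by a double spectral expansion. First I would diagonalise both operators, writing $A = \sum_i a_i |a_i\rangle\langle a_i|$ and $B = \sum_j b_j |b_j\rangle\langle b_j|$ in orthonormal eigenbases, with the spectra of $A$ and $B$ lying in the interval on which $f$ is convex and differentiable. Computing $\mathrm{Tr}[f(A) - f(B) - (A-B)f'(B)]$ in the eigenbasis of $A$ and inserting the resolution of the identity $\sum_j |b_j\rangle\langle b_j| = \mathbf{1}$, I would evaluate the three contributions as $\langle a_i | f(A) | a_i\rangle = f(a_i)$, then $\langle a_i | f(B) | a_i\rangle = \sum_j |\langle a_i | b_j\rangle|^2 f(b_j)$, and finally $\langle a_i | (A-B) f'(B) | a_i\rangle = \sum_j |\langle a_i | b_j\rangle|^2 (a_i - b_j) f'(b_j)$, the last using that $\langle a_i | A = a_i \langle a_i |$ and that $B$, $f'(B)$ are simultaneously diagonal in $\{|b_j\rangle\}$.

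Next, using $\sum_j |\langle a_i | b_j\rangle|^2 = 1$ to rewrite $f(a_i) = \sum_j |\langle a_i | b_j\rangle|^2 f(a_i)$, the whole trace collapses into a single double sum,
\[ \mathrm{Tr}[f(A) - f(B) - (A-B)f'(B)] = \sum_{i,j} |\langle a_i | b_j\rangle|^2 \, \bigl[ f(a_i) - f(b_j) - (a_i - b_j) f'(b_j) \bigr] . \]
Each coefficient $|\langle a_i | b_j\rangle|^2$ is nonnegative, and by convexity of $f$ the bracketed tangent-line remainder $f(a_i) - f(b_j) - (a_i - b_j)f'(b_j)$ is nonnegative for every pair $(i,j)$. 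Hence the sum is nonnegative, which is the asserted inequality. For the equality clause I would invoke strict convexity: for strictly convex $f$ the remainder vanishes exactly when $a_i = b_j$, so equality forces $a_i = b_j$ for every pair with $\langle a_i | b_j\rangle \neq 0$; this means $A$ and $B$ share an eigenbasis with matching eigenvalues, i.e. $A = B$.

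Finally I would specialise to $f(t) = t\ln t$, which is strictly convex on $(0,\infty)$ with $f'(t) = \ln t + 1$. Substituting $f'(B) = \ln B + \mathbf{1}$ and using the linearity and cyclicity of the trace, the general inequality expands to $\mathrm{Tr}[A\ln A - B\ln B - (A-B)\ln B - (A-B)]$, and after cancelling the $\mathrm{Tr}(B\ln B)$ terms this reduces to $\mathrm{Tr}(A\ln A - A\ln B) \geq \mathrm{Tr}(A-B)$, with equality iff $A = B$.

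The hypothesis stated is ``operator convex,'' but the argument above in fact only uses ordinary scalar convexity of $f$ on an interval containing both spectra, since the eigenbasis expansion defers everything to the one-variable tangent inequality. The genuinely delicate points are therefore twofold: first, the bookkeeping that assembles the cross terms into exactly the tangent-remainder form (this is where the normalisation $\sum_j |\langle a_i|b_j\rangle|^2 = 1$ is essential), and second, the domain of $\ln$, which requires $A$ and $B$ to be positive so that $\ln B$ is defined. I would handle the latter by restricting to strictly positive operators with the convention $0\ln 0 = 0$, noting that the entropy application only needs the inequality on the support of the relevant states.
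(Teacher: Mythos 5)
Your proof is correct, but there is nothing in the paper to compare it against: the paper states Klein's inequality without proof, as one of several ``well-known results'' deferred to the literature (the Bengtsson--\.{Z}yczkowski reference \cite{BenZyc06}). What you have written is the standard textbook argument --- diagonalise $A$ and $B$, expand the trace in the eigenbasis of $A$, insert the resolution of the identity in $B$'s eigenbasis, and collapse everything into the double sum
$\sum_{i,j} |\langle a_i | b_j\rangle|^2\,[\,f(a_i) - f(b_j) - (a_i-b_j)f'(b_j)\,]$,
whose terms are nonnegative by the scalar tangent-line inequality. The bookkeeping is right, and your deduction of $A=B$ in the equality case is also sound: if $a_i = b_j$ whenever $\langle a_i|b_j\rangle \neq 0$, then $A|b_j\rangle = b_j|b_j\rangle = B|b_j\rangle$ for every $j$, so $A=B$. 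Two of your side remarks are worth keeping, because they sharpen the statement as the paper gives it. First, the hypothesis ``operator convex'' is indeed superfluous; ordinary convexity of $f$ on an interval containing both spectra suffices, exactly because the spectral expansion reduces everything to the one-variable inequality. Second, the ``equality iff $A=B$'' clause is false for merely convex $f$ (take $f$ linear), so strict convexity must be assumed there --- harmless for the paper's application, since $t\ln t$ is strictly convex on $(0,\infty)$, but a genuine imprecision in the lemma as quoted. Your handling of the domain of $\ln$ (strictly positive operators, with $0\ln 0 = 0$ on supports for the entropy application) is also the right caveat.
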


\noindent Now we are ready to have a look at some of the properties of von Neumann entropy.

\begin{itemize}
\item \textbf{Positivity:} It is clear that von Neumann entropy is positive. Moreover $\mathrm{VN}(\rho)$ vanishes if and only if $\rho=|\psi\left>\right<\psi|$, for some unit vector $|\psi\left>\right.$ (i.e. if and only if $\rho$ is a pure state).
\item \textbf{Continuity:} Von Neumann entropy is a continuous function of the eigenvalues of $\rho$ seen as a vector in $\Bbb{R}^n$ with its standard topology..
\item \textbf{Concavity:} Is a direct consequence of Klein's inequality: let $\rho=p\sigma+(1-p)\omega$, $0\leq p\leq 1$. We can use the particular case of the inequality, with $B=\rho$ and $A=\sigma$ and then $A=\omega$. This gives us two sets of inequalities:
\begin{align*}
\mathrm{Tr}(\sigma\ln\rho)&\leq \mathrm{Tr}(\sigma\ln\sigma+\rho-\sigma)\\
\mathrm{Tr}(\omega\ln\rho)&\leq\mathrm{Tr}(\omega\ln\omega+\rho-\omega)
\end{align*}
Multiplying the first inequality by $p$, the second one by $(1-p)$ and adding, we obtain
$$\mathrm{Tr}(\rho\ln\rho)=p\mathrm{Tr}(\sigma\ln\rho)+(1-p)\mathrm{Tr}(\omega\ln\rho) \leq p\mathrm{Tr}(\sigma\ln\sigma)+(1-p)\mathrm{Tr}(\omega\ln\omega)+\rho-\rho$$
Reversing the sign gives us $\mathrm{VN}(\rho)\geq p\mathrm{VN}(\sigma)+(1-p)\mathrm{VN}(\omega)$.
\item \textbf{Subadditivity:} This is a property concerning composite systems. Let $\rho$ be a state defined on a Hilbert space which is isomorphic to a tensor product of two Hilbert spaces. If $\rho_1$ and $\rho_2$ are the reduced density matrices obtained by taking the partial trace of $\rho$ over the second and the first subsystem respectively, then subadditivity is expressed by the following equation
$$\mathrm{VN}(\rho)\leq\mathrm{VN}(\rho_1)+\mathrm{VN}(\rho_2)$$
Again, this can be proved using Klein's inequality, this time with  $A=\rho$ and $B=\rho_1\otimes\rho_2=(\rho_1\otimes I)(I\otimes\rho_2)$, with the observation that $\mathrm{Tr}(A-B)=0$ since both are density matrices. We then have
\begin{align*}
\mathrm{Tr}(\rho\ln\rho)&\geq\mathrm{Tr}(\rho\ln\rho_1\otimes\rho_2)\\
&=\mathrm{Tr}(\rho(\ln\rho_1\otimes I+\ln I\otimes\rho_2))\\
&=\mathrm{Tr}(\rho_1\ln\rho_1)+\mathrm{Tr}(\rho_2\ln\rho_2)
\end{align*}
which becomes subadditivity when we reverse the sign.
\item \textbf{Additivity:} It is not hard to see that if (and only if) $\rho=\rho_1\otimes\rho_2$ we have equality in the above and $$\mathrm{VN}(\rho)=\mathrm{VN}(\rho_1)+\mathrm{VN}(\rho_2)$$
\item \textbf{Recursion:} If  the density matrices $\rho_i$ are defined on orthogonal subspaces $H_i$ of a Hilbert space $H=\oplus_i H_i$, then the density matrix $\rho=\sum_i p_i\rho_i$ has the von Neumann entropy
$$\mathrm{VN}(\rho)=\mathrm{Sh}(\overrightarrow{p})+\sum_i p_i\mathrm{VN}(\rho_i)$$
This follows from the recursion property of Shannon entropy and the fact that if the matrix $\rho_i$ has eigenvalues $\lambda^i_j$ then the eigenvalues of $\rho$ will be of the form $p_i\lambda^i_j$.
\end{itemize}

\textbf{Monotonicity:} This is a property concerning composite systems, and it can be expressed by the following equation: $\mathrm{VN}(\rho)\geq\mathrm{VN}(\rho_1)$. However, this does not hold in general for von Neumann entropy. It is known that a composite system can be in a pure state (and so its von Neumann entropy will vanish) while its subsystems can be mixed (and so have positive von Neumann entropy).

\subsection{Renyi entropy}

Renyi entropies form a one parameter family of Schur concave, additive entropies defined by 
$$R_q(p_1,\ldots,p_n)=\frac{1}{1-q}\ln\left[\sum_{i=1}^n p_i^q\right], \ \forall q\geq 0$$
Special cases of the Renyi entropies include $q=0$, which is the logarithm of the number of non-zero components of the distribution and is known as the \textit{Hartley entropy}. When $q\rightarrow 1$, we have the Shannon entropy, and when $q\rightarrow \infty$ the \textit{Chebyshev entropy} $R_\infty=-\ln p_{max}$, a function of the largest component $p_{max}$. 

For any given probability vector $\overrightarrow{p}$ the Renyi entropy is a continuous, non-increasing function of its parameter:
$$R_t(\overrightarrow{p})\leq R_q(\overrightarrow{q}), \ \forall t>q$$
To illustrate this, we have plotted in Figure \ref{Renyi} several Renyi entropies as functions of a probability distribution with two variables. Note that since Renyi entropies are Schur concave, their maximum value is attained for the totally mixed probability distribution, in which case $R_q(1/n,\ldots,1/n)=\ln n$.

\begin{figure}[ht!]
\centering
\includegraphics[width=10cm]{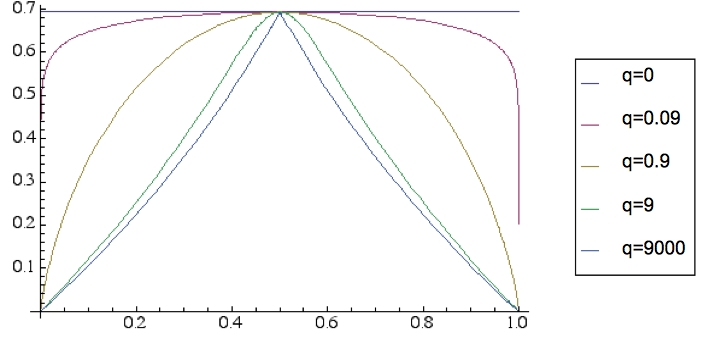}
\caption{$R_q(x,1-x)$ for $q=0, 0.5, 3$, $q\rightarrow 1$ and $q\rightarrow \infty$}\label{Renyi}
\end{figure}

At each parameter $q$, the quantum Renyi entropy can be defined on the set of density matrices as the classical Renyi entropy of the corresponding spectra:
$$\mathrm{R}_q(\rho)=\frac{1}{1-q}\ln\mathrm{Tr}(\rho^q)=\frac{1}{1-q}\ln\left[\sum_{i=1}^n \lambda_i^q\right]=R(\lambda_1,\ldots,\lambda_n)$$
Quantum Renyi entropy assigns the value $0$ to pure states exclusively, and $\ln n$ to the maximally mixed state $\rho_*=\frac{1}{n} I$.

Note that Renyi entropies are not concave in general, nor subadditive, but they are Schur-concave, additive and monotone \cite{BenZyc06}.

\end{chapter}

\section{Contextual entropy}\label{ctxt}

\subsection{Measures and partial traces}

We saw that, given a probability measure $\mu$ on the clopen subobjects of a spectral presheaf, if we fix a subobject $\underline{S}$ of $\underline{\Sigma}$ we obtain a map from
$\mathcal{V}(N)$ to $[0,1]$. We assume from now on that we are given a certain probability measure $\mu$, representing a quantum state. Then we can adopt a different perspective and instead of looking at a fixed subobject we can look at a fixed context $V$. There is a lattice isomorphism $\alpha_V$ between the projections in $V$ and the clopen subsets of $\underline{\Sigma}_V$. Hence from $\mu$ we can also obtain a map 
\begin{align*}
\mu|_{_V}:\mathcal{P}(V)&\longrightarrow[0,1]\\
 P&\longmapsto \mu(S_P)
\end{align*}
where $S_P=\alpha_V(P)\subseteq \underline{\Sigma}_V$.

Using this new perspective, we can show that measures on the spectral presheaf associated to a matrix algebra behave well with respect to the partial trace. This result has a certain physical significance. We have already seen that there is a bijective correspondence between states and probability measures, and we now show that moreover these measures capture the essential information theoretic property of the partial trace in a natural way. Thus, if we are given a measure corresponding to a composite state, we can obtain its partial traces in a direct way by simply considering its restrictions to contexts of a particular form. Intuitively, we would expect these contexts to be precisely those which only encode information related to the first subsystem (if we want to trace out the second one) or vice versa, and we will see that this will indeed be the case.

Note also that this result will be useful for us later on, when discussing the subadditivity property of our contextual entropy.

\begin{proposition}
Consider a state $\rho$ on the matrix algebra $\mathcal{M}_{nm}\simeq \mathcal{M}_n\otimes \mathcal{M}_m$. Let $\rho_1=\mathrm{Tr}_2(\rho)\in\mathcal{M}_n$ and $\rho_2=\mathrm{Tr}_1(\rho)\in\mathcal{M}_m$ be the partial traces of $\rho$. Then if $V\in \mathcal{V(M}_n)$ and $\mathbb{C}I_m$ denotes the trivial subalgebra of $\mathcal{M}_m$ we have
$$\mu_\rho|_{_{V\otimes \mathbb{C}I_m}}=\mu_{\rho_1}|_{_V}$$
Similarly, if $W\in \mathcal{V(M}_m)$ and $\mathbb{C}I_n$ denotes the trivial subalgebra of $\mathcal{M}_n$ we have
$$\mu_\rho|_{_{\mathbb{C}I_n\otimes W}}=\mu_{\rho_2}|_{_W}$$
\end{proposition}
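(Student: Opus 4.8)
The plan is to reduce the statement to the defining property of the partial trace, after unwinding the ``fixed-context'' reformulation of $\mu_\rho$ introduced just above. First I would recall that, upon fixing a context $V$ and using the lattice isomorphism $\alpha_V\colon\mathcal{P}(V)\to Cl(\underline{\Sigma}_V)$ of Proposition \ref{alpha}, the measure collapses to the map
\[ \mu_\rho|_{_V}\colon\mathcal{P}(V)\longrightarrow[0,1],\qquad P\longmapsto \mu_\rho(S_P)(V)=\mathrm{Tr}(\rho\,\alpha_V^{-1}(\alpha_V(P)))=\mathrm{Tr}(\rho P), \]
so that $\mu_\rho|_{_V}$ is simply $P\mapsto\mathrm{Tr}(\rho P)$ on the projections of $V$. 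This identification is the key simplification: it converts the presheaf-theoretic claim into an ordinary identity between traces.

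Next I would describe the context $V\otimes\mathbb{C}I_m$ explicitly. Since $\mathbb{C}I_m$ is one-dimensional, this abelian subalgebra of $\mathcal{M}_n\otimes\mathcal{M}_m$ consists exactly of the operators $A\otimes I_m$ with $A\in V$, and its projection lattice is $\{P\otimes I_m \mid P\in\mathcal{P}(V)\}$, which is order-isomorphic to $\mathcal{P}(V)$ via $P\mapsto P\otimes I_m$. It is under this identification that the asserted equality of measures is to be read, so it suffices to establish, for every projection $P\in\mathcal{P}(V)$,
\[ \mu_\rho|_{_{V\otimes \mathbb{C}I_m}}(P\otimes I_m)=\mu_{\rho_1}|_{_V}(P). \]

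The computation then proceeds in one line. By the first step the left-hand side equals $\mathrm{Tr}(\rho\,(P\otimes I_m))$ and the right-hand side equals $\mathrm{Tr}(\rho_1 P)$. The bridge between them is the characterising property of the partial trace $\mathrm{Tr}_2$, namely that $\mathrm{Tr}(\rho\,(A\otimes I_m))=\mathrm{Tr}(\mathrm{Tr}_2(\rho)\,A)$ for every $A\in\mathcal{M}_n$; applying this with $A=P$ and recalling $\rho_1=\mathrm{Tr}_2(\rho)$ yields
\[ \mathrm{Tr}(\rho\,(P\otimes I_m))=\mathrm{Tr}(\mathrm{Tr}_2(\rho)\,P)=\mathrm{Tr}(\rho_1 P), \]
which is exactly the desired identity. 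Since the projections $P\otimes I_m$ exhaust $\mathcal{P}(V\otimes\mathbb{C}I_m)$, the two restricted measures coincide. The second assertion is proved identically, interchanging the two tensor factors and using $\mathrm{Tr}_1$ in place of $\mathrm{Tr}_2$.

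I do not anticipate a genuine obstacle, since the result is essentially a repackaging of the defining property of the partial trace. The only point requiring care is the bookkeeping of the two identifications — that of $\mu_\rho|_{_V}$ with the trace functional via $\alpha_V$, and that of $\mathcal{P}(V\otimes\mathbb{C}I_m)$ with $\mathcal{P}(V)$ via $P\mapsto P\otimes I_m$ — so that both sides of the claimed equation are genuinely compared over the same indexing set of projections.
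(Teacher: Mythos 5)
Your proof is correct and follows essentially the same route as the paper's: identify $\mathcal{P}(V\otimes\mathbb{C}I_m)$ with $\mathcal{P}(V)$ via $P\mapsto P\otimes I_m$, unwind $\mu_\rho|_{_V}(P)=\mathrm{Tr}(\rho P)$ using the definition of the measure for matrix-algebra states, and conclude by the defining property of the partial trace. Your write-up is merely more explicit about the two identifications, which the paper compresses into a single displayed equation.
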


\begin{proof}
Note first that there is a lattice isomorphism between the domains of definition of $\mu_\rho|_{_{V\otimes \mathbb{C}I_m}}$ and $\mu_{\rho_1}|_{_V}$ which takes $P\in\mathcal{P}(V)$ to $P\otimes I_m\in\mathcal{P}(V\otimes \mathbb{C}I_m)$. Then using the definition of measures for states on matrix algebras and the defining property of the partial trace, we have that 
$$\mu_\rho|_{_{V\otimes \mathbb{C}I_m}}(P\otimes I_m)=Tr(\rho\cdot P\otimes I_m)=Tr(\rho_1\cdot P)=\mu_{\rho_1}|_{_V}(P),\ \ \forall P\in\mathcal{P}(V\otimes \mathbb{C}I_m)$$
and similarly for the second statement.\qed 
\end{proof}

Finally, the fact that $\mu$ is a measure implies several properties for $\mu|_{_V}$ which hold for all contexts $V\in\mathcal{V}(N)$, and which we shall state below:
\begin{enumerate}
 \item $\mu|_{_V}(I)=1$ and $\mu|_{_V}(0)=0$
 \item $\mu|_{_V}(P\vee Q)+\mu|_{_V}(P\wedge Q)=\mu|_{_V}(P)+\mu|_{_V}(Q)$
 \item in particular, if $P$ and $Q$ are orthogonal then $P\wedge Q=0$ and $P\vee Q=P+Q$ and hence $$\mu|_{_V}(P+Q)=\mu|_{_V}(P)+\mu|_{_V}(Q)$$
 \item if $P\leq Q$ then $\mu|_{_V}(P)\leq \mu|_{_V}(Q)$
\end{enumerate}

These properties show that $\mu|_{_V}$ is a finitely additive probability measure on the lattice of projections of $V$.

\subsection{The entropy of a measure}

We saw that in classical probability theory Shannon entropy assigns a real number to every discrete probability distribution.  It is known that a von Neumann algebra on a finite-dimensional Hilbert space is simply a matrix algebra, or a finite direct sum of matrix algebras. We will see further on how to associate a distinguished probability distribution to each commutative subalgebra (or context) of a von Neumann algebra of bounded operators on finite dimensional Hilbert space, given a state on the system described by that algebra in the form of a measure on its associated spectral presheaf. Once this is done, we will be able to associate to each context its corresponding Shannon entropy, and moreover we will see that this collection of Shannon entropies fits together in a nice way and gives a global section of a certain real-number presheaf. This is consistent with the basic idea of the topos approach, that of putting together the information obtained from each classical perspective on a quantum system. We will see in later sections that by keeping track of all classical entropies associated to a quantum state we can not only retrieve that state's von Neumann entropy, but also reconstruct the state itself.

\begin{definition}
Let $H$ be an Hilbert Space, $\mathcal{B}(H)$ the algebra of bounded operators in $H$ and $\mathcal{F}\subseteq \mathcal{B}(H)$. The von Neumann commutant of $\mathcal{F}$, usually denoted by $\mathcal{F}'$, is the subset of $\mathcal{B}(H)$ consisting of all elements that commute with every element of $\mathcal{F}$, that is
$$\mathcal{F}'=\{T\in \mathcal{B}(H)~|~TS=ST,\  \forall S\in\mathcal{F}\}$$

The von Neumann double commutant $\mathcal{F}$ of is just $(\mathcal{F}')'$ and is usually denoted by $\mathcal{F}''$.
\end{definition}

If we consider a set of orthogonal rank-one projections $\{P_1,\ldots,P_n\}''$, their double commutant can be shown to be simply $\mathbb{C}P_1+\ldots+\mathbb{C}P_n$.

This shows that in finite dimensions each context $V$ can be generated via the von Neumann double commutant construction in a unique way from a set of pairwise orthogonal projections which add up to the identity. If we denote this canonical set of projections by $\{P_1, P_2,\ldots,P_k\}$ then $(\mu|_{_V}(P_1), \mu|_{_V}(P_2),\ldots,\mu|_{_V}(P_k))$ is a probability distribution. Hence to each context $V$ we can assign the Shannon entropy of its associated probability distribution:
$$\mathrm{Sh}(\mu|_{_V}(P_1), \mu|_{_V}(P_2),\ldots,\mu|_{_V}(P_k))=-\sum_{i=1}^k \mu|_{_V}(P_i)\ln \mu|_{_V}(P_i)$$
If $V'\supseteq V$ then $V' = \{Q^1_1,\ldots,Q^1_{l_1}, Q^2_1,,\ldots Q^2_{l_2},\ \ldots,\ Q^k_1,\ldots,Q^k_{l_k}\}''$, where the $Q^j_i$s are pairwise orthogonal and
$$\sum_{i=1}^{k_j} Q^j_i=P_j$$

The Shannon entropy associated to $V'$ is related to the Shannon entropy associated to $V$ via the recursion formula:
$$\mathrm{Sh}(V')=\mathrm{Sh}(V)+\sum_{i=1}^k \mu|_{_V}(P_i)\cdot \mathrm{Sh}\left(\frac{\mu|_{_{V'}}(Q^i_1)}{\mu|_{_V}(P_i)}, \frac{\mu|_{_{V'}}(Q^i_2)}{\mu|_{_V}(P_i)},\ldots,
\frac{\mu|_{_{V'}}(Q^i_{l_i})}{\mu|_{_V}(P_i)}\right)$$

Since Shannon entropy is non-negative, it follows that $\mathrm{Sh}(V')\geq\mathrm{Sh}(V)$ and this enables us to give the following definition for the entropy of a measure (and hence of a quantum state).

We call a context $k$-dimensional if it is generated by $k$ pairwise orthogonal projections which add up to the identity. 

\begin{definition}
If $\mu$ is a measure on the clopen subobjects of the spectral presheaf $\underline{\Sigma}$ then the entropy $E(\mu)$ associated to $\mu$ is a global section of the presheaf $\underline{[0,\ln n]^\preceq}$ which at a context $V=\{P_1, P_2,\ldots,P_k\}''$ has the value $$E(\mu)|_{_V}=\mathrm{Sh}(\mu|_{_V}(P_1),
\mu|_{_V}(P_2),\ldots,\mu|_{_V}(P_k))=-\sum_{i=1}^k \mu|_{_V}(P_i)\ln \mu|_{_V}(P_i)$$
Note that if the $V$ is a $k$-dimensional context then the value taken by $E(\mu)$ at $V$ is less then or equal to $\ln k$, and hence for an $n$-dimensional matrix algebra, the maximal value taken by $E(\mu)$ at any context is $\ln n$. Therefore contextual entropy can be seen as a mapping defined on the set of measures associated to a spectral presheaf:
$$E:\mathcal{M}(\underline{\Sigma})\longrightarrow \Gamma \underline{[0,\ln n]^{\preceq}}\ \  .$$
\end{definition}

Notice that although there is a bijective correspondence between states of a von Neumann algebra and measures on the spectral presheaf associated to it, the above definition does
not make any direct reference to the quantum state which the measure corresponds to.

\subsection{Properties of the contextual entropy}

\subsubsection{Retrieving the von Neumann entropy}\label{VNeu}
Given a density matrix $\rho$, there exists at least one orthonormal basis of Hilbert space with respect to which $\rho$ is diagonal. Such a basis corresponds to a set of
one-dimensional pairwise orthogonal projections $\{P_1,\ldots,P_n\}$, which in turn determine a maximal context $V_\rho$ via the double commutant construction. It is easy to check
that the eigenvalues $\{\lambda_i\}_{i=1}^n$ of $\rho$ satisfy $\lambda_i=Tr(\rho P_i)$. Hence the value assigned to the entropy of the measure $\mu_\rho$ at any context $V_\rho$
obtained through the above procedure, is just the von Neumann entropy of the state $\rho$:
\begin{align*}
E(\mu_\rho)_{_{V_\rho}} &=-\sum_{i=1}^n \mu_\rho|_{_{V_\rho}}(P_i)\ln \mu_\rho|_{_{V_\rho}}(P_i)\\ 
&=-\sum_{i=1}^n \mathrm{Tr}(\rho P_i)\ln\mathrm{Tr}(\rho P_i)\\
&=-\sum_{i=1}^n \lambda_i\ln\lambda_i=\mathrm{VN}(\rho)
\end{align*}

The natural question to ask at this stage is whether there is any way of determining the von Neumann entropy of a state if we are given an arbitrarily defined measure $\mu$ without being explicitly told which state it corresponds to. It turns out that the answer is yes, since it can be showed that the von Neumann entropy is the minimal value amongst the
numbers assigned to the maximal contexts of a von Neumann algebra by our generalized notion of entropy. Proving this result requires the Schur-Horn Lemma, which we state below.

\begin{theorem}[Schur-Horn Lemma]
Let $\rho$ be a Hermitian matrix, and let the vector $(\lambda_1,\ldots,\lambda_n)$ denote its spectrum. Let $(\delta_1,\ldots,\delta_n)$ denote its diagonal elements in a given basis. Then 
$$(\delta_1,\ldots,\delta_n)\preceq (\lambda_1,\ldots,\lambda_n)\ \ .$$
Conversely, if this equation holds, there exists a Hermitian matrix with spectrum $(\lambda_1,\ldots,\lambda_n)$ whose diagonal elements are given by
$(\delta_1,\ldots,\delta_n)$.
\end{theorem}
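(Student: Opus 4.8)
The statement has two halves, which I would establish by quite different means. For the direct inequality $\overrightarrow{\delta}\prec\overrightarrow{\lambda}$, the plan is to recognise the diagonal as a bistochastic image of the spectrum and then invoke the Hardy--Littlewood--Pólya lemma already stated above. Diagonalise $\rho=U\Lambda U^{*}$ with $\Lambda=\mathrm{diag}(\lambda_1,\ldots,\lambda_n)$ and $U$ unitary. Then
\[
\delta_i=\rho_{ii}=\sum_{j=1}^n U_{ij}\,\lambda_j\,\overline{U_{ij}}=\sum_{j=1}^n |U_{ij}|^2\,\lambda_j .
\]
Setting $B_{ij}:=|U_{ij}|^2$, the rows and columns of $B$ sum to $1$ because the rows and columns of $U$ are unit vectors, so $B$ is bistochastic and $\overrightarrow{\delta}=B\,\overrightarrow{\lambda}$. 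By the Hardy--Littlewood--Pólya lemma this is exactly equivalent to $\overrightarrow{\delta}\prec\overrightarrow{\lambda}$, which settles the first half at once.

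For the converse I would argue by induction on $n$, peeling off one diagonal entry with a single plane rotation. The base case $n=1$ is trivial. Conjugating by a permutation matrix preserves the spectrum and only permutes the diagonal, and majorization depends only on the underlying multisets, so I may assume both vectors sorted decreasingly. From $\overrightarrow{\delta}\prec\overrightarrow{\lambda}$ sorted one gets $\lambda_n\le\delta_1\le\lambda_1$, hence there is an index $k$ with $\lambda_{k+1}\le\delta_1\le\lambda_k$. I then form the $2\times 2$ real symmetric block
\[
M=\begin{pmatrix}\delta_1 & c\\ c & \mu\end{pmatrix},\qquad \mu:=\lambda_k+\lambda_{k+1}-\delta_1,
\]
whose trace forces one eigenvalue pattern and whose determinant condition gives $c^2=-(\delta_1-\lambda_k)(\delta_1-\lambda_{k+1})\ge 0$; thus $c$ is real and $M$ has eigenvalues $\lambda_k,\lambda_{k+1}$ with $(1,1)$-entry exactly $\delta_1$. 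Let $\overrightarrow{\nu}$ be $\overrightarrow{\lambda}$ with the pair $\lambda_k,\lambda_{k+1}$ replaced by the single value $\mu$; since $\mu\in[\lambda_{k+1},\lambda_k]$, the vector $\overrightarrow{\nu}$ of length $n-1$ is still sorted.

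The combinatorial heart of the argument is the claim that $(\delta_2,\ldots,\delta_n)\prec\overrightarrow{\nu}$, after which the induction hypothesis yields an $(n-1)\times(n-1)$ Hermitian matrix $C$ with spectrum $\overrightarrow{\nu}$ and diagonal $(\delta_2,\ldots,\delta_n)$. I would verify this claim by bookkeeping of partial sums: the total sums agree because $\sum\nu=\sum\lambda-\lambda_k-\lambda_{k+1}+\mu=\sum\lambda-\delta_1=\sum_{j\ge 2}\delta_j$; for $m<k$ one uses that shifting the sorted $\delta$'s down one index can only decrease a partial sum, reducing to the original majorization at level $m$; and for $m\ge k$ the truncated sum of $\overrightarrow{\nu}$ telescopes to $S^{\lambda}_{m+1}-\delta_1$, reducing to the original majorization at level $m+1$. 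Finally I reassemble: choosing a unit $\mu$-eigenvector $w$ of $C$, I set
\[
A=\begin{pmatrix}\delta_1 & c\,w^{*}\\ c\,w & C\end{pmatrix}.
\]
Because the coupling vector $c\,w$ meets only the $\mu$-eigendirection of $C$, in the eigenbasis of $C$ the matrix $A$ splits into the block $M$ (eigenvalues $\lambda_k,\lambda_{k+1}$) together with the remaining eigendirections of $C$ (eigenvalues the other $\nu_i$), so $\mathrm{spec}(A)=\overrightarrow{\lambda}$; meanwhile the off-diagonal blocks do not touch the diagonal, so the diagonal of $A$ is $(\delta_1,\delta_2,\ldots,\delta_n)$, completing the induction.

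The step I expect to be the main obstacle is the reduced-majorization claim $(\delta_2,\ldots,\delta_n)\prec\overrightarrow{\nu}$: it is the only genuinely non-formal point, and getting the index $k$ and the substitution $\mu=\lambda_k+\lambda_{k+1}-\delta_1$ exactly right is what makes all the partial-sum comparisons collapse back to the hypothesis $\overrightarrow{\delta}\prec\overrightarrow{\lambda}$. A secondary subtlety, easy to overlook, is that the reassembly works only because the coupling is aligned with the $\mu$-eigenvector of $C$; any other choice would disturb the remaining eigenvalues and the prescribed diagonal.
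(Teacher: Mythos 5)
Your proposal is correct, but it is worth noting that the paper does not prove this statement at all: it simply defers to the literature (``The proof of this useful result can be found in \cite{horn}''). What you have written is the classical self-contained argument, and both halves check out. The forward direction is Schur's original proof: $\delta_i=\sum_j |U_{ij}|^2\lambda_j$ exhibits the diagonal as a bistochastic image of the spectrum, and Hardy--Littlewood--P\'olya converts this into majorization. The converse is Horn's induction, and your bookkeeping is right: $c^2=-(\delta_1-\lambda_k)(\delta_1-\lambda_{k+1})\geq 0$ by the choice of $k$, the reduced vector $\nu$ stays sorted because $\mu\in[\lambda_{k+1},\lambda_k]$, the partial-sum comparisons for $m<k$ and $m\geq k$ reduce exactly to the hypothesis at levels $m$ and $m+1$ respectively, and the reassembled matrix has the right spectrum precisely because the coupling $c\,w$ lies along the $\mu$-eigenvector of $C$, so that in the eigenbasis of $C$ the matrix splits as $M$ plus a diagonal block. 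One small point to patch: the Hardy--Littlewood--P\'olya lemma as stated in the paper applies to probability distributions, whereas the Schur--Horn statement concerns arbitrary Hermitian matrices, whose spectra need not be non-negative or normalized. This costs nothing --- either invoke the general real-vector form of the lemma, or note that majorization is invariant under the affine substitution $x\mapsto (x+t\mathbf{1})/s$ with $s>0$, which turns any Hermitian matrix into a density matrix --- but the reduction should be said explicitly. What your route buys over the paper's citation is a complete, elementary proof using only the majorization machinery already set up in the section; what it costs is about a page of the induction bookkeeping you have carried out.
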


The proof of this useful result can be found in \cite{horn}. 

\begin{theorem}\label{VN}
 Given a state $\rho$, a maximal context $V_\rho$ in which $\rho$ is diagonal, and any other maximal context $V$, we have
 $$\mathrm{Sh}(\delta_1,\ldots,\delta_n)=E(\mu_\rho)_{_V}\geq E(\mu_\rho)_{_{V_\rho}}=\mathrm{Sh}(\lambda_1,\ldots,\lambda_n)=\mathrm{VN}(\rho)$$
where we have denoted by $\delta_i$ the diagonal entries of $\rho$ in the basis (unique up to phases) in which the elements of V are diagonal matrices.
\end{theorem}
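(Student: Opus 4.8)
The plan is to unwind the definition of contextual entropy at the two maximal contexts, identify the resulting probability vectors as the diagonal and the spectrum of $\rho$, and then close the gap between them with the Schur-Horn Lemma together with the Schur concavity of Shannon entropy. The two outer equalities are essentially bookkeeping, and the single middle inequality carries all the content.

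First I would establish the two flanking equalities. At the maximal context $V_\rho$, the $n$ generating rank-one projections $\{P_1,\ldots,P_n\}$ project onto an eigenbasis of $\rho$, so by the trace formula for $\mu_\rho$ we have $\mu_\rho|_{_{V_\rho}}(P_i)=\mathrm{Tr}(\rho P_i)=\lambda_i$; this is exactly the computation already carried out in Section \ref{VNeu}, giving $E(\mu_\rho)_{_{V_\rho}}=\mathrm{Sh}(\lambda_1,\ldots,\lambda_n)=\mathrm{VN}(\rho)$. For the arbitrary maximal context $V$, generated by the rank-one projections $\{P_1,\ldots,P_n\}$ onto the (orthonormal) basis in which the elements of $V$ are diagonal, the same trace formula gives $\mu_\rho|_{_V}(P_i)=\mathrm{Tr}(\rho P_i)=\langle b_i|\rho|b_i\rangle$, which is precisely the $i$-th diagonal entry $\delta_i$ of $\rho$ in that basis. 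Hence $E(\mu_\rho)_{_V}=\mathrm{Sh}(\delta_1,\ldots,\delta_n)$, the leftmost equality.

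The heart of the argument is the middle inequality. Here I would invoke the Schur-Horn Lemma: since $\rho$ is a Hermitian matrix with spectrum $(\lambda_1,\ldots,\lambda_n)$ and diagonal entries $(\delta_1,\ldots,\delta_n)$ in the basis associated to $V$, the lemma yields the majorization relation $(\delta_1,\ldots,\delta_n)\preceq(\lambda_1,\ldots,\lambda_n)$. Since Shannon entropy is Schur concave (majorization reversing), as recorded among its properties, this relation reverses under $\mathrm{Sh}$, giving $\mathrm{Sh}(\delta_1,\ldots,\delta_n)\geq\mathrm{Sh}(\lambda_1,\ldots,\lambda_n)$. Chaining this with the two equalities already obtained produces the full displayed chain, completing the proof.

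The step needing the most care is the identification $\mu_\rho|_{_V}(P_i)=\delta_i$: I must check that the basis in the statement (the one diagonalizing all elements of $V$) is exactly the basis whose diagonal entries $\delta_i$ appear, and that $\mathrm{Tr}(\rho P_i)$ equals the diagonal matrix element $\langle b_i|\rho|b_i\rangle$. This rests on the observation that a maximal context in an $n$-dimensional matrix algebra is generated by exactly $n$ mutually orthogonal rank-one projections, so that the probability vector has precisely $n$ components matching the diagonal of $\rho$; everything else is a direct appeal to results already in hand, namely the trace formula for $\mu_\rho$, the Schur-Horn Lemma, and the Schur concavity of Shannon entropy.
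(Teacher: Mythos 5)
Your proposal is correct and follows essentially the same route as the paper's proof: identify $E(\mu_\rho)_{_V}$ with the Shannon entropy of the diagonal of $\rho$ in the basis attached to $V$, identify $E(\mu_\rho)_{_{V_\rho}}$ with the Shannon entropy of the spectrum, and close the gap via the Schur-Horn Lemma plus Schur concavity of Shannon entropy. The only cosmetic difference is that you compute $\mathrm{Tr}(\rho P_i)=\langle b_i|\rho|b_i\rangle=\delta_i$ directly, whereas the paper reaches the same identification by conjugating with a unitary that diagonalises the generating projections and using cyclicity of the trace.
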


\begin{proof}
The key part of this proof is showing that $E(\mu_\rho)_{_V}$ is equal to the Shannon entropy of the diagonal elements of $\rho$ in some given basis. 

Let $V=\{Q_1,\ldots,Q_n\}''$. Then $$E(\mu_\rho)_{_V}=-\sum_{i=1}^n \mathrm{Tr}(\rho Q_i)\ln\mathrm{Tr}(\rho Q_i)$$
Since the $Q_i$'s are pairwise orthogonal, there exists a unitary $U$ which simultaneously diagonalises them. Using the fact that $\mathrm{Tr}(AB)=\mathrm{Tr}(BA)$ for any two
matrices $A$ and $B$, we can write
$$E(\mu_\rho)_{_V}=-\sum_{i=1}^n \mathrm{Tr}(U\rho U^{-1}UQ_iU^{-1})\ln\mathrm{Tr}(U\rho U^{-1}UQ_iU^{-1})$$
Since for all $i$, $UQ_iU^{-1}$ is a diagonal rank-one projection, the collection of numbers given by the traces $\mathrm{Tr}(U\rho U^{-1}UQ_iU^{-1})$ is he same as that which consists of the diagonal entries of $U\rho U^{-1}$, i.e. it is simply the collection of diagonal entries of $\rho$ when expressed in the basis determined by the column vectors of the unitary matrix $U$. Since these entries are positive and add up to unity they form a probability distribution, and we can say that
$$E(\mu_\rho)_{_V}=\mathrm{Sh}(\delta_1,\ldots,\delta_n)$$

The Schur-Horn Lemma together with the fact that Shannon entropy is majorization reversing give us 
$$E(\mu_\rho)_{_V}=\mathrm{Sh}(\delta_1,\ldots,\delta_n)\geq \mathrm{Sh}(\lambda_1,\ldots,\lambda_n)=\mathrm{VN}(\rho)$$
and we have already seen that $ E(\mu_\rho)_{_{V_\rho}}=\mathrm{VN}(\rho)$. \qed
\end{proof}

\begin{remark}\label{unitarily equivalent global sections}
From the proof of the last theorem we can extract an important observation: if we evaluate the contextual entropy of a state $\rho$ at some  context $V$ (not necessarily maximal), this will be equal to the contextual entropy of any unitarily equivalent state as long as we evaluate it at a context which is obtained from $V$ through rotation by the same unitary. That is,
$$E(\mu_\rho)_{_V}=E(\mu_{U\rho U^{-1}})_{_{U VU^{-1}}}$$

This equation expresses a covariance property: it does not matter if we consider the contextual entropy of a state $\rho$ (expressed in some basis) or of the same state expressed with respect to some other basis, $U\rho U^{-1}$, as long as we also adapt the context $V$ that we are considering accordingly, i.e. to $UVU^{-1}$. As functions, $E(\mu_\rho)$ and $E(\mu_{U\rho U^{-1}})$ are not the same, but they are the same up to a `rotation by U of contexts'.
\end{remark}

Given the contextual entropy map, the problem of finding a maximal context for which the minimum discussed above is attained is equivalent to the problem of finding the point at which a real-valued function on the group of unitaries $\mathcal{U}(n)$ attains its minimal value. To see why this is the case, let $C:=(E_1,\ldots,E_n)$ denote the maximal context determined by projections which are diagonal with respect to the computational basis. Any other maximal context $V=( P_1,\ldots,  P_n)$ can be written as $UC U^{-1}:=( UE_1 U^{-1},\ldots, UE_n U^{-1})$ for some unitary $ U$. Hence we can construct a real-valued function on the group of unitaries by considering the values which the contextual entropy map takes when it is evaluated on the set of maximal contexts. Explicitly, this map is
\begin{align*}
 \mathcal{W}_{\mu}:\mathcal{U}(n)&\longrightarrow \mathbb{R}\\
 U&\longmapsto E(\mu)_{_{UC U^{-1}}}
\end{align*}

It is possible to use existing optimization algorithms \cite{traian1,traian2} in order to determine the point at which this function attains its global minimum.

\subsubsection{Contextual vs. Shannon and von Neumann entropies}\label{cns}

We will now consider which of the properties of Shannon and von Neumann entropies have counterparts for contextual entropy. An immediate difficulty is posed by the fact
that the values of the contextual entropies are not real numbers but global sections of certain presheaves of real numbers, which may live in different topoi, i.e. they may be defined over different base categories. In some cases it is possible to work around this difficulty by adapting the definitions of order relations and algebraic operations on $\mathbb{R}$ to suit our more general framework. 

\noindent \textbf{1) Positivity}

Both von Neumann and Shannon entropies are positive. Shannon entropy is zero for any probability distribution in which one outcome occurs with $100$\% certainty and strictly
positive otherwise. Similarly, von Neumann entropy is zero for all pure states, and strictly positive for the others.

The contextual entropy does assign non-negative values to all contexts, hence the resulting global section can be thought of as non-negative. The minimum of contextual entropy on maximal contexts is 0 if and only the state is pure. This is entirely analogous to Shannon and von Neumann entropy. 


\noindent \textbf{2) Concavity}

Shannon entropy is concave: if $\vec{p}$ and $\vec{q}$ are two probability distributions then $$\mathrm{Sh}(r\cdot \vec{p}+(1-r)\cdot \vec{q})\geq
r\mathrm{Sh}(\vec{p})+(1-r)\mathrm{Sh}(\vec{q})$$ For von Neumann entropy concavity is defined by a similar formula: $$\mathrm{VN}(r\rho+(1-r)\sigma)\geq
r\mathrm{VN}(\rho)+(1-r)\mathrm{VN}(\sigma)$$

The contextual entropy satisfies a similar property. If $\rho$ and $\sigma$ are defined on the same Hilbert space $\mathcal{H}$ then for every context $V\in\mathcal{B(H)}$, if $V$
is generated by the projections $\{P_1,\ldots,P_k\}$, we have
\begin{align*}
E(\mu_{r\rho+(1-r)\sigma})_{_V}&=\mathrm{Sh}(~ \mathrm{Tr}[(r\rho+(1-r)\sigma)P_1],\, \ldots,\mathrm{Tr}[(r\rho+(1-r)\sigma)P_k]~ )\\
 &=\mathrm{Sh}(~ [r\mathrm{Tr}(\rho P_1)+(1-r)\mathrm{Tr}(\sigma P_1)],\, \ldots,r\mathrm{Tr}(\rho P_k)+(1-r)\mathrm{Tr}(\sigma P_k)~)\\
 &\geq r\mathrm{Sh}(\mathrm{Tr}(\rho P_1),\, \ldots,\mathrm{Tr}(\rho P_k))~+~(1-r)\mathrm{Sh}(\mathrm{Tr}(\sigma P_1),\, \ldots,\mathrm{Tr}(\sigma P_k))\\
 &=r\cdot E(\mu_\rho)_{_V}+(1-r)E(\mu_\sigma)_{_V}
\end{align*}

Hence contextual entropy is globally concave:
$$E(\mu_{r\rho+(1-r)\sigma})\geq r\cdot E(\mu_\rho)+(1-r)E(\mu_\sigma), \ \ \forall r\in[0,1]$$

\noindent \textbf{3) Additivity and Subadditivity}

Subadditivity a property concerning composite systems. Recall that an entropy is called subadditive if the entropy of a composite system is smaller than the sum of the entropies of its parts. Both von Neumann and Shannon entropies are subadditive. We would like to obtain an inequality of the form
$$E(\mu_\rho)\leq E(\mu_{\rho_1})+E(\mu_{\rho_2})$$
where $\rho$ is the density matrix representing a composite state and $\rho_1$ and $\rho_2$ are the partial traces of $\rho$. It is not immediately clear how one could define such an inequality, since this time the terms involved are global sections of presheaves over three different base categories.
Hence in order to talk about subadditivity in a meaningful way, we must first define a suitable notion of addition between the global sections $E(\mu_{\rho_1})$ and
$E(\mu_{\rho_2})$.

In order to see how this might be done, we start by considering some context $V$ of the first subsystem and some other context $W$ of the second subsystem. If
$V=\{P_1,\ldots,P_k\}''$ and $W=\{Q_1,\ldots,Q_r\}''$, from the definition of the entropy we have
$$ E(\mu_{\rho_1})|_{_V}=\sum_{i=1}^k \mathrm{Tr}(\rho_1 P_i) \ln  \mathrm{Tr}(\rho_1 P_i),$$
$$ E(\mu_{\rho_2})|_{_W}=\sum_{j=1}^r \mathrm{Tr}(\rho_2 Q_j) \ln  \mathrm{Tr}(\rho_2 Q_j)$$
We can add these two numbers together, and we can use the fact that Shannon entropy is additive for independent probability distributions (i.e. $\sum_{i=1}^k p_i\ln p_i+
\sum_{j=1}^r q_j \ln q_j = \sum_{i,j} p_iq_j \ln p_iq_j$) and the fact that $ \mathrm{Tr}(\rho_1 P_i) \mathrm{Tr}(\rho_2 Q_j)= \mathrm{Tr}(\rho_1\otimes \rho_2 P_i\otimes Q_j)$ to
obtain
$$E(\mu_{\rho_1})|_{_V}+E(\mu_{\rho_2})|_{_W} = \sum_{i=1,j} \mathrm{Tr}(\rho_1\otimes \rho_2 P_i\otimes Q_j) \ln \mathrm{Tr}(\rho_1\otimes \rho_2 P_i\otimes Q_j) =
E(\mu_{\rho_1\otimes\rho_2})|_{_{V\otimes W}}$$

Hence we could use the following requirement for the definition of subadditivity: $E(\mu_\rho)$ should be less than or equal to $E(\mu_{\rho_1\otimes\rho_2})$ at each
context $\widetilde{V}$ of the composite system. This definition enables us to say, for instance, that the contextual entropy is additive when $\rho=\rho_1\otimes\rho_2$. Note that this is a direct consequence of the additivity property of Shannon entropy.

Even when $\rho$ is not equal to $\rho_1\otimes\rho_2$ the subadditivity property holds in split contexts (i.e. contexts of the form $V\otimes W$) as a consequence of Shannon subadditivity. Consider $\widetilde{V}=V\otimes W$, with $V$ and $W$ as above. We know that 
$$\mu_{\rho_1}(P_i) = \mu_\rho(P_i\otimes I) = \sum_{j=1}^r \mu_\rho(P_i \otimes Q_j)$$
for all $i\in\{1,\ldots,k\}$ and 
$$\mu_{\rho_2}(Q_j) = \mu_\rho(I\otimes Q_j)= \sum_{j=1}^r \mu_\rho(P_i \otimes Q_j)$$
for all $j\in\{1,\ldots,r\}$. 

Using the subadditivity property of Shannon entropy we obtain 
\begin{align*}
 E(\mu_\rho)|_{_{V\otimes W}} &= \mathrm{Sh}\left(P_1\otimes Q_1,\ldots,P_1\otimes Q_r,\ \ldots \ ,P_k\otimes Q_1,\ldots, P_k\otimes Q_r\right) \\
&\leq \mathrm{Sh}\left(\sum_{i=1}^k \mu_\rho(P_i\otimes Q_1), \sum_{i=1}^k \mu_\rho(P_i\otimes Q_2),\ldots, \sum_{i=1}^k \mu_\rho(P_i\otimes Q_r)\right) + \\
& \ \ ~ \ \mathrm{Sh}\left(\sum_{j=1}^r \mu_\rho(P_1\otimes Q_j), \sum_{j=1}^r \mu_\rho(P_2\otimes Q_j),\ldots, \sum_{j=1}^r \mu_\rho(P_k\otimes Q_j)\right)\\
&=\mathrm{Sh}\left(\mu_{\rho_1}(P_1),\mu_{\rho_1}(P_2),\ldots, \mu_{\rho_1}(P_k)\right)+\mathrm{Sh}(\mu_{\rho_2}(Q_1),\mu_{\rho_2}(Q_2),\ldots,\mu_{\rho_2}(Q_r))\\
&=E(\mu_{\rho_1})|_{_V}+E(\mu_{\rho_2})|_W=E(\mu_{\rho_1\otimes \rho_2})|_{_{V\otimes W}} 
\end{align*}

\begin{remark}
The fact that the contextual entropy is subadditive in all split contexts can be used to give a more direct proof of the subadditivity property of von Neumann entropy, which avoids using Klein's inequality: if we choose the split context $\widetilde{V}$ such that $\rho_1$ is diagonal in $W$ and $\rho_2$ is diagonal in $W$ we have from Theorem \ref{VN} that
$$\mathrm{VN}(\rho)\leq E(\mu_\rho)|_{_{V\otimes W}} \leq E(\mu_{\rho_1})|_{_V}+E(\mu_{\rho_2})|_W = \mathrm{VN}(\rho_1)+\mathrm{VN}(\rho_2)$$
\end{remark}

For contexts which are not split (which we usually call entangled contexts), the subadditivity property does not necessarily hold. Intuitively, we can understand why this happens: the converse of the Schur-Horn lemma implies that for
any density matrix $\rho$, there is some unitary $U$ for which the diagonal of $U\rho U^{-1}$ is the maximally mixed vector $(\frac{1}{n},\frac{1}{n},\ldots,\frac{1}{n})$. Let $D_n$ denote the context generated by the set of projections $\{E_{11},\ldots,E_{nn}\}$, where we have fixed our basis such that $E_{ii}$ is the projection with the $i^{th}$ diagonal entry equal to one and all other entries equal to zero. Then

$$E(\mu_\rho)|_{_{U^{-1}\cdot D_n\cdot U}}=E(\mu_{U\rho U^{-1}})|_{_ {D_n}}=\mathrm{Sh}\left((U\rho U^{-1})_{11},\ldots,(U\rho U^{-1})_{nn}\right)=\ln n$$

There is however no guarantee that the diagonal of $U\rho_1\otimes\rho_2U^{-1}$ will also be the maximally mixed vector. Hence the contextual entropy map will assign a smaller value to the state $U\rho_1\otimes\rho_2U^{-1}$ at the context $D_n$. This in turn implies that 
$$E(\mu_\rho)|_{_{U^{-1}\cdot D_n\cdot U}}>E(\mu_{\rho_1\otimes\rho_2})|_{_{U^{-1}\cdot D_n\cdot U}}$$

Note however that explicitly finding the unitary matrix $U$ for which the diagonal of $U\rho U^{-1}$ is the maximally mixed vector is an example of an inverse eigenvalue problem, and is not at all trivial. Algorithms for finding such unitaries do exist, see \cite{DHST03}, but they will not help us find a counter-example of the subadditivity property - we may end up finding a unitary which takes both the state $\rho$ and the state $\rho_1\otimes\rho_2$ to a basis where each has the maximally mixed vector on the diagonal. Instead we can look at particular density matrices and construct general examples for which the subadditivity property fails to hold. The details of these computations can be found in Appendix \ref{AppendixA}.

The conclusion is that subadditivity is not a global property of our contextual entropy. In particular, it may fail to hold at entangled contexts, a behaviour which is not captured either by the classical Shannon entropy, or by its quantum-mechanical counterpart.

\noindent \textbf{4) Monotonicity}

Monotonicity is a property which refers to composite systems. Recall that an entropy is called monotone if the entropy of a composite system is larger than each of the entropies of its parts. Shannon entropy is monotone: let $(p_1,\ldots,p_n)$ and $(q_1,\ldots,q_m)$ be two probability distributions, and let $(p_1q_1,\ldots,p_1q_m,\ \ldots,\ p_nq_1,\ldots,p_nq_m)$ be the probability distribution obtained by composing them. The Shannon entropy of the latter is at least as large as the Shannon entropies of the each of the former distributions.

Von Neumann entropy on the other hand is not monotone. Let $\rho$ be defined on a Hilbert space $\mathcal{H}=\mathcal{H}_1\otimes\mathcal{H}_2$, and let $\rho_1=\mathrm{Tr}_2\rho$
and $\rho_2=\mathrm{Tr}_1\rho$ be defined on the Hilbert spaces $\mathcal{H}_1$ and $\mathcal{H}_2$ respectively. We know that a composite system can be in a pure state, in which
case $\mathrm{VN}(\rho)=0$, but its subsystems might be mixed, and then $\mathrm{VN}(\rho_i)>0$.

When we consider the contextual entropy, we would like to obtain an equation of the form
$E(\mu_\rho)\geq E(\mu_{\rho_i})$ for each subsystem $i$. However, we are again faced with the problem that the entropy of a composite system and the entropies of its parts are global sections of presheaves which live in different topoi. We try to solve this problem by giving a general method of comparing such global sections.

\begin{definition}
 Let $\mathcal{H}=\mathcal{H}_1\otimes\mathcal{H}_2$ and let $N=\mathcal{B(H)}$, $N_i=\mathcal{B(H}_i)$, for $i=1,2$. If $\gamma\in\Gamma\underline{[0,\ln n]^{\preceq}}$ is a global section of the presheaf of real values which lives in the topos $\mathrm{Sets}^{{\mathcal{V}(N)}^{op}}$ and $\gamma_i\in\Gamma\underline{[0,\ln
n_i]^{\preceq}}$ are global sections of the presheaves living in the topoi $\mathrm{Sets}^{{\mathcal{V}(N_i)}^{op}}$, we say that $\gamma\geq\gamma_1$ if for all contexts $V\in\mathcal{V}(N_1),\ W\in\mathcal{V}(N_2)$ and all unitaries $U$:
 $$\gamma_{_{U\cdot V\otimes W\cdot U^{-1}}}\geq {\gamma_1}_{_V}$$
Similarly, one can define a binary order relation between $\gamma$ and $\gamma_2$.
\end{definition}

Note that the order relation we have just defined is not a partial order since it is not reflexive. However, with respect to this order we can see that for split contexts our
contextual entropy behaves like Shannon entropy: let $\{P_1,\ldots,P_n\}$ be the projections generating $V$ and $\{Q_1,\ldots,Q_m\}$ the projections generating $W$. Then
$\{P_1\otimes Q_1,P_1\otimes Q2,\ldots,P_n\otimes Q_m\}$ are the canonical projections generating $\widetilde{V}$. For any $\rho,\ \rho_1$ and $\rho_2$ defined as above we have
$$p_i:=\mathrm{Tr}(\rho_1 P_i)=\mathrm{Tr}(\rho P_i\otimes I)=\mathrm{Tr}(\rho P_i\otimes \sum_{j=1}^m Q_j)=\sum_{j=1}^m \mathrm{Tr}(\rho P_i\otimes Q_j)=:\sum_{j=1}^m r_{ij}$$
If we now use the recursion relation for Shannon entropy we have:
$$\mathrm{Sh}(r_{11},\ldots,r_{nm})=\mathrm{Sh}(p_1,\ldots,p_n)+\sum_{i=1}^n p_i\cdot \mathrm{Sh}(\frac{r_{i1}}{p_i},\frac{r_{i2}}{p_i},\ldots,\frac{r_{im}}{p_i})$$
and since all the terms in the sum on the right hand side of the equation are positive we must have $$\gamma_{_{V\otimes
W}}=\mathrm{Sh}(r_{11},\ldots,r_{nm})\geq\mathrm{Sh}(p_1,\ldots,p_n)={\gamma_1}_{_V}$$
for all contexts $V$ and $W$.

For entangled contexts our contextual entropy behaves like von Neumann entropy in the sense that it does not satisfy the monotonicity property. In order to see this, we can look at the same situation in which monotonicity failed for von Neumann entropy: let $\rho_1$ be a mixed state and $\rho$ a pure one such that $\mathrm{Tr}_2\rho=\rho_1$. Let
$V_\rho$ be a maximal context in which $\gamma_{_{V_\rho}}=VN(\rho)=0$. We know that $V_\rho$ must be of the form $U\cdot V\otimes W\cdot U^{-1}$ for some maximal contexts $V$ and
$W$ and unitary $U$. From Theorem \ref{VN} we have ${\gamma_1}_{_V}\geq \mathrm{VN}(\rho_1)>0$. Hence it is not true that $\gamma\geq \gamma_1$.

Clearly if this happens, $U\cdot V\otimes W \cdot U^{-1}$ must be an entangled context, if it would be split we would be in the situation analyzed previously.

\section{Reconstructing pure states from global sections}\label{r1}

A direct implication of Remark \ref{unitarily equivalent global sections} is that unlike von Neumann entropy, which gives the same value for unitarily equivalent states, our contextual entropy gives different (though in a sense unitarily equivalent) global sections of the presheaf $\underline{[0,\ln n]^\preceq}$. This enables us not only to distinguish which global sections come from measures associated to pure states but also to explicitly reconstruct those pure states. We explain this method in more detail.

Recall that the von Neumann entropy of a state vanishes if and only if that state is a pure one. Given a global section $\gamma\in \Gamma\underline{[0,\ln n]^\preceq}$ if $\gamma$
is in the image of the contextual entropy mapping $E$ then it comes from a measure associated to a pure state if and only if there exists a maximal context $V$ such that
$\gamma|_{_V}=0$. This means that if $V$ is generated by the set of rank one projections $$\{P_1,\ldots,P_n\}=\{\left|\psi_1\right>\left<\psi_1\right|,\ldots,
\left|\psi_n\right>\left<\psi_n\right|\}$$ our state must equal one of these projections and our only task is to determine which one. For this, consider unitaries $U_1,\ldots,U_n$
which have the property that $U_iP_iU^{-1}_i=P_i$ and $$\{U_iP_jU^{-1}_i~|~1\leq j\leq n, j\neq i\}\neq\{P_1,\ldots,\widehat{P_i},\ldots,P_n\}$$
Think of this as taking $n$ rotations in Hilbert space, each of which preserves one axis of the orthonormal basis $\{\left|\psi_1\right>,\ldots,\left|\psi_n\right>\}$ and rotates
the others, but without permuting them.

If we consider the contexts $V_i= \{U_iP_1U_i^{-1},\ldots,U_iP_nU_i^{-1}\}''$ then $\rho$ will be diagonal only in one of the orthonormal bases which correspond to these contexts.
This means the contextual entropy will assign the value zero to precisely one of the contexts $V_i$, and hence our state is
$$\rho= \{U_iP_1U_i^{-1},\ldots,U_iP_nU_i^{-1}\}\cap \{P_1,\ldots,P_n\}$$

\section{Reconstructing arbitrary quantum states from global sections}\label{arbitrary}

Consider a global section $\gamma\in \Gamma\underline{[0,\ln n]^\preceq}$. We present here an algorithm for reconstructing the state $\rho$ for which $E(\mu_\rho)=\gamma$. We
assume for now that $\gamma$ is in the image of the contextual entropy mapping. If our algorithm will fail to find a solution we will know that our initial assumption was false.
Otherwise we must perform one final check at the end of our algorithm to make sure that this assumption was correct.

\noindent\textbf{Step 1.} Start by identifying one maximal context $V$ such that $\gamma|_{_V}\leq\gamma|_{_W}$ for all maximal contexts $W$. This amounts to retrieving the von Neumann entropy of the state $\rho$ from the contextual entropy. Note that in general, the minimal value of $E$ will be attained in many different maximal contexts, but any of them can be used in our reconstruction.  

If the minimal value among maximal contexts equals zero we must have a pure state, and we have already discussed these. Otherwise, we know from Section \ref{VNeu} that
$\rho$ must be diagonal in the context $V$. If we consider the canonical projections $\{P_1,\ldots,P_n\}$ which generate $V$, the fact that $\rho$ is diagonal at $V$ implies that
it is of the form
$$\rho=\lambda_1 P_1+\ldots+\lambda_n P_n$$
where the $\lambda_i$'s are the eigenvalues of $\rho$. 

Of course, if we have access to the probability distribution on the spectral presheaf from which $E_\rho$ was constructed, just identifying this context immediately yields the quantum state: $$\rho=\sum_i \mu|_{_V} (P_i) P_i$$ 

However, we are seeking to reconstruct the state mathematically using only information encoded by the contextual entropy map.

\noindent\textbf{Step 2.} We are now left with the task of determining the eigenvalues of $\rho$. For this assume that the dimension $n$ of our Hilbert space
is greater or equal to $3$. For each $i\in\{1,\ldots,n\}$ let $$W_i:=\{P_i,I-P_i\}''$$
Then $\textrm{Sh}(\lambda_i,1-\lambda_i)$ must equal $\gamma|_{_{W_i}}$ for all $i$. If $$\gamma|_{_{W_i}}>\ln 2$$ then the global section $\gamma$ cannot be in the image of the
contextual entropy mapping, and our algorithm stops. Otherwise, the transcendental equation $\textrm{Sh}(x_1,x_2)=k$ has two solutions which are symmetric around $\frac{1}{2}$ as
indicated in Figure \ref{Sh}.

\begin{figure}[ht!]
\centering
\includegraphics[width=5.2cm]{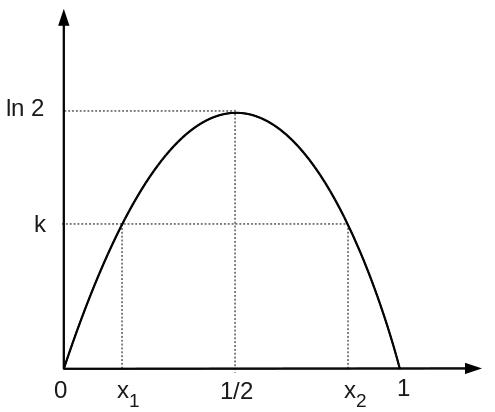}
\caption{The Shannon entropy of a two-variable probability distribution}
\label{Sh}
\end{figure}

\noindent Let $p_i$ and $1-p_i$ be the solutions of $\textrm{Sh}(x_1,x_2)=\gamma|_{_{W_i}}$ and assume without loss of generality that $p_i\leq \frac{1}{2}$. For each $i$ we have
at most two choices for the value of the $i^{th}$ eigenvalue of $\rho$: we can either set $\lambda_i=p_i$ or $\lambda_i=1-p_i$. Since
$$\lambda_1+\ldots+\lambda_n=1$$
there can be at most one $j$ such that $p_j < \frac{1}{2}$ and $\lambda_j=1-p_j$, while for all $i\neq j$ we must have $\lambda_i=p_i$. Let 
$$S=\sum_{i=1}^n p_i$$
Clearly $\sum_{i=1}^n\lambda_i\geq S$. We are now faced with three possible scenarios:

\noindent\textbf{(a)} If $S>1$ we obtain a contradiction, hence $\gamma$ can not be in the image of the contextual entropy mapping.

\noindent\textbf{(b)} If $S=1$ then the assignment $\lambda_i=p_i$ gives one possible solution for the set of eigenvalues of our state $\rho$. This solution is clearly unique: any other choice of
values will make the total sum of the eigenvalues of $\rho$ greater than $1$.

\noindent\textbf{(c)} If $S<1$ then we must determine the $j$ for which $p_j<\frac{1}{2}$ and $\lambda_j=1-p_j$. If such a $j$ exists then
$$1=\sum_{i=1}^n\lambda_i=S-p_j+(1-p_j)$$
hence $p_j$ should equal $\frac{S}{2}$. Now
\begin{itemize}
{\setlength\itemindent{25pt}  \item[\textbf{(c1)}] if the value $\frac{S}{2}$ does not appear amongst $\{p_1,\ldots,p_n\}$ then we have no solution }
{\setlength\itemindent{25pt}  \item[\textbf{(c2)}] if $\frac{S}{2}$ appears once, we have a unique solution }
{\setlength\itemindent{25pt} \item[\textbf{(c3)}] if it appears more than once, let $\{j_1,\ldots,j_m\}$ be the set of indices for which $p_{j_k}=\frac{S}{2}$. If we set $\lambda_{j_k}=1-p_{j_k}$ and take another
$l\in\{1,\ldots,m\}$, $l\neq k$. Then
$$\sum_{i=1}^n\lambda_i\geq \lambda_{j_k}+ \lambda_{j_l} = 1-p_{j_k}+p_{j_k}=1$$
In order to have equality we must have $m=2$ and $p_i=0$ for all $i\notin \{j_1,j_2\}$. Unless this happens we cannot find a solution. On the other hand, for $m=2$ we have two
possible solutions. These correspond to the two states
\vspace{0.5cm}

\begin{equation*}
\rho_1=p_{j_1} P_{j_1}+ (1-p_{j_1})P_{j_2}
\end{equation*} 

and 

\begin{equation*}
\rho_2=(1-p_{j_1}) P_{j_1}+ p_{j_1} P_{j_2}
\end{equation*}
}

\end{itemize}

\noindent In order to decide between the two states in \textbf{(c3)} we need to run our algorithm again but with a slight modification: instead of considering two-dimensional subalgebras of $V$, we take a
unitary $U$ which rotates all the canonical projections generating $V$, except $P_{j_1}$, which it leaves unchanged, and we consider the two dimensional subalgebras of $U\cdot
V\cdot U^{-1}$ of the form
$$\widetilde{W_i}=\{UP_iU^{-1}, I- UP_iU^{-1}\}''$$
We solve the equations $\mathrm{Sh}(x_i,1-x_i)=\gamma|_{_{\widetilde{W_i}}}$ and choose as before $n$ numbers from these solutions, such that they add up to one. These numbers
represent the diagonal entries of the matrix $U^{-1}\rho U$. We will not encounter any problems when retrieving these entries (unless of course, our initial assumption about
$\gamma$ being in the image of the contextual entropy mapping was false) because unlike the eigenvalues of $\rho$, these diagonal entries must contain more than three non-zero
elements, so the \textbf{(c3)} branch is not accessible this time. Moreover, the $j_1^{th}$ entry on the diagonal of $U^{-1}\rho U$ will be the same as the $j_1^{th}$ eigenvalue of $\rho$, which allows us to choose between $\rho_1$ and $\rho_2$.

\noindent\textbf{Step 3.} We have now reached the end of our algorithm. If it has failed to retrieve a solution, we conclude that we have considered a global section $\gamma$ which was not in the image of
the contextual entropy mapping. Otherwise, our reconstructed state is
$$\rho=\lambda_1 P_1+\ldots+\lambda_n P_n$$
In order to obtain $\rho$ we have taken into account only a finite number of contexts, and it might happen that when all contexts are taken into account $E(\mu_\rho)\neq\gamma$. In this case we also conclude that $\gamma$ was not in the image of the contextual entropy mapping, and discard the state $\rho$.

\subsection{Two-dimensional Hilbert spaces}\label{r3}

For two dimensional Hilbert spaces the contextual entropy is a two-to-one mapping. We will justify this statement below.

First, it is easy to check that for any one dimensional projection $P$ the states $\rho_1=\lambda P+(1-\lambda)(I-P)$ and $\rho_2=(1-\lambda)P+\lambda (I-P)$ are mapped to the same global section of $\underline{[0,\ln 2]}^\preceq$: note that $\rho_1=I-\rho_2$. Hence for every context $W=\{Q,I-Q\}''$
$$E(\mu_{\rho_1})|_{_W}=\mathrm{Sh}(\mathrm{Tr}\rho_1 Q, 1-\mathrm{Tr}\rho_1 Q)$$
while
\begin{align*}
E(\mu_{I-\rho_1})|_{_W}&=\mathrm{Sh}(\ \mathrm{Tr}(I-\rho_1) Q , \  1-\mathrm{Tr}(I-\rho_1) Q)\\
&=\mathrm{Sh}(\ \mathrm{Tr}(I-\rho_1)(I- Q), \ 1-\mathrm{Tr}(I-\rho_1)(I-Q)) 
\end{align*}
And since every one dimensional projection $Q$ has trace equal to unity,
$$\mathrm{Tr}(I-\rho_1)(I- Q)=\mathrm{Tr}I-\rho_1 - Q+\rho_1Q=\mathrm{Tr}\rho_1 Q$$
and so also $E(\mu_{\rho_1})|_{_W}=E(\mu_{I-\rho_1})|_{_W}$.

On the other hand, given a global section of $\underline{[0,\ln 2]}^\preceq$, the poset $\mathcal{V}(M_2)$ consists only of two-dimensional subalgebras. We can identify a context
$V=\{P, 1-P\}''$ for which $\gamma|_{_V}$ is minimal, and solve the equation $\mathrm{Sh}(x,1-x)=\gamma|_{_V}$ to find the eigenvalues of $\rho$. Since we have no further
information available, we cannot say which eigenvalue corresponds to which of the two projections generating $V$.

Note however that we are not far from reconstructing $\rho$: we would need to encode only one extra bit of information in order to fully reconstruct a two-dimensional quantum
state.

\section{Renyi entropy}\label{other}

We have seen how Shannon entropy can be encoded in the topos approach, and how one can afterwards retrieve its quantum analogue, the von Neumann entropy. It is natural to ask at this point whether a similar encoding can be found for other classical entropies, and whether such an encoding would still enable us to retrieve their quantum analogues. We will look here at Renyi entropies, and show that it is possible to obtain their topos theoretic equivalent.

We first define contextual Renyi entropy locally as
$$\mathcal{R}_q(\mu)_{_V}=R_q(\mu|_{_V}(P_1),\ldots,\mu|_{_V}(P_n), \  \forall V=\{P_1,\ldots,P_n\}''$$
Of course, we would like these local components to fit together nicely as before, and to form a global section of some presheaf of real numbers. For Shannon entropy, the fact that a global section could be formed was a consequence of the recursion property. Renyi entropies are in general not recursive, but they do satisfy a property which we shall call \textit{weak recursivity}, and we shall see that this is enough for our purposes. 

\begin{definition}
Let $\mathrm{S}$ be some function defined on the set of all probability distributions. If we coarse grain a probability distribution $(x_1,\ldots,x_n)$ by not distinguishing between all the outcomes, we obtain a new probability distribution with components
 $$p_1=\sum_{i=1}^{k_1} x_i, \  \ldots, \ p_r=\sum_{i=k_{r-1}+1}^{k_r} x_i$$
for some $0<k_1<k_2<\ldots<k_r=n$. We say that $\mathrm{S}$ is \textbf{weakly recursive} if
$$\mathrm{S}(x_1,\ldots,x_n)\geq \mathrm{S}(p_1,\ldots,p_r)$$
\end{definition}

One can easily check that Renyi entropies indeed satisfy this property, and hence for any two contexts $V'\supseteq V$
$$\mathcal{R}_q(\mu)_{_{V'}}\geq\mathcal{R}_q(\mu)_{_V}, \ \forall \mu\in\mathcal{M}(\underline{\Sigma})$$
This means it is possible to define contextual Renyi entropy as a mapping
$$\mathcal{R}_q:\mathcal{M}(\underline{\Sigma})\longrightarrow \Gamma \underline{[0,\ln n]^\preceq}$$

Since Renyi entropies are Schur concave, their quantum counterparts can be retrieved from the contextual Renyi entropies by finding the minimum over the set of values assigned to all maximal contexts. This is justified by the Schur-Horn lemma and similar arguments to those that were already used in Section \ref{VNeu}.

\subsection{Properties}

We will now briefly discuss some of the properties of Renyi entropies and their contextual analogues. 

\noindent \textbf{Concavity}

We saw in Section \ref{cns} that the global concavity of the contextual entropy was expressed as the concavity of each of its local components, and hence it was a direct consequence of the concavity property of Shannon entropy. Renyi entropies however are only concave for $0<q\leq 1$. In fact, it is known that concavity is lost for $q>q_*>1$, where $q_*$ depends on the dimension of the probability distribution. Concavity of the contextual Renyi entropies is then going to hold under the same conditions.

\noindent\textbf{Additivity and Subadditivity}

Renyi entropies are additive, so we can use the same justification as in Section \ref{cns} to define subadditivity for contextual Renyi entropies as the following condition:
$$\mathcal{R}_q(\mu_\rho)_{_V}\leq\mathcal{R}_q(\mu_{\rho_1\otimes\rho_2})_{_V}, \ \forall V\in \mathcal{V}(\mathcal{B}(H))$$
This allows us to say that contextual Renyi entropies are also additive. On the other hand, since neither classical nor quantum Renyi entropies are subadditive (except for $q=0$ and $q=1$),  contextual Renyi entropy also doesn't have this property. 

\noindent\textbf{Monotonicity}

Classical Renyi entropies are monotone, just like Shannon entropy. However, their quantum counterparts do not necessarily have this property since, just like von Neumann entropy, they assign the value $0$ to pure states and it is possible to have a composite system in a pure state and both its components in mixed states. For contextual Renyi entropies, this means that we will not have a global property of monotonicity. Instead this property is going to hold only for split contexts. 

\noindent \textbf{State reconstruction}

Finally, recall that the reconstruction algorithms described in Sections \ref{r1}-\ref{r3} relied on Gleason's theorem, the Schur-Horn lemma, and two extra ingredients: one was the fact that von Neumann entropy vanished only for pure states, and the second was the fact that for probability distributions with two variables one could find precisely two solutions (symmetric around $1/2$) for which Shannon entropy would take any given value within its image. Both of these ingredients are present when we consider Renyi entropy, for positive parameter $q0$, as Figure \ref{Renyi} clearly indicates. This means that the reconstruction algorithms can also be applied to contextual Renyi entropies, with the exception of $\mathcal{R}_0$.

\newpage
$ $
\newpage

\begin{chapter}{Conclusions and Outlook}\label{gls}

In this thesis we have investigated two distinct but related research strands involving sheaf and category theory, within the fields of quantum information theory and foundations of physics. In the following paragraphs we will outline the key results of this work and discuss its potentially interesting extensions to future research.

In the first half of the thesis we have surveyed the landscape of multipartite entangled states through the lens of the contextuality classification developed by Abramsky and Brandenburger in \cite{AbrBra11}. 

In Chapter \ref{CC} we have made a complete classification of a particular class of non-symmetric multipartite entangled states, the balanced states with functional dependencies. We have showed that each of the three strengths of contextuality distinguished by Abramsky and Brandenburger is exhibited by a certain subclass of these functionally dependent multipartite quantum states. This classification has recently been shown to relate to the study of violations of local realism in quantum hypergraph states conducted by G\"{u}hne et al.~in \cite{Mari}.

One potential future use of the results of this investigation could fructify the fact that the study carried out by Abramsky and Brandenburger has been conducted at a very high level of generality, without any presupposition of quantum mechanics. Although their methods are readily applicable in quantum mechanical scenarios, there is also the possibility of making further connections between the study of contextuality and non-locality in physics and ideas arising in other fields. Since the functional dependency in the states considered in Chapter \ref{CC} is given by Boolean functions, it would be interesting to see if our classification could yield any insights within other areas of research where Boolean functions play a prominent role.

We have also noted in Sections \ref{dict} and \ref{dictatorships} that the states belonging to the subclass with the weakest degree of contextuality, the dictatorship states, are closely related to maximally entangled two-qubit states, being equivalent to tensor products of pure states and maximally entangled two-qubit states. We have shown that such states can not exhibit a stronger degree of contextuality if the parties which share such a state each have a choice between the same two local observables. 

This observation naturally leads to the question of quantifying the maximum degree of contextuality which can be exhibited by a quantum state for some choice of local observables. Given the current scarcity of results related to the classification and quantification of multipartite entanglement, the perspective offered by the question of the strongest degree of contextuality which a state may achieve could become very useful, particularly in the light of recent results \cite{Raussendorf,howard} showing that contextuality is a key feature enabling quantum computation. 

A first step in this direction has already been taken in Chapter \ref{LNL}, where we demonstrate that all $n$-qubit quantum states, with the exception of tensor products of maximally entangled bipartite states and single quantum states, are at least logically contextual. That is to say, they occupy at least the middle level of the three-level hierarchy of contextuality. This implies that the probability- and inequality-free logical formulation of contextuality and non-locality specific to logically, and in particular also to strongly contextual quantum states, is not a rare occurrence, but in fact arises for almost all states. 

If we further consider the notion of strong contextuality, a natural next challenge follows from the results presented in the first half of this thesis. Namely, to characterise those quantum states for which local observables giving rise to a strongly contextual empirical model can be found.

This question remains open, and appears difficult. It has an interesting relation to the question of ``All-versus-Nothing'' arguments first used by Mermin \cite{mermin1990simple}, which have recently been studied in the sheaf-theoretic approach by Abramsky et al. \cite{abramsky2015contextuality}, and shown to be related to the cohomological witnesses for contextuality previously introduced by Abramsky, Mansfield and Soares Barbossa in \cite{AbrShaneRui11}. All currently known examples of strong contextuality arising in quantum mechanics come from All-versus-Nothing arguments. Determining whether this is true in general is another challenging problem, which may hold the key to the main question.

The qualitative arguments of Chapters \ref{CC} and \ref{LNL} can easily be turned into quantitative ones. Abramsky and Hardy have shown in \cite{AbrHar12} how any instance of logical contextuality gives rise to a Bell inequality based on logical consistency conditions, which allows for  quantitative, robust experimental tests. While the structure of the argument in Chapter \ref{LNL} can be easily modified to yield a lower bound for the violations of the logical Bell inequalities corresponding to each logically non-local quantum state, it would also be interesting to seek to obtain concrete violations of Bell inequalities for each particular type among the functionally dependent entangled states. 

In the second half of our thesis, we have explored two strands of research within the Topos Approach to the formulation of physical theories.

First, in Chapter \ref{LA}, we have showed that it is possible to give a fairly direct reconstruction of an atomic orthomodular lattice from its associated poset of distributive sub-lattices. Harding and Navara had shown in \cite{navara} that an isomorphism of the posets of distributive sub-lattices of two orthomodular latices implies the existence of an isomorphism between the respective orthomodular lattices themselves. This result has an important implication, as far as the Topos Approach is concerned, since it establishes that the orthomodular lattice $\PH$, which is traditionally used in quantum logic, can be reconstructed from the poset of contexts $\mc B(\PH)$ that underlies the constructions in the Topos Approach. In other words, this means that the new form of presheaf- and topos-based form of logic for quantum systems is (at least) as rich as traditional quantum logic.

Our reconstruction result has provided a partial answer to one of the open questions posed by Harding and Navara towards the end of their paper. In future research it would be interesting to investigate whether the reconstruction can be extended to the case of non-atomic orthomodular lattices in order to provide a complete answer to Harding and Navara's question. 

Finally, in Chapter \ref{CE}, we have considered the information theoretic concept of entropy from the perspective of the Topos Approach. This new perspective allowed us to treat classical and quantum notions of entropy, such as Shannon and von Neumann entropies, or Renyi entropies, in a unified setting, via the innovative construction of contextual entropy. Within this construction a classical Shannon entropy is associated to each commuting subalgebra of observables of the non-commutative algebra corresponding to a given quantum system. This assignment of Shannon entropies is based on the reformulation of quantum states within the Topos Approach described by D\"{o}ring in \cite{ms}. Given a quantum state, the classical Shannon entropies which build up the contextual entropy map are the entropies of the quantum state seen through the `classical windows' given by the commuting subalgebras. We have further analysed how the state's von Neumann entropy, which is the quantum counterpart of Shannon entropy, is associated to a distinguished maximal commuting subalgebra. In Section \ref{other} we have showed that the same principles can be applied to construct the Renyi contextual entropy, which similarly treats classical and quantum Renyi entropies in a unified setting.

Perhaps the most striking feature of the contextual entropy map is the fact that it is rich enough to allow for a reconstruction algorithm which takes in a contextual entropy map and outputs a quantum state. This implies that contextual entropy can be seen as a mathematical equivalent of quantum states based on the information theoretical concept of entropy. As such it is a step towards an information-theoretic characterisation of quantum states. 

In future work we propose checking whether the concept of contextual entropy could also be defined for infinte dimensional quantum systems. Since both Gleason's theorem and the Schur-Horn lemma, which the reconstruction algorithm essentially relies on, can be used in infinite dimensions (as proved by Kaftal and Weiss in \cite{schur}), it would be particularly interesting to see if our reconstruction results can also be applied in the infinite dimensional case.

Finding an axiomatic characterisation of the contextual entropy map would allow us to relate our reconstruction results to a generalized version of Gleason's theorem. This theorem states that every finitely additive probability measure $m$ on the projections of a von Neumann algebra with no type $I_2$ summand can be uniquely extended to a state on that algebra. 

Since every $m$ as above uniquely determines a measure $\mu$ on the spectral presheaf, we could easily construct its associated contextual entropy map $E_\mu$. Using our reconstruction algorithm, we could in principle retrieve the unique quantum state $\rho$ associated to $E_\mu$, and hence also the probability measure $\m$. However, the caveat is that in our algorithm we had to assume in the first place that we had started from a probability measure $\mu$ on projections. Therefore having an \emph{axiomatic} characterisation of those real-valued maps on contexts which are contextual entropy maps (and hence come from quantum states) would allow us to reconstruct quantum states directly. This would be an important step towards giving a new structural proof of Gleason's theorem, and as such it promises to be both an interesting and non-trivial task.

Last, but not least, perhaps the most advanced and important application of this topos-theoretic construction during this exciting era of Quantum Information Theory would be to obtain new quantitative and qualitative insights into the nature of multipartite entanglement starting from the rich structure of the contextual entropy map.

\end{chapter}


\begin{appendices}
\newpage
$ $
\newpage
\chapter{The Subadditivity Property of Contextual Entropy}\label{AppendixA}

In this Appendix we seek to find general examples of density matrices and entangled contexts, for which the subadditivity property of contextual entropy fails to hold.

Let us assume we are looking at a composite system of dimension $n=n_1n_2$, with $n_1,n_2\geq 2$. Consider the following diagonal density matrix:

$$\rho=\left(\begin{array}{cccc}
	\frac{1}{2}&&\\
	&\mathbf{0}_{n-2}&\\
	&&\frac{1}{2}
	\end{array}\right)
$$
Its partial traces will  be of the same form, and their tensor product will also be a diagonal matrix:

$$\rho_1\otimes\rho_2=\left(\begin{array}{ccccccc}
	\frac{1}{4}&&&&&&\\
	&\mathbf{0}_{n_2-2}&&&&&\\
	&&\frac{1}{4}&&&&\\
	&&&\mathbf{0}_{(n_1-2)n_2}&&&\\
	&&&&\frac{1}{4}&&\\
	&&&&&\mathbf{0}_{n_2-2}&\\
	&&&&&&\frac{1}{4}
	\end{array}\right)
$$
If $U=(U_{i,j})$ is a $n$-dimensional unitary matrix, then we are only interested in the diagonal entries of $U\rho U^{-1}$ and of $U\rho_1\otimes\rho_2 U^{-1}$ and these can be easily calculated:

$$U\rho U^{-1}=\frac{1}{2}\left(\begin{array}{cccc}
	|U_{1,1}|^2+ |U_{1,n}|^2&&&\\
	&\ddots&&\\
	&&|U_{n-1,1}|^2+|U_{n-1,n}|^2&\\
	&&&|U_{n,1}|^2+|U_{n,n}|^2
	\end{array}\right)
$$
while
$$U\rho_1\otimes\rho_2 U^{-1}=\frac{1}{4}\left(\begin{array}{l}
	|U_{1,1}|^2+ |U_{1,n}|^2 + |U_{1,n_2}|^2+ |U_{1,n-n_2}|^2\\
	  \ \ \ \ \ \  \ddots\\
    	  \ \ \ \ \ \ \ \ \ \ \ \ \ |U_{n-1,1}|^2+|U_{n-1,n}|^2+ |U_{n-1,n_2}|^2+ |U_{n-1,n-n_2}|^2\\
	  \ \ \ \ \ \ \ \ \ \ \ \ \ \ \ \ \ \ \ \ \ \ \ \ \ \ \ \ \   |U_{n,1}|^2+|U_{n,n}|^2+ |U_{n,n_2}|^2+ |U_{n,n-n_2}|^2
	\end{array}\right)
$$

We want to calculate
\begin{align*}
A&=E(\mu_\rho)|_{_{U^{-1}\cdot D_{n_1}\otimes D_{n_2}\cdot U}}=E(\mu_\rho)|_{_{U^{-1}\cdot D_n\cdot U}}\\
&=E(\mu_{U\rho U^{-1}})|_{_{D_n}}=\mathrm{Sh}\left((U\rho U^{-1})_{11},\ldots,(U\rho U^{-1})_{nn}\right)
\end{align*}
and show that it is strictly greater than
$$B=E(\mu_{\rho_1\otimes\rho_2})|_{_{U^{-1}\cdot D_{n_1}\otimes D_{n_2} \cdot U}}=\mathrm{Sh}\left((U\rho_1\otimes\rho_2 U^{-1})_{11},\ldots,(U\rho_1\otimes\rho_2 U^{-1})_{nn}\right)$$

In order to compute these we need to specify only four columns of the unitary matrix $U$, namely columns $1$, $n_2$, $n-n_2$ and $n$. That is, we must specify four orthonormal vectors of dimension $n$.  We first look at the situation when $n$ is even and greater or equal to six. In this case the four vectors can be taken as follows:

\begin{align*}
U_1^T&= \frac{1}{\sqrt n} (1,1,\ldots,1,1,1,\ldots,1) \\
U_n^T&= \frac{1}{\sqrt n}\left(\underbrace{1,1,\ldots,1}_{n/2},\underbrace{ -1,-1,\ldots,-1}_{n/2}\right)\\
U_{n_2}^T&=\frac{1}{2}\left(-1,1,\underbrace{0,\ldots,0}_{n-4},-1,1\right)\\
U_{n-n_2}^T&=\frac{1}{2} \left(1,-1,\underbrace{0,\ldots,0}_{n-4},-1,1\right)
\end{align*}
and so 
$$A=\mathrm{Sh}(1/n,\ldots,1/n)=\ln n$$
which is strictly greater than
$$B=\mathrm{Sh}\left(\frac{1}{2n}+\frac{1}{8}, \frac{1}{2n}+\frac{1}{8}, \underbrace{\frac{1}{2n},\ldots,\frac{1}{2n}}_{n-4},\frac{1}{2n}+ \frac{1}{8}, \frac{1}{2n}+\frac{1}{8}\right)$$
as desired. For $n=4$ we can see that $\frac{1}{2n}+\frac{1}{8}=1/n$, and this approach will fail to produce a counterexample. However, by randomly generating four-dimensional density matrices and unitaries on a computer, it is easy to come across an example for which subadditivity fails. For instance if we consider the state
$$\rho = \left(\begin{array}{cccc}
  0.089            &-0.107 - 0.038i  & 0.070 - 0.009i   &0.116 - 0.056i\\
 -0.107 + 0.038i  & 0.328            &-0.150 - 0.048i  &-0.226 + 0.053i\\
  0.070 + 0.009i  &-0.150 + 0.048i  & 0.205            & 0.117 - 0.073i\\
  0.116 + 0.056i  &-0.226 - 0.053i  & 0.117 + 0.073i  & 0.376          
               \end{array}\right)$$
and the unitary matrix

$$U = \left(\begin{array}{cccc}

  0.662 + 0.163i   &0.027 + 0.130i   &0.411 - 0.532i   &0.234 - 0.098i\\
 -0.250 - 0.232i  & 0.459 - 0.526i   &0.290 - 0.229i   &0.261 + 0.435i\\
 -0.011 + 0.106i  & 0.490 - 0.268i   &0.147 + 0.242i   &0.017 - 0.772i\\
  0.601 - 0.213i & -0.161 - 0.394i  &-0.117 + 0.562i   &0.238 + 0.155i
      \end{array}\right)$$
it is only a matter of straightforward computations to check that 

$$E(\mu_\rho)|_{_{U^{-1}\cdot D_4\cdot U}}=E(\mu_{U\rho U^{-1}})|_{_ {D_4}}>E(\mu_{U\rho_1\otimes\rho_2 U^{-1}})|_{_ {D_4}}=E(\mu_{\rho_1\otimes\rho_2})|_{_{U^{-1}\cdot D_4\cdot U}}$$
and hence the subadditivity property does not hold at the entangled context $U^{-1}\cdot D_4\cdot U$.











The last case to consider is the one when the dimension of our system is odd. Then we must have both $n_1$ and $n_2$ odd and greater or equal to three, so $n\geq 9$. In this case the four vectors can be taken as follows:
\begin{align*}
U_1^T&= \frac{1}{\sqrt n} \left(1,1,\ldots,1,1,1,\ldots,1,3,0,0\right) \\
U_n^T&= \frac{1}{\sqrt n}\left(\underbrace{1,1,\ldots,1}_{(n-3)/2},\underbrace{ -1,-1,\ldots,-1}_{(n-3)/2},0,3,0\right)\\
U_{n_2}^T&=\frac{1}{2} \left(-1,1,\underbrace{0,\ldots,0}_{n-7},-1,1,0,0,0\right)\\
U_{n-n_2}^T&=\frac{1}{2} \left(1,-1,\underbrace{0,\ldots,0}_{n-7},-1,1,0,0,0\right)
\end{align*}
in which case
$$A=\mathrm{Sh}\left(1/n,\ldots,1/n,\frac{3}{2n},\frac{3}{2n},0\right)$$
which is less than $\ln n$, but it is still greater than
$$B=\mathrm{Sh}\left(\frac{1}{2n}+\frac{1}{8}, \frac{1}{2n}+\frac{1}{8}, \underbrace{\frac{1}{2n},\ldots, \frac{1}{2n}}_{n-7}, \frac{1}{2n}+\frac{1}{8},\frac{1}{2n}+\frac{1}{8}, \frac{3}{4n},\frac{3}{4n},0\right)$$

\end{appendices}

\nocite{*}

\bibliographystyle{dashamsplain}
\bibliography{newreferences}

\end{document}